\newtheorem{theorem}{Theorem}
\newtheorem{proposition}{Proposition}
\newtheorem{lemma}{Lemma}
\algrenewcommand{\algorithmiccomment}[1]{\hfill$\blacktriangleright$ #1}
\DeclareMathOperator{\E}{E}
\DeclareMathOperator*{\argmin}{argmin}
\DeclareMathOperator{\diag}{diag}
\DeclareMathOperator{\rank}{rank}
\DeclareMathOperator{\vecmat}{vec}
\DeclareMathOperator{\IFu}{IF}
\DeclareMathOperator{\vect}{vec}
\def\sd{\text{\tiny SD}}
\newcommand{\rk}{k}
\newcommand{\ran}{s}
\newcommand{\rt}{t}
\newcommand{\rtt}{\tilde t}
\newcommand{\st}{q}
\newcommand{\ki}{\ell}
\newcommand{\der}{\ell}
\newcommand{\case}{\text{\scriptsize{case}}}
\newcommand{\cell}{\mbox{\scriptsize{cell}}}
\newcommand{\obs}{\mbox{\scriptsize{obs}}}
\newcommand{\impx}{x^{\mbox{\scriptsize{imp}}}}
\newcommand{\impxsij}{(x^*_{ij})^{\mbox{\scriptsize{imp}}}}
\newcommand{\bimpx}{\boldsymbol{x}^{\mbox{\scriptsize{imp}}}}
\newcommand{\bimpxs}{(\boldsymbol{x^*})^{\mbox{\scriptsize{imp}}}}
\newcommand{\CD}{{\boldsymbol{\cdot}}}
\newcommand{\eps}{\varepsilon}
\newcommand{\bxbar}{\overline{\boldsymbol x}}
\newcommand{\thh}{\tilde{h}}
\newcommand{\tm}{\tilde{m}}
\newcommand{\tr}{\tilde{r}}
\newcommand{\hx}{\widehat{x}}
\newcommand{\hmu}{\widehat{\mu}}
\newcommand{\hsigma}{\widehat{\sigma}}
\newcommand{\bzero}{\boldsymbol 0}
\newcommand{\bone}{\boldsymbol 1}
\newcommand{\ba}{\boldsymbol a}
\newcommand{\be}{\boldsymbol e}
\newcommand{\bof}{\boldsymbol f}
\newcommand{\bol}{\boldsymbol g}
\newcommand{\bL}{\boldsymbol L}
\newcommand{\bp}{\boldsymbol p}
\newcommand{\br}{\boldsymbol r}
\newcommand{\bu}{\boldsymbol u}
\newcommand{\bv}{\boldsymbol v}
\newcommand{\btv}{\boldsymbol{\tilde{v}}}
\newcommand{\bw}{\boldsymbol w}
\newcommand{\bx}{\boldsymbol x}
\newcommand{\tx}{\tilde{x}}
\newcommand{\btx}{\boldsymbol{\tilde{x}}}
\newcommand{\bhx}{\boldsymbol{\widehat{x}}}
\newcommand{\bz}{\boldsymbol z}
\newcommand{\bA}{\boldsymbol A}
\newcommand{\bB}{\boldsymbol B}
\newcommand{\bC}{\boldsymbol C}
\newcommand{\bD}{\boldsymbol D}
\newcommand{\bE}{\boldsymbol E}
\newcommand{\bI}{\boldsymbol I}
\newcommand{\bK}{\boldsymbol K}
\newcommand{\tL}{\widetilde{L}}
\newcommand{\bM}{\boldsymbol M}
\newcommand{\bO}{\boldsymbol O}
\newcommand{\bP}{\boldsymbol P}
\newcommand{\bhP}{\boldsymbol{\widehat{P}}}
\newcommand{\bR}{\boldsymbol R}
\newcommand{\btr}{\boldsymbol{\tilde{r}}}
\newcommand{\btR}{\boldsymbol{\widetilde{R}}}
\newcommand{\bS}{\boldsymbol S}
\newcommand{\bT}{\boldsymbol T}
\newcommand{\bU}{\boldsymbol U}
\newcommand{\btU}{\boldsymbol{\widetilde{U}}}
\newcommand{\bhU}{\boldsymbol{\widehat{U}}}
\newcommand{\bV}{\boldsymbol V}
\newcommand{\btV}{\boldsymbol{\widetilde{V}}}
\newcommand{\bh}{\boldsymbol h}
\newcommand{\bhV}{\boldsymbol{\widehat{V}}}
\newcommand{\bW}{\boldsymbol W}
\newcommand{\btW}{\boldsymbol{\widetilde{W}}}
\newcommand{\bX}{\boldsymbol X}
\newcommand{\bhX}{\boldsymbol{\widehat{X}}}
\newcommand{\btX}{\boldsymbol{\widetilde{X}}}
\newcommand{\bg}{\boldsymbol g}
\newcommand{\btheta}{\bm \theta}
\newcommand{\bTheta}{\boldsymbol \Theta}
\newcommand{\bPsi}{\boldsymbol \Psi}
\newcommand{\bPi}{\boldsymbol \Pi}
\newcommand{\bmu}{\boldsymbol \mu}
\newcommand{\bhmu}{\boldsymbol{\hat{\mu}}}
\newcommand{\btmu}{\boldsymbol{\widetilde{\mu}}}
\newcommand{\bsigma}{\boldsymbol \sigma}
\newcommand{\bSigma}{\boldsymbol \Sigma}
\newcommand{\bhSigma}{\boldsymbol{\widehat{\Sigma}}}
\newcommand{\hlambda}{\hat{\lambda}}
\newcolumntype{M}[1]{>{\centering\arraybackslash}m{#1}}
\definecolor{orange1}{RGB}{255,128,0}
\definecolor{purple2}{RGB}{102,0,204}
\definecolor{blue}{RGB}{0,0,255}
\definecolor{red}{RGB}{255,0,0}
\begin{document}

\def\spacingset#1{\renewcommand{\baselinestretch}
{#1}\small\normalsize} \spacingset{1}


\title{\bf Robust Principal Components\\ 
           by Casewise and Cellwise Weighting}
\author[1,2]{Fabio Centofanti}
\author[2]{Mia Hubert}
\author[2]{Peter J. Rousseeuw}

\affil[1]{Department of Industrial Engineering, University 
          of Naples Federico II, Naples, Italy}
\affil[2]{Section of Statistics and Data Science, Department 
          of Mathematics, KU Leuven, Belgium}

\setcounter{Maxaffil}{0}
\renewcommand\Affilfont{\itshape\small}
\date{October 31, 2025} 
  \maketitle

\bigskip
\begin{abstract}
Principal component analysis (PCA) is a fundamental tool 
for analyzing multivariate data. Here the focus is on
dimension reduction to the principal subspace, characterized 
by its projection matrix. The  classical principal subspace 
can be strongly affected by the presence of outliers.
Traditional robust approaches consider casewise outliers, 
that is, cases generated by an unspecified outlier 
distribution that differs from that of the clean cases. 
But there may also be cellwise outliers, which are
suspicious entries that can occur anywhere in the data
matrix. Another common issue is that some cells may be 
missing. This paper proposes a new robust PCA method, 
called cellPCA, that can simultaneously deal with 
casewise outliers, cellwise outliers, and missing cells. 
Its single objective function combines two robust loss
functions, that together mitigate the effect of casewise 
and cellwise outliers. The objective function is minimized 
by an iteratively reweighted least squares (IRLS) algorithm.
Residual cellmaps and enhanced outlier maps are proposed for 
outlier detection. The casewise and cellwise influence 
functions of the principal subspace are derived, and its
asymptotic distribution is obtained. Extensive 
simulations and two real data examples 
illustrate the performance of cellPCA.
\end{abstract}

\noindent {\it Keywords:} 
Casewise outliers; 
Cellwise outliers; 
Iteratively reweighted least squares;
Missing values;
Principal subspace.

\newpage
\spacingset{1.5}
\section{Introduction} \label{sec:intro}

The prevalence of ever larger datasets poses substantial
challenges for statistical analysis. A common issue is 
the presence of outliers and missing data, caused by a 
variety of factors such as measurement errors
and rare and unexpected events. 
Multivariate data are typically represented by a 
rectangular matrix in which the $n$ rows are the cases 
(objects) and the $p$ columns are the variables (measurements). 
Outliers or anomalies are pieces of data that behave differently from the overall pattern. Depending on the situation, outliers may be undesirable errors, or valuable nuggets of unexpected information. Either way, first we want our statistical analysis to summarize the information contained in the regular values and don't want it to be adversely affected by the outliers. Afterward we want to detect the outliers.

Diagnostic approaches first apply a classical fitting method to the data, next they detect the anomalies using diagnostic tools (such as standardized residuals), and finally they run a standard estimation procedure on the outlier-free data set.  However, classical methods can be affected
by outliers so strongly that the resulting fitted model
may not allow to detect the outliers.
This is called the masking effect. Additionally, some
regular values might even appear to be outlying,
which is known as swamping. A real-data example of how diagnostic approaches can be affected by masking and swamping is provided in Section~\ref{sec:masking} of the Supplementary Material. 
To avoid these effects, robust approaches aim to find an estimate that is
close to the one we would have found without the outliers \citep{huber1981, hampel1986, RL1987, maronna2019robust}. They construct a fit that is affected little by
outliers without the need for searching and explicitly removing them.
Outlier detection is then performed by looking at large deviations from the robustly estimated model.

For a long time, the term ``outlier" meant an outlying case. 
These {\it casewise outliers} are assumed not to be generated by the same mechanism as the majority of the cases. 
Consider for example the TopGear dataset from the R package \texttt{robustHD} \citep{Alfons:robustHD}. It contains technical information of 297 cars such as their height, weight, acceleration, fuel consumption, horsepower, miles per gallon (MPG), etc. Most cars in this dataset have an internal combustion engine that runs on petrol or diesel. Only a few are hybrid or electric cars with different technical properties. In a sense they do not belong to the same population, and as such can be considered as casewise outliers.

Formally, the \textit{casewise contamination model} assumes that the observed $n \times p$ data matrix $\bX$ is a random sample from a $p$-variate random variable $X_\eps$ distributed as\linebreak $(1-\eps^{\case})H_0 + \eps^{\case} H_Z$,  where $0 \leqslant \eps^{\case}<0.5$, $H_0$ is the distribution generating the clean cases, and $H_Z$ is an unspecified outlier-generating distribution. No conditions regarding support or symmetry are imposed on $H_Z$. The variable $X_\eps$ can equivalently be written as
\begin{equation} \label{eq:cont_case}
X_{\eps}=A^{\case} \odot X + (\bone_p-A^{\case}) \odot Z 
\end{equation}
where the Hadamard product $\odot$ multiplies vectors (and matrices) entry by entry. Here $X \sim H_0$, $Z \sim H_Z$, and $\bone_p$ is a column vector with all 
$p$ components equal to $1$. The \mbox{$p$-variate} variable $A^{\case}$ has Bernoulli distributed marginals $A_j^{\case}$ for $j=1,\ldots,p$ with success parameter $1-\eps^{\case}$, and jointly they are fully dependent in the sense that $P(A_1^{\case} = \ldots = A_p^{\case})=1$.\linebreak
When $X$, $A^{\case}$ and $Z$ are independent from each other, model~\eqref{eq:cont_case} corresponds to the classical $\eps$-contamination model of \cite{huber1981}, also called the \textit{fully dependent contamination model} (FDCM) by 
\citet{alqallaf2009}. 
It implies that on average, $(1-\eps^{\case})100\%$ of the cases are clean. The left panel of Figure~\ref{fig:casecellmixed} visualizes this setting for a toy data set with 15 cases and 10 variables. Here 3 out of the 15 cases (20\%) are casewise outliers. 

Casewise robust methods require that fewer than half of the cases are contaminated. This assumption is often realistic in low-dimensional datasets, but it becomes harder in high dimensions because then there are many variables in which something can go wrong. Moreover casewise robust methods work by downweighting or deleting all variables of the outlying cases, whereas their outlying behavior might occur in only a few measurements. 
This has motivated the study of {\it cellwise outliers} in recent years. These are deviating measurements (cells) 
 that can occur anywhere in the data matrix.
For example, in the TopGear data the weight of the Peugeot 207 was reported as only 210 kg, which is clearly wrong. That outlying cell was easy to spot, as other cars are much heavier. But not all cellwise outliers deviate marginally. The MPG of the Suzuki Jimny is 39, which is not an uncommon value by itself, but it is very low given its other characteristics such as its small size.    
The \textit{cellwise contamination model} assumes that the data are generated according to
\begin{equation} \label{eq:cont_cell}
X_{\eps}=A^{\cell} \odot X + (\bone_p-A^{\cell}) \odot Z 
\end{equation}
with $X$ and $Z$ as in \eqref{eq:cont_case}. The $p$-variate variable $A^{\cell}$ has  Bernoulli components $A_j^{\cell}$ with success probability $1-\eps^{\cell}_j$. If the components of $A^{\cell}$ are independent, we obtain the \textit{fully independent contamination model} (FICM) of \citet{alqallaf2009}. Then on average each variable has $(1-\eps^{\cell}_j)100\%$ clean values, but even a relatively small proportion of outlying cells can contaminate over half the cases, which may cause casewise robust methods to fail. 
The middle panel of Figure~\ref{fig:casecellmixed} illustrates how a dataset with $22/150 \approx 15\%$ of outlying cells 
yields only $4/15 \approx 27\%$ entirely uncontaminated cases. It also illustrates that a case can have zero, one, a few, or many outlying cells. In general, when $\eps_j^{\cell}=\eps$ the probability that a case contains at least one outlying cell is $1-(1-\eps)^p$ which grows quickly with the dimension $p$. For example, when $\eps=0.05$ and $p=14$ this probability is 51\%. It increases to 97\% when $p=70$, thus impacting the vast majority of cases. 

To cope with data from model~\eqref{eq:cont_cell} one needs cellwise robust methods. They can differ markedly from casewise robust methods. For instance, cellwise robust fits cannot be equivariant under orthogonal transformations, because rotating data destroys its cells. 
In the last decade cellwise robust estimators have been constructed for multivariate location and covariance matrices \citep{agostinelli2015robust,cellMCD, puchhammer2025multi}, for compositional data \citep{rieser2023comp}, regression \citep{Ollerer:ShootingS} and clustering of low-dimensional data \citep{zaccaria2025cluster}. For a recent review of the properties and challenges of cellwise methods see \cite{raymaekers2025challenges}. 

\begin{figure}[!ht]
\centering
\vspace{2mm}
\includegraphics[width=0.99\textwidth]
   {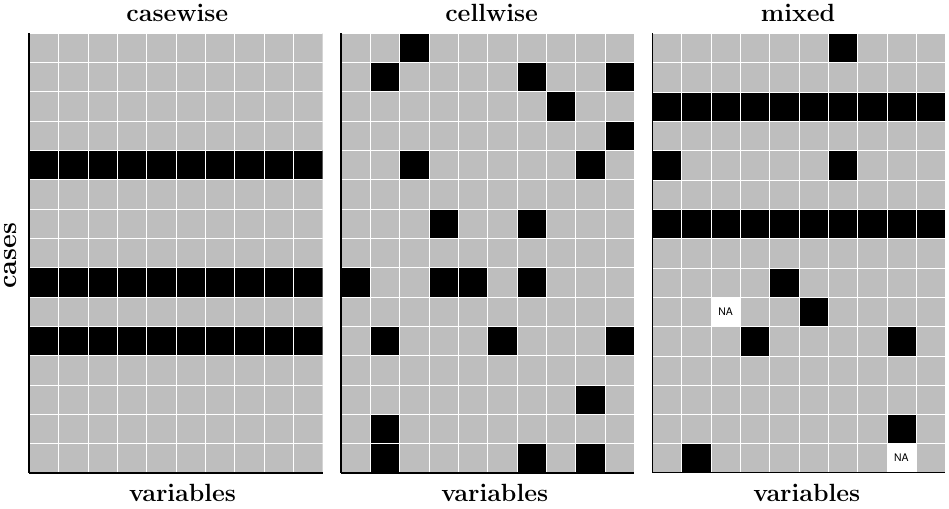}\\
\caption{Illustration of the casewise, 
cellwise, and mixed contamination models.}
\label{fig:casecellmixed}
\end{figure}

In real data both types of outliers often occur simultaneously, and some measurements might be missing. We define the \textit{mixed contaminated and partially observed contamination model} (MCPO) as
\begin{equation} \label{eq:cont_both}
X_{\eps} = A \odot X + (\bone_p-A) \odot Z 
\end{equation}
where $A = A^{\case} \odot A^{\cell} \odot A^{\obs}$, the variable $A^{\case}$ is defined as in \eqref{eq:cont_case}, and $A^{\cell}$ as in \eqref{eq:cont_cell}. The entries $A_j^{\obs}$ in $A^{\obs}$ are binary variables with possible outcomes 1 and NA and $P(A_j^{\obs} = 1) = 1-\eps^{\obs}_j$. The right panel of Figure~\ref{fig:casecellmixed} shows a toy example. 
Different assumptions on the dependence structure of $X$, $Z$,  $A^{\case}$, $A^{\cell}$ and $A^{\obs}$ lead to different contamination models. For instance, when $A^{\obs}$ is independent of the other variables, the values in the dataset are missing completely at random.

In this paper we focus on Principal Component Analysis (PCA), a popular dimension reduction method. Cellwise outliers cause problems in this setting as PCA projects cases orthogonally on a lower dimensional subspace, so outlying cells can propagate to all cells. Therefore, whenever a projection is made, outlying cells should be handled. Classical PCA is strongly affected by both casewise and 
cellwise outliers because it is a least squares method. Also the PCA method of \cite{kiers1997weighted} that can
handle incomplete data is not robust to outliers.
Several casewise robust PCA methods have been proposed, 
such as \cite{locantore1999robust}, 
\cite{hubert2005robpca}, and \cite{she2016robust}. 
Some were designed for structured data matrices
\citep{Engelen2011,deklerk2015}, and some for sparsity
\citep{Rospca2016}. \cite{serneels2008principal} constructed a casewise robust method that can also cope 
with missing values. None of these approaches was designed to handle cellwise outliers. On the other hand, the cellwise robust PCA approaches developed in \citep{de2003framework,maronna2008robust,candes2011robust} are not robust to casewise outliers, as will be illustrated in our simulations.

The more recent MacroPCA
method \citep{hubert2019macropca} was the first to
address the three issues of casewise outliers, cellwise
outliers, and missing values simultaneously. However,
it is a combination of elements from earlier 
methods and lacks a unifying underlying principle,
so it was not possible to derive statistical properties.      

In this paper we propose the cellPCA method that also
handles data generated from the MCPO model~\eqref{eq:cont_both} 
and offers several major improvements over MacroPCA:
\vspace{-2mm}
\begin{enumerate}
\setlength\itemsep{-0.5em}
\item It is the first cellwise and casewise robust PCA method that
minimizes a single objective function. It combines two robust 
losses that effectively mitigate the effect of casewise and 
cellwise outliers.
\item It uses the hyperbolic tangent loss function, which yields
cellwise and casewise weights between 0 and 1, reflecting the 
degree of outlyingness of each entry and each case. Regular cases 
and cells are not downweighted. This makes cellPCA more efficient 
than MacroPCA.
\item The optimization of the objective function is performed by 
an iteratively reweighted least squares algorithm, that is 
proved to converge. 
\item We derive the casewise and cellwise influence functions of 
cellPCA. So far only casewise influence functions were obtained 
for robust PCA methods. 
\item We prove asymptotic normality of the estimated 
principal subspace.
\item We construct imputed cases by modifying 
suspicious and missing cells, in such a way that 
their projection on the principal subspace corresponds 
to the fitted values. 
\item Predictions are constructed for new cases, even
when they are themselves incomplete and/or contain 
cellwise outliers.
\item Enhanced graphical displays are introduced that 
combine information about the \mbox{cellwise} and 
casewise outlyingness in the data. 
\end{enumerate}

Section~\ref{sec:method} presents 
the cellPCA objective and its 
iteratively reweighted least squares (IRLS) algorithm. 
Section~\ref{sec:theo} provides the casewise 
and cellwise influence functions of the estimator,
as well as its asymptotic distribution.
Section~\ref{sec:extensions} contains some practical
extensions such as imputations and predictions, and 
Section~\ref{sec:ionosphere} illustrates our 
enhanced graphical displays of outliers. 
The performance of cellPCA is assessed by Monte 
Carlo in Section~\ref{sec:simulation}, 
and Section~\ref{sec:realdata} illustrates it on 
real data. Section~\ref{sec:conc} concludes.

\section{Methodology}
\label{sec:method}

This section contains the core of the proposed 
methodology, starting with the principal subspace 
model and continuing with the objective function and 
the algorithm.

\subsection{The principal subspace model}
\label{sec:PCAmodel}
The $p$ coordinates of the $n$ cases are stored in
an $n \times p$ data matrix $\bX$.
In the absence of outliers and missing values, the 
goal is to represent the data in a lower 
dimensional space:
\begin{equation}\label{eq:model}
    \bX= \bX^0 + \bone_n \bmu^T +  \bE
\end{equation}
where $\bX^0$ is an $n \times p$ matrix of rank 
$\rk<p$, $\bone_n$ is a column vector with all 
$n$ components equal to $1$, the center 
$\bmu=\left(\mu_1,\dots,\mu_p\right)^T$ is a 
column vector of size $p$, and $\bE$ is the 
error term. We can also write this as
\begin{equation} \label{eq:model2}
  \bX = \bX^0\bP + \bone_n \bmu^T + \bE
\end{equation}
where $\bP$ is a $p \times p$ orthogonal 
projection matrix of rank $\rk$, that is,
$\bP^T=\bP$, $\bP^2=\bP$ and $\rank(\bP)=\rk$,
which projects $\bX^0$ on itself, i.e.
$\bX^0\bP = \bX^0$. We denote 
the rows of $\bX^0$ as $\bx^0_i$\,. 
The image of $\bP$ is a $\rk$-dimensional linear
subspace $\bPi_0$ through the origin. The 
predicted datapoints $\bhx_i = \bx^0_i + \bmu$
lie on the affine subspace
$\bPi = \bPi_0 + \bmu$, which is called the
{\it principal subspace}. 
Note that any $\rk$-dimensional
linear subspace $\bPi_0$ determines a 
unique $\bP$ satisfying the 
constraints $\rank(\bP)=\rk$, $\bP^T=\bP$ 
and $\bP^2=\bP$, and that
any such $\bP$ determines a unique 
subspace $\bPi_0$ of dimension $\rk$. 

In actual computations it may be unwieldy
to work with the matrix $\bP$ because it
might not fit in memory.
For instance, the example in 
Section~\ref{sec:realdata} has $p=40,000$
so $\bP$ has 1.6 billion entries. 
Therefore we parametrize $\bP$ more 
economically. We take an orthonormal basis of 
$\bPi_0$ and form a $p \times \rk$ matrix 
$\bV=\lbrace v_{j\ell}\rbrace=
\left[\bv^1,\dots,\bv^\rk\right]=
\left[\bv_1,\dots,\bv_p\right]^T$ whose
columns are the basis vectors. Therefore
$\bV$ is orthonormal too, that is,
$\bV^T\bV = \bI_\rk$\,. 
We can then write $\bP = \bV \bV^T$. We call 
$\bV$ a loadings matrix, and define the 
corresponding scores matrix as 
$\bU := \bX^0 \bV =\lbrace u_{i\ell}
\rbrace=\left[\bu^1,\dots,\bu^\rk\right]=
\left[\bu_1,\dots,\bu_n\right]^T$ which is 
$n \times \rk$. This way we can carry out 
the computations with the smaller 
matrices $\bV$ and $\bU$ instead 
of $\bP$ and $\bX^0$. For more on 
this reformulation see 
Section~\ref{sec:parametrizations} 
of the Supplementary Material.

\subsection{The objective function}
\label{sec:obj}

Classical PCA approximates $\bX$ by 
$\bhX = \bX^0\bP + \bone_n \bmu^T$ 
which minimizes 
\begin{equation} \label{eq:CPCA_P1}
  ||\bX - \bX^0\bP- \bone_n \bmu^T ||^2_{F}
\end{equation}
under the same constraints on $\bX^0$
and $\bP$,
where $||\CD||_F$ is the Frobenius norm. 
Note that~\eqref{eq:CPCA_P1} estimates the 
principal subspace $\bPi$ determined by 
$\bP$ and $\bmu$, and not (yet) any 
principal directions inside $\bPi$.
Minimizing \eqref{eq:CPCA_P1} is 
equivalent to minimizing 
\begin{equation} \label{eq:appP1}
   \sum_{i=1}^{n}\sum_{j=1}^{p}
   \left(x_{i j}- \hx_{ij}\right)^2
   =\sum_{i=1}^{n}\sum_{j=1}^{p} r_{i j}^2
\end{equation}
with $\hx_{ij} := \bp_j^T\bx_i^0 + \mu_j $\,, 
where  $\bp_{1},\dots,\bp_{p}$  are the columns 
of $\bP$, and  $r_{i j} := x_{i j}-\hx_{ij}$\,. 
The solution is easily obtained. First carry 
out a singular value decomposition (SVD) of 
rank $\rk$ as $\bX-\bone_n\overline{\bx}^T
\approx \bU_\rk\bD_\rk\bV_\rk^T$
where $\overline{\bx}$ is the sample mean,
the $\rk \times \rk$ diagonal matrix $\bD_\rk$ 
contains the $\rk$ leading singular values, 
and the columns of $\bV_\rk$ are the right 
singular vectors. Then the solution is 
$\bP = \bV_\rk\bV_\rk^T$, 
$\bX^0 = \bU_\rk \bD_\rk\bV_\rk^T$, and 
$\bmu = \overline{\bx}-\bP\overline{\bx}$.
But the quadratic loss function 
in~\eqref{eq:appP1} makes this a least squares 
fit, which is very sensitive to casewise as 
well as cellwise outliers. Moreover, 
some $x_{ij}$ may be missing. 

To deal with data generated according to 
the MCPO model~\eqref{eq:cont_both}, we propose 
the cellPCA method which approximates
$\bX$ by $\bhX = \bX^0+\bone_n\bmu^T$ 
obtained by minimizing 
\begin{equation} \label{eq:objP}
  L_{\rho_1,\rho_2}(\bX,\bP,\bX^0,\bmu) := 
  \frac{\hsigma_2^2}{m}\sum_{i=1}^{n}m_i \rho_2\! 
  \left(\frac{1}{\hsigma_2}\sqrt{\frac{1}{m_i}
  \sum_{j=1}^{p} m_{ij}\, \hsigma_{1,j}^2\,
  \rho_1\!\left(\frac{x_{ij}-\hx_{ij}}
  {\hsigma_{1,j}}\right)}\, \right)
\end{equation}
with respect to $(\bP,\bX^0,\bmu)$, under the
same constraints $\bP^T=\bP$, $\bP^2=\bP$, 
$\rank(\bP)=\rk$, and $\bX^0\bP = \bX^0$.
Here $m_{ij}$ is 0 if $x_{ij}$ is missing 
and 1 otherwise, 
$m_i=\sum_{j=1}^{p} m_{ij}$\,,
and $m=\sum_{i=1}^{n} m_i$\,.
The objective~\eqref{eq:objP} can be 
interpreted as follows. The 
{\it cellwise residuals} of variable $j$
are given by 
\begin{equation}\label{eq:r_ij}
  r_{ij} := x_{ij} - \hx_{ij}
\end{equation} 
and divided by a scale estimate 
$\hsigma_{1,j}$. The 
{\it casewise total deviation} of case $i$ 
is defined as
\begin{equation}\label{eq:rt_i}
  \rt_i := \sqrt{\frac{1}{m_i} \sum_{j=1}^{p}
  m_{ij}\, \hsigma_{1,j}^2\, \rho_1\!\left(
  \frac{r_{ij}}{\hsigma_{1,j}} \right)} 
\end{equation} 
and standardized by $\hsigma_2$\,.
For $\rho_1(z) = \rho_2(z) = z^2$ the 
objective~\eqref{eq:objP} would become 
the objective~\eqref{eq:appP1} of classical 
PCA. But we use bounded functions $\rho_1$ 
and $\rho_2$ instead.
The combination of $\rho_1$ and $\rho_2$ in 
\eqref{eq:objP} makes cellPCA robust against 
both cellwise and casewise outliers. 
Indeed, a cellwise outlier in the cell $(i,j)$ 
yields a cellwise residual $r_{ij}$ with a 
large absolute value, but the boundedness of
$\rho_1$ reduces its effect on the estimates.
Similarly, a casewise outlier results in a large 
casewise total deviation $\rt_i$ but its effect
is reduced by $\rho_2$\,.
Note that in the computation of $\rt_i$ the effect 
of cellwise outliers is tempered by  
$\rho_1$\,. This avoids that a single cellwise 
outlier would always give its case a 
large $\rt_i$\,.

The algorithm starts from an initial estimate, 
the MacroPCA fit \citep{hubert2019macropca} 
which is robust against both cellwise and 
casewise outliers and can deal with NAs, 
but is less efficient. 
MacroPCA begins by imputing the NAs by the 
DDC algorithm \citep{DDC2018}.
It yields an initial fit 
$\bhX_{(0)}$
from which we compute cellwise residuals 
$r_{ij}^{(0)}$ as in~\eqref{eq:r_ij}.
For every coordinate $j=1,\ldots,p$ we 
then compute $\hsigma_{1,j}$ as 
an M-scale of the cellwise 
residuals $r_{ij}^{(0)}$\,. An 
M-scale of a univariate sample 
$\left(z_1,\dots,z_n\right)$ is the 
solution $\hsigma$ of the equation
\begin{equation}\label{eq:Mscale}
  \frac{1}{n} \sum_{i=1}^n \rho\left(
  \frac{z_i}{\sigma}\right)=\delta
\end{equation}
for some $\delta$. The classical
standard deviation corresponds to
$\rho(z)=z^2$, but for robust methods it
is important to use a bounded function
$\rho$, otherwise the M-scale can 
become arbitrarily large due to even
a single outlier. Also the choice 
of $\delta$ has an impact on the robustness 
properties of the scale estimator. For 
more on this topic see 
Section~\ref{sec:M-estimation} of the 
Supplementary Material. We use an M-scale
that can resist up to 50\% of outlying values.
We then construct casewise total deviations 
$\rt^{(0)}_i$ by~\eqref{eq:rt_i}.
Next, we compute $\hsigma_2$ as the 
M-scale of those $\rt^{(0)}_i$\,.

Since the DDC algorithm robustly estimates
all pairwise correlations between the variables, 
every pair of variables should have at least 
50\% uncontaminated observations so that a 
casewise robust correlation estimator can 
be applied. This condition is satisfied if each 
variable contains at most 25\% of spoiled cells. 
This assumption was also made by 
\cite{Raymaekers:cellMCD} for cellwise robust
covariance estimation.

Translation equivariance means that if we shift 
the data set $\bX$ by a vector $\ba$ yielding
$\bX+\bone_n\ba^T$, then the fitted $\bhX$ is
transformed in the same way to $\bhX+\bone_n\ba^T$.
This is true for the initial estimator, and 
$\hsigma_{1,j}$ and $\hsigma_2$ do not change.
Therefore also cellPCA is translation equivariant,
as can be seen from its objective. On the other
hand cellPCA is not orthogonally equivariant. 
Orthogonal equivariance would mean that when
the data are rotated, the principal subspace
and the fitted points in it would rotate in 
the same way. This property holds for classical
PCA, but it cannot hold for cellwise robust
methods. Indeed, the cells of the data, that is,
the coordinates of the data points, are tied
with the coordinate system. If the data is
rotated, or equivalently the coordinate
system is rotated, the cells change. The effect 
of one outlying cell could be smeared out over 
all cells in its case. The loss of
orthogonal equivariance is thus an unavoidable 
but necessary trade-off to attain robustness 
against cellwise contamination.

\subsection{Description of the algorithm} 
\label{sec:algo}

We now address the minimization of our 
objective \eqref{eq:objP}.
The solution must satisfy the first-order 
necessary conditions for optimality
which are derived in 
Section~\ref{app:firstorder} of the 
Supplementary Material. For instance, the
first one is obtained by setting the gradients 
of the objective function with respect to 
$\bv_1,\dots,\bv_p$ to zero. The second and 
third one use the gradients with respect to 
$\bu_1,\dots,\bu_n$ and to $\bmu$, yielding  
\begin{align} 
  \label{eq:condi1}
  (\bU^T\bW^j\bU)\bv_j&=\bU^T\bW^j
  (\bx^j-\mu_j\bone_n),  
  \quad j=1,\dots,p\\
  \label{eq:condi2}  
  (\bV^T\bW_i\bV)\bu_i &=
  \bV^T\bW_i\left(\bx_i-\bmu\right),  
  \quad i=1,\dots,n\\
  \label{eq:condi3}
  \sum_{i=1}^n \bW_i\, \bV \bu_i &=  
    \sum_{i=1}^n \bW_i(\bx_i - \bmu)
\end{align}
where $\bx_1^T,\dots,\bx_n^T$ 
and $\bx^1,\dots,\bx^p$ 
are the rows and columns of $\bX$. Here 
$\bW_i$ is a $p \times p$ diagonal matrix, 
whose diagonal entries are equal to the 
$i$th row of the $n\times p$ weight matrix 
\begin{equation} \label{eq:updateW}
  \bW=\lbrace w_{i j}\rbrace=
  \bW^{\case} \odot \bW^{\cell} \odot \bM\;.
\end{equation}
Analogously, $\bW^j$ is an $n \times n$ 
diagonal matrix, whose diagonal entries are 
the $j$th column of the matrix $\bW$.
In expression~\eqref{eq:updateW} for $\bW$, 
the $n \times p$ matrix 
$\bW^{\cell}=\lbrace w_{ij}^{\cell}\rbrace$ 
contains the {\it cellwise weights}
\begin{equation}\label{eq:cellweight}
   w_{ij}^{\cell}=\psi_1\!\left(
   \frac{r_{ij}}{\hsigma_{1,j}}\right)
   \Big/ \frac{r_{ij}}{\hsigma_{1,j}}\; , 
   \quad i=1,\dots,n,\quad j=1\dots,p
\end{equation}
where $\psi_1=\rho_1'$ with the 
convention $w_{ij}^{\cell}(0) = 1$.
The $n \times p$ matrix $\bW^{\case}$ has constant
rows, where each entry of row $i$ is the 
{\it casewise weight} $w_i^{\case}$ given by
\begin{equation}\label{eq:caseweight}
   w_{i}^{\case}=\psi_2\!\left(
   \frac{\rt_i}{\hsigma_2}\right)
   \Big/ \frac{\rt_i}{\hsigma_2}\; , 
   \quad i=1,\dots,n\,
\end{equation}
with $\psi_2=\rho_2'$, and the $n \times p$ 
matrix $\bM$ contains the missingness 
indicators $m_{ij}$\,. Note the similarity 
between the weight matrix $\bW$
in \eqref{eq:updateW} and the variable 
$A = A^{\case} \odot A^{\cell} \odot A^{\obs}$ 
in our model~\eqref{eq:cont_both}.

We now have to choose appropriate functions
$\rho_1$ and $\rho_2$\,. We take the hyperbolic 
tangent (\textit{tanh}) function $\rho_{b,c}$ 
introduced by~\cite{tanh1981}, 
which is defined piecewise by
\begin{equation}\label{eq:rhotanh}
\rho_{b,c}(z) = 
\begin{cases}
 z^2/2 &\mbox{ if } 0 \leqslant |z| \leqslant b\\
  d - (q_1/q_2) \ln(\cosh(q_2(c - |z|)))    
    &\mbox{ if } b \leqslant |z| \leqslant c\\
  d &\mbox{ if } c \leqslant |z|\\
\end{cases}
\end{equation}
where $d = (b^2/2) + (q1/q2)\ln(\cosh(q_2(c - b)))$.
Its first derivative $\psi_{b,c} = \rho'_{b,c}$ has 
been used as the \textit{wrapping function}
\citep{FROC2021} and equals
\begin{equation}\label{eq:psitanh}
\psi_{b,c}(z) = 
\begin{cases}
  z &\mbox{ if } 0 \leqslant |z| \leqslant b\\
  q_1\tanh(q_2(c - |z|))\,\mbox{sign}(z)    
  &\mbox{ if } b \leqslant |z| \leqslant c\\
  0 &\mbox{ if } c \leqslant |z|\,.\\
\end{cases}
\end{equation}
The function $\psi_{b,c}$ is continuous, which 
implies certain constraints on $q_1$ and $q_2$. 
We use the default wrapping 
function shown in Figure~\ref{fig:rho_psi},
which has $b=1.5$ and $c=4$ with 
$q_1=1.54$ and $q_2=0.86$.
\begin{figure}[!ht]
\centering
\includegraphics[width=0.8\textwidth]
  {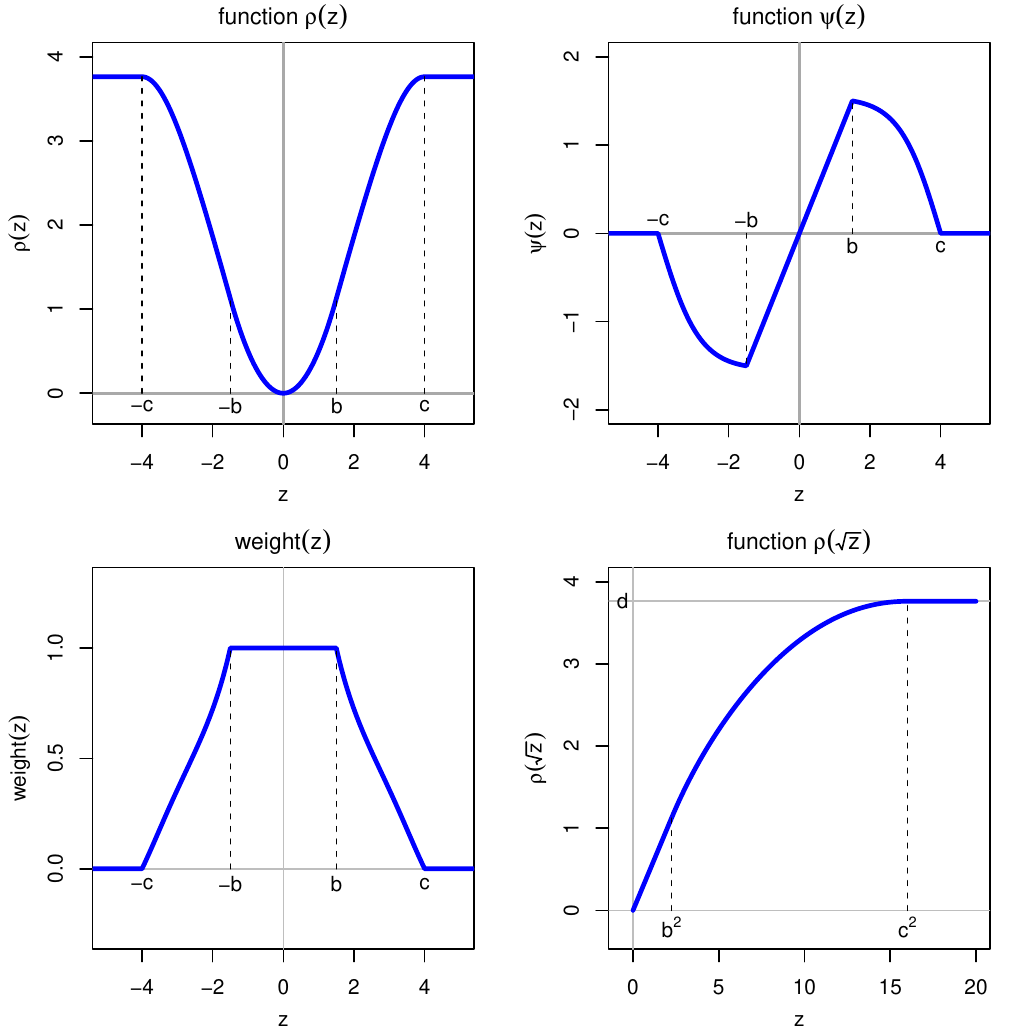}
\vspace{-4mm}
\caption{The function $\rho_{b,c}$ with 
  $b=1.5$ and $c=4$ (top left), its derivative
  $\psi_{b,c}$ (top right), its weight function 
  used in~\eqref{eq:cellweight}
  and~\eqref{eq:caseweight} (bottom left), and 
  the function $\rho(\sqrt{z})$ (bottom right).}
\label{fig:rho_psi}
\end{figure}

The bottom left panel of 
Figure~\ref{fig:rho_psi}
shows the weight function $w(z)$ used
in~\eqref{eq:cellweight} 
and~\eqref{eq:caseweight}. 
An advantage of the function $\psi_{b,c}$ 
is that it is linear in the central region
$[-b,b]$, which yields a higher statistical 
efficiency than competing $\psi$ functions
as shown in~\cite{tanh1981}. This linearity
also makes the weight exactly 1 in that central 
region, so inlying cells will not be 
downweighted, which is an advantage over other 
$\psi$ functions that could have been used.
Moreover, far outliers get weight zero, 
which aids the robustness of cellPCA.

To address \eqref{eq:condi1}--\eqref{eq:condi3} 
we look at a different objective function,
given by
\begin{equation} \label{eq:WPCA}
  \sum_{i=1}^n\sum_{j=1}^p w_{ij}\left(x_{ij}
  - \mu_j - (\bU\bV^T)_{ij}\right)^2
\end{equation}
where $(\bU\bV^T)_{ij} = \bu_i^T\bv_j =
\sum_{\ell=1}^\rk u_{i\ell}v_{j\ell}$ and the 
weight matrix $\bW$ is assumed fixed for now.
What does this have to do with the objective
\eqref{eq:objP} we are trying to minimize?
Well, it is shown in Section~\ref{app:firstorder}
of the Supplementary Material that the first order 
conditions on the weighted PCA of \eqref{eq:WPCA} 
are exactly the same as the first order conditions
\eqref{eq:condi1}--\eqref{eq:condi3} on the
original objective~\eqref{eq:objP}.

The system of equations 
\eqref{eq:condi1}--\eqref{eq:condi3} is 
nonlinear because the weight matrices depend on the 
estimates, and the estimates depend on the weight 
matrices. The optimization of~\eqref{eq:WPCA} can 
be performed by alternating least squares 
\citep{gabriel1978least}. The IRLS algorithm 
starts from our initial estimate 
$(\bV_{(0)}\,,\bU_{(0)}\,,\bmu_{(0)})$ and the
corresponding $\bW_{(0)}$ obtained from 
\eqref{eq:updateW}, \eqref{eq:cellweight}, 
and \eqref{eq:caseweight}.
Then, for each $\st=0,1,2,\dots$, we obtain
$(\bV_{(\st+1)}\,,\bU_{(\st+1)}\,,\bmu_{(\st+1)})$ from 
$(\bV_{(\st)}\,,\bU_{(\st)}\,,\bmu_{(\st)})$ by the 
following four-step procedure, which is described
in more detail in Section~\ref{app:algo} of the 
Supplementary Material.
\begin{itemize}
\item[\textbf{(a)}] Minimize~\eqref{eq:WPCA} with 
  respect to $\bV$ by applying \eqref{eq:condi1} with 
  $\bU_{(\st)}$\,, $\bmu_{(\st)}$\,, and $\bW_{(\st)}$\,. 
  This is done by computing
\begin{equation} \label{eq:updateV}
  (\bv_{(\st+1)})_j = 
  \big(\bU_{(\st)}^T\bW_{(\st)}^j\bU_{(\st)}
  \big)^{\dagger}\;
  \bU_{(\st)}^T\bW_{(\st)}^j\big(\bx^j-
  (\bmu_{(\st)})_j\bone_n\big) 
   \quad j=1,\dots,p
\end{equation}
where $^{\dagger}$ denotes the generalized 
inverse of a matrix.
It is proved that starting from a different
parametrization of the initial $\bP_{(0)}$ 
by $\btV_{(0)} =\bV_{(0)} \bO$ with 
corresponding $\btU_{(0)} = \bU_{(0)}\bO$
yields $\btV_{(\st+1)} = \bV_{(\st+1)} \bO$
and hence the same $\bP_{(\st+1)}$\,.

\item[\textbf{(b)}] To obtain a new $\bU$ from the
  new $\bV_{(\st+1)}$ and the old $\bmu_{(\st)}$ 
  and $\bW_{(\st)}$ we apply
  \begin{equation}\label{eq:condi2cellwise}
  (\bV^T\btW_i\bV)\bu_i =
  \bV^T\btW_i\left(\bx_i-\bmu\right),  
  \quad i=1,\dots,n
  \end{equation}
  where $\btW_i$ is a diagonal matrix whose diagonal
  is the $i$th row of $\bW^{\cell} \odot \bM$. This
  implies~\eqref{eq:condi2} because each case $i$
  has a constant case weight $w_i^{\case}$\,. We compute
  \begin{equation}\label{eq:updateU}
  (\bu_{(\st+1)})_i = \big(\bV_{(\st+1)}^T
  (\btW_{(\st)})_i\bV_{(\st+1)}\big)^{\dagger}\;
  \bV_{(\st+1)}^T(\btW_{(\st)})_i\big(\bx_i-\bmu_{(\st)}
  \big) \quad i=1,\dots,n\,.
  \end{equation}    
  It is proved that this minimizes~\eqref{eq:WPCA},
  and that starting from the alternative parametrization 
  yields $\btU_{(\st+1)} = \bU_{(\st+1)} \bO$ and therefore
  the same
  $\bX^0_{(\st+1)} = \bU_{(\st+1)} \bV_{(\st+1)}^T$\,.

\item[\textbf{(c)}] Minimize~\eqref{eq:WPCA} with respect to
    $\bmu$ by applying \eqref{eq:condi3} with the new 
    $\bV_{(\st+1)}$ and $\bU_{(\st+1)}$ and the old $\bW_{(\st)}$ 
    by setting
\begin{equation}\label{eq:updatemu}
  \bmu_{(\st+1)} = \Big(\sum_{i=1}^n (\bW_{(\st)})_i\Big)^{-1}
  \sum_{i=1}^n (\bW_{(\st)})_i\big(\bx_i - \bV_{(\st+1)}
  (\bu_{(\st+1)})_i\big).
\end{equation}
\item[\textbf{(d)}] Update $\bW_{(\st)}$ according to 
  \eqref{eq:updateW},
  \eqref{eq:cellweight}, and \eqref{eq:caseweight}
  with the new $\bV_{(\st+1)}$, $\bU_{(\st+1)}$ and $\bmu_{(\st+1)}$.
\end{itemize}
The pseudocode of the algorithm is in 
Section~\ref{app:pseudocode}
of the Supplementary Material.

\begin{proposition} \label{the_1P}
Each iteration step of the algorithm decreases 
the objective function~\eqref{eq:objP}, 
that is, 
 $L_{\rho_1,\rho_2}(\bX,\bP_{(\st+1)},
 \bX_{(\st+1)}^0,\bmu_{(\st+1)}) \leqslant 
 L_{\rho_1,\rho_2}
 (\bX,\bP_{(\st)},\bX_{(\st)}^0,\bmu_{(\st)})$\,.
\end{proposition}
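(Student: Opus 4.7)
The plan is to treat the iteration as a block majorization--minimization step: the weighted PCA loss \eqref{eq:WPCA} with frozen weights $\bW_{(k)}$ is, up to an additive constant and a positive multiplier, a tangent majorant of $L_{\rho_1,\rho_2}$ at the current iterate $\theta_{(k)} := (\bP_{(k)},\bX^0_{(k)},\bmu_{(k)})$. The analytic input is that for $\rho\in\{\rho_1,\rho_2\}$ the map $z\mapsto \rho(\sqrt z)$ is concave on $[0,\infty)$, equivalently $u\mapsto \psi(u)/u$ is non-increasing on $[0,\infty)$; this is visible in the right panel of Figure~\ref{fig:weight_sqrt} and can be checked piece by piece from the definition of $\rho_{b,c}$. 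Its tangent bound at $z_0=(r_0/\sigma)^2$ evaluated at $z=(r/\sigma)^2$ reads
\begin{equation*}
  \rho(r/\sigma)\;\leqslant\;\rho(r_0/\sigma)+
  \frac{\psi(r_0/\sigma)}{2\,(r_0/\sigma)}\,
  \frac{r^2-r_0^2}{\sigma^2},
\end{equation*}
and the coefficient is precisely the IRLS weight of \eqref{eq:cellweight}--\eqref{eq:caseweight}.

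First I would apply this bound to $\rho_1$ with $r_0 = r_{ij}^{(k)}$, multiply by $m_{ij}\hsigma_{1,j}^2/m_i$, and sum over $j$ using \eqref{eq:rt_i}; this yields $\rt_i^2 - (\rt_i^{(k)})^2 \leqslant \frac{1}{2m_i}\sum_j m_{ij}\,w_{ij}^{c,(k)}\bigl(r_{ij}^2 - (r_{ij}^{(k)})^2\bigr)$.
Then I would apply the same bound to $\rho_2$ with $r_0 = \rt_i^{(k)}$, multiply by $m_i\hsigma_2^2/m$, sum over $i$, and substitute the preceding inequality; multiplication by $w_i^{r,(k)}\geqslant 0$ preserves the direction. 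Recognising $w_{ij}^{(k)} = w_i^{r,(k)}\,m_{ij}\,w_{ij}^{c,(k)}$ from \eqref{eq:updateW} collapses everything into the single bound
\begin{equation*}
  L_{\rho_1,\rho_2}(\theta) - L_{\rho_1,\rho_2}(\theta_{(k)})
  \;\leqslant\; \frac{1}{4m}\sum_{i,j}
  w_{ij}^{(k)}\bigl(r_{ij}^2(\theta)-r_{ij}^2(\theta_{(k)})\bigr),
\end{equation*}
valid for every admissible $\theta$, with equality at $\theta=\theta_{(k)}$. Up to a constant independent of $\theta$, the right-hand side is a positive multiple of \eqref{eq:WPCA} with weights $\bW_{(k)}$.

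It then suffices to check that $\theta_{(k+1)}$ does not increase $\sum_{ij} w_{ij}^{(k)} r_{ij}^2(\theta)$. Steps (a), (b), (c) are block coordinate descent minimizations of \eqref{eq:WPCA} over $\bV$, over $\bU$ given $\bV_{(k+1)}$, and over $\bmu$ given $\bV_{(k+1)}$ and $\bU_{(k+1)}$, each producing the closed-form expressions \eqref{eq:updateV}, \eqref{eq:updateU}, \eqref{eq:updatemu} via the first-order equivalence proved in Section~\ref{app:firstorder}; step (d) only refreshes the weights and does not enter the argument. Hence $\sum_{ij} w_{ij}^{(k)} r_{ij}^2(\theta_{(k+1)}) \leqslant \sum_{ij} w_{ij}^{(k)} r_{ij}^2(\theta_{(k)})$, and setting $\theta = \theta_{(k+1)}$ in the previous display delivers $L_{\rho_1,\rho_2}(\theta_{(k+1)}) \leqslant L_{\rho_1,\rho_2}(\theta_{(k)})$.

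The main obstacle is the concavity claim for $z\mapsto\rho_{b,c}(\sqrt z)$, which I would verify by computing $\psi_{b,c}(u)/u$ on each of the three pieces of $\psi_{b,c}$ (linear, $\tanh$ branch, zero) and confirming monotonicity; continuity at $u=b$ and $u=c$ follows from the continuity constraints on $q_1,q_2$ already imposed. A secondary subtlety is that step (b) uses $\btW_i$ rather than the full $\bW_i$; this is harmless because $w_i^{r,(k)}$ is a common positive scalar in the $i$-th block of \eqref{eq:WPCA} and factors out of the first-order condition for $\bu_i$, as noted around \eqref{eq:condi2cellwise}.
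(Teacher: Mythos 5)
Your argument is correct and is essentially the paper's own proof in different packaging: the two chained tangent bounds obtained from concavity of $z\mapsto\rho(\sqrt z)$ reproduce Lemmas~\ref{le_2} and~\ref{le_3} (concavity of the loss as a function of the vector of squared residuals plus the supporting-hyperplane inequality, whose gradient is exactly $\vecmat(\bW_{(k)})/(4m)$, matching your constant), and your block-coordinate-descent observation is Lemma~\ref{le_1P}. The only detail to tighten is step (b) when $w_i^{r,(k)}=0$: the row weight is then not a \emph{positive} scalar that factors out, but the $i$-th block of \eqref{eq:WPCA} vanishes identically so any $\bu_i$ minimizes it, which is how the paper disposes of that case.
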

This monotonicity result says that going from 
$(\bP_{(\st)},\bX_{(\st)}^0,\bmu_{(\st)})$ to
$(\bP_{(\st+1)},\bX_{(\st+1)}^0,\bmu_{(\st+1)})$ reduces
the variability around the PCA subspace.
Therefore it is a {\it concentration step} in 
the terminology of \cite{FastMCD}.
The proof is given in Section~\ref{app:monotone} 
of the Supplementary Material, and uses the 
fact that for $z \geqslant 0$ the functions 
$z \rightarrow \rho_1(\sqrt{z})$ 
and $z \rightarrow \rho_2(\sqrt{z})$ are
differentiable and concave, as we can see
in Figure~\ref{fig:rho_psi}.
Since the objective function is decreasing 
and it has a lower bound of zero, the algorithm
must converge.

The complexity of the cellPCA algorithm is 
derived in Section~\ref{app:complexity} of the 
Supplementary Material. Its space complexity is
$O(np)$, which equals that of the dataset.
Its time complexity is that of the initial 
estimator MacroPCA, namely 
$O(np(\min(n,p) + \log(n) + \log(p))$. 
This is not much higher than the complexity 
$O(np\min(n,p))$ of classical PCA.

\section{Large-sample properties}
\label{sec:theo}
In the following, the influence function and 
asymptotic normality of cellPCA are presented. All 
the proofs are provided in Section~\ref{app:proofs} 
of the Supplementary Material.

The influence function (IF) is a key robustness tool. 
It reveals how an estimating functional, i.e., a 
mapping from a space of probability measures to a 
parameter space, changes due to a small amount of 
contamination. 
Consider a $p$-variate random variable $X$
with distribution $H_0$. We then contaminate it 
as in \eqref{eq:cont_both} where $H_Z= \Delta_{\bz}$ 
is the distribution that puts all of its mass in a 
fixed $p$-variate vector 
$\bz=\left(z_1, \ldots, z_p\right)^T$, yielding 
\begin{equation} \label{eq:cont_cell_z}
X_{\eps}=A \odot X + (\bone_p-A) \odot \bz 
\end{equation}
with $X \sim H_0$, $A \sim G_\eps$ and 
$A = A^{\case} \odot A^{\cell}$.

The usual casewise IF of \cite{hampel1986} 
is used for casewise contamination under the FDCM 
model (as defined in the introduction). It has 
$A = A^{\case}$ with independent $X$ and 
$A^{\case}$. In that situation the distribution of 
$X_{\eps}$ simplifies to $(1-\eps^{\case})H_0 + 
\eps^{\case}\Delta_{\bz}$. We denote $G_\eps$ as 
$G_\eps^D$ and the distribution of $X_{\eps}$ as 
$H(G_\eps^D,\bz)$. For a functional $\bT$ with 
values in $\mathbb R^p$, the casewise influence 
function is then defined as
\begin{equation}
\label{eq:caseIF}
\IFu_{\case}(\bz, \bT, H_0)=\left.\frac{\partial}{\partial \eps} \bT(H(G_\eps^D,\bz))\right|_{\eps=0}=\,\lim _{\eps \downarrow 0}\frac{\bT\left(H(G_\eps^D,\bz)\right)-\bT(H_0)}{ \eps}\;.
\end{equation} 

\cite{alqallaf2009} proposed a cellwise version of the IF as well.
It considers the contaminated variable~\eqref{eq:cont_cell_z} and the FICM model with $P(A_j^{\cell} = 1) = 1-\eps^{\cell}$ for all $j=1,\ldots,p$. We now denote $G_\eps$ as $G_\eps^I$ and the distribution of 
$X_{\eps}$ as $H(G_\eps^I,\bz)$.
The cellwise influence 
function $\IFu_{\cell}(\bz, \bT, H_0)$ is then defined as
\begin{equation}
\label{eq:gIF}
\IFu_{\cell}(\bz, \bT, H_0)=\left.\frac{\partial}{\partial \eps} \bT(H(G_\eps^I,\bz))\right|_{\eps=0}=\,\lim _{\eps \downarrow 0}\frac{\bT\left(H(G_\eps^I,\bz)\right)-\bT(H_0)}{ \eps}
\end{equation} 

So far influence functions of principal components 
have only been computed under the FDCM and for 
casewise robust PCA methods 
\citep{debruyne2009influence,croux2017robust}. In 
the following, we derive the casewise and cellwise 
IF of cellPCA. We aim to study the robustness 
properties of $\bPi$ characterized by $\bP$.
When there are no missing values we can write the 
functional version $(\bP(H),\bmu(H))$ of 
the minimizer of \eqref{eq:objP} as
\begin{align} \label{eq:ux}
\hspace{-6mm}
  \left(\bP(H),\bmu(H)\right)&=\argmin_{\bP,\bmu}
  \E_H\left[\rho_2 \left(\frac{1}{\sigma_2(H)}\sqrt{\frac{1}{p}
  \sum_{j=1}^{p} \sigma_{1,j}^2(H)\rho_1\left(\frac{
  x_{j}-\mu_j-\bp_j^T\bx^0}{\sigma_{1,j}(H)}\right)}
  \right)\right]\nonumber\\
  \text{such that}\quad \bx^0&= \argmin_{\bx^0}
  \rho_2 
  \left(\frac{1}{\sigma_2(H)}\sqrt{\frac{1}{p}\sum_{j=1}^{p} 
  \sigma_{1,j}^2(H)\rho_1 \left(\frac
  {x_{j}-\mu_j-\bp_j^T\bx^0}{\sigma_{1,j}(H)}
  \right)} \right)
\end{align}
where $\bx^0=\left(x^0_{1},\dots,x^0_{p}\right)^T$ 
and $\bP$ satisfy the usual constraints and 
$\bx=\left(x_1,\dots,x_p\right)^T\sim H$ with $H$ 
a distribution on $\mathbb R^p$, $\bp_j$ and $\mu_j$ 
stay as before, and $\sigma_{1,j}(H)$ and 
$\sigma_2(H)$ are 
the initial scale estimators of 
$r_{j} :=x_{j}-\mu_j- \bp_j^T\bx^0$ and 
$\rt :=\sqrt{\frac{1}{p}\sum_{j=1}^{p} 
\sigma_{1,j}^2(H)\rho_1(r_{j}/\sigma_{1,j}(H))}$. 
If we parametrize $\bP$ by an orthonormal matrix 
$\bV$ with $\bV\bV^T = \bP$ and set the corresponding
$\bU = \bX^0 \bV$, the functional versions of the
first-order conditions 
\eqref{eq:condi1}--\eqref{eq:condi3} must hold:
\begin{align} 
  \label{eq:C1}
  \E_{H}\left[ \bW_{\bx}\bV\bu_{\bx}\bu_{\bx}^T\right]
  &= \E_{H}\left[\bW_{\bx}(\bx-\bmu)\bu_{\bx}^T\right]\\
  \label{eq:C2} 
  \left(\bV^T\bW_{\bx}\bV\right)\bu_{\bx} &=
     \bV^T\bW_{\bx}(\bx-\bmu)\\
  \label{eq:C3}
  \E_{H}\left[\bW_{\bx}\bV\bu_{\bx}\right] &=
   \E_{H}\left[\bW_{\bx} 
   \left(\bx-\bmu\right)\right]\,.
\end{align}
Here $\bW_{\bx}=\diag(\bw_{\bx})$ for 
    $\bw_{\bx}=\left(w_1,\dots,w_p\right)^T$. 
The components of $\bw_{\bx}$ are 
$w_j=w^{\cell}_jw^{\case}$ with  
$ w_{j}^{\cell}=\psi_1\left(\frac{r_{j}}{\sigma_{1,j}}
   \right)\Big/\frac{r_{j}}{\sigma_{1,j}}$
and $w^{\case}=\psi_2\left(\frac{\rt}{\sigma_2}\right)
\Big/\frac{\rt}{\sigma_2}$.
For simplicity we will assume that 
$\bmu=\bzero$. 
\begin{proposition}\label{prop2}
The influence functions of $\bP$ under FDCM and FICM are
\begin{equation}
\label{eq:IFPFDCM}
\IFu_{\textnormal{\scriptsize case}}(\bz,\bP,H_0)= -\bD 
  \Big(\bS\IFu_{\textnormal{\scriptsize case}}(\bz,\bsigma,H_0) +
  \bg(\Delta_{\bz},\bV_0,\bsigma_0)\Big)
\end{equation}
and
\begin{equation}
\label{eq:IFPFICM}
\IFu_{\textnormal{\scriptsize cell}}(\bz,\bP,H_0)= -\bD
  \Big(\bS\IFu_{\textnormal{\scriptsize cell}}(\bz,\bsigma,H_0) + p\sum_{j=1}^p 
  \bg(H(G_1^j,\bz),\bV_0,\bsigma_0)\Big)
\end{equation}
with $\bsigma(H)=\left(\sigma_{1,1}(H), \dots, 
\sigma_{1,p}(H), \sigma_{2}(H)\right)^T$, 
$\bP(H_0)=\bV_0\bV_0^T$, $\bsigma_0=\bsigma(H_0)$ and
\begin{equation}
\label{eq:gP}
   \bg(H,\bV,\bsigma) = \vect\big(\E_{H}[ 
   \bW_{\bx}(\bV\bu_x-\bx)\bu_x^T]\big)
\end{equation}
where $\vect(\CD)$ is the vectorization operator that 
converts a matrix to a vector by stacking its columns on 
top of each other. The matrices $\bD$ and $\bS$ are 
computed in  Sections~\ref{app:proofs} and \ref{app:DS}
of the Supplementary Material, and 
$\IFu_{\textnormal{\scriptsize case}}(\bz,\bsigma)$ and $\IFu_{\textnormal{\scriptsize cell}}(\bz,\bsigma)$ 
are the casewise and cellwise influence functions of 
$\bsigma$. The $G_1^j$ in \eqref{eq:IFPFICM} puts all its 
mass in $(1,\ldots,0,\ldots,1)$ with the single $0$ in 
position $j$. Moreover, choosing a different 
parametrization $\bV_0$ of $\bP(H_0)$ yields the same
$\IFu_{\textnormal{\scriptsize case}}(\bz,\bP)$ and $\IFu_{\textnormal{\scriptsize cell}}(\bz,\bP)$.
\end{proposition}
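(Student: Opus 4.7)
The plan is to differentiate implicitly the three functional first-order conditions \eqref{eq:C1}--\eqref{eq:C3} along a one-parameter family of contaminated distributions. Under FDCM I would use $H_\eps^D = (1-\eps)H_0 + \eps \Delta_\bz$; under FICM the independence of the Bernoulli indicators $B_j$ in \eqref{eq:pointmass} gives the expansion
\begin{equation*}
H_\eps^I = (1-\eps)^p H_0 + \eps(1-\eps)^{p-1}\sum_{j=1}^p H(G_1^j, \bz) + O(\eps^2),
\end{equation*}
so that in both models the $\eps$-derivative at zero acts on integrals against $H_\eps$ as a signed linear combination of point-mass or single-coordinate-contaminated distributions. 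These two expansions already foreshadow the distributional terms $\bg(\Delta_\bz, \bV_0, \bsigma_0)$ and $\sum_{j=1}^p \bg(H(G_1^j, \bz), \bV_0, \bsigma_0)$ appearing in \eqref{eq:IFPFDCM} and \eqref{eq:IFPFICM}.

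The key structural observation is that condition \eqref{eq:C1} can be written compactly as $\bg(H, \bV(H), \bsigma(H)) = \bzero$ (using $\bmu(H_0) = \bzero$), and that this functional already vanishes at the model because it is itself a first-order condition. Differentiating at $\eps = 0$ therefore produces a clean chain-rule expansion
\begin{equation*}
\bzero = \frac{\partial}{\partial \eps}\vect\,\bg(H_\eps, \bV_0, \bsigma_0)\Big|_{\eps=0} + \bD^{-1}\,\IFu(\bz, \vect\bV, H_0) + \bS\,\IFu(\bz, \bsigma, H_0),
\end{equation*}
where $\bD^{-1}$ is the Jacobian of $\vect\,\bg$ in $\vect \bV$ and $\bS$ the Jacobian in $\bsigma$, evaluated at $(H_0, \bV_0, \bsigma_0)$. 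Before solving for $\IFu(\bz, \vect\bV, H_0)$, the nuisance score $\bu_\bx$ implicitly defined by \eqref{eq:C2} and the centre $\bmu$ implicitly defined by \eqref{eq:C3} must be eliminated: I would linearize those two conditions around $(\bV_0, \bzero, \bsigma_0)$, express their $\eps$-derivatives in closed form as linear combinations of $\IFu(\bz, \vect\bV, H_0)$ and $\IFu(\bz, \bsigma, H_0)$, and absorb the resulting contributions into the explicit operators $\bD$ and $\bS$ computed in Section~\ref{app:DS}. Near $H_0$ most cells satisfy $\psi_1'(\cdot) = \psi_2'(\cdot) = 1$, simplifying the block structure considerably; solving the linear system and then applying $\bP = \bV \bV^T$ delivers the two stated formulas. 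The factor $p$ in \eqref{eq:IFPFICM} is tracked from the per-row normalization $1/p$ inside the objective \eqref{eq:objP} combined with the FICM expansion above.

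The main obstacle is parametrization invariance. Since $\bV$ is only determined up to right multiplication by an orthogonal matrix $\bO$, the object $\IFu(\bz, \vect\bV, H_0)$ itself depends on the choice of basis. I would resolve this by forming
\begin{equation*}
  \IFu(\bz, \bP, H_0) = \IFu(\bz, \bV, H_0)\bV_0^T + \bV_0\,\IFu(\bz, \bV, H_0)^T,
\end{equation*}
substituting an alternative basis $\btV_0 = \bV_0 \bO$, and invoking the same algorithmic invariance exploited in Section~\ref{sec:algo}: both $\IFu(\bz, \bV, H_0)$ and $\bV_0^T$ transform under $\bO$ in a way that leaves the symmetric combination above unchanged, which proves the final parametrization-invariance statement. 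The remaining work is careful bookkeeping: expanding the cellwise and rowwise weight derivatives through $\psi_1'$ and $\psi_2'$, tracking their coupling via the rowwise total deviation $\rt$, and verifying invertibility of $\bD^{-1}$ on the relevant tangent space to the Stiefel manifold at $\bV_0$, which is what licenses writing the answer in the pre-multiplied form $-\bD(\cdots)$ stated in the proposition.
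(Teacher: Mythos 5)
Your overall strategy is the same as the paper's: treat the stationarity condition $\bg(H,\bV(H),\bsigma(H))=\bzero$ as an implicit equation, expand the contaminated distribution as a mixture ($(1-\eps)H_0+\eps\Delta_{\bz}$ under FDCM, the $\delta_k(\eps)$ expansion under FICM), apply the chain rule through $\vect(\bV)$ and $\bsigma$, invert the Jacobian, and push the resulting influence function of $\bV$ forward to $\bP$. The paper additionally has to extend $\bg$ to $\eps<0$ by an odd reflection before invoking the implicit function theorem, since $H(G_\eps,\bz)$ is only defined for $\eps\geqslant 0$; your sketch glosses over this, but it is a minor technical device. The paper also avoids your proposed elimination of $\bmu$ by simply assuming $\bmu=\bzero$, and it folds the dependence of $\bg$ on $\bV$ and $\bsigma$ through $\bu_{\bx}$ into numerically computed Jacobians $\bB$ and $\bS$ rather than linearizing \eqref{eq:C2} analytically.

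Two steps in your proposal would fail as written. First, your conversion $\IFu(\bz,\bP,H_0)=\IFu(\bz,\bV,H_0)\bV_0^T+\bV_0\,\IFu(\bz,\bV,H_0)^T$ differentiates $\bV\bV^T$, which equals the projection matrix only when $\bV$ has orthonormal columns. The perturbed $\bV(H(G_\eps,\bz))$ produced by the population algorithm does not stay on the Stiefel manifold, so the paper differentiates $\bP=\bV(\bV^T\bV)^{-1}\bV^T$ instead and obtains $(\bI_p-\bP_0)\dot{\bV}\bV_0^T+\bV_0\dot{\bV}^T(\bI_p-\bP_0)$ with $\dot{\bV}:=\partial_\eps\bV(H(G_\eps,\bz))|_{\eps=0}$. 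The extra projections $(\bI_p-\bP_0)$ are precisely what produce $\bD=\bR_0\bB^{-1}$ with $\bR_0=\bV_0\otimes(\bI_p-\bP_0)+((\bI_p-\bP_0)\otimes\bV_0)\bK_{(p,q)}$ — so $\bD$ is not the inverse Jacobian of $\bg$ alone, as your notation $\bD^{-1}$ suggests — and without them your formula carries a spurious term $\bV_0(\bV_0^T\dot{\bV}+\dot{\bV}^T\bV_0)\bV_0^T$ that does not vanish, because $\bV_0^T\dot{\bV}+\dot{\bV}^T\bV_0=\partial_\eps(\bV_\eps^T\bV_\eps)|_{\eps=0}$ is nonzero off the Stiefel manifold. (Your reparametrization-invariance argument itself is fine: $\dot{\btV}=\dot{\bV}\bO$ makes the $\bO\bO^T$ factors cancel, exactly as in the paper.) Second, your FICM expansion $(1-\eps)^pH_0+\eps(1-\eps)^{p-1}\sum_j H(G_1^j,\bz)+O(\eps^2)$ differentiates to $\sum_{j}\bg(H(G_1^j,\bz),\bV_0,\bsigma_0)$ with coefficient one, not the coefficient $p$ in \eqref{eq:IFPFICM}; the paper obtains that factor from $\delta_1'(0)=p$ multiplying the whole sum over $j$, and your attempt to recover it from the per-row normalization $1/p$ in the objective does not work — that $1/p$ sits inside $\rt$ and is already absorbed into $\bW_{\bx}$, so it cannot supply the missing factor.
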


Note that \eqref{eq:gP} expresses one of the
first order conditions, but the other first-order 
condition $\left(\bV^T\bW_{\bx}\bV\right)\bu_{\bx} =
\bV^T\bW_{\bx}(\bx-\bmu)$ must hold as well, and acts 
as a constraint. Moreover, $\bg$ depends on $\bsigma$ 
through $\bW_{\bx}$ and $\bu_x$\,. 
Also note that $H(G_1^j,\bz)$ in \eqref{eq:IFPFICM} is 
the distribution of $\bX \sim H_0$ but with its 
$j$th component fixed at the constant $z_j$\,. 
It is thus a degenerate distribution concentrated on 
the hyperplane $X_j = z_j$\,.

\begin{figure}[ht]
\centering
\vspace{3mm}
\includegraphics[width=0.45\textwidth]
   {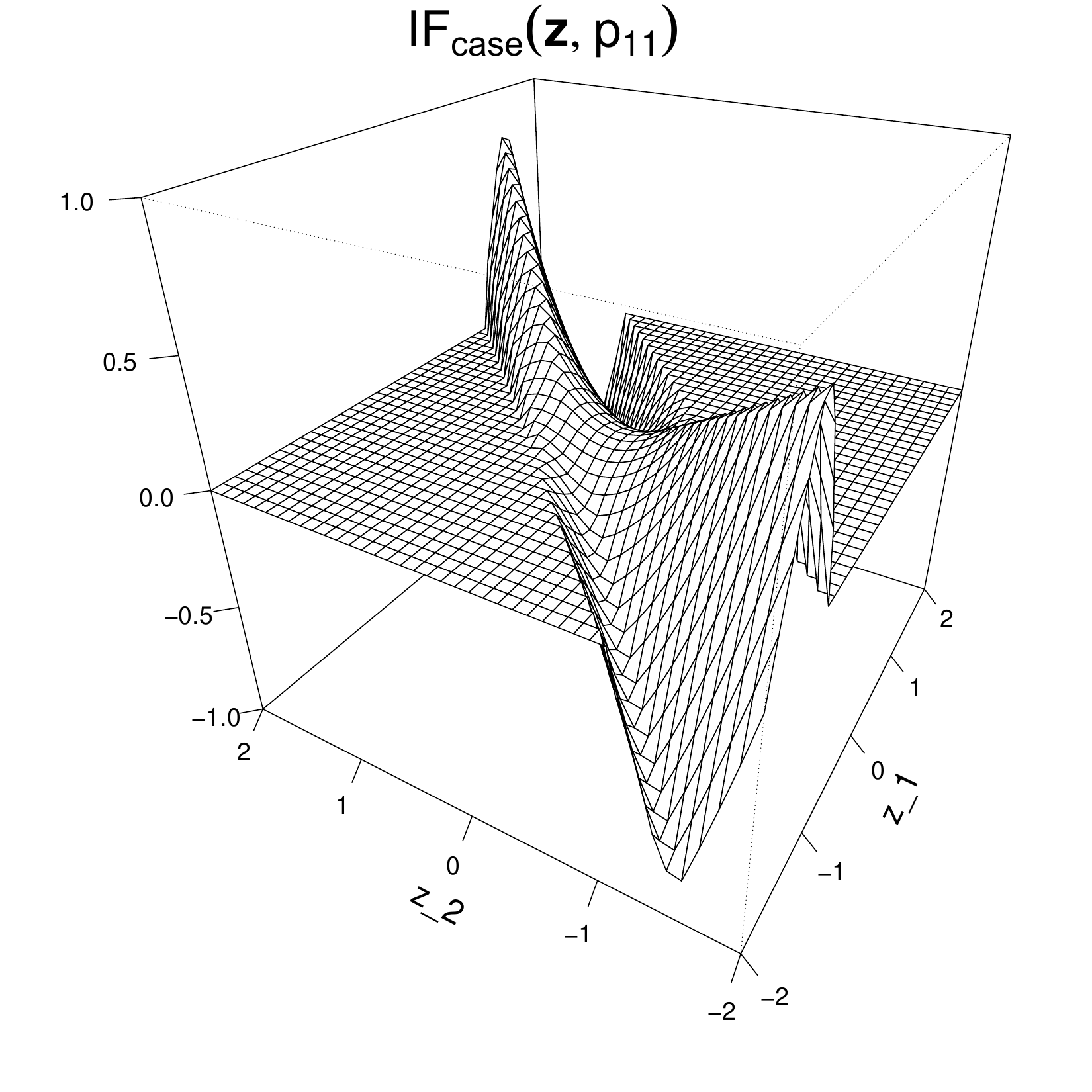}
\includegraphics[width=0.45\textwidth]
   {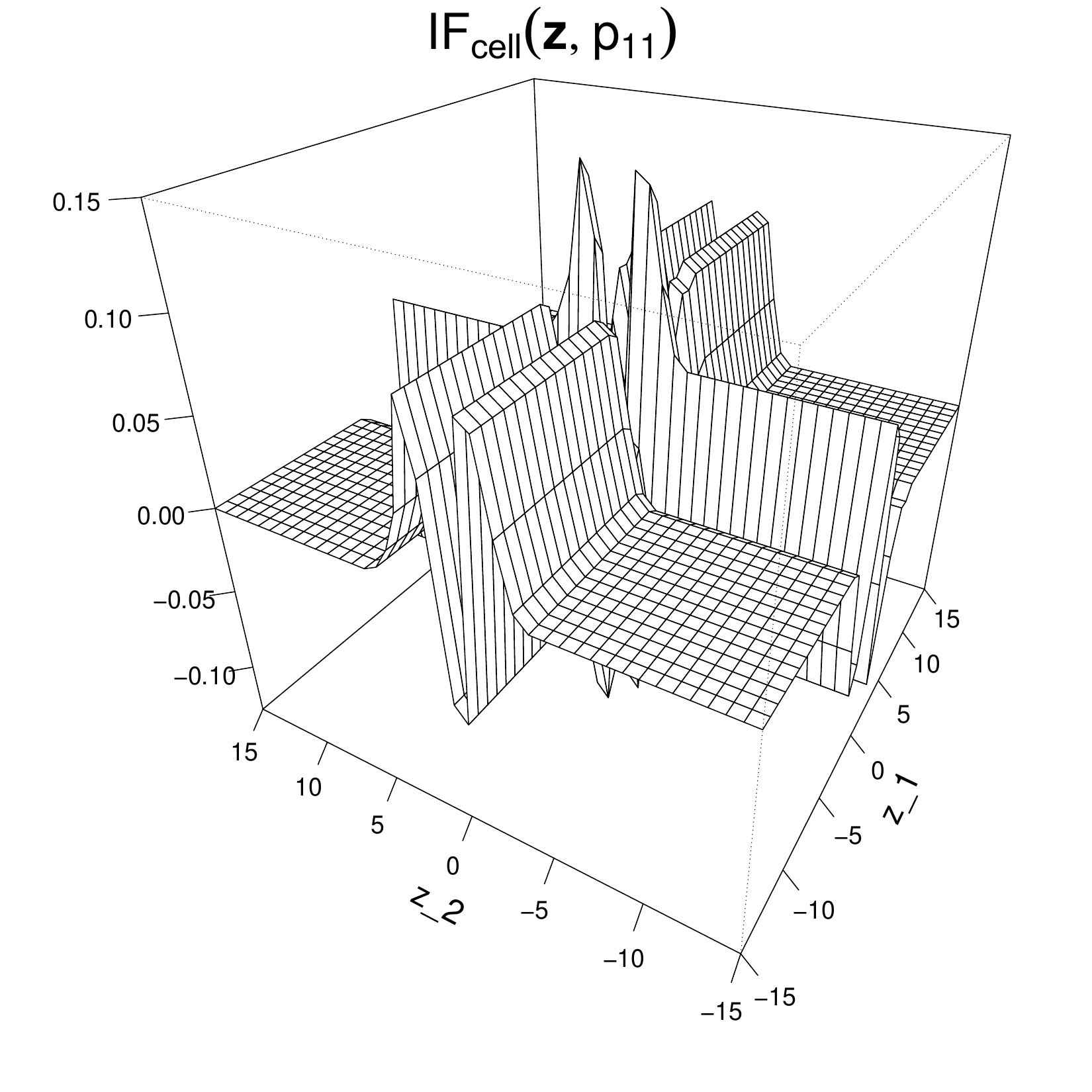}\\
\vspace{-4mm}
\caption{Casewise (left) and cellwise (right)
$\IFu(\bz,p_{11})$ for bivariate normal $H_0$.}
\label{fig_IFP11}
\end{figure}

Let us look at a special case to get a feel for 
these results. For our model distribution we choose the
bivariate normal 
$H_0 = N\Big(\bzero, \begin{bsmallmatrix} 1 & 0.9\\ & \\ 
0.9 & 1 \end{bsmallmatrix}\Big)$ and we want to fit a 
one-dimensional PCA subspace ($\rk=1$). The PCA subspace
of the population is thus the $45^{\circ}$ line $x_2 = x_1$\,.
The left panel of Figure~\ref{fig_IFP11} shows the casewise 
influence function of the entry $p_{11}$ of the estimated 
projection matrix $\bP$. 
If we look along the line $z_2=-z_1$ we see the shape of the 
$\psi$-function in the right panel of Figure~\ref{fig:rho_psi} 
which is bounded and redescending, from which we conclude that 
cellPCA is rather insensitive to outliers orthogonal to the 
fitted subspace. Along lines of the type 
$z_2 = z_1 + \mbox{constant}$ with small nonzero constant, the 
IF is unbounded. This is harmless because those $(z_1,z_2)$ 
are so-called good leverage points, which are closely aligned 
with the fitted subspace. 
For the bad leverage points, i.e.\ with large constant, the IF 
is zero. Also note that $\bP$ itself is always
bounded, since any rank-$\rk$ projection matrix has Frobenius
norm $||\bP||_F=\sqrt{\rk}$\,.

The right panel of Figure~\ref{fig_IFP11} shows the 
cellwise IF in the same setting, which looks quite 
different. As a function of $z_1$ for a large fixed $z_2$ 
it again has the shape of the 
$\psi$-function, and for large $|z_1|$ the cellwise weight 
becomes zero. The situation is the same in the other direction. 
Also note that this IF is bounded. The exact same behavior 
was found by \cite{alqallaf2009} for a redescending 
M-estimator of a bivariate location vector $[\mu_1\;\mu_2]^T$.

It is harder to visualize the IF of the entire 
$2 \times 2$ matrix $\bP$. 
Figure~\ref{fig_IFbivPV} shows the norm of the 
casewise and cellwise IF of $\bP$, in the same 
setting as in Figure~\ref{fig_IFP11}. The shape 
is similar to before, bearing in mind that the 
norm is nonnegative.

\begin{figure}[ht]
\centering
\vspace{3mm}
\includegraphics[width=0.45\textwidth]
    {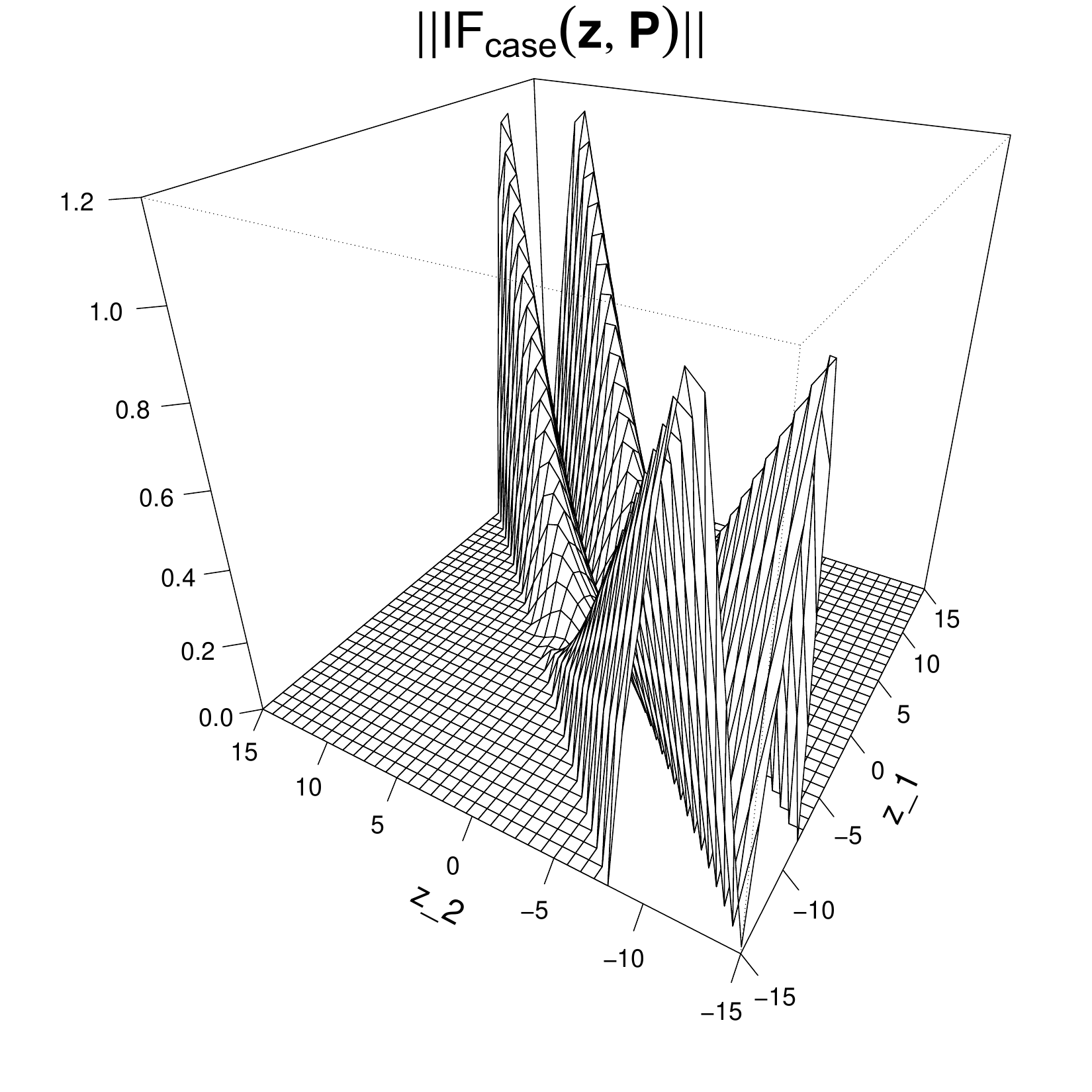}
\includegraphics[width=0.45\textwidth]
    {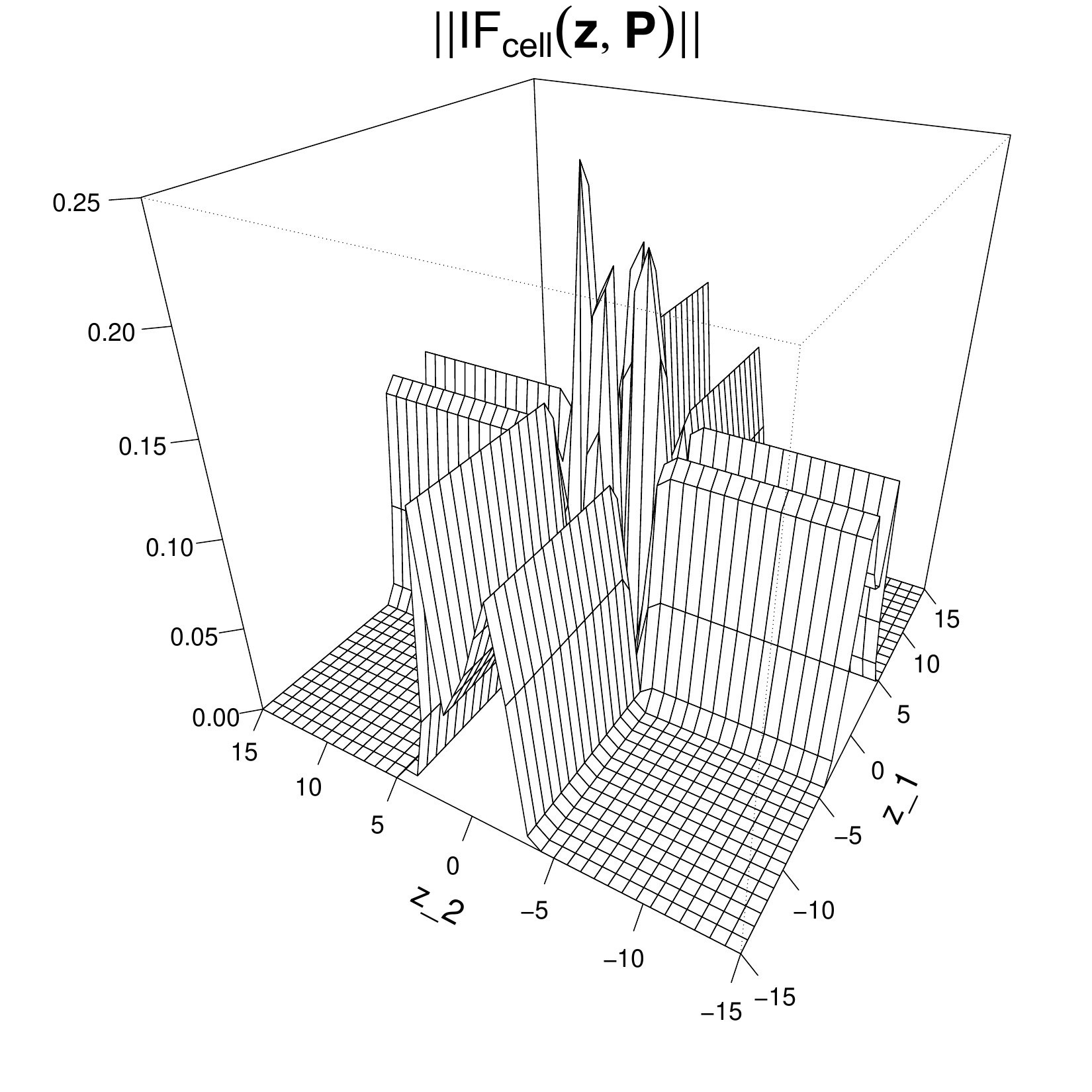}\\
\vspace{-4mm}
\caption{Norm of casewise (left) and cellwise 
(right) $\IFu(\bz,\bP)$ for bivariate normal $H_0$.}
\label{fig_IFbivPV}
\end{figure}

Let us now look at the asymptotic distribution 
of cellPCA. Suppose we obtain i.i.d. observations
$\bx_1,\bx_2,....$ from $H_0$\,.
For simplicity we assume that $\bmu=\bzero$ and that
$\hsigma_{1,j}$ and $\hsigma_2$ are fixed at 
$\sigma_{1,j}(H)$ and $\sigma_{2}(H)$. We use a 
fixed initial estimate $\bP_{(0)}$ obtained as in 
Section~\ref{sec:obj}. We then choose an orthonormal
basis in the principal subspace, that is, a $p\times \rk$
matrix $\bV_{(0)}$ with orthonormal columns such that 
$\bV_{(0)} \bV_{(0)}^T = \bP_{(0)}$\,. We start the 
algorithm from $\bV_{(0)}$\,, yielding a $\bhV_n$ for 
each sample $\{\bx_1,\bx_2,....,\bx_n\}$. We know that 
$\bV_{(0)}$ is not unique, but it is shown in 
Section~\ref{app:algo} of the Supplementary Material that the resulting estimates 
$\bhP_n$ are the same no matter which $\bV_{(0)}$ was 
chosen. Let us assume that $\bhV_n$ converges in 
probability to the population version $\bV(H)$. We 
now denote
\begin{equation*}
  \widehat{\Lambda}_n(\bV):= \frac{1}{n}
  \sum_{i=1}^{n}\bPsi(\bx_i,\bV)\qquad \mbox{and} 
  \qquad\Lambda(\bV):= \E_{H} \bPsi(\bx, \bV)
\end{equation*}
where $\bPsi(\bx_i,\bV)=\left(\bPsi_1(\bx_i,\bV),\dots, 
\bPsi_{p\rk}(\bx_i,\bV) \right)^T=\vect\left(\bW_{i}
\left(\bx_i-\bV\bu_{i} \right)\bu_{i}^T\right)$, 
$\bu_i=\bV^T\bx_i^0$\,, and $\bx_i^0$ satisfies 
$\left(\bP\bW_i\bP\right)\bx_i^0 = \bP\bW_i\bx_i^0$\,.
Then $\widehat{\Lambda}_n(\bhV_n)=\bzero$ and 
$\Lambda(\bV(H))=\bzero$.
\begin{proposition} \label{the_3}
Assume that $\bPsi$ is twice differentiable with respect 
to $\bV$ with bounded second derivatives, and that 
$\dot{\bPsi}_\der =\frac{\partial \bPsi_\der}
{\partial \vect\left(\bV\right)}$ exists and satisfies  
\begin{equation*}
    \left|\dot{\bPsi}_\der(\bx, \bV)\right| \leqslant K(\bx)
    \text { with } \E_{H}[K(\bx)]<\infty
\end{equation*}
for any $\der = 1,\ldots,p\rk$\,, $\bV$ and $\bx$. 
Then $\bhP_n \rightarrow \bP(H)$ in probability, and
\begin{equation*}
  \sqrt{n}\vect\left(\bhP_n-\bP(H)\right) 
  \rightarrow 
  N_{p^2}(\bzero,\bTheta)
\end{equation*}
in distribution, where 
$\bTheta=\E_{H}\left[\IFu_{\case}(\bx,\bP)
\IFu_{\case}(\bx,\bP)^T\right]$
does not depend on the parametrization of $\bP$.
\end{proposition}

\section{Practical extensions of the methodology}
\label{sec:extensions}

\subsection{Selecting the rank $\rk$}
\label{sec:rank}

The rank $\rk$ of the PCA model~\eqref{eq:model} 
is rarely given in advance, one typically 
needs to select it based on the data.
In classical PCA one defines the 
proportion of explained variance, given by
$1 - \nu_\ran/\nu_0$ where 
$\nu_\ran := ||\bX-\bhX_\ran||_F^2$   
in which $\bhX_\ran$ is the best 
approximation of $\bX$ of rank $\ran$. 
In particular 
$\nu_0 = ||\bX-\bone_n\bxbar^T||_F^2$
where $\bxbar$ is the sample mean. 
This $\nu_\ran$ is computed for a range 
of $\ran$ values. One can then make a
plot of $\nu_\ran$ versus $\ran$, which
is called a scree plot. The rank $\rk$ may 
be selected by looking for an `elbow' in 
the scree plot \citep{jolliffe2011principal}.
The elbow can be selected visually, or in an
automatic way by means of the Kneedle 
algorithm \citep{satopaa2011finding}.

Here we define $\nu_\ran$ as the 
objective~\eqref{eq:objP} of the cellPCA 
fit of rank $\ran$. For $\nu_0$ we compute 
\eqref{eq:objP} on the cellwise residuals 
$x_{ij} - \mbox{median}_{\ell=1}^n 
(x_{\ell j})$\,. Afterward we can again 
search for an elbow in the scree plot of 
$\nu_\ran$ versus $\ran$ by the Kneedle
algorithm. 

\subsection{Imputation} 
\label{sec:imput}

When one or more cells of a data point 
$\bx_i$ get cell weights below 1, it means that 
the method considers those cells as contaminated.
In such a situation it may be useful to produce 
an imputed version of that point, in which  
suspicious cells are cleaned and missing cells
are filled in, whereas the other 
cells are kept as they are. So we want to
obtain a point $\bimpx_i$ whose cells are 
$\impx_{ij} = x_{ij}$ for all $j$ with 
$w_{ij}=1$, and with different cells $\impx_{ij}$ 
where $w_{ij}<1$. We would like the modified cells 
to be such that $\bimpx_i$ is closer to the PCA 
subspace, and such that the orthogonal projection 
of $\bimpx_i$ coincides with the fitted $\bhx_i$.

We do this as follows. 
From \eqref{eq:condi2cellwise} 
and $\bhx_i = \bV\bu_i + \bmu$ we know that
$(\btW_i(\bx_i - \bhx_i))^T \bV =\bzero$,
so $\btW_i(\bx_i - \bhx_i)$ is orthogonal
to the PCA subspace. We then construct 
the imputed point 
\begin{equation} \label{eq:impxi}
   \bimpx_i := \bhx_i + \btW_i(\bx_i - \bhx_i)
\end{equation}
so that its orthogonal projection on the PCA
subspace equals $\bhx_i$. Note that every
imputed cell 
$\impx_{ij} = \hx_{ij} + w^{\cell}_{ij} m_{ij}
(x_{ij} - \hx_{ij})$ lies between the original 
$x_{ij}$ and the fitted cell $\hx_{ij}$.

\begin{figure}[t]
\centering
\includegraphics[width=0.7\textwidth]
  {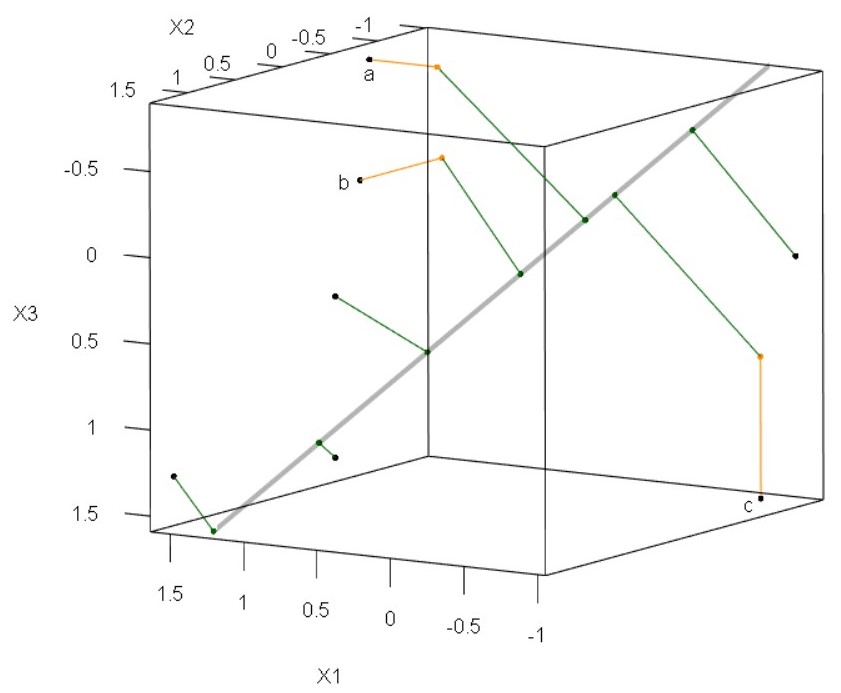}
\caption{Illustration of imputation for $p=3$ 
and $\rk=1$. The first cell of point $a$ was 
imputed before projecting it on the 
principal subspace, the second cell of $b$, 
and the third cell of $c$.}
\label{fig_imp}
\end{figure}

Figure~\ref{fig_imp} illustrates imputation for
a 1-dimensional principal subspace in 3-dimensional
space. Most points are projected orthogonally on the
fitted subspace, but the first cell of point $a$ is
imputed before projecting it, and the second 
cell of $b$, and the third cell of $c$. None 
of these cells would be considered outlying in 
their individual variables.

\subsection{Estimating a center and 
   principal directions}
\label{sec:est_pd}
So far our estimation targets were the 
principal subspace $\bPi$ and the fit 
$\bhX$. This is often sufficient, e.g. 
for face recognition, computer 
vision, signal processing, and data 
compression \citep{vaswani2018robust}.
However, in many other applications one 
may wish to obtain major  
directions in the principal subspace, as 
well as a good estimate of the center of the 
fitted data. These may facilitate 
interpretation of the PCA scores in $\bU$.

Note that the matrix $\bV$ obtained by the
algorithm in Section~\ref{sec:algo} need
not have orthonormal columns. But this can be
fixed by carrying out an SVD of $\bV$ and
putting its right singular vectors in the new
$p \times \rk$ matrix $\btV$\,.
From this we also immediately obtain a set of 
$\rk$-variate scores $\btU = \bU\bV^T\btV$,
and we denote $\btmu:=\bmu$.

This initial parametrization 
$(\btV,\btU,\btmu)$ 
is not yet satisfactory, because the columns
of $\btV$ do not reflect the shape of the
point cloud, and $\btmu$ does not have to
lie in its center. In order to obtain
principal axes in the PCA subspace, 
the algorithm carries out an additional 
step which estimates a center and a scatter 
matrix of the scores in $\btU$\,.
This needs to be done by a robust
estimation method to avoid that 
outlying score vectors have a large
effect on the result, as illustrated in
Appendix A.1 of \cite{hubert2019macropca}. 
For this estimation we use 
the fast and robust deterministic algorithm
DetMCD of \cite{hubert2012deterministic}, 
yielding $\bhmu_{\btU}$ and 
$\bhSigma_{\btU}$\,. 
The spectral decomposition of 
$\bhSigma_{\btU}$ yields a $\rk \times \rk$ 
loading matrix $\bhV_{\btU}$ and eigenvalues 
$\hat{\lambda}_1 \geqslant \hat{\lambda}_2 
\geqslant\ldots \geqslant \hat{\lambda}_\rk$\,. 
We set the final parameter estimates to 
$\bhmu := \btmu+\bhV_{\btU}\bhmu_{\btU}$, 
$\bhV:=\btV\bhV_{\btU}$, and $\bhU:=
(\btU-\bone_n \bhmu_{\btU}^T)\bhV_{\btU}$\,.

\subsection{Out-of-sample prediction}
\label{sec:newx}

The cellPCA estimation in Section~\ref{sec:method}
can be seen as a training stage. There can also
be an out-of-sample stage, where a new datapoint 
$\bx^*$ arrives and we wish to predict its
$\bhx^*$\,. For instance, in chemometrics 
one often carries out a principal component 
regression when there are more regressors 
(say, intensities at many wavelengths) than
cases. In the training stage, also called 
calibration, a PCA model of low rank $\rk$
is fitted to the regressors, after which the 
response is regressed on the low-dimensional 
scores. Afterward, when a new data point arrives, 
it is an $\bx^*$ without response. If $\bx^*$ 
is clean we can simply project it on the principal 
subspace to obtain its scores, and predict the 
response from them.

However, this task becomes nontrivial when 
$\bx^*$ contains NAs and/or cellwise outliers. 
Then we cannot just project $\bx^*$\,. 
Instead we need to estimate its scores vector 
$\bu^*$ for the given $\bV$ and $\bmu$.  
For this purpose we minimize the inner 
part of the objective \eqref{eq:objP} for a 
single $\bx_i = \bx^*$\,. That is, we minimize
\begin{equation}\label{eq:xstar}
\sum_{j=1}^{p}  m^*_{j}\, \hsigma_{1,j}^2\,\rho_1
  \!\left(\frac{x^*_j-\mu_j-\bv_j^T\bu}
  {\hsigma_{1,j}}\right)
\end{equation}
where $\bV$ and $\bmu$ are now known and
$x_{ij}$ was replaced by $x^*_j$ and $m_{ij}$ 
by $m^*_j$\,. When all $m^*_j$ are zero we set $\bu^*$ 
and $\bhx^*$ to NA. If not, \eqref{eq:xstar} 
becomes a sum over $J = \{j\;;\;m^*_j = 1\}$ so
\begin{equation} \label{eq:newu}
\bu^* = \argmin_{\bu}
  \sum_{j \in J} \hsigma_{1,j}^2\,\rho_1
  \!\left(
  \frac{\tx_j - \bv_j^T\bu}{\hsigma_{1,j}}
  \right)
\end{equation}
where $\tx_j := x^*_j - \mu_j$\,. So $\bu^*$ 
is the slope vector of a robust regression
without intercept of the column vector  
$\btx_J = (\tx_j\;;\; j \in J)$ on the 
matrix $\bV_J = [\bv_j^T \;;\; j \in J]$. 
Note that $\bV_J$ has no outliers since all 
$|v_{j\ell}| \leqslant 1$.
The IRLS algorithm uses the first order
condition \eqref{eq:condi2cellwise} and
alternates updating $\bu^*$ according to
\eqref{eq:updateU} with updating the 
cellwise weights~\eqref{eq:cellweight} by 
\begin{equation}\label{eq:cellweightnew}
   w^*_j=\psi_1\!\left(
   \frac{r_j}{\hsigma_{1,j}}\right)
   \Big/ \frac{r_j}{\hsigma_{1,j}} 
   \qquad \mbox{for $j$ in $J$}
\end{equation}
and $w^*_j= 0$ for $j$ not in $J$.
The iterations continue until 
convergence (the pseudocode is in 
Section~\ref{app:pseudocode}
of the Supplementary Material).
Upon convergence we put 
$\hx^*_j := \mu_j + \bv_j^T\bu^*$\,.
When $\bx^*$ happens to be equal to 
some in-sample $\bx_i$ it follows that 
$\bhx^* = \bhx_i$\,.

\section{Displaying outliers graphically}
\label{sec:ionosphere}

We will now construct some graphical displays to
visualize outlying cells or cases when they occur.
As an illustration we consider the 
\texttt{ionosphere} data in the \texttt{R} 
package \texttt{rrcov} \citep{rrcov}. This real 
dataset contains $n=351$ cases with 
$p=32$ numerical variables. The data were collected 
by the Space Physics Group of Johns Hopkins 
as described by \cite{sigillito1989classification}. 
The dataset contains two classes, and we restrict
attention to the 225 cases labeled ``good''. We 
applied cellPCA. The procedure in Section~\ref{sec:rank}
yielded $\rk=2$, with both components together explaining 
84\% of the total variability.

\begin{figure}[!ht]

\vspace{3mm}
\centering
\includegraphics[width=0.95\linewidth]{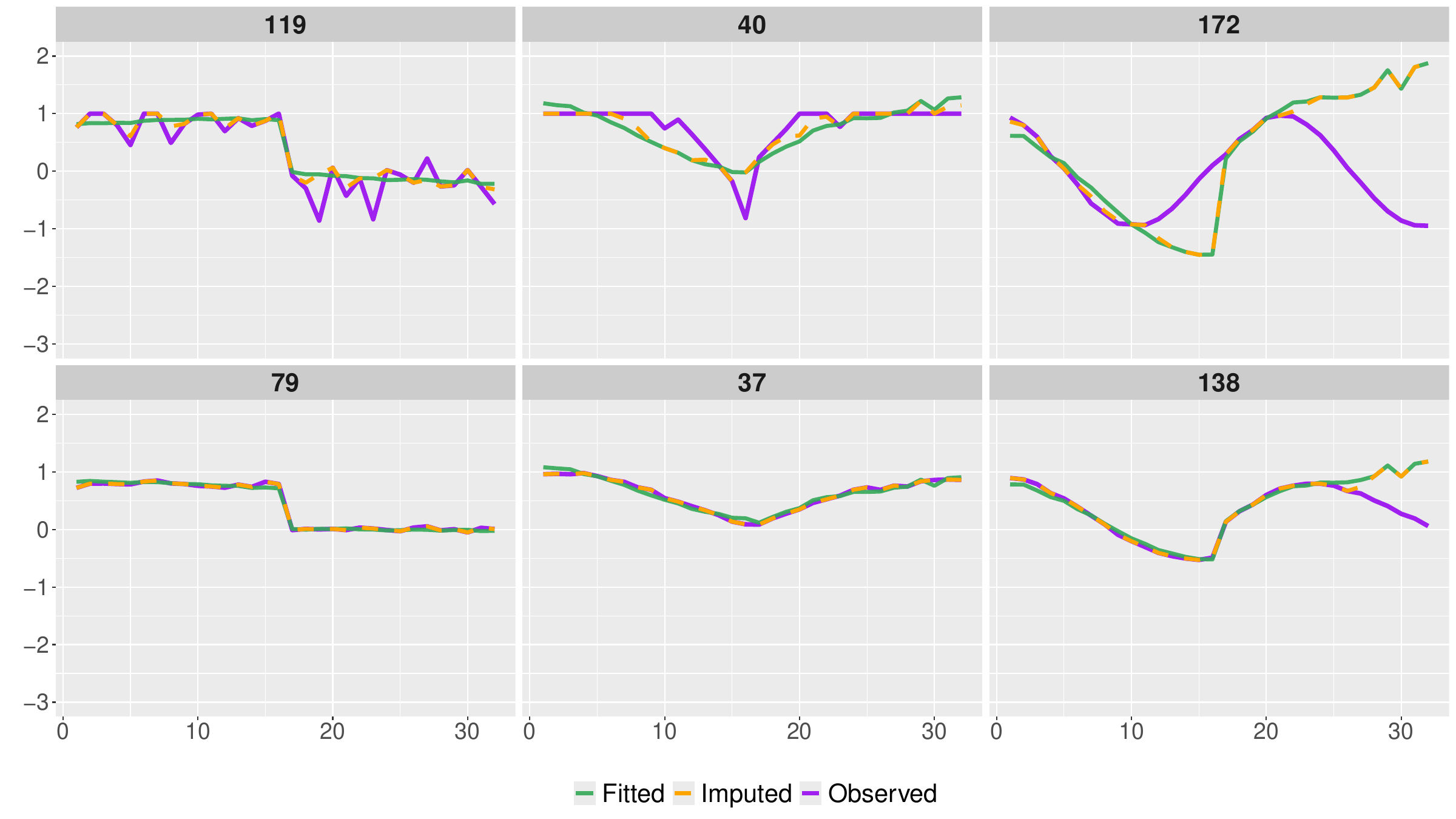}

\vspace{-1mm}
\caption{Observed (purple), fitted (green) and 
imputed (orange dashed) curves of six cases 
in the ionosphere data.}
\vspace{-2mm}
\label{fig:Ionos_curves}
\end{figure}

Figure~\ref{fig:Ionos_curves} shows six cases 
(purple curves) together with their fitted (green) 
and imputed (dashed orange) curves. 
The bottom row shows cases that received a 
large casewise weight 
($w^{\case}_{79} = w^{\case}_{37} = w^{\case}_{138}  = 1$) 
whereas the top row plots curves with low 
($w^{\case}_{119} = 0.44$) and zero casewise weight 
($w^{\case}_{40} = w^{\case}_{172} = 0)$. The fitted values of
the latter differ strongly from their observed values 
in most cells. According to~\eqref{eq:impxi} the 
imputed values agree with the observed ones in cells 
with cell weight $w^{\cell}_{ij} = 1$, whereas they align 
with the fitted values when $w^{\cell}_{ij}=0$. Cell weights
between 0 and 1 yield intermediate imputed values.
Supplementary Material \ref{app:ionos} illustrates 
this interpolation in detail.

In the final residual matrix $\bR = \bX - \bhX = 
\bX - (\bhU \bhV^T + \bone_n \bhmu^T)$ we estimate 
the scale of each column by the 
M-scale~\eqref{eq:Mscale} with the function
$\rho_{b,c}$ of~\eqref{eq:rhotanh}.
Dividing each column of $\bR$ by its scale yields 
the standardized residuals 
$\btR = \{\tr_{ij}\} = [\btr_1,\dots, \btr_n]^T$.
We can then visualize $\btR$, or some of its rows 
and columns, by a \textit{residual cellmap} as in 
Figure~\ref{fig:Ionos_residmap}. Cells with 
$|\tr_{ij}| < \sqrt{\chi^2_{1, 0.99}} = 2.57$ are 
considered regular and colored yellow, whereas any 
missing values would be white. Outlying positive 
residuals receive a color which ranges from light 
orange to dark red (here, when $\tr_{ij} > 6$) and 
outlying negative residuals from light purple to 
dark blue (when $\tr_{ij} < -6$). 

\begin{figure}[!ht]

\vspace{-2mm}
\centering
\includegraphics[width=0.98\linewidth]
  {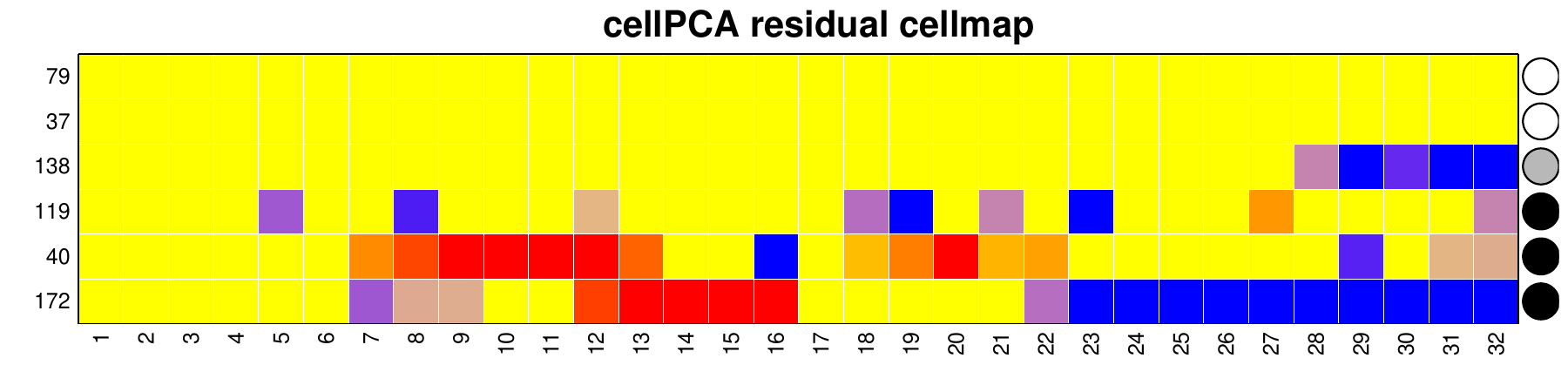}\\

\vspace{-2mm}
\caption{cellPCA residual cellmap of the six cases in 
         Figure~\ref{fig:Ionos_curves}.}
\label{fig:Ionos_residmap}
\end{figure}

We see at a glance that cases 79 and 37 have no outlying 
cells, and that the last cells of case 138 have much 
lower values than expected. The cases at the bottom have 
cells with unexpectedly high values as well as cells with
unexpectedly low values. 
To this residual cellmap we add information about 
casewise outlyingness, by coloring a circle to the 
right of each row according to its  
standardized casewise total deviation $\rtt_i$\,.
For this we compute $\rt_i$ from $\btr_i$ 
using~\eqref{eq:rt_i}, and divide it by its 
M-scale~\eqref{eq:Mscale} with the function
$\rho_{b,c}$ of~\eqref{eq:rhotanh}. The circle is 
colored black when $\rtt_i > c_{\rtt,0.999}$\,, 
white when $\rtt_i < c_{\rtt,0.99}$\,, and has 
an interpolated grayscale in between. The cutoff 
$c_{\rtt,\alpha}$\, is the $\alpha$-quantile of 
the distribution of $\rtt$ simulated for 
uncontaminated data. These colored circles indicate 
casewise outliers. 

To focus more on the casewise outlyingness we propose 
an \textit{enhanced outlier map,} inspired by the 
outlier map of \cite{hubert2019macropca}.
This plot displays for each observation the norm of 
its standardized residual $\|\btr_i\|$ versus its 
score distance $\text{SD}_i$ which is defined as 
the norm of $\big(\bhV^T(\bx_i - \bhmu)\big)\odot
\big[1/\sqrt{\hlambda_{1\phantom{k}}},
\ldots,1/\sqrt{\hlambda_\rk}\big]^T$. 
Figure~\ref{fig:Ionos_outliermap} shows the enhanced 
outlier map obtained by applying cellPCA to the 
ionosphere data. The vertical dotted line indicates 
the cutoff $c_{\sd}=\sqrt{\chi_{\rk, 0.99}^2}$ and the 
horizontal dotted line is at the cutoff $c_{r}$\,, 
the 0.99-quantile of the distribution of 
$\|\tilde{\br}\|$ simulated at uncontaminated data. 
Regular cases have a small 
$\text{SD}_i \leqslant c_{\sd}$ and a small 
$\|\btr_i\| \leqslant c_{r}$\,. Cases with large 
$\text{SD}_i$ and small $\|\btr_i\|$ are called 
good leverage points. The cases with large 
$\|\btr_i\|$ can be divided into orthogonal 
outliers when their $\text{SD}_i$ is small, and bad 
leverage points when their $\text{SD}_i$ is large.

\begin{figure}[!ht]
\centering
\includegraphics[width=0.7\linewidth]
  {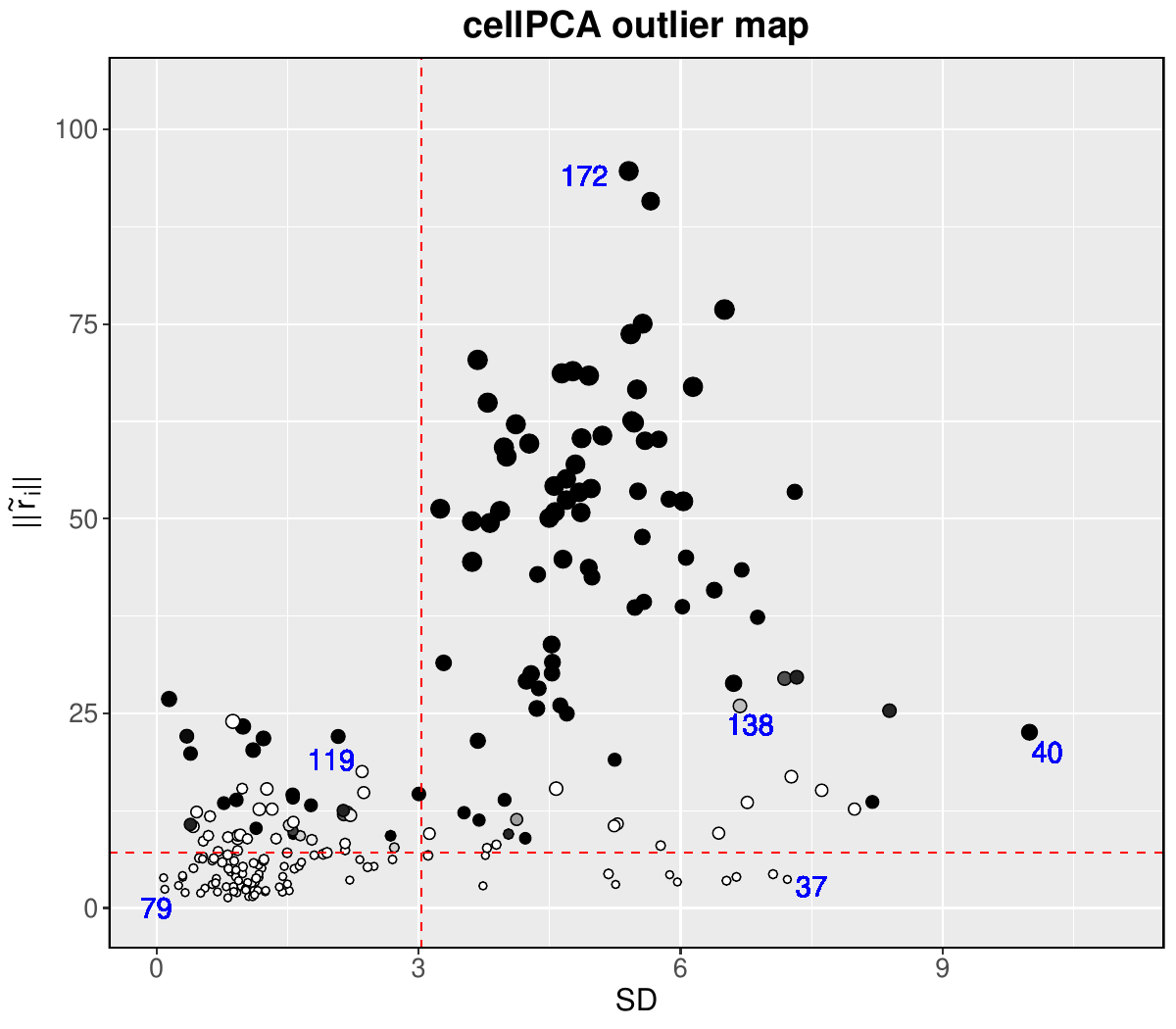}
\caption{Enhanced outlier map of the ionosphere data.}
\label{fig:Ionos_outliermap}
\end{figure}

The size of each point $i$ in 
Figure~\ref{fig:Ionos_outliermap} is proportional to 
one minus the average of its cellwise weights, i.e.\ 
$1-\overline{w}^{\cell}_i=1-\frac{1}{p}\sum_{j=1}^{p}
m_{ij}w_{ij}^{\cell}$. Larger points thus correspond to 
cases with many outlying cells. The points get
the same color as the circles in the 
residual cellmap. This enhanced outlier map thus 
combines information about the cellwise and casewise 
outlyingness of observations, and their position with 
respect to and within the fitted subspace. 
The curves in Figure~\ref{fig:Ionos_curves} 
occupy different positions in this outlier map. 

\section{Simulation study}
\label{sec:simulation}
We study the performance of cellPCA by 
Monte Carlo, with setup similar to  
\citep{hubert2019macropca}. The clean data are 
generated from a multivariate Gaussian with 
$\bmu=\bzero$ and the covariance matrix $\bSigma$
constructed as follows. Its eigenvectors 
are those of the matrix called A09 with entries 
$\Sigma_{j\ell}=(-0.9)^{|j-\ell\,|}$.
For the eigenvalues we take numbers such that
the first component explains 53\% of the total
variance, the first two together explain 90\%, 
and all subsequent eigenvalues are tiny. We 
generate $n=100$ 
data points in dimension $p=20$ so that $n>p$, 
and in dimension $p=200$ for which $n<p$. 

Three contamination types are considered, 
according to 
models~\eqref{eq:cont_case}-\eqref{eq:cont_both}. 
In the cellwise outlier scenario we randomly 
replace $\eps^{\cell}=20\%$ of the 
cells $x_{ij}$ with $\gamma_{\cell}\sigma_j$, 
where $\gamma_{\cell}$ varies from 0 to 6 and
$\sigma_j^2$ is the $j$th diagonal element of 
$\bSigma$. In the casewise outlier setting 
$\eps^{\case}=20\%$ of the cases are generated 
from $N\left(\gamma_{\case} (\be_1+\be_3), 
\bSigma/1.5\right)$, where  $\be_1$ and $\be_3$ 
are the first and third eigenvectors of $\bSigma$, 
and $\gamma_{\case}$ varies from 0 to 9 when 
$p=20$, and from 0 to 24 when $p=200$.
In the third scenario, the data is contaminated 
by $\eps^{\cell}=10\%$ of cellwise outliers as 
well as $\eps^{\case}=10\%$ of 
casewise outliers. Here 
$\gamma_{\case}=1.5\gamma_{\cell}$ when $p=20$ and 
$\gamma_{\case}=4\gamma_{\cell}$ when $p=200$, where 
$\gamma_{\cell}$ again varies from 0 to 6.

We compare cellPCA with $\rk = 2$ to competing 
approaches
that are robust to either cellwise outliers, 
casewise outliers, or both. We run the cellwise 
robust PCA method of \cite{candes2011robust}, 
called CANDES, the special case of cellPCA with 
$\rho_2(z) = z^2$ denoted as Only-cell, 
the casewise robust method of 
\cite{hubert2005robpca} called ROBPCA, and the 
special case of cellPCA with $\rho_1(z) = z^2$\,, 
called Only-case. We also run the MacroPCA  
method \citep{hubert2019macropca}, and include  
classical PCA (denoted as CPCA). 

We measure performance by the angle between the 
estimated principal subspace and the true principal 
subspace. This angle is computed by the function 
\texttt{subspace()} of the \textsf{R} package 
\texttt{pracma} \citep{pracma}.
We also compute the mean squared error given by 
\begin{equation} \label{eq:MSE}
  \text{MSE}=\frac{1}{|\mathcal{U}|}
  \sum_{i\in \mathcal{U}}\sum_{j=1}^p
  \left(x^0_{ij}- \hat{x}_{ij}\right)^2
\end{equation}
where $\mathcal{U}$ is the set of uncontaminated
cases, $\hat{x}_{ij}$ is the prediction of 
$x_{ij}$\,, and $x^0_{ij}$ is the original value 
of the cell before any contamination took place. 
We report the median angle and MSE over 1000 
replications.

\begin{figure}[!ht]
\centering
\begin{tabular}{ccc}
   \large \textbf{Cellwise}  & \large \textbf{Casewise} &\large{\textbf{Casewise \& Cellwise}} \\
   [-4mm]
  \includegraphics[width=.3\textwidth]
  {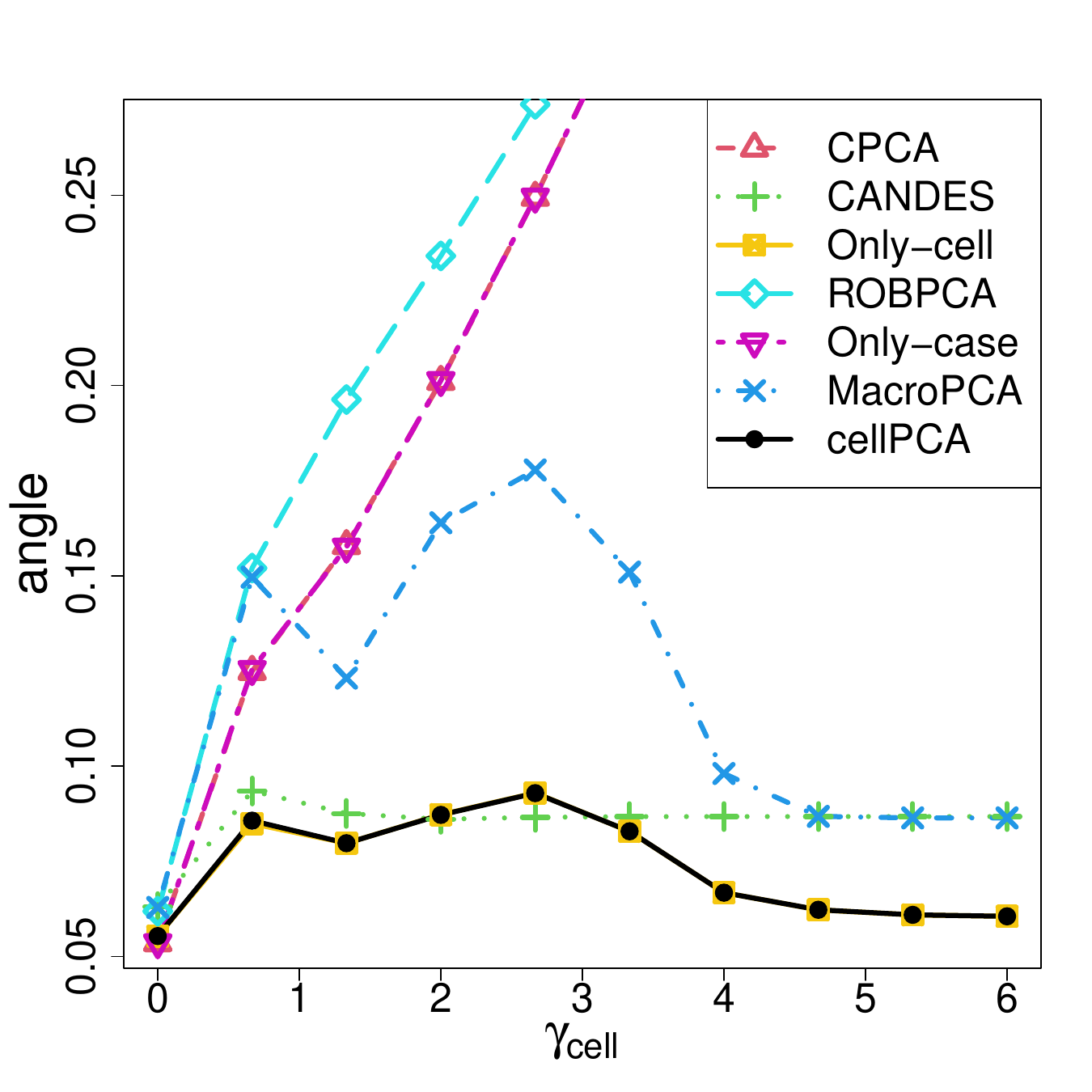} &\includegraphics[width=.3\textwidth]
  {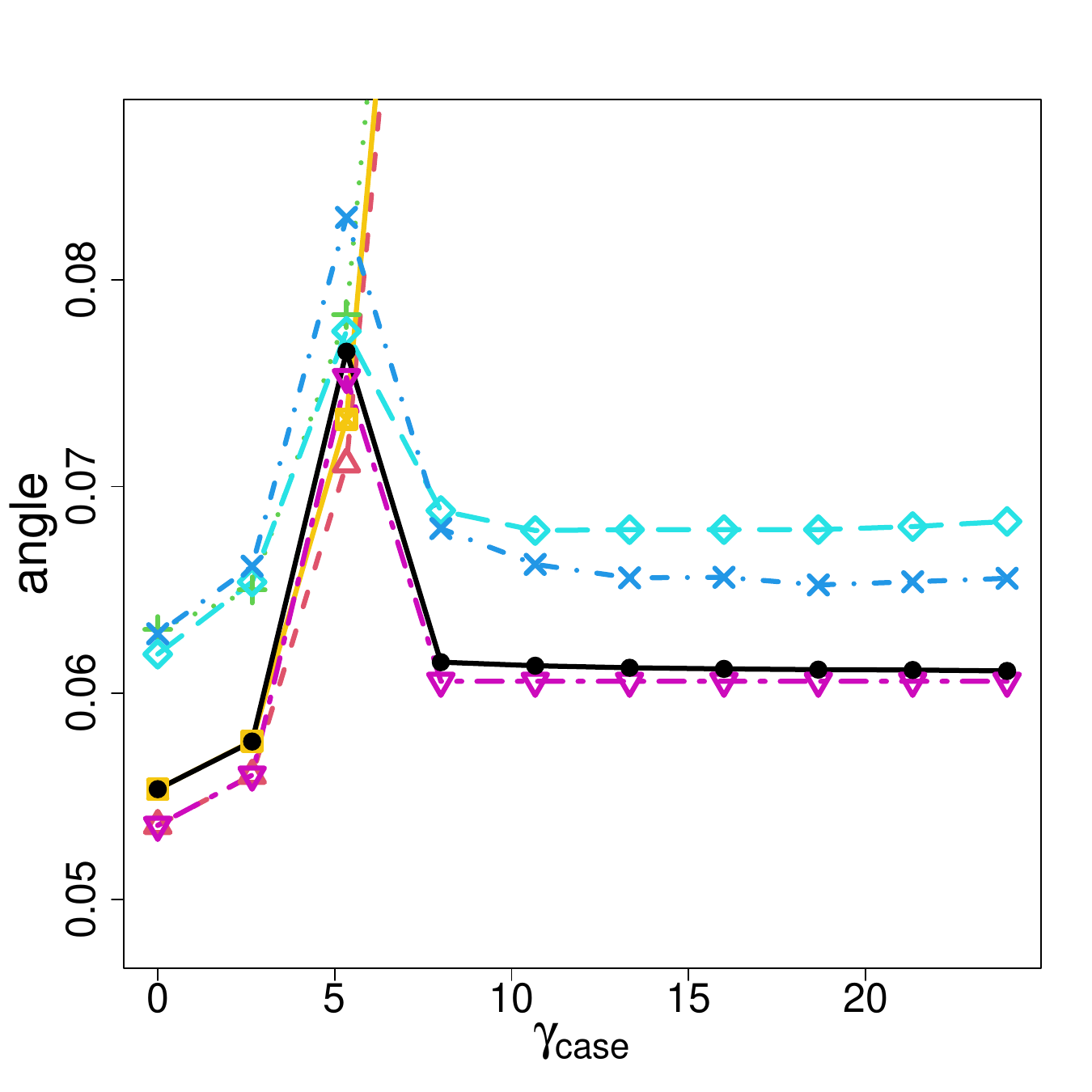} &\includegraphics[width=.3\textwidth]
  {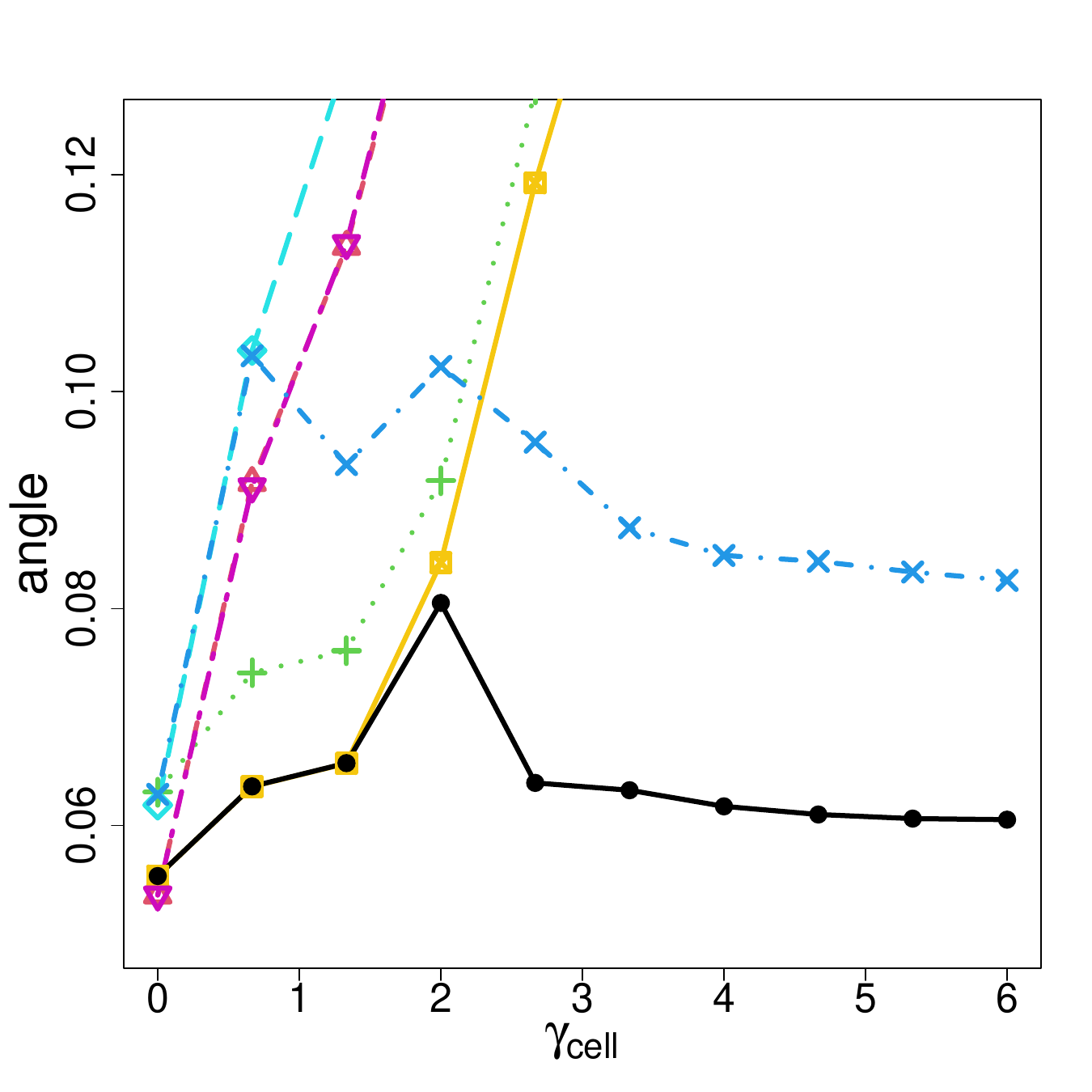}  \\
   [-4mm]
  \includegraphics[width=.3\textwidth]
  {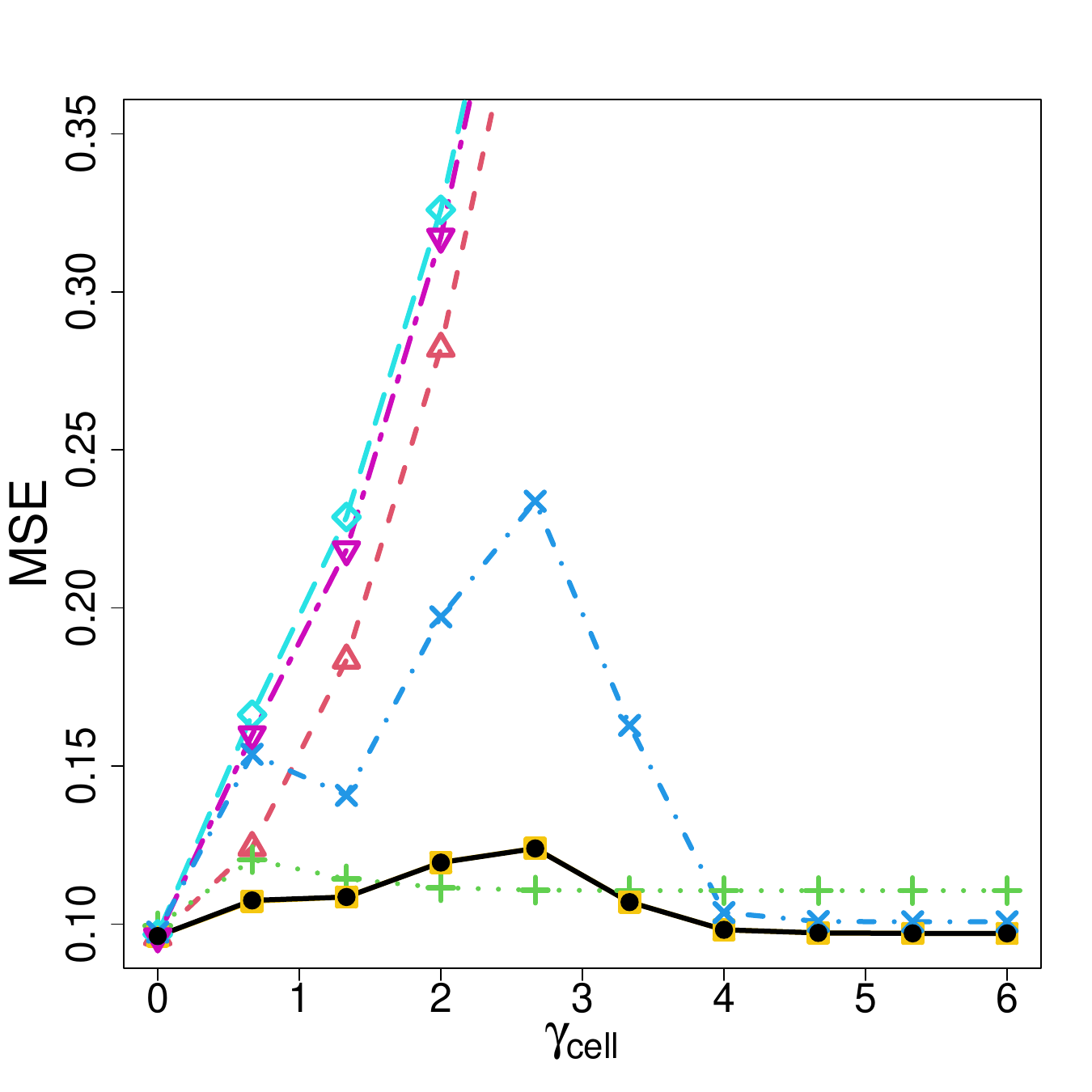} &\includegraphics[width=.3\textwidth]
  {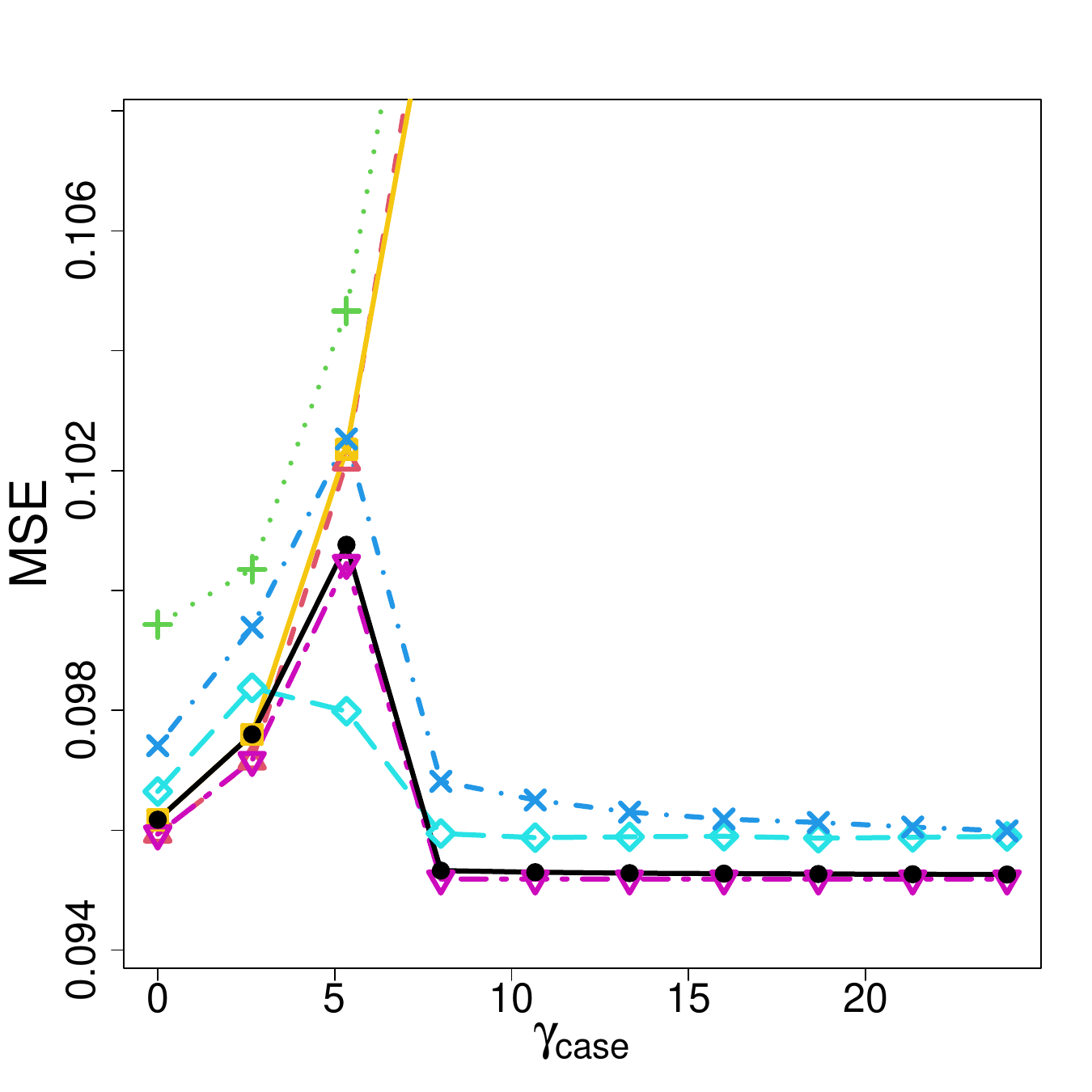} &\includegraphics[width=.3\textwidth]{Figures/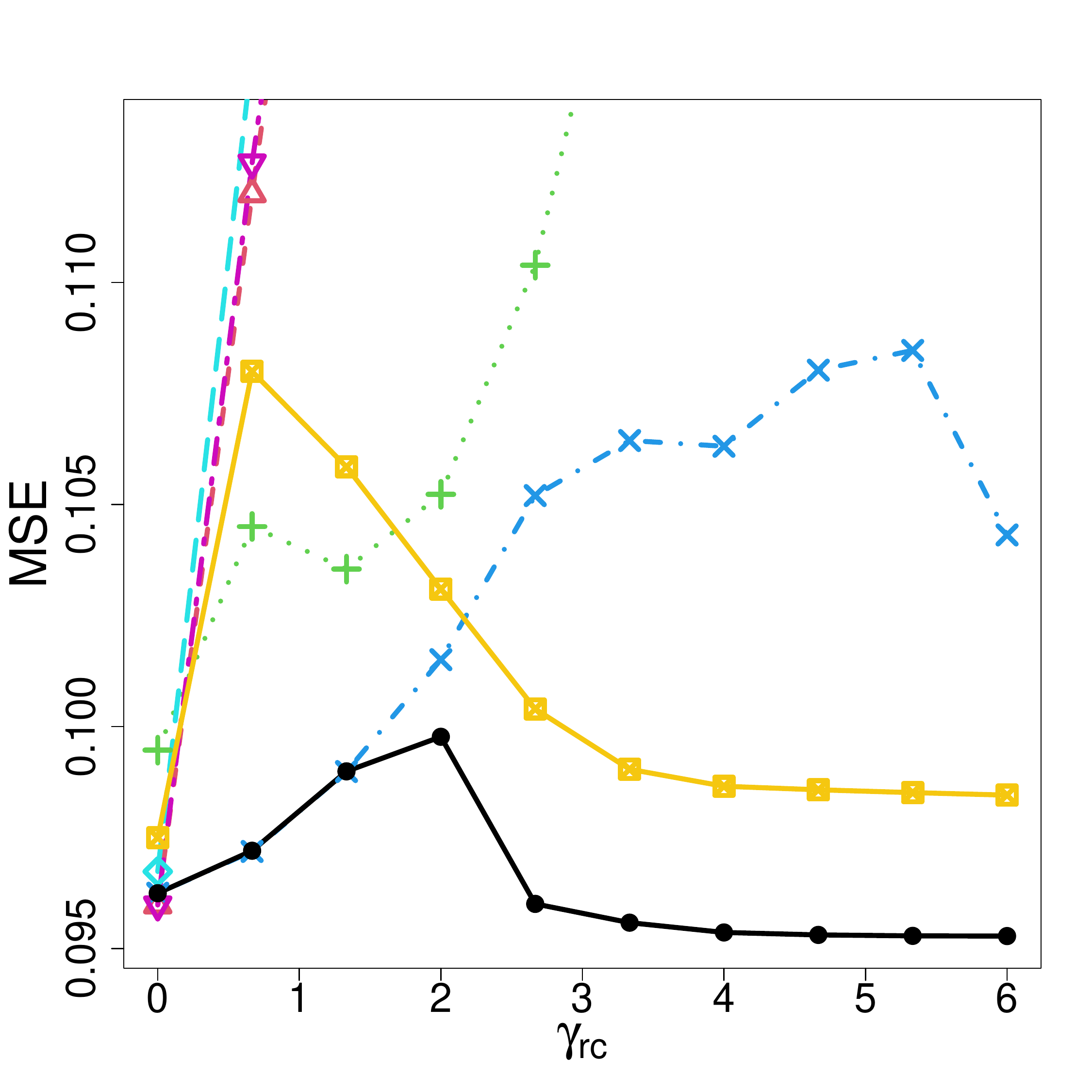} 
\end{tabular}
\caption{Median angle (top) and MSE (bottom) 
attained by CPCA, CANDES, Only-cell, ROBPCA, 
Only-case, MacroPCA, and cellPCA in the presence 
of either cellwise outliers, casewise outliers, 
or both. The covariance model was A09 with 
$n=100$ and $p=200$, without NAs.}
\label{fig:results_p200_NA0}
\end{figure}

Figure~\ref{fig:results_p200_NA0} shows the median 
angle and MSE in the presence of either cellwise
outliers, casewise outliers, or both, for $p=200$.
The plots for $p=20$ are very similar, see 
Figure~\ref{fig:results_p20_NA0} in
Section~\ref{app:addsim} 
of the Supplementary material.
As expected, CPCA did best on clean data ($\gamma=0$),
but outperformed cellPCA by a small margin only.
The results with outliers are more interesting.
When there are only cellwise outliers, CPCA, ROBPCA, 
and Only-case break down, because they were not 
designed for cellwise outliers. CANDES did well
in this setting. Here cellPCA did best, and Only-cell
almost coincided with it. 

In the presence of casewise outliers, CPCA, CANDES and 
Only-cell break down, because they were not created 
for that situation. Also here cellPCA does well, only
slightly outperformed by Only-case. Note that cellPCA
also outperforms the casewise robust method ROBPCA,
as well as MacroPCA, but the latter do not break down.

When cellwise outliers and casewise outliers are
combined, cellPCA outperforms overall.
It naturally beats the methods that are not robust
to cellwise outliers (CPCA, Only-case, ROBPCA) or
not robust to casewise outliers (CPCA, CANDES,
Only-cell).

In all three settings cellPCA outperforms its 
predecessor MacroPCA, because it minimizes an 
objective function in which the cellwise and 
casewise weights adapt to the data.

To assess the performance in the presence of 
missing data, we repeated the three scenarios
but also randomly set $\eps^{\obs}=20\%$ of 
the cells to NA. 
In this situation we cannot run CANDES or ROBPCA 
which are unable to deal with missing data, and 
for CPCA we use the ICPCA method of 
\cite{kiers1997weighted} that can.
The resulting Figure~\ref{fig:results_p200_NA0.2}
looks quite similar to 
Figure~\ref{fig:results_p200_NA0}. Again cellPCA
performs best overall, and outperforms MacroPCA.
The remaining methods break down under the
combination of cellwise outliers, casewise
outliers and NAs.

\begin{figure}[!ht]
\centering 
\begin{tabular}{ccc}
   \large \textbf{Cellwise}  & \large \textbf{Casewise} &\large{\textbf{Casewise \& Cellwise}} \\
   [-4mm]
  \includegraphics[width=.3\textwidth]
  {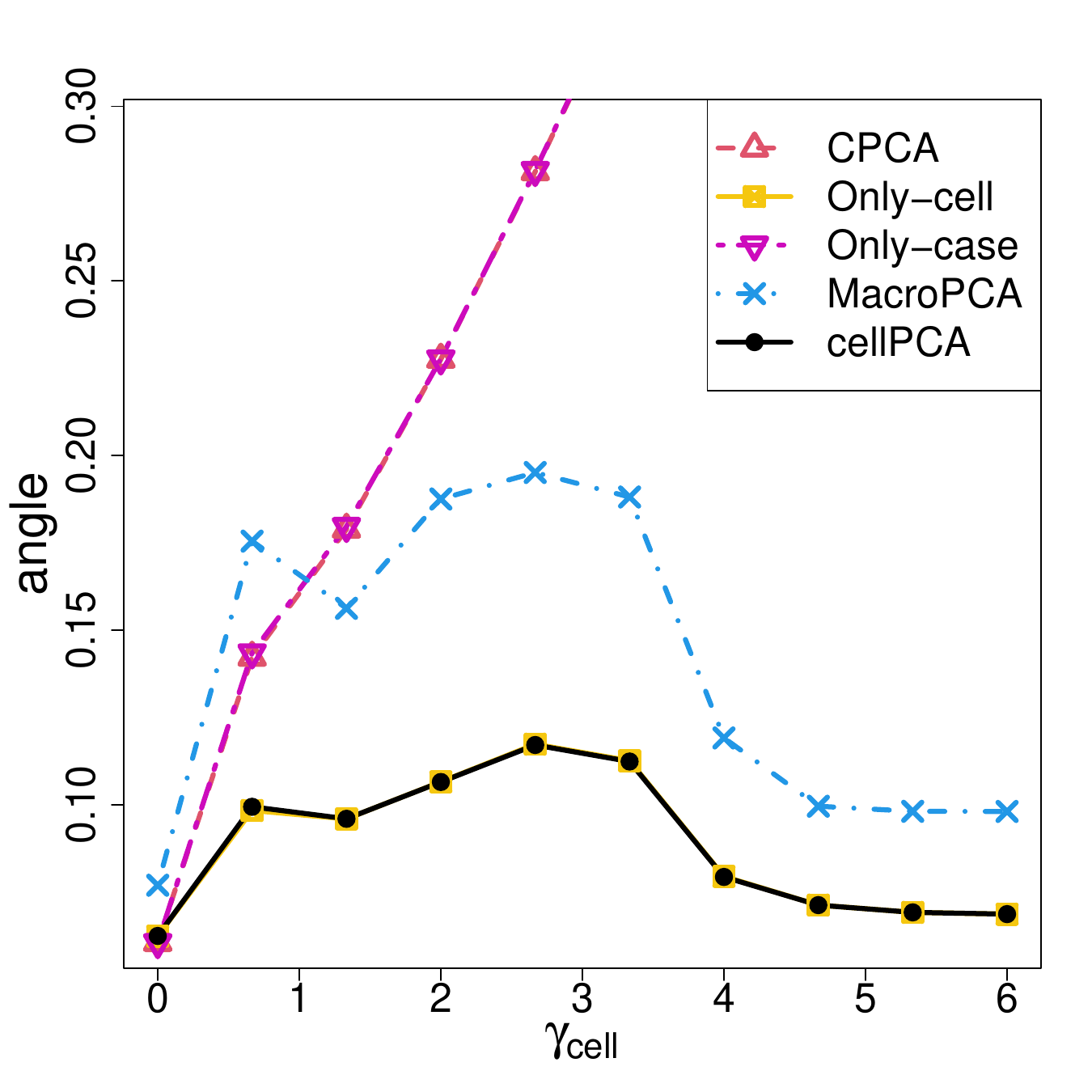} &\includegraphics[width=.3\textwidth]
  {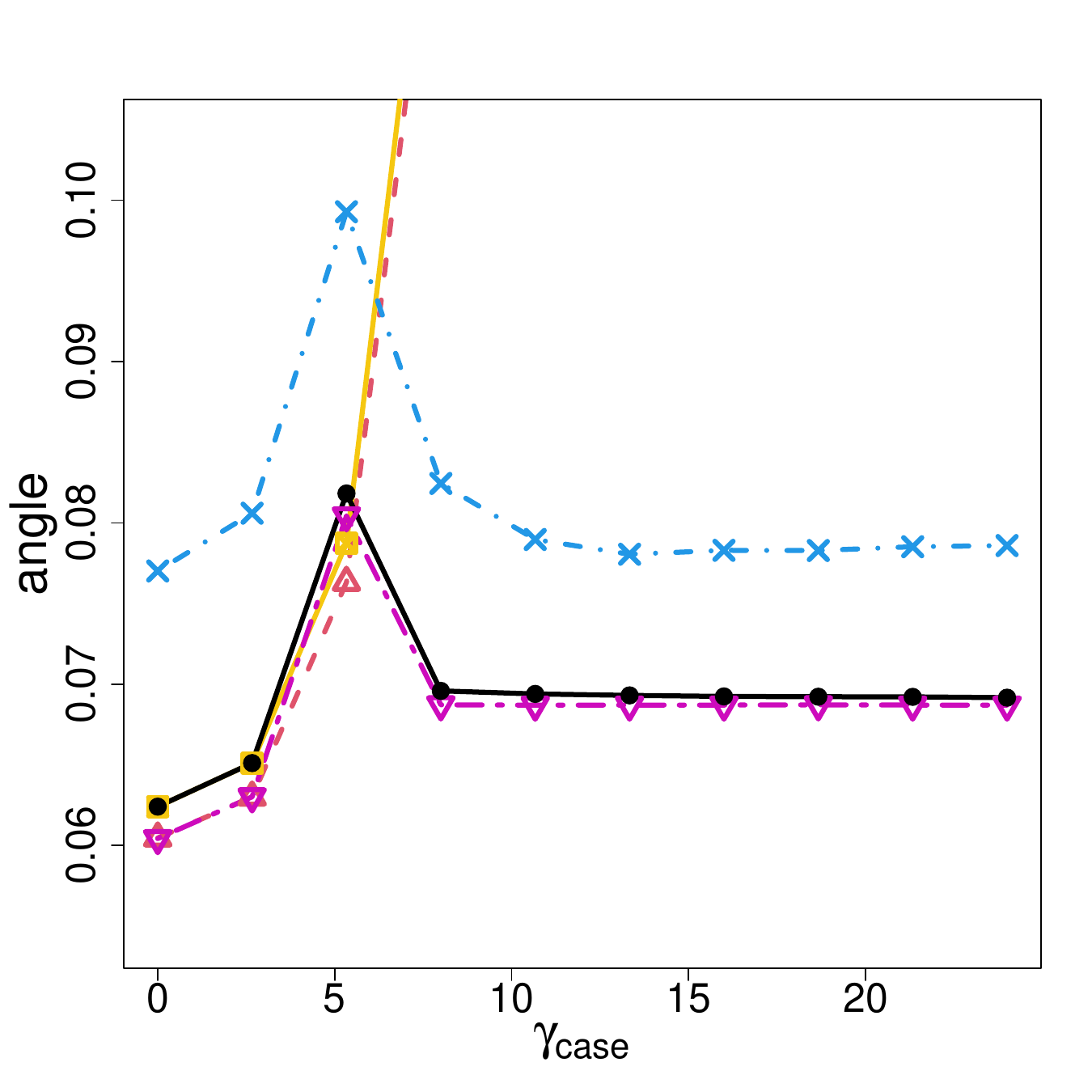} &\includegraphics[width=.3\textwidth]
  {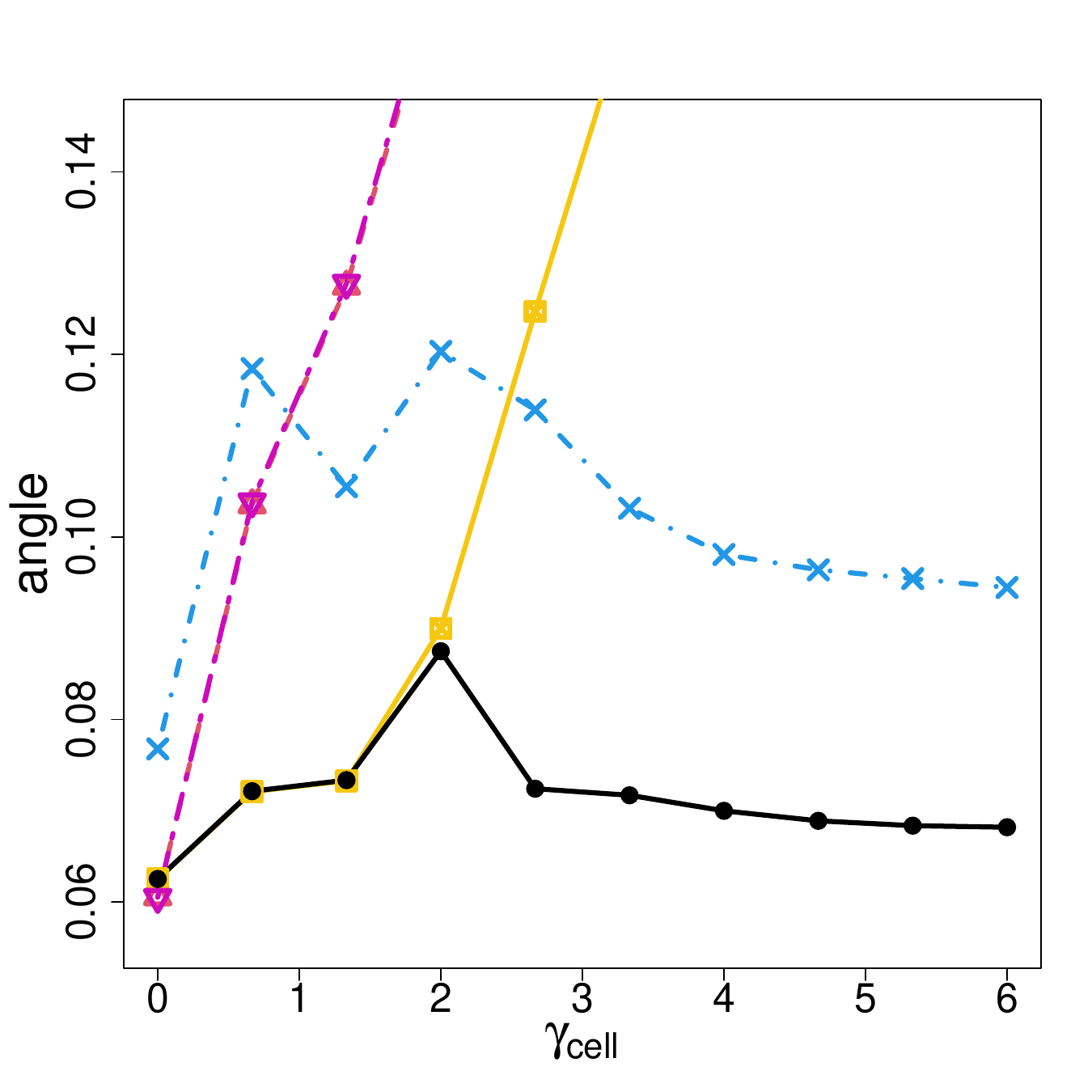}  \\
   [-4mm]
  \includegraphics[width=.3\textwidth]
  {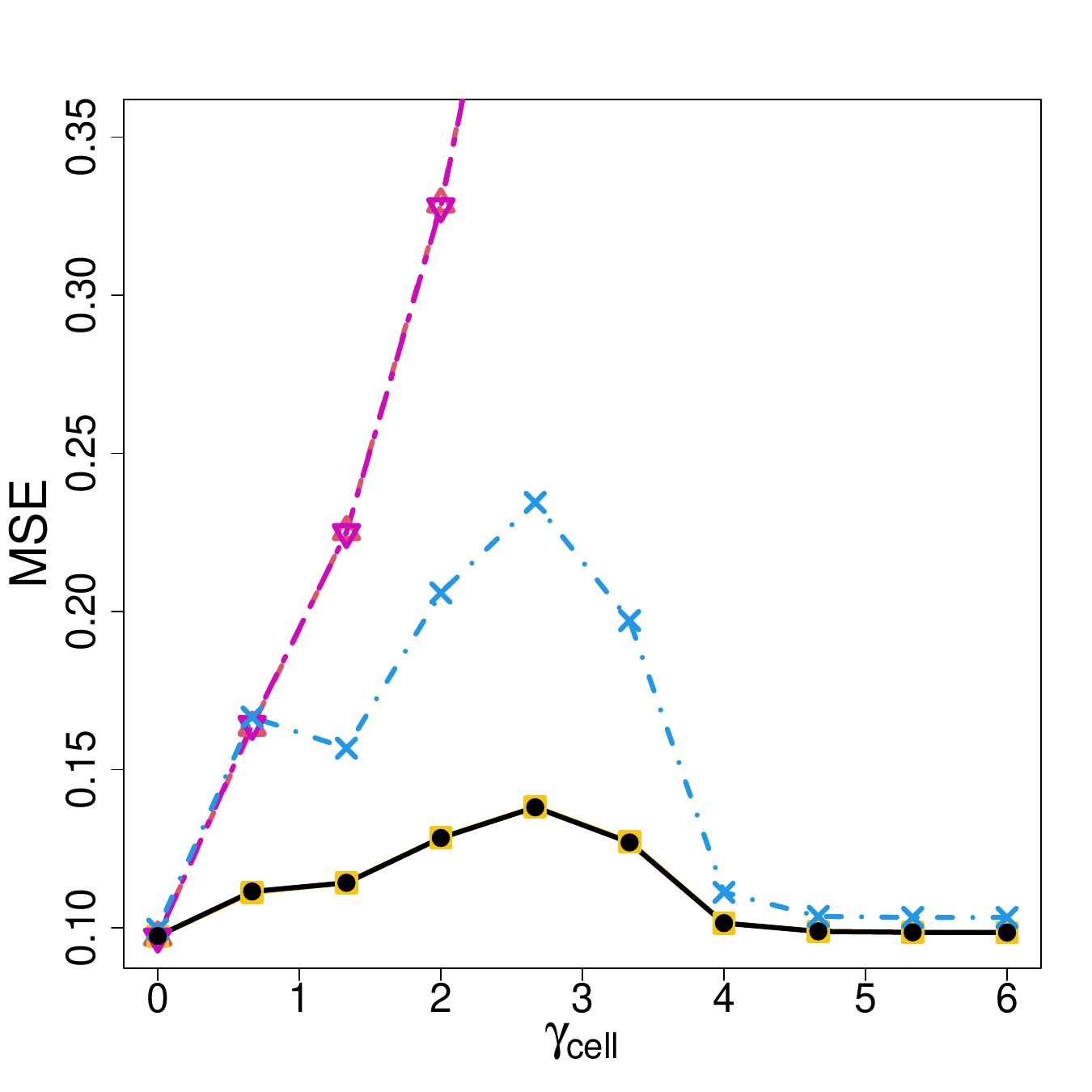} &\includegraphics[width=.3\textwidth]
  {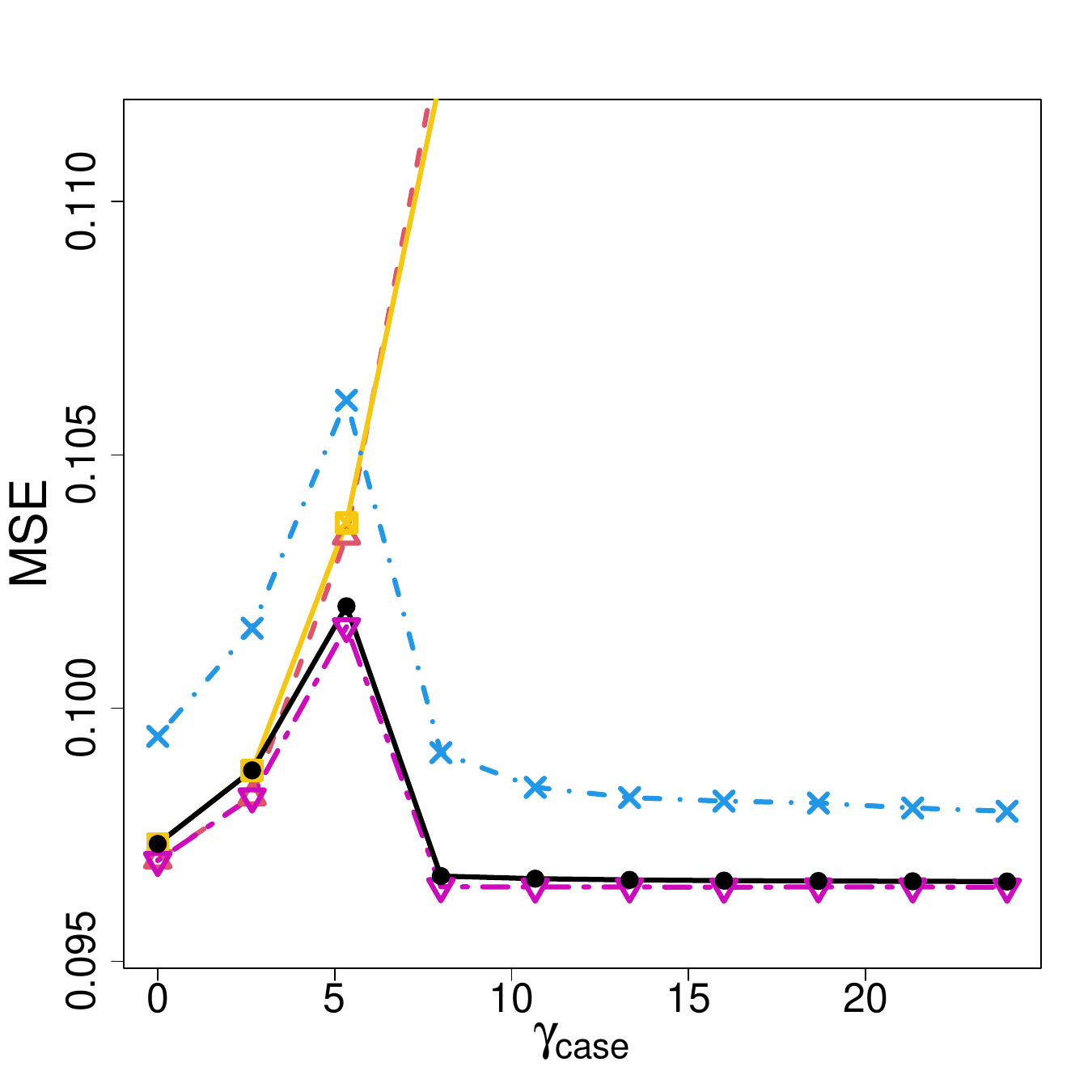} &\includegraphics[width=.3\textwidth]
  {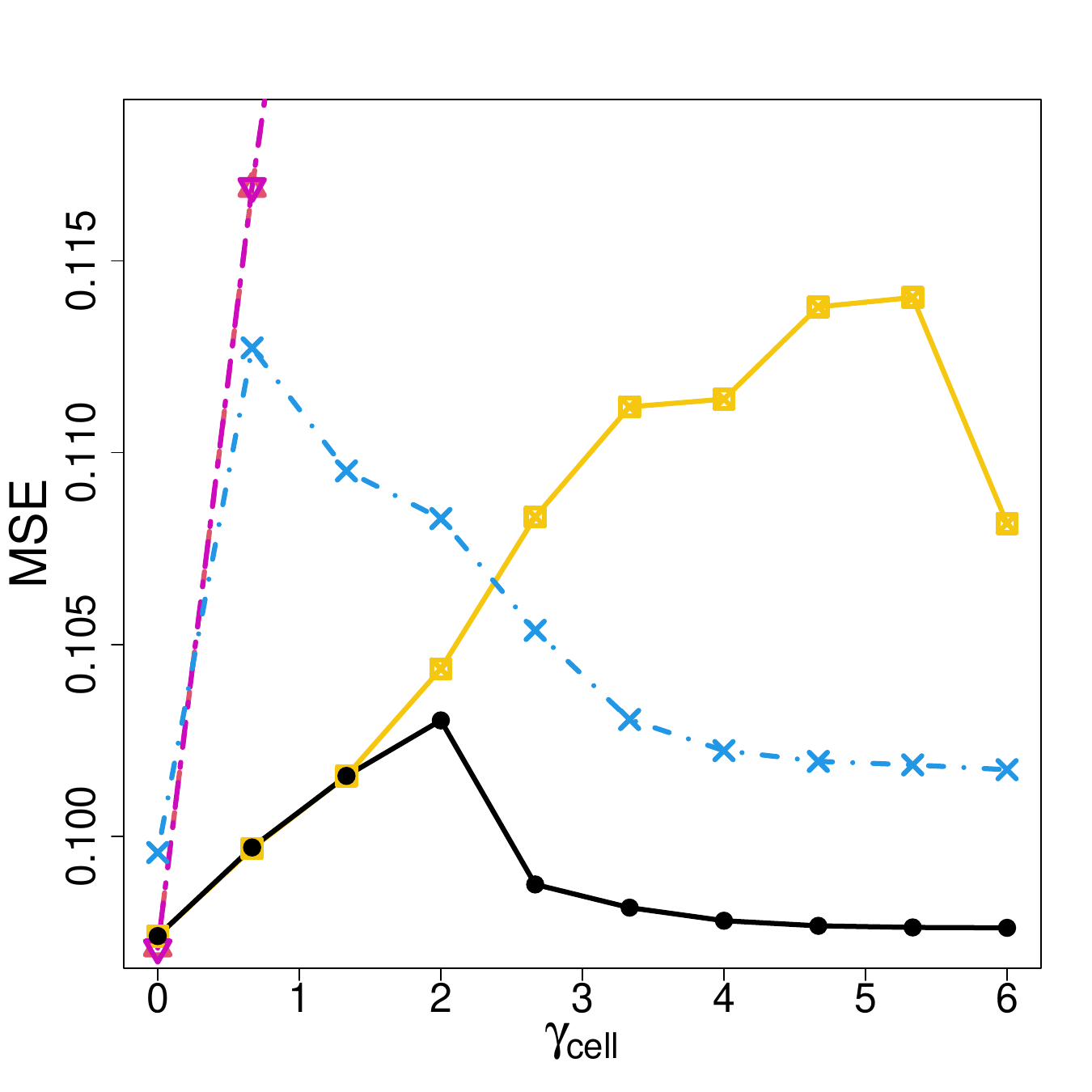} 
\end{tabular}
\caption{Median angle (top) and MSE (bottom) attained 
by CPCA,  Only-cell, Only-case, MacroPCA, and cellPCA 
in the presence of either cellwise outliers, casewise 
outliers, or both. The covariance model was A09 with 
$n=100$ and $p=200$, and $20\%$ of randomly selected 
cells were set to NA.}
\label{fig:results_p200_NA0.2}
\end{figure}

Section~\ref{app:addsim} of the Supplementary material
repeats the entire simulation where the covariance 
matrix A09 is replaced by more diverse random covariance 
matrices based on \cite{agostinelli2015robust}.
The results in Figures~\ref{fig:results_NA0_ALYZ}
and~\ref{fig:results_NA0.2_ALYZ} are very similar to 
those shown here, with the same conclusions.
The performance of the methods is further compared in 
the presence of \textit{structured cellwise outliers} 
as introduced in \cite{Raymaekers:cellHandler}. The 
results are shown in 
Figure~\ref{fig:results_p20_NA0_struc} and 
Figure~\ref{fig:results_p200_NA0_struc} and show that 
cellPCA significantly outperforms its competitors in 
this more challenging setting.
In Figures~\ref{fig:results_varying_pou1} 
and~\ref{fig:results_varying_pou2} we compare the methods 
under varying percentages of contamination, and in
Figures~\ref{fig:results_varying_NA1}--\ref{fig:results_varying_NA3}
under varying percentages of NAs. In these situations 
cellPCA continues to outperform.
Figure~\ref{fig:results_oos_pred} shows that cellPCA 
provides a more accurate prediction of $\bhx$ than
MacroPCA, and Figure~\ref{fig:results_imputation} 
does the same for imputations.
In the simulation summarized
in Figure~\ref{fig:results_rank}, cellPCA typically
found the natural rank $\rk$.

We now investigate how well cellPCA and MacroPCA
detect outliers. For data generated with only cellwise 
outliers, we consider the absolute cellwise residuals 
obtained by both. 
The left panel of Figure~\ref{fig:results_outdet} shows 
the total area under the receiver operating 
characteristic (ROC) curve, referred to as AUC, which 
measures how well the absolute cellwise residual predicts 
whether the cell is outlying. The AUC is a well-known 
measure of the overall performance of a binary 
classification method.
The middle panel of Figure~\ref{fig:results_outdet} 
shows the AUC for the casewise total deviation in 
cellPCA and the orthogonal distance of MacroPCA, which 
does not have the concept of casewise total deviation. 
In the right panel the data are contaminated by both 
cellwise and casewise outliers, and then the AUCs for 
both types of outliers are averaged. In all three 
panels we see that cellPCA outperforms MacroPCA, 
especially when both types of outliers occur together.

\vspace{5mm}
\begin{figure}[!ht]
\centering
\begin{tabular}{ccc}
   \large \textbf{Cellwise}& \large \textbf{Casewise} &\large{\textbf{Casewise \& Cellwise}} \\
   [-4mm]
  \includegraphics[width=.3\textwidth]
  {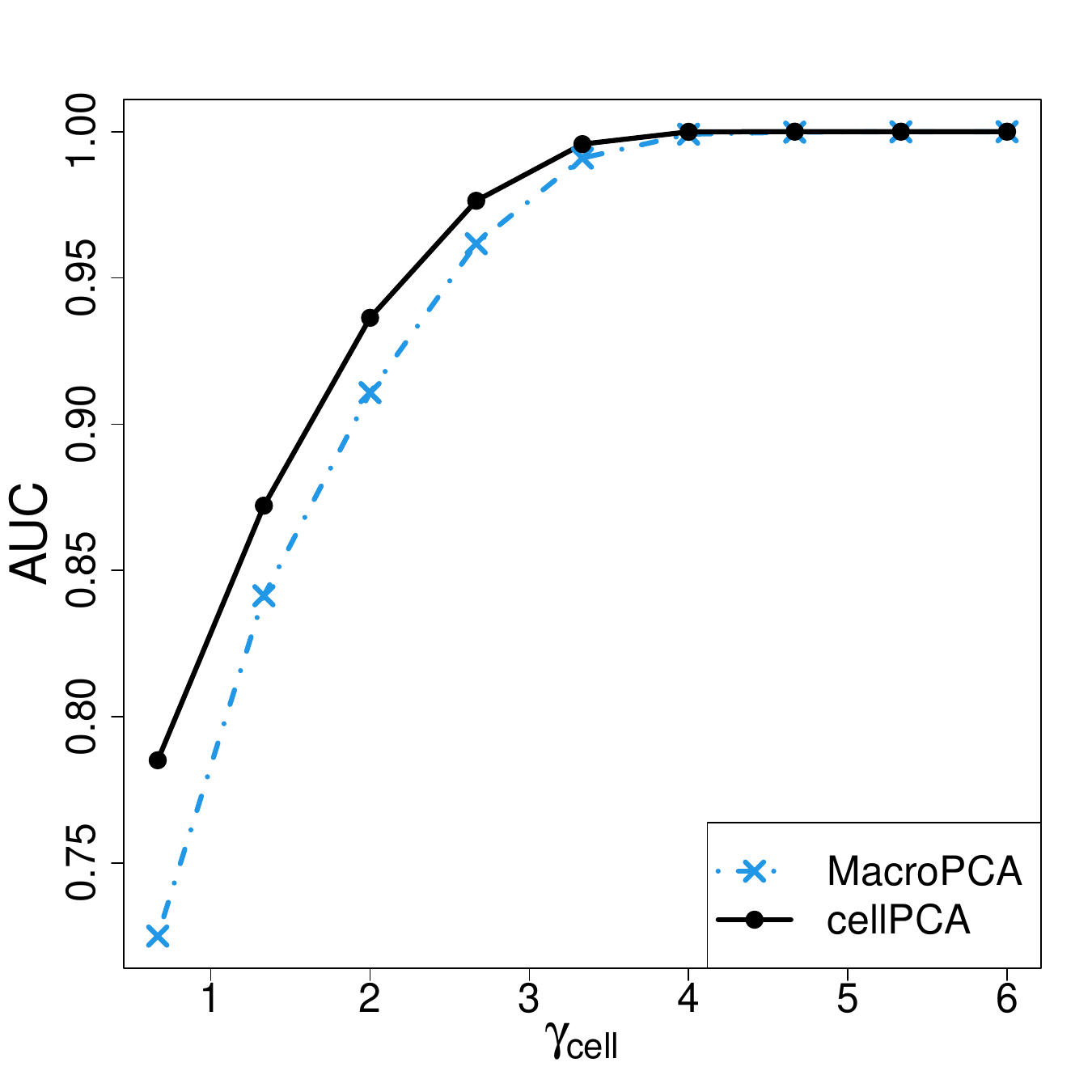} &\includegraphics[width=.3\textwidth]
  {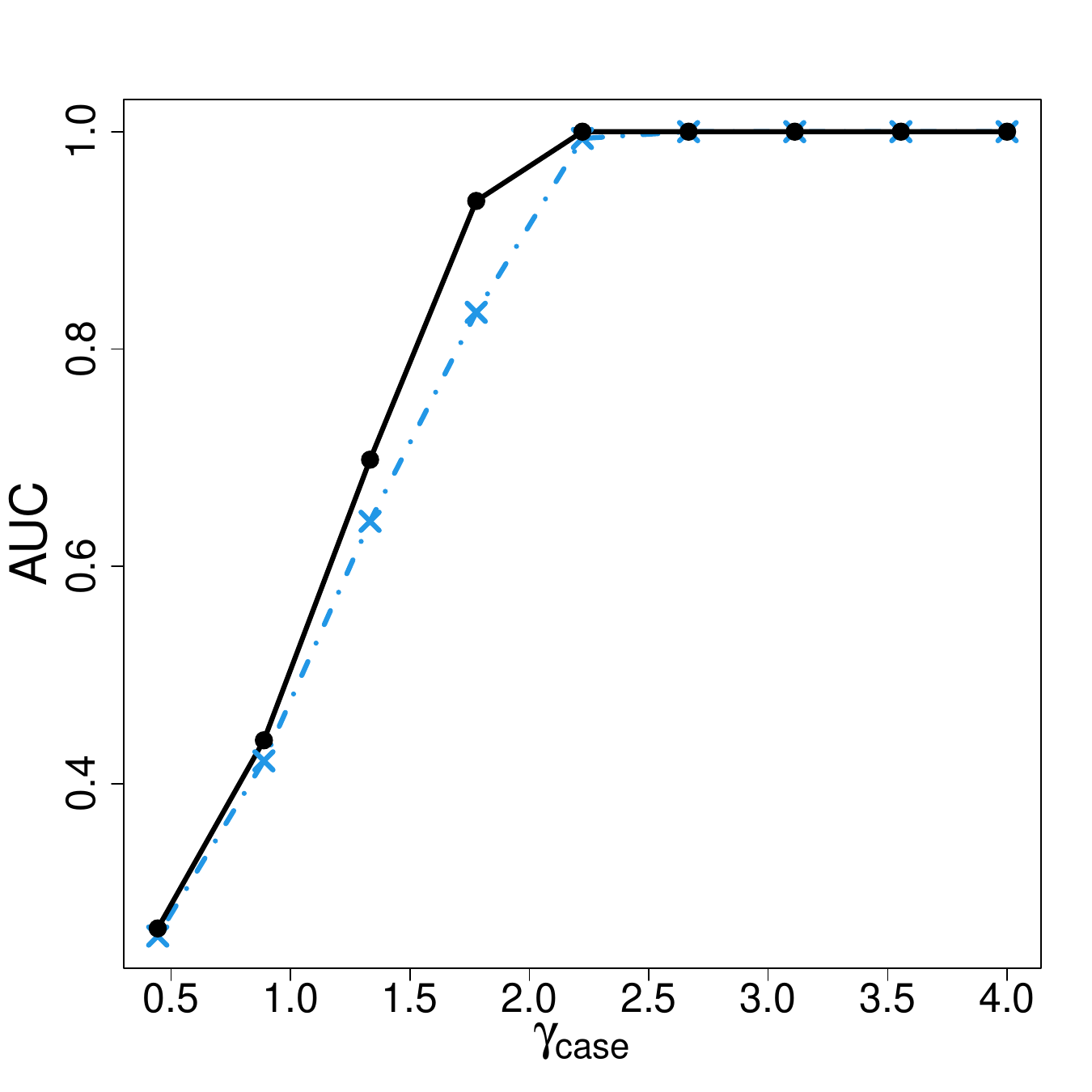} &\includegraphics[width=.3\textwidth]
  {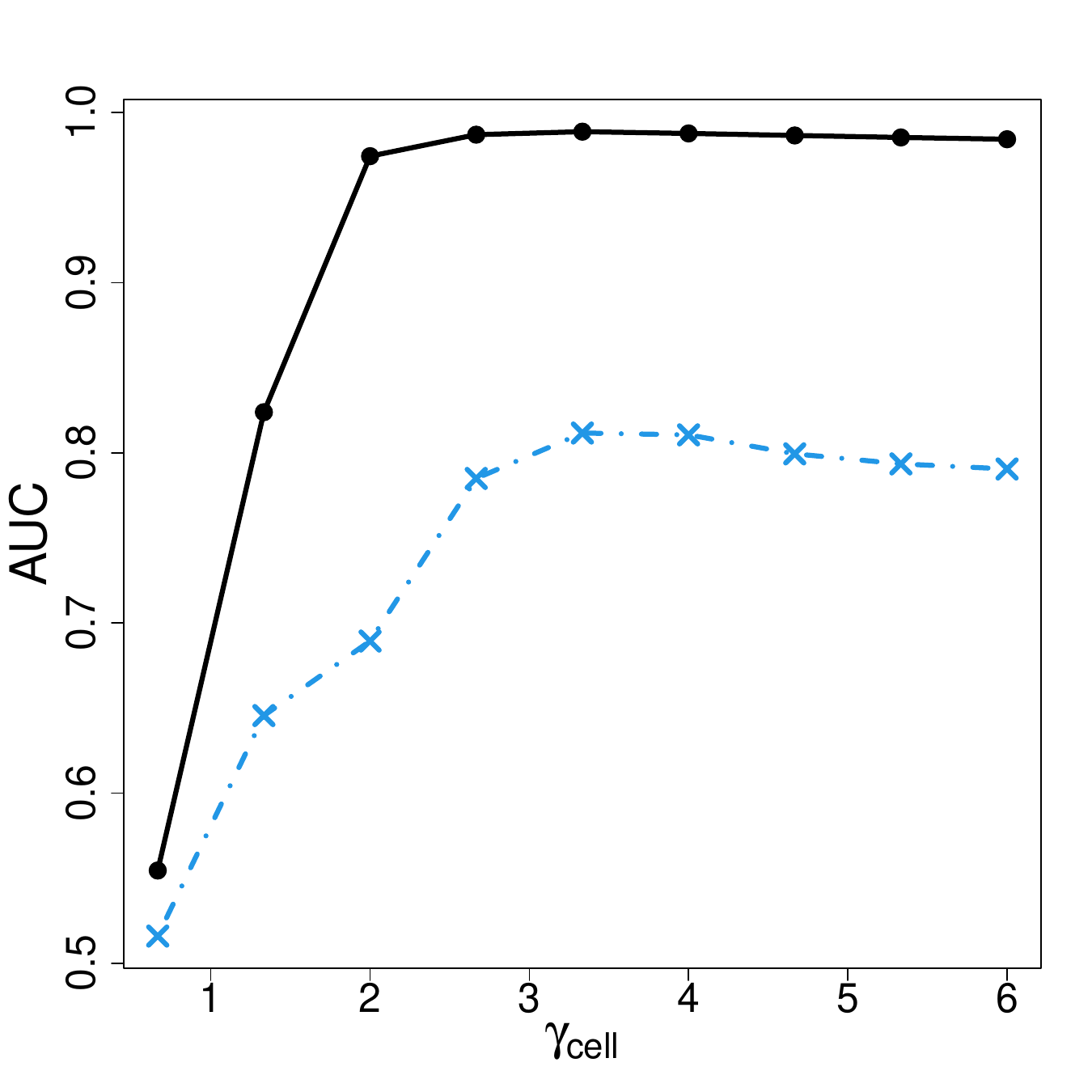} 
\end{tabular}
\caption{Median AUC of outlier detection by MacroPCA 
and cellPCA for either cellwise outliers, casewise outliers, 
or both. The covariance model was A09 with $n=100$ and 
$p=20$.} 
\label{fig:results_outdet}
\end{figure}

\section{Real data example}
\label{sec:realdata}
Campi Flegrei is an active volcanic field partly underlying
the city of Naples, Italy. It is monitored from six permanent 
ground stations, one of which is located at the Vesuvius 
crater \citep{sansivero2022ground}. They record thermal infrared 
(TIR) images to investigate volcanic plumes and gases, lava 
flows, lava lakes, and fumaroles, which are vents of hot gas.
The goal is to track surface thermal anomalies that may reveal 
a renewal of eruptive activity \citep{vilardo2015long}.
The Solfatara data are available at 
\url{https://figshare.com/s/82bcfb64d5130712aeef}\,.
They consist of TIR images of $200 \times 200$ pixels acquired 
from May to November 2022 by the remote station of Solfatara~1. 
Vectorizing each frame yields an ultra-high dimensional 
data matrix with $n=205$ and $p=40,000$. 
We have applied cellPCA, with $\rk=2$ obtained
from Section~\ref{sec:rank}.

\begin{figure}[ht]
\centering
\vspace{2mm}
\includegraphics[width=0.7\textwidth]
   {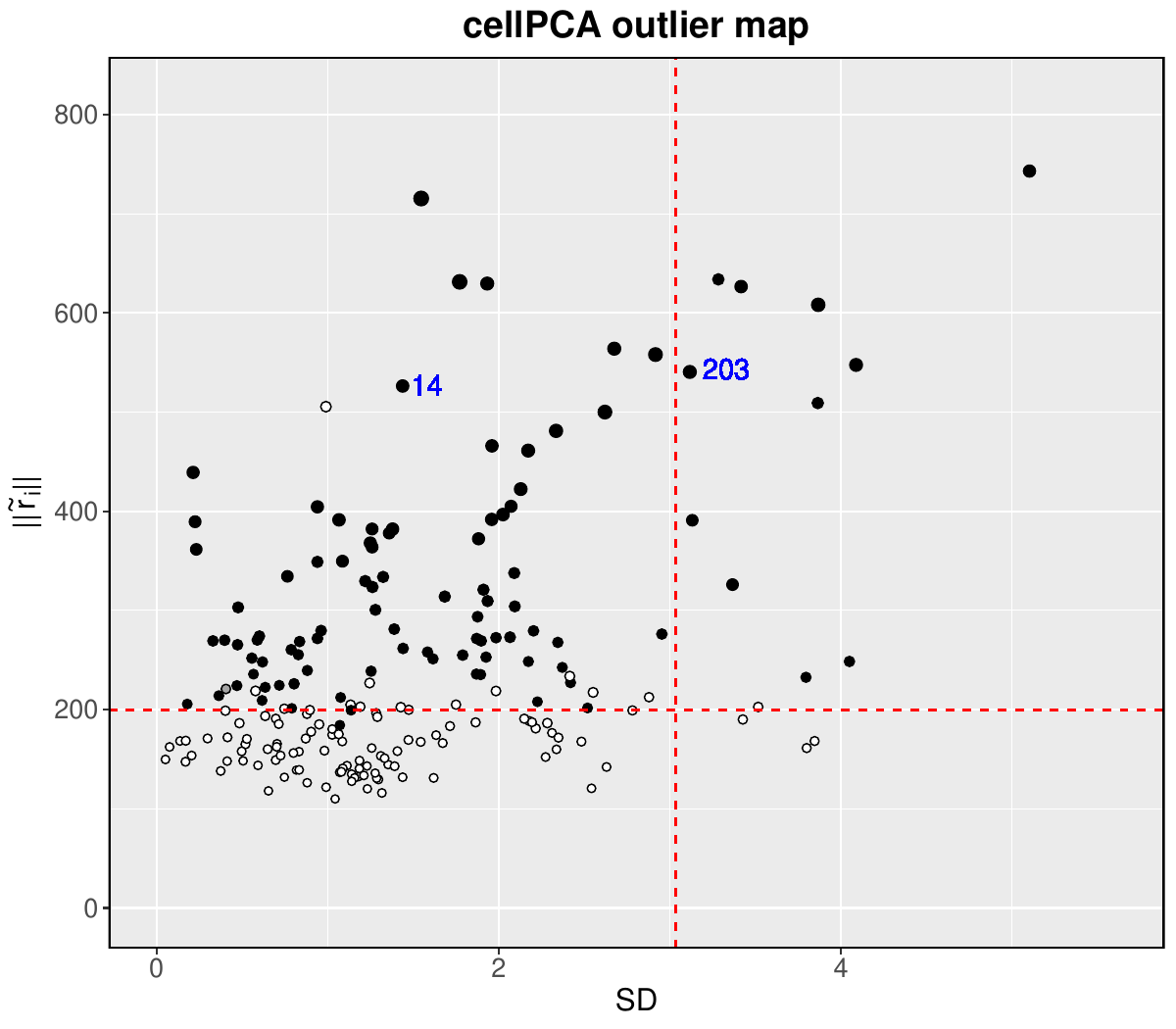}\\
\vspace{-4mm}
\caption{Enhanced outlier map of the Solfatara data.}
\label{fig_solfa_outliermap}
\end{figure}

Figure~\ref{fig_solfa_outliermap} shows the resulting
enhanced outlier map. We see that many residuals have 
$||\btr_i||$ far above the horizontal cutoff line, and
there are quite many casewise outliers (black points).
The size of these points indicates that many of their
cells have low weight. By way of illustration we look
at one of them, case 14.

\begin{figure}[!ht]
\centering
\includegraphics[width=0.49\textwidth]
  {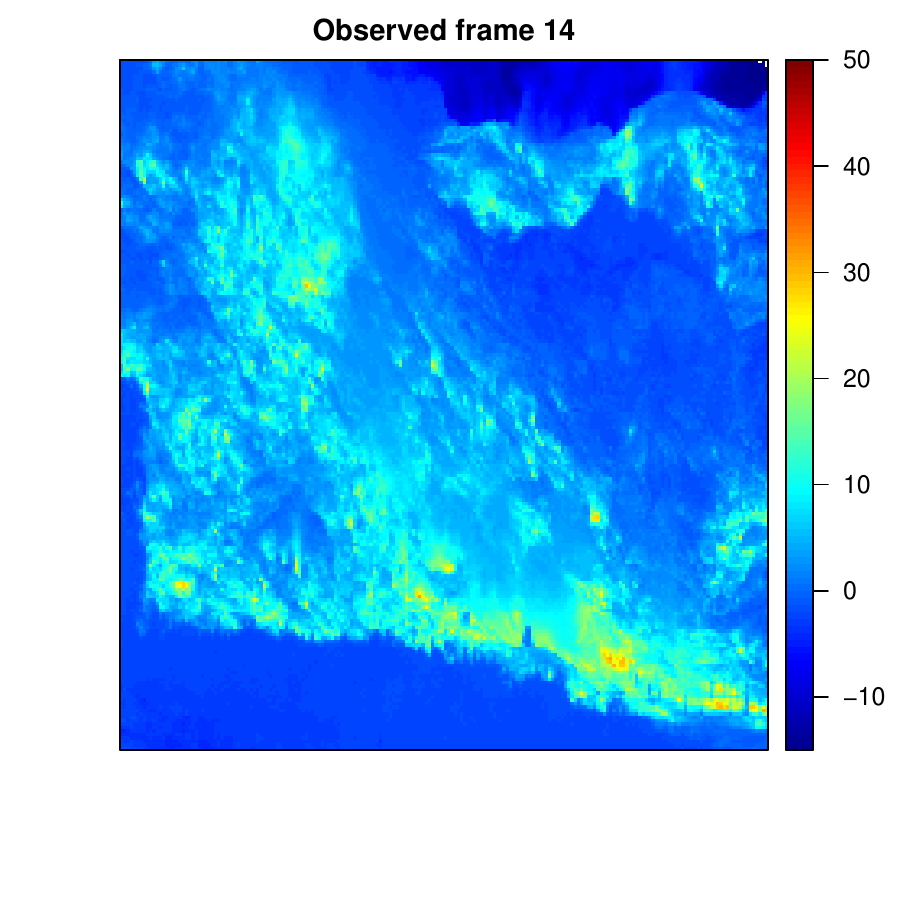}
\includegraphics[width=0.49\textwidth]
  {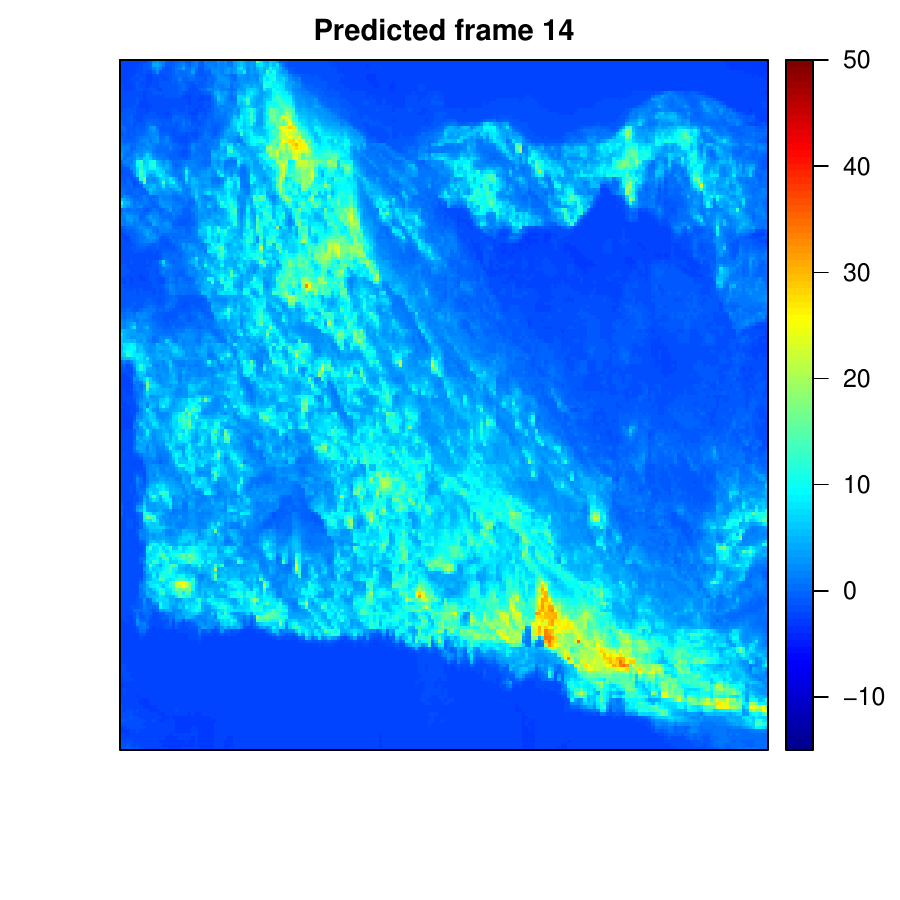}\\

\vspace{-7mm}
\includegraphics[width=0.5\textwidth]
  {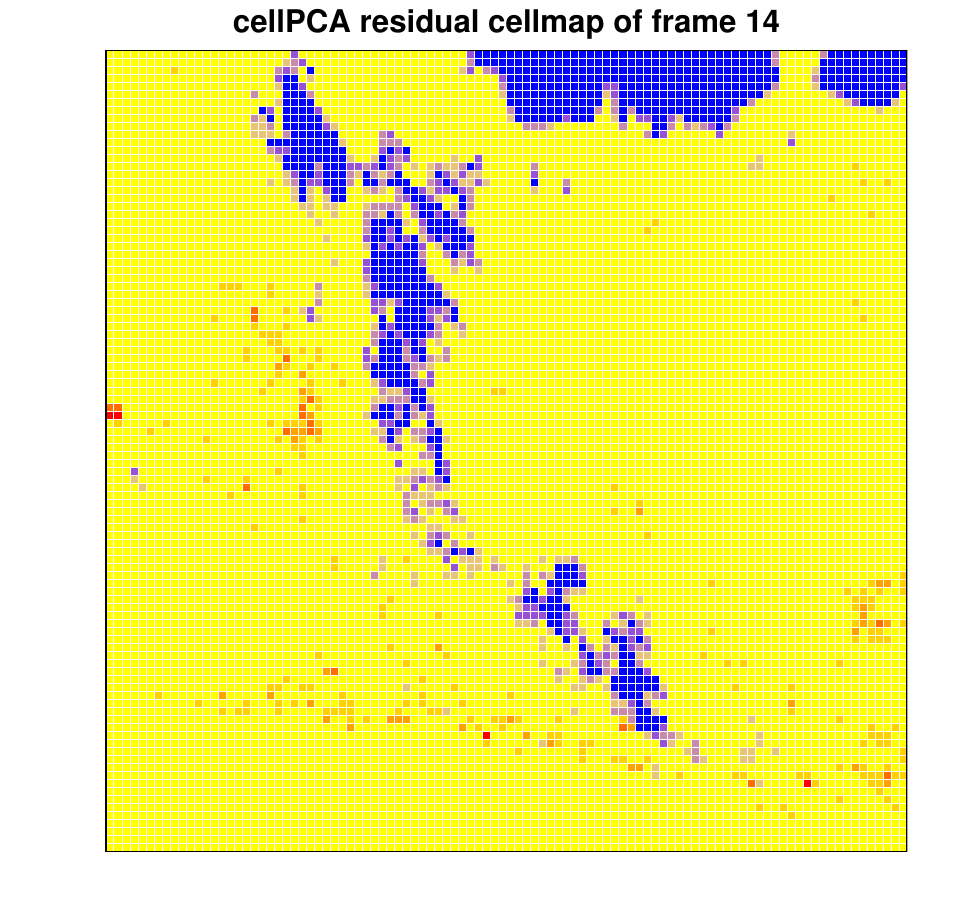}\\
\caption{Solfatara data: (top left) observed 
frame 14, (top right) its prediction, and
(bottom) its residual cellmap.}
\label{fig_frame14}
\end{figure}

Figure~\ref{fig_frame14} shows some results for frame
14. The top left panel is the observed frame, with
cooler regions in blue and warmer regions in yellow
to red. The predicted frame is slightly different,
and overall a bit less cool. The standardized residuals
are shown in the bottom panel. This is the part of the
residual cellmap of the Solfatara data belonging to
frame 14. The entire residual cellmap is much bigger,
and has a row with 40000 cells for case 14. These
cells are more easily visualized in this 200 by 200 
square form corresponding to frame 14 itself.
The blue region in the cellmap indicates where the
observed temperature was lower than expected. It 
points to the condensation of hot water vapor in 
plumes from the volcanic fumaroles, which partly
hides the heat underneath. This behavior is not visible
in the raw observed frame, but it deviates from the
overall linear relations described by the principal
subspace.

The outlier map of Figure~\ref{fig_solfa_outliermap}
shows where case 14 is located, and the results for
case 203 are shown in section~\ref{app:addreal} of 
the Supplementary Material. Inspection showed that 
the casewise outliers in the outlier map were mainly 
among the first 23 and the last 65 cases, that 
correspond to TIR frames acquired in May, October, 
and November. By visually examining the frames we 
observed that the majority of the anomalous images 
exhibit temperature patterns distinctly different 
from those of the regular frames. As in frame 14
the residual frames often show extensive blurred 
regions, which have been attributed to the condensation 
of water vapor. This effect is most pronounced 
during the winter season, due to higher air humidity 
levels. 

Moreover, we assess the performance of cellPCA 
to deal with missing data by replacing a random 
subset of $20 \%$ of the cells with NAs. Then 
the median absolute error (MAE) is calculated, 
i.e., the median absolute difference between 
the known values of the missing cells and the 
corresponding predicted values. This is repeated 
50 times for cellPCA, MacroPCA,  and CPCA, and 
results are shown in 
Figure~\ref{fig_box_solfa_NA}. The proposed 
method clearly outperforms the competing ones 
in prediction accuracy. 

\begin{figure}[t]
\centering
\includegraphics[width=0.4\textwidth]
{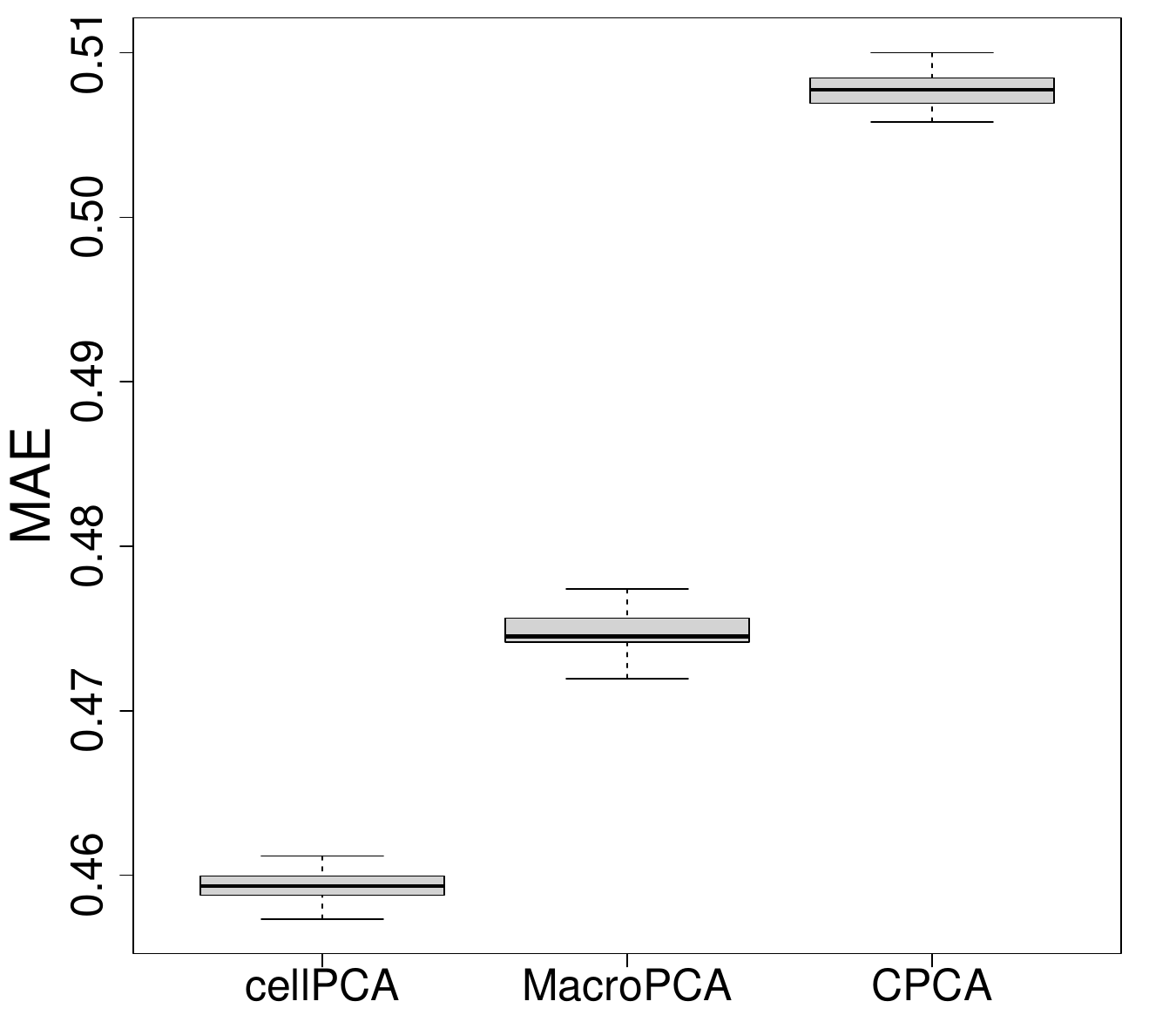}
\caption{Boxplots of the MAE for cellPCA, MacroPCA, 
and CPCA that are obtained by replacing a random 
subset of $20 \%$ of the Solfatara data cells 
with NAs.}
\label{fig_box_solfa_NA}
\end{figure}

\section{Conclusions}
\label{sec:conc}

We have introduced a new contamination model 
that includes casewise outliers, cellwise outliers 
and missing values simultaneously. We proposed the 
cellPCA method, that can handle data generated by
this model. The main novelty of this method is
that it minimizes a single objective function. Its 
algorithm assigns a weight to each cell in the data, 
as well as to each case. The unifying objective
function allowed us to derive both the cellwise and the
casewise influence function of the projection matrix,
as well as the asymptotic distribution of the latter.
The cellwise and casewise weights allowed us to
better visualize outliers in residual cellmaps and 
outlier maps. The method also provides
imputed data that can be used in further analyses. 
The performance of cellPCA was showcased in a 
simulation study, and the method was illustrated on 
interesting datasets.

This work opens several directions for further
research. One of those is an extension to principal
component regression, and related methods
like partial least squares that are often used in
chemometrics. Another avenue is tensor data.

\noindent{\bf Software availability.} \textsf{R} 
code for the proposed method and a script that 
reproduces the examples is available at
\mbox{\url{https://wis.kuleuven.be/statdatascience/robust}}\,.
The rather large Solfatara dataset that was analyzed in Section 
\ref{sec:realdata} can be downloaded from\linebreak 
\url{https://figshare.com/s/82bcfb64d5130712aeef}\,.\\


\spacingset{1}


\clearpage
\pagenumbering{arabic}
\appendix
\begin{center}
\phantom{abc}\\

\Large{Supplementary Material to: 
  Robust Principal Components\\
           by Casewise and Cellwise Weighting}\\
\end{center}

\setcounter{equation}{0} 
\renewcommand{\theequation}
  {A.\arabic{equation}} 

\spacingset{1.45} 
\section{The masking effect in the diagnostic approach}
\label{sec:masking}
To illustrate how diagnostic methods for handling 
outliers can be affected by the masking problem, we 
consider the \texttt{octane} dataset available in the 
\texttt{R} package \texttt{rrcov} \citep{rrcov}. It 
contains near infrared (NIR) absorbance spectra of 
$n = 39$ gasoline samples over $p = 226$ wavelengths 
ranging from 1102 nm to 1552 nm, with measurements 
every two nanometers. The six gasoline samples 
25, 26, and 36--39 contain added ethanol, resulting 
in highly deviating absorbance values in the last 80 
wavelengths, as can be seen in 
Figure~\ref{fig_octane_raw}.

\begin{figure}[ht]
\centering
\includegraphics[width=0.45\textwidth]
   {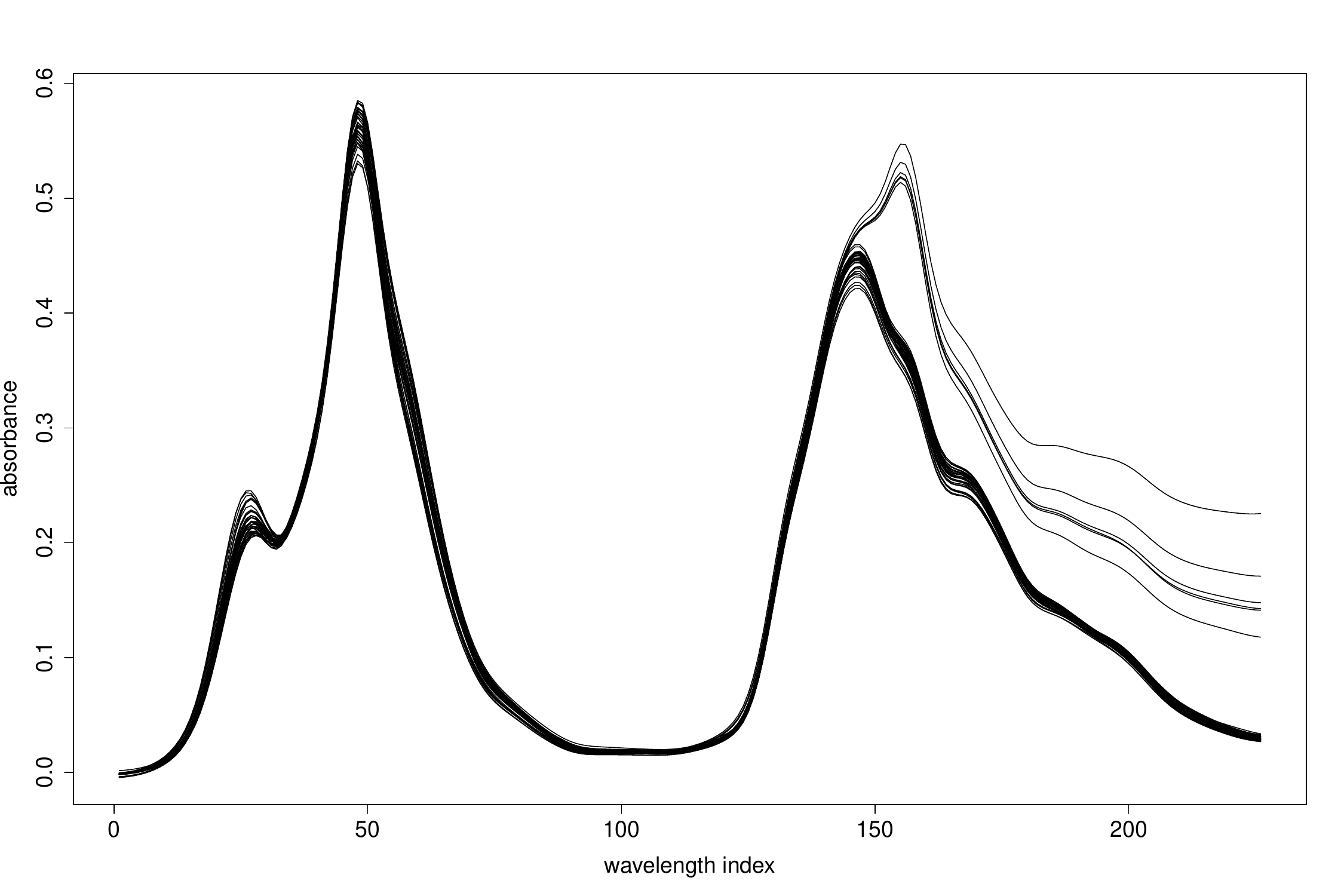}\\
\vspace{-2mm}
\caption{Raw absorbance spectra of the octane data.}
\label{fig_octane_raw}
\end{figure}

The dimensionality of this dataset 
can be reduced by PCA. To account for the possible 
presence of outliers, a typical diagnostic approach 
repeatedly applies classical PCA to the data, and flags 
and removes anomalies by identifying cases with either
orthogonal distance (OD) or score distance (SD) 
exceeding specific thresholds. The OD measures how far 
a point lies from the PCA estimated subspace, whereas 
the SD is the Mahalanobis distance between the projection 
of the point onto the subspace and the center. 
Figure~\ref{fig_octane_outliermap_CPCA} displays the outlier maps, which plot the OD against the SD with the corresponding thresholds, obtained using the diagnostic approach. 
After two iterations, no additional observations are flagged, and observations 18, 26, 32 and 38 are identified as outliers. Therefore, this method fails to detect four of the known outliers, thereby suffering from the masking effect. Moreover, some regular observations are incorrectly flagged as outliers, so this diagnostic approach also suffers from the swamping effect.

\begin{figure}[ht]
\centering
\vspace{3mm}
\includegraphics[width=0.31\textwidth]
   {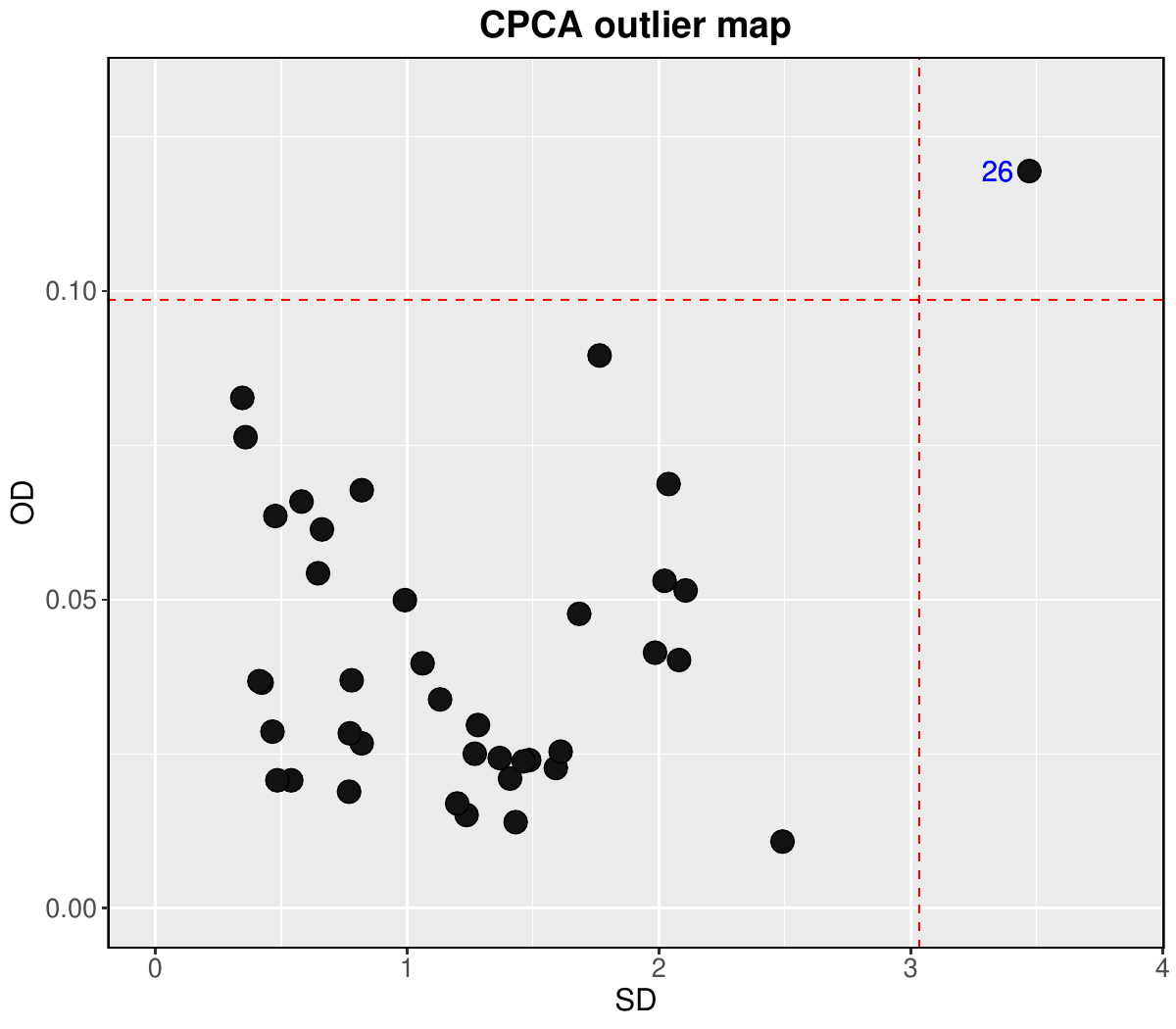}
\includegraphics[width=0.31\textwidth]
   {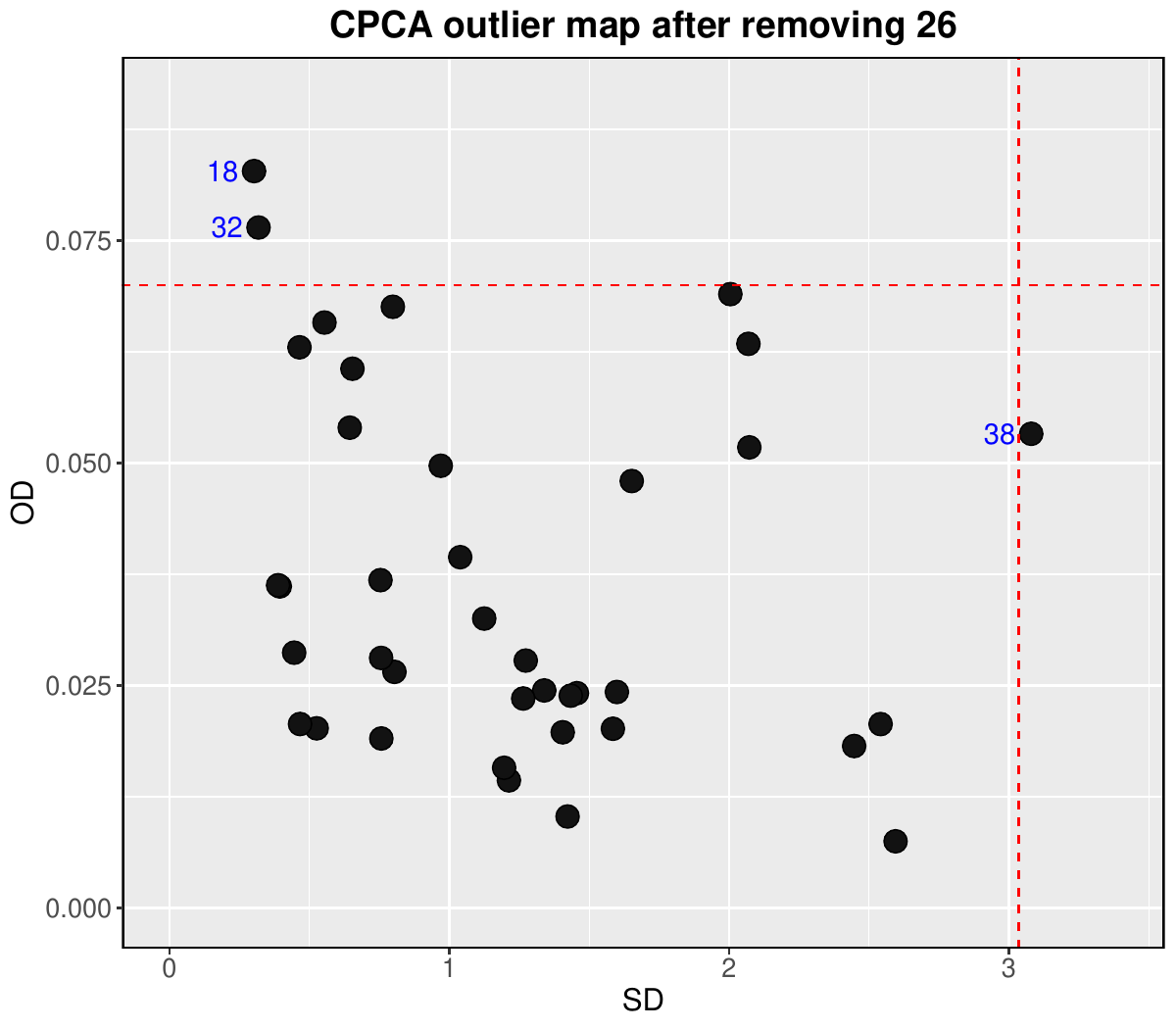}
\includegraphics[width=0.31\textwidth]
   {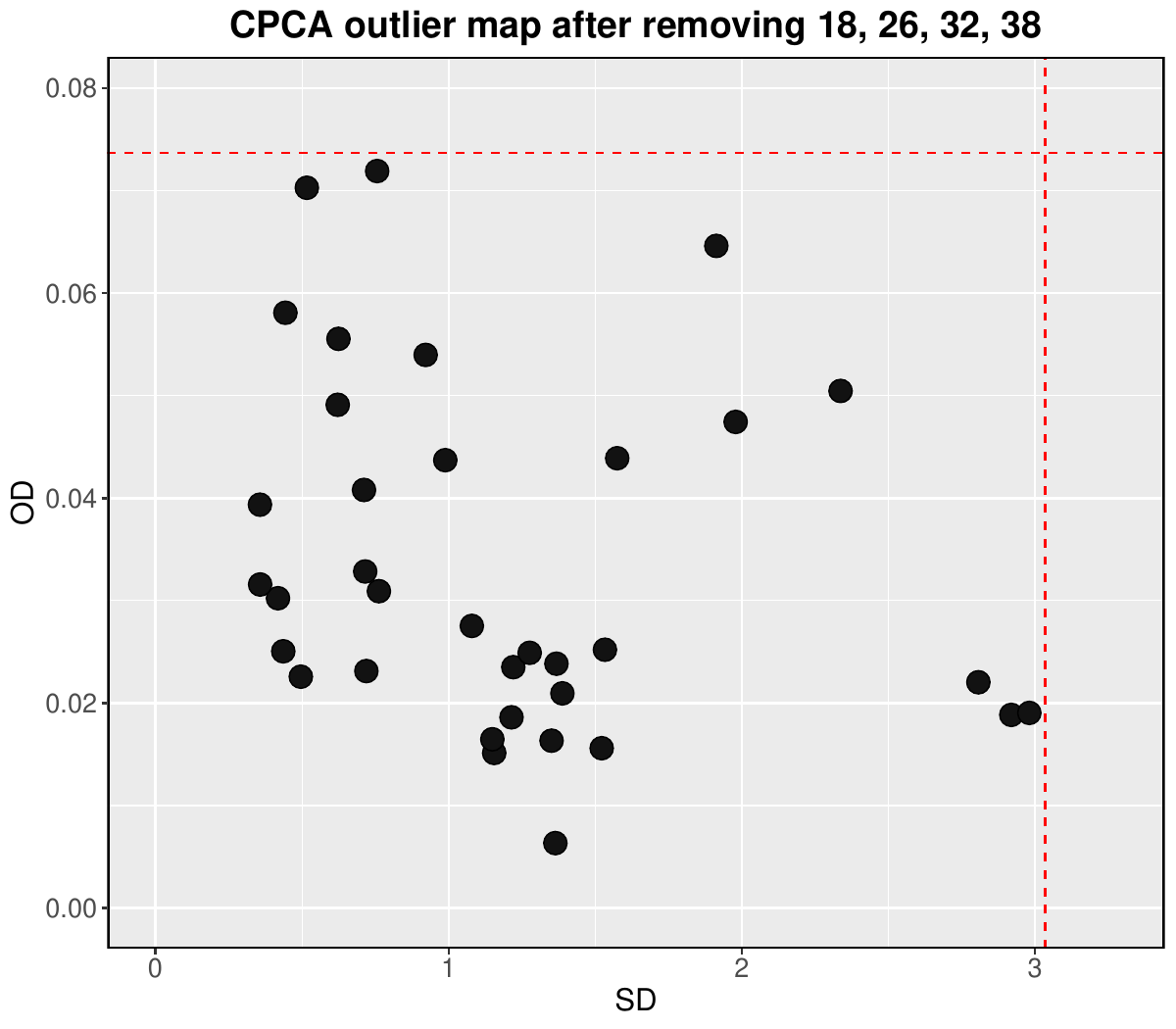}\\
\vspace{-2mm}
\caption{Outlier maps of the octane data 
obtained with the diagnostic approach.}
\label{fig_octane_outliermap_CPCA}
\end{figure}

Figure~\ref{fig_octane_outliermap_cellPCA} shows 
the enhanced outlier map obtained by our proposed 
cellPCA method. Similarly to the classical outlier map, this plot displays, for each observation, the norm of the standardized residuals $\|\btr_i\|$ against its SD, with the corresponding thresholds indicated by horizontal and vertical lines. Additional details about the enhanced outlier map are provided in Section~\ref{sec:ionosphere}.
All the outliers are clearly identified and have no impact on the final estimates, demonstrating the advantage of the robust approach over the diagnostic approach.
\begin{figure}[!ht]
\centering
\vspace{2mm}
\includegraphics[width=0.45\textwidth]
   {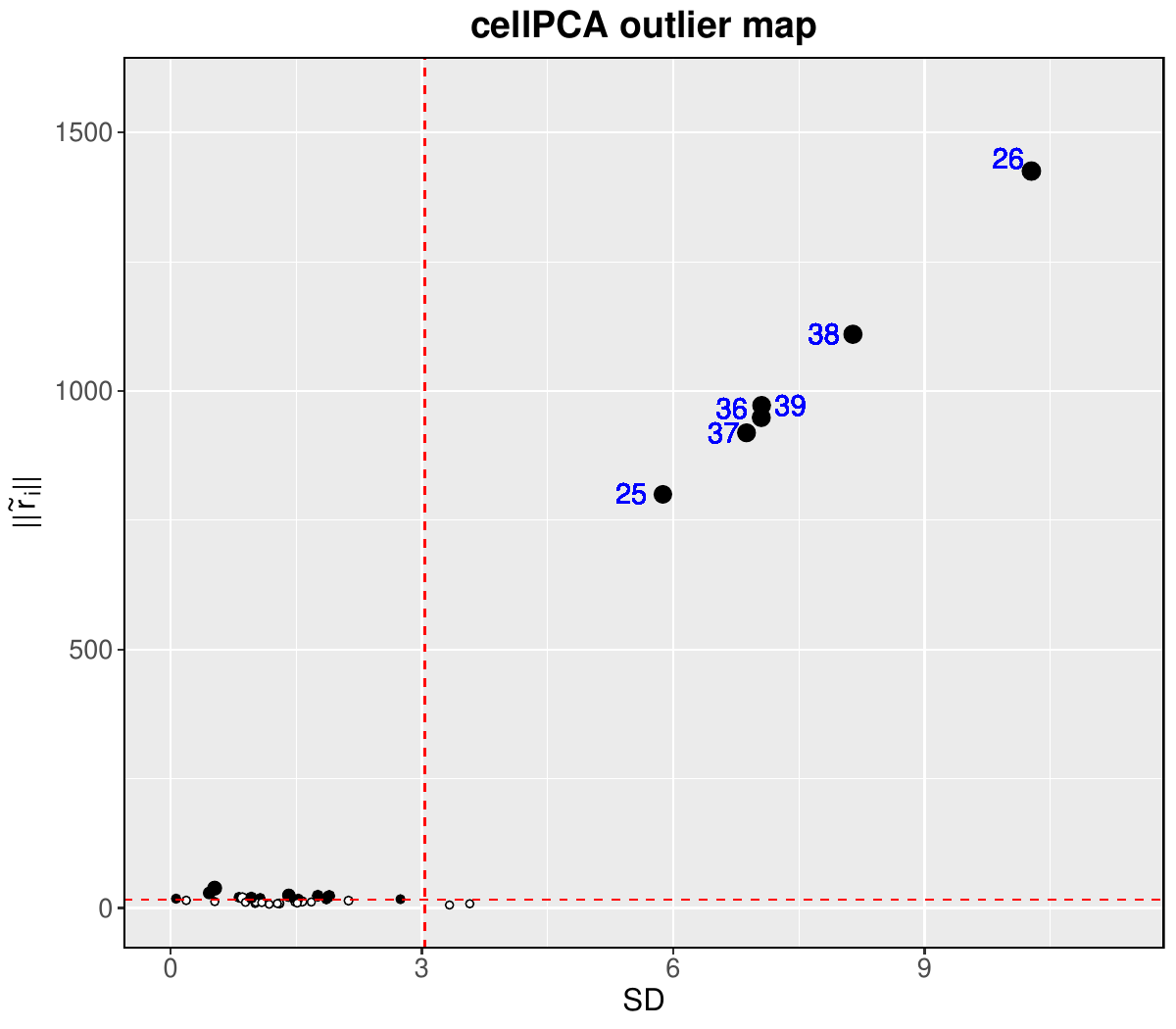}\\
\vspace{-2mm}
\caption{cellPCA enhanced outlier map of the 
octane data.}
\label{fig_octane_outliermap_cellPCA}
\end{figure}

\section{Equivalent parametrizations of $\bV$}
\label{sec:parametrizations}

The principal subspace is of the form
$\bPi = \bPi_0 + \bmu$ where $\bP_0$ is a linear
subspace of dimension $\rk$ which is characterized
by the $p \times p$ projection matrix $\bP$.
We saw in Section~\ref{sec:PCAmodel} that 
$\bP$ can be written as $\bP =\bV\bV^T$ where the 
$p \times \rk$ matrix $\bV$ has orthonormal 
columns, which form an orthonormal basis of 
$\bPi_0$. 

The only disadvantage of working with $\bV$
is that, unlike $\bP$, it is not unique.
Indeed, we could take any other orthonormal 
basis of $\bPi_0$ corresponding to a matrix 
$\btV$, and then $\btV \btV^T = \bP$ as well.
But then there must be a nonsingular 
$\rk \times \rk$ matrix $\bA$ such that 
$\btV = \bV\bA$. Left multiplying by $\bV^T$
yields $\bV^T\btV = \bV^T\bV\bA = \bA$.
We can also go in the opposite direction by
$\bV = \btV\bA^{-1}$ and then left multiply
by $\btV^T$ yielding $\btV^T\bV = 
\btV^T\btV\bA^{-1} = \bA^{-1}$. Then 
$\bA^{-1} = \btV^T\bV = (\bV^T\btV)^T = \bA^T$
so $\bA$ is an orthogonal matrix. Therefore
we can write $\btV = \bV\bO$ where $\bO = \bA$
is an orthogonal $\rk \times \rk$ matrix.
This shows that $\bV$ is only determined up to 
right multiplication by an orthogonal matrix.

However, we will see in the sequel that 
whichever parametrization $\bV$ is chosen, the 
final algorithmic and theoretical results on 
$\bP$ are the same.

\section{M-estimation and robustness}
\label{sec:M-estimation}

The M-scale $\hsigma$ given 
by~\eqref{eq:Mscale} as
the solution $\hsigma$ of the equation
\begin{equation*}
  \frac{1}{n} \sum_{i=1}^n \rho\left(
  \frac{z_i}{\sigma}\right)=\delta
\end{equation*}
belongs to the class of M-estimators, see e.g. 
Section 2.5 of \cite{maronna2019robust}.
Here $z_1,\ldots,z_n$ is a univariate dataset
and $\rho(0)$ is zero, $\rho$ is an even 
function in the sense that $\rho(-z)=\rho(z)$, 
and $\rho(z)$ is nondecreasing for 
$z \geqslant 0$.

The robustness of $\hsigma$ depends on 
the boundedness of $\rho$.  
When $\rho$ is unbounded, a single
far outlier $z_1$ can make $\hsigma$
arbitrarily large, which is a form of
breakdown. For instance, the $L^1$ function
$\rho(z) = |z|$ yields the mean deviation
$\hsigma = (n\delta)^{-1}\sum_{i=1}^n |z_i|$
which goes to infinity when $|z_1|$ grows
while $z_2,\ldots,z_n$ remain the same.
The same is true for the $L^2$ function
$\rho(z) = z^2$ yielding the root mean 
square. Moreover, the influence function 
of $\hsigma$ has the shape of $\rho$, so 
$\rho$ needs to be bounded in order for 
the influence function to be bounded.

The number of outliers that $\hsigma$
can withstand depends on the value of
$\delta$. The {\it explosion breakdown
value} of $\hsigma$ is the smallest 
percentage of outliers in the dataset
that can make $\hsigma$ arbitrarily large,
and equals $\delta/\max(\rho)$. On the
other hand, the {\it implosion breakdown 
value} of $\hsigma$ is the smallest 
percentage of outliers that can bring
$\hsigma$ arbitrarily close to zero,
and equals 
$(\max(\rho) - \delta)/\max(\rho)$.
Since we don't want either type of 
breakdown to occur, the best choice is
thus to take $\delta = \max(\rho)/2$.
Then it would take almost 50\% of outliers 
to destroy $\hsigma$, so we are quite
safe.

We will often want our scale 
estimator to be consistent for the scale 
parameter of a Gaussian distribution. In
order to achieve that we include a
consistency factor $a$ 
in~\eqref{eq:Mscale}, yielding
\begin{equation}\label{eq:consistentM}
  \frac{1}{n} \sum_{i=1}^n \rho\left(
  \frac{z_i}{a\,\sigma}\right)=\delta
\end{equation}
which is a special case of~\eqref{eq:Mscale}
in which $\rho$ is rescaled in the 
horizontal direction. We choose $a$ such 
that $E[\rho(z/a)] = \delta$ for 
$z \sim N(0,1)$, the standard Gaussian 
distribution. In the cellPCA algorithm we 
use the M-scale given by the function
$\rho_{1.5,4}$ shown in the top left
panel of Figure~\ref{fig:rho_psi},
for which $\max(\rho) = 3.7622$ so we
put $\delta = 1.8811$ and obtain
$a = 0.3431$\,.

In the cellPCA algorithm of
Section~\ref{sec:algo} both the cellwise
weights~\eqref{eq:cellweight} and the
casewise weights~\eqref{eq:caseweight} use
the derivative of $\rho_{b,c}$ which is
denoted as $\psi_{b,c}$ and shown in the
top right panel of
Figure~\ref{fig:rho_psi}. Since $\rho_{b,c}$
is bounded it follows that $\psi_{b,c}(z)$ 
goes to zero for large positive and large
negative values of $z$. Such functions 
$\psi$ are called redescending. They
were first used in M-estimators of
location for univariate data, given by 
minimizing the objective function
\begin{equation} \label{eq:locM_rho}
  \sum_{i=1}^n \rho\Big(\frac{z_i - \mu}
  {\hsigma}\Big)
\end{equation}
in $\mu$, where we assume that $\hsigma$ 
is given. The first order condition of 
this minimization states that the 
derivative of the objective function 
is zero, so
\begin{equation} \label{eq:locM_psi}
  \sum_{i=1}^n \psi\Big(\frac{z_i - \hmu}
  {\hsigma}\Big) = 0\,.
\end{equation}
It turns out that the influence function 
of $\hmu$ has the shape of $\psi$. This
implies that the effect of a far 
outlier goes to zero, which aids the
robustness of the estimator. Note that
the cellPCA objective~\eqref{eq:objP} 
contains parts similar 
to~\eqref{eq:locM_rho}.
Moreover, univariate M-estimators of
location can be computed by an algorithm
that uses casewise weights based on their
function $\psi$, as described in Section 
2.8.1 of \cite{maronna2019robust}.

\section{First-order Conditions}
\label{app:firstorder}
In the following, the derivation of the first-order
necessary conditions from \eqref{eq:objP} is presented.
Rewriting~\eqref{eq:objP} in terms of the parameters
$\bV$ and $\bU$ yields 
\begin{equation*}
\hspace{-0.7cm}
  L_{\rho_1,\rho_2}\left(\bX,\bV,\bU,\bmu\right) := 
  \frac{\hsigma_2^2}{m} \sum_{i=1}^{n}m_i \rho_2 
  \left(\frac{1}{\hsigma_2}\sqrt{\frac{1}{m_i}
  \sum_{j=1}^{p}m_{ij} \hsigma_{1,j}^2\rho_1\left(\frac{
  x_{ij} - \mu_j - (\bU\bV^T)_{ij}}
  {\hsigma_{1,j}}\right)} \right)\,.
\end{equation*}
We start with the gradient of $L_{\rho_1,\rho_2}$ 
with respect to $\bv_j$ which is 
\begin{align*} 
\hspace{-0.7cm}
\bzero = \frac{\partial}{\partial \bv_j}
    L_{\rho_1,\rho_2} 
  &= \frac{\hsigma_2^2}{m} \sum_{i=1}^{n}
     m_i \rho'_2(\rt_i/\hsigma_2) 
     \frac{1}{\hsigma_2}
     \frac{\partial}{\partial \bv_j} \sqrt{\frac{1}{m_i}
     \sum_{j=1}^{p} m_{ij} \hsigma_{1,j}^2\rho_1\left(
     \frac{x_{ij}-\hx_{ij}}{\hsigma_{1,j}}\right)}
     \nonumber\\
  &= \frac{1}{m} \sum_{i=1}^{n} m_i
     \frac{\psi_2(\rt_i/\hsigma_2)}
     {\rt_i/\hsigma_2}\, \frac{1}{2m_i}
     \frac{\partial}{\partial \bv_j}\left(\sum_{j=1}^{p} 
     m_{ij} \hsigma_{1,j}^2\rho_1\left(\frac{x_{ij}-\hx_{ij}}
     {\hsigma_{1,j}}\right)\right)\nonumber\\
  &= \frac{1}{2m}\sum_{i=1}^{n} w_i^{\case}\, 
     \hsigma_{1,j}^2\, \psi_1(r_{ij}/\hsigma_{1,j})
     \frac{m_{ij}}{\hsigma_{1,j}}
     \frac{\partial}{\partial \bv_j}
     (x_{ij} - \mu_j - \bu_i^T\bv_j)\nonumber\\
  &= \frac{-1}{2m}\sum_{i=1}^{n} w_i^{\case}\, 
     \hsigma_{1,j}\, \frac{\psi_1(r_{ij}/
     \hsigma_{1,j})}{r_{ij}/\hsigma_{1,j}}
     \frac{ m_{ij}}{\hsigma_{1j}}r_{ij}\bu_i\nonumber\\
  &= \frac{-1}{2m}\sum_{i=1}^{n} 
     \bu_i w_i^{\case} w_{ij}^{\cell} m_{ij} 
     (x_{ij}-\mu_j-\bu_i^T\bv_j)\nonumber\\
  &= \frac{-1}{2m}\sum_{i=1}^{n} 
     \bu_i w_{ij} 
     (x_{ij}-\mu_j-\bu_i^T\bv_j)\nonumber\\
  &= \frac{-1}{2m}
     \left(\bU^T\bW^j(\bx^j-\mu_j \bone_n)-
     \bU^T\bW^j\bU\bv_j\right).
\end{align*}
The gradient of $L_{\rho_1,\rho_2}$ with respect to $\bu_i$ is 
\begin{align*} 
\hspace{-0.7cm}
\bzero = \frac{\partial}{\partial \bu_i}
    L_{\rho_1,\rho_2} 
  &= \frac{\hsigma_2^2}{m} m_i 
     \rho'_2(\rt_i/\hsigma_2) \frac{1}{\hsigma_2}
     \frac{\partial}{\partial \bu_i} \sqrt{\frac{1}{m_i}
     \sum_{j=1}^{p} m_{ij} \hsigma_{1,j}^2\rho_1\left(
     \frac{x_{ij}-\hx_{ij}}{\hsigma_{1,j}}\right)}
     \nonumber\\
  &= \frac{1}{2m} 
     \frac{\psi_2(\rt_i/\hsigma_2)}{\rt_i/\hsigma_2}\,
     \frac{\partial}{\partial \bu_i}\left(\sum_{j=1}^{p} 
     m_{ij} \hsigma_{1,j}^2\rho_1\left(\frac{x_{ij}-\hx_{ij}}
     {\hsigma_{1,j}}\right)\right)\nonumber\\
  &= \frac{1}{2m} \sum_{j=1}^{p} w_i^{\case}\, 
     \hsigma_{1,j}^2\, \psi_1(r_{ij}/\hsigma_{1,j})
     \frac{m_{ij}}{\hsigma_{1,j}}
     \frac{\partial}{\partial \bu_i}
     (x_{ij} - \mu_j - \bv_j^T\bu_i)\nonumber\\
  &= \frac{-1}{2m} \sum_{j=1}^{p} 
     \bv_j w_i^{\case}\, \hsigma_{1,j}\, 
     \frac{\psi_1(r_{ij}/\hsigma_{1,j})}
     {r_{ij}/\hsigma_{1,j}}
     \frac{m_{ij}}{\hsigma_{1,j}}r_{ij}\nonumber\\
  &= \frac{-1}{2m} \sum_{j=1}^{p} 
     \bv_j w_i^{\case} w_{ij}^{\cell} m_{ij} 
     (x_{ij}-\mu_j- \bv_j^T\bu_i)\nonumber\\
  &= \frac{-1}{2m} \left(\bV^T\bW_i
     (\bx_i-\bmu)-\bV^T\bW_i\bV\bu_i\right).
\end{align*}
Finally, the derivative of $L_{\rho_1,\rho_2}$ with 
respect to $\mu_j$ is 
\begin{align*}
\hspace{-0.7cm}
  \frac{\partial}{\partial \mu_j} L_{\rho_1,\rho_2} 
  &= \frac{\hsigma_2^2}{m}
     \sum_{i=1}^{n} m_i \rho'_2(\rt_i/\hsigma_2) 
     \frac{1}{\hsigma_2}
     \frac{\partial}{\partial \mu_j} \sqrt{\frac{1}{m_i}
     \sum_{j=1}^{p} m_{ij}  \hsigma_{1,j}^2\rho_1\left(
     \frac{x_{ij}-\hx_{ij}}{\hsigma_{1,j}}\right)}
     \nonumber\\
  &= \frac{1}{2m} \sum_{i=1}^{n} 
     \frac{\psi_2(\rt_i/\hsigma_2)}{\rt_i/\hsigma_2}\,
     \frac{\partial}{\partial \mu_j}\left(\sum_{j=1}^{p} 
      m_{ij} \hsigma_{1,j}^2\rho_1\left(\frac{x_{ij}-\hx_{ij}}
     {\hsigma_{1,j}}\right)\right)\nonumber\\
  &= \frac{1}{2m}\sum_{i=1}^{n} w_i^{\case}\, 
     \hsigma_{1,j}^2\, \psi_1(r_{ij}/\hsigma_{1,j})
     \frac{m_{ij} }{\hsigma_{1,j}}
     \frac{\partial}{\partial \mu_j}
     (x_{ij} - \mu_j - \bu_i^T\bv_j)\nonumber\\
  &= \frac{-1}{2m}\sum_{i=1}^{n} w_i^{\case}\, 
     \hsigma_{1,j}\, \frac{\psi_1(r_{ij}/
     \hsigma_{1,j})}{r_{ij}/\hsigma_{1,j}}
     \frac{m_{ij}}{\hsigma_{1,j}}r_{ij}\nonumber\\
  &= \frac{-1}{2m}\sum_{i=1}^{n} 
     w_i^{\case} w_{ij}^{\cell} m_{ij} 
     (x_{ij}-\mu_j-\bu_i^T\bv_j)\,
\end{align*}
so the gradient with respect to the vector $\bmu$ becomes
\begin{align*} 
\hspace{-0.7cm}
\bzero = \frac{\partial}{\partial \bmu}L_{\rho_1,\rho_2} 
  &= \frac{-1}{2m}\sum_{i=1}^{n}
     \bW_i (\bx_{i}-\bmu-\bV\bu_i)\,.
\end{align*}
\vskip6mm

We also derive the first order conditions of the 
weighted PCA objective function
\begin{equation*} 
  \tL = \sum_{i=1}^n\sum_{j=1}^p w_{ij}
    \left(x_{ij}-\mu_j-\bu_i^T\bv_j\right)^2\;.
\end{equation*}
We start with the  gradient of $\tL$ 
with respect to $\bv_j$ which is 
\begin{align*} 
\bzero = \frac{\partial}{\partial \bv_j} \tL 
  &= \sum_{i=1}^{n} w_{ij}\,
  \frac{\partial }{\partial \bv_j}
  \left(x_{ij}-\mu_j-\bu_i^T\bv_j\right)^2\nonumber\\
  &= -2\sum_{i=1}^{n} \bu_i w_{ij} 
   (x_{ij}-\mu_j-\bu_i^T\bv_j)\nonumber\\
  &= -2\left(\bU^T\bW^j(\bx^j-\mu_j \bone_n)-
     \bU^T\bW^j\bU\bv_j\right).
\end{align*}
The gradient of $\tL$ with respect to $\bu_i$ is 
\begin{align*} 
\bzero = \frac{\partial}{\partial \bu_i} \tL 
  &= \sum_{j=1}^{p} w_{ij}\,
  \frac{\partial}{\partial \bu_i}
  \left(x_{ij}-\mu_j-\bu_i^T\bv_j\right)^2\nonumber\\
  &= -2\sum_{j=1}^{p} \bv_j w_{ij} 
   (x_{ij}-\mu_j-\bv_j^T\bu_i)\nonumber\\
  &= -2\left(\bV^T\bW_i(\bx_i-\bmu)-
     \bV^T\bW_i\bV\bu_i\right).
\end{align*}
Finally, the derivative of $\tL$ with 
with respect to $\mu_j$ is 
\begin{align*}
  \frac{\partial}{\partial \mu_j} \tL 
  &= \sum_{i=1}^{n} w_{ij}\,
  \frac{\partial}{\partial \mu_j}
  \left(x_{ij}-\mu_j-\bu_i^T\bv_j\right)^2\nonumber\\
  &= -2\sum_{i=1}^{n} w_{ij} 
   (x_{ij}-\mu_j-\bv_j^T\bu_i)\,
\end{align*}
so the gradient with respect to the vector $\bmu$ 
becomes
\begin{align*}
\bzero = \frac{\partial}{\partial \bmu}\tL &= 
  -2\sum_{i=1}^{n}\bW_i (\bx_{i}-\bmu-\bV\bu_i)\,.
\end{align*}

\newpage
\section{Description of the Algorithm}
\label{app:algo}

In each step of the algorithm we optimize the
objective function of the weighted principal
subspace objective~\eqref{eq:WPCA} 
by adjusting the components 
$\bV$, $\bU$ and $\bmu$ one after the other,
by means of the first-order optimality conditions
\eqref{eq:condi1}--\eqref{eq:condi3}. 
This is all done for a fixed weight function. 
Afterward, the weight function is updated too.

The IRLS algorithm starts from our initial 
estimate $(\bV_{(0)}, \bU_{(0)}, \bmu_{(0)})$. 
Then, for each $\st=0, 1,2,\dots$, we obtain
$(\bV_{(\st+1)}, \bU_{(\st+1)}, \bmu_{(\st+1)})$ from 
$(\bV_{(\st)}, \bU_{(\st)}, \bmu_{(\st)})$ by a 
four-step procedure.

\vspace{3mm}
{\bf (a) Minimize~\eqref{eq:WPCA} with 
respect to $\bV$} by applying \eqref{eq:condi1} 
with $\bU_{(\st)}$\,, $\bmu_{(\st)}$\,, and $\bW_{(\st)}$\,, 
the weight matrix based on 
$(\bV_{(\st)}\,,\bU_{(\st)}\,,\bmu_{(\st)})$. 
Condition~\eqref{eq:condi1} says
\begin{equation}\label{eq:WUV}
 (\bU_{(\st)}^T\bW_{(\st)}^j\bU_{(\st)})\bv_j=
 \bU_{(\st)}^T\bW_{(\st)}^j
 \left(\bx^j-(\bmu_{(\st)})_j\bone_n\right)
  \qquad j=1,\dots,p\,.
\end{equation}
For a fixed $j$ we now want to find the best 
column vector $\bv_j$ in the sense of minimizing 
the objective function~\eqref{eq:WPCA}, in 
particular the term
\begin{align}\label{eq:normWUV}
 \sum_{i=1}^n \big(\bW^j_{(\st)}\big)_{ii}
 \big(x_{ij} - (\bmu_{(\st)})_j - \hx_{ij}\big)^2
 &= \sum_{i=1}^n \big(\bW^j_{(\st)}\big)_{ii} 
  \big( x_{ij} - (\bmu_{(\st)})_j -
  (\bU_{(\st)} \bV^T)_{ij}\big)^2 \nonumber\\
 &= \big|\big|
  (\bW^j_{(\st)})^{1/2}\big(\bx^j-(\bmu_{(\st)})_j \bone_n\big)
  - (\bW^j_{(\st)})^{1/2}\bU_{(\st)} \bv_j
  \big|\big|^2
\end{align}
that depends on $\bv_j$\,. This is the
objective of the least squares regression of 
$(\bW^j_{(\st)})^{1/2}\big(\bx^j-(\bmu_{(\st)})_j\bone_n\big)$
on $(\bW^j_{(\st)})^{1/2}\bU_{(\st)}$\,,
so the optimum is reached at 
\begin{align}\label{eq:WUdagger}
\bv_j&=\Big(\big[(\bW^j_{(\st)})^{1/2}\bU_{(\st)}\big]^T
  (\bW^j_{(\st)})^{1/2}\bU_{(\st)}\Big)^{\!\dagger}\,
  \big[(\bW^j_{(\st)})^{1/2}\bU_{(\st)}\big]^T
  (\bW^j_{(\st)})^{1/2} \left(\bx^j-(\bmu_{(\st)})_j
  \bone_n\right)\nonumber \\
  &=\big(\bU_{(\st)}^T\bW_{(\st)}^j\bU_{(\st)}\big)^\dagger
  \bU_{(\st)}^T\bW_{(\st)}^j
  \left(\bx^j-(\bmu_{(\st)})_j\bone_n\right)
\end{align}
where $^\dagger$ stands for the Moore-Penrose
generalized inverse.
Repeating this for all $j=1,\ldots,p$ yields
a new matrix $\bV_{(\st+1)}$ with rows 
$\btv_1,\dots,\btv_p$, which attains the lowest 
objective when everything else remains fixed.

Note that the initial $\bV_{(0)}$ that we 
started the algorithm from was not unique, 
and could be replaced by $\btV_{(0)} = 
\bV_{(0)}\bO$ for any $\rk \times \rk$ orthogonal 
matrix $\bO$. It may seem that the\linebreak
resulting 
$\bP_{(1)},\ldots,\bP_{(\st+1)},\ldots$ need not 
be unique, but we will show that they are.
We\linebreak will show this by induction. 
For $\st=0$ we note that the score matrix 
$\bU_{(0)}$ becomes\linebreak 
$\btU_{(0)} = (\bU_{(0)} \bV_{(0)}^T)
\btV_{(0)} = \bU_{(0)} \bV_{(0)}^T
\bV_{(0)} \bO = \bU_{(0)} \bO$. The right-hand 
side of~\eqref{eq:WUdagger} becomes
\begin{align*}
&(\bO^T\bU_{(0)}^{T}\bW_{(0)}^j\bU_{(0)}
\bO)^\dagger\bO^T\bU_{(0)}^{T}\bW_{(0)}^j
\left(\bx^j-(\bmu_{(0)})_j\bone_n\right)\\
&= \bO^T(\bU_{(0)}^{T}\bW_{(0)}^j
\bU_{(0)})^\dagger\bU_{(0)}^{T}\bW_{(0)}^j
\left(\bx^j-(\bmu_{(0)})_j\bone_n\right)
\end{align*}
because $\bO$ has orthonormal columns and  
$\bO^\dagger = \bO^{-1}= \bO^T$.
So $\btV_{(1)} = \bV_{(1)} \bO$.
The step from $\bV_{(\st)}$ to $\bV_{(\st+1)}$ is
analogous, so by induction we know that for
all $\st$ it holds that 
$\btV_{(\st+1)} = \bV_{(\st+1)} \bO$ and 
$\btV_{(\st+1)} \btV_{(\st+1)}^T = 
\bV_{(\st+1)} \bV_{(\st+1)}^T)$
hence $\bP^*_{(\st+1)} = \bP_{(\st+1)}$\,.

\vspace{3mm}
\textbf{(b) Minimize~\eqref{eq:WPCA} with 
respect to $\bU$} by \eqref{eq:condi2} with 
the new $\bV_{(\st+1)}$ and the old $\bmu_{(\st)}$ 
and $\bW_{(\st)}$. For this we use the first 
order condition~\eqref{eq:condi2}, which says
\begin{equation} \label{eq:WVU}
 \left(\bV_{(\st+1)}^T(\bW_{(\st)})_i\bV_{(\st+1)}\right)\bu_i =
  \bV_{(\st+1)}^T(\bW_{(\st)})_i\left(\bx_i-\bmu_{(\st)}\right) , 
  \quad i=1,\dots,n.
\end{equation}
For a fixed $i$ our goal is find the $\bu_i$ 
that minimizes the corresponding term
\begin{equation} \label{eq:normWVU}
 \big|\big|
 (\bW_{(\st)})_i^{1/2}(\bx_i - \bmu_{(\st)})
 - (\bW_{(\st)})_i^{1/2} \bV_{(\st+1)} \bu_i  
 \big|\big|^2
\end{equation}
of the objective~\eqref{eq:WPCA}. We instead 
minimize
\begin{equation} \label{eq:normWVUcellwise}
 \big|\big|
 (\btW_{(\st)})_i^{1/2}(\bx_i - \bmu_{(\st)})
 - (\btW_{(\st)})_i^{1/2} \bV_{(\st+1)} \bu_i  
 \big|\big|^2\;.
\end{equation}
This is the least squares regression of 
$(\btW_{(\st)})_i^{1/2}(\bx_i-\bmu_{(\st)})$ on
$(\btW_{(\st)})_i^{1/2}\bV_{(\st+1)}$ with solution
\begin{align} \label{eq:WVdagger}
(\bu_{(\st+1)})_i&=
  \Big( \big[(\btW_{(\st)})_i^{1/2}\bV_{(\st+1)}\big]^T 
  (\btW_{(\st)})_i^{1/2}\bV_{(\st+1)} \Big)^{\!\dagger}\,
  \big[(\btW_{(\st)})_i^{1/2}\bV_{(\st+1)}\big]^T
  (\btW_{(\st)})_i^{1/2}
  \left(\bx_i-\bmu_{(\st)}\right)\nonumber \\
  &=\Big(\bV_{(\st+1)}^T(\btW_{(\st)})_i
  \bV_{(\st+1)}\Big)^{\!\dagger}
  \bV_{(\st+1)}^T(\btW_{(\st)})_i
  \left(\bx_i-\bmu_{(\st)}\right).
\end{align}
Note that this solution also 
minimizes~\eqref{eq:normWVU} which equals
$w_i^{\case}$ times~\eqref{eq:normWVUcellwise}.
When the casewise weight $w_i^{\case}$ is strictly
positive these minimizations are equivalent,
and when $w_i^{\case}=0$ the norm~\eqref{eq:normWVU}
attains its lower bound of zero anyway.
Repeating this for all $i$ yields
$\bU_{(\st+1)}$ and therefore 
$\bX^0_{(\st+1)} := \bU_{(\st+1)}\bV_{(\st+1)}^T$\,.

As in step (a) we note that $\bV_{(\st+1)}$ is 
not unique so neither is $\bU_{(\st+1)}$\,, but their
product $\bX^0_{(\st+1)} := \bU_{(\st+1)}\bV_{(\st+1)}^T$
is unique. We show this as follows. Writing
$\btV_{(\st+1)} =\bV_{(\st+1)}\bO$ and 
$\btU_{(\st+1)} =\bU_{(\st+1)}\bO$ the right-hand 
side of~\eqref{eq:WVdagger} becomes
$$\bO^\dagger\!\left(\bV_{(\st+1)}^T(\btW_{(\st)})_i
\bV_{(\st+1)}\right)^\dagger\bV_{(\st+1)}^T
(\btW_{(\st)})_i\left(\bx_i-\bmu_{(\st)}\right)
\qquad \mbox{so} \qquad 
\btU_{(\st+1)} = \bU_{(\st+1)}\bO,$$
and $\btX^0_{(\st+1)} = 
\btU_{(\st+1)}\btV_{(\st+1)}^T =
\bU_{(\st+1)}\bO \bO^T_{(\st)} \bV_{(\st+1)}^T=
\bU_{(\st+1)}\bV_{(\st+1)}^T= \bX^0_{(\st+1)}$\,.
 
\vspace{3mm}
\textbf{(c) Minimize~\eqref{eq:WPCA} with 
respect to $\bmu$} with the new $\bV_{(\st+1)}$ 
and $\bU_{(\st+1)}$ and the old $\bW_{(\st)}$.
For each $j=1,\ldots,p$ the term of the objective 
function~\eqref{eq:WPCA} involving $\mu_j$ is
the weighted sum of squares
\begin{align} \label{eq:columndist}
  \big|\big|
  (\bW_{(\st)})_j^{1/2} &\big(\bx^j
  - (\bmu_{(\st)})_j\bone_n
  - (\bU_{(\st+1)}\bV_{(\st+1)}^T)^j\big)
  \big|\big|^2 \nonumber \\
  &=\sum_{i=1}^n (\bW_{(\st)})_{ij} \left(
  x_{ij}-\mu_j -\big(\bU_{(\st+1)}\bV_{(\st+1)}^T
  \big)_{ij}\right)^2
\end{align}
where $\bx^j$ is the $j$th column of $\bX$,
which is minimized by the weighted mean 
\begin{equation}\label{eq:tmu}
 (\bmu_{(\st+1)})_j = \frac{\sum_{i=1}^n 
 (\bW_{(\st)})_{ij} \big(x_{ij}-\big(
 \bU_{(\st+1)}\bV_{(\st+1)}^T\big)_{ij}\big)}
  {\sum_{i=1}^n (w_{(\st)})_{ij}}\;.
\end{equation}
Repeating this for all $j$ yields the column
vector $\bmu_{(\st+1)}$\,. 

\vspace{3mm}
\textbf{(d) Update $\bW_{(\st)}$} according 
to \eqref{eq:updateW}, \eqref{eq:cellweight}, 
and \eqref{eq:caseweight} with the new 
$\bV_{(\st+1)}$, $\bU_{(\st+1)}$ and $\bmu_{(\st+1)}$.

\vspace{3mm}
The use of the generalized inverse in steps 
(a) and (b) of the algorithm deserves some 
explanation. In~\eqref{eq:WUdagger} it could 
happen that the matrix 
$\bU_{(\st)}^T\bW_{(\st)}^j\bU_{(\st)}$ is 
singular, especially when $\rk$ is not much 
smaller than $n$. In that case we cannot invert 
$\bU_{(\st)}^T\bW_{(\st)}^j\bU_{(\st)}$\,, 
but its generalized inverse still exists. 
The matrix 
$\bV_{(\st+1)}^T(\bW_{(\st)})_i\bV_{(\st+1)}$ 
in~\eqref{eq:WVU} can also be singular. 
This occurs for instance when $\bx_i$ is 
considered as a casewise outlier so $w_i^{\case} = 0$. 
Then the matrix $(\bW_{(\st)})_i$ becomes 
zero so that 
$(\bV_{(\st+1)}^T(\bW_{(\st)})_i
\bV_{(\st+1)})^{\dagger}$ is zero 
too, yielding $(\bu_{(\st+1)})_{i} =\bzero$ so
$\bx^0_i = \bzero$ hence 
$\bhx_i = \bmu_{(\st+1)}$. During the course 
of the algorithm we monitor the 
fraction of zero weights in each variable $j$. 
We do not want this fraction to be too high, 
because this could make it hard to identify the
robust correlation between variables, making 
the estimation of $\bP$ imprecise or even
impossible. So we impose a maximal fraction of
zero weights per column, which is 25\% by default.
In case this fraction is exceeded, the iteration
stops and the code returns the results of the
previous iteration step, together with a warning
that it may be better to remove that variable.

The update formula \eqref{eq:tmu} does 
not compute the generalized inverse of 
$\sum_{i=1}^n(w_{(\st)})_{ij}$ because it is 
strictly positive under the same condition. 
If $\sum_{i=1}^n(w_{(\st)})_{ij}$ were zero 
this would mean all cells $x_{ij}$ of 
variable $j$ received zero weight.

\section{\large Proof of Proposition~\ref{the_1P}}
\label{app:monotone}
In this section Proposition~\ref{the_1P} is proved, 
which ensures that each step of the algorithm 
decreases the objective function \eqref{eq:objP}. 
In order to prove Proposition~\ref{the_1P} we need 
three lemmas.
\begin{lemma} \label{le_1P}
For a given weight matrix $\bW_{(\st)}$\,, each of the 
update steps (a), (b), and (c) of the algorithm in 
Section B decreases the weighted PCA objective 
function~\eqref{eq:WPCA}.
\end{lemma}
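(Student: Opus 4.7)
The plan is to argue that each of the three updates is the exact minimizer of $\tL(\bV,\bU,\bmu) := \sum_{i,j} w_{ij}(x_{ij}-\mu_j-\bu_i^T\bv_j)^2$ along one block of coordinates, with the other two blocks held fixed and the weight matrix $\bW_{(k)}$ frozen. Since each step is a coordinate-block minimization, the value of $\tL$ can only weakly decrease, which is precisely the conclusion.

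First I would observe that for fixed $\bU=\bU_{(k)}$, $\bmu=\bmu_{(k)}$ and $\bW_{(k)}$, the objective decomposes column-by-column as
\begin{equation*}
  \tL = \sum_{j=1}^{p}\big\|(\bW^j_{(k)})^{1/2}\big(\bx^j-(\bmu_{(k)})_j\bone_n\big)-(\bW^j_{(k)})^{1/2}\bU_{(k)}\bv_j\big\|^2,
\end{equation*}
so the $\bv_j$'s decouple. This is a standard weighted least squares problem in $\bv_j$, and the update \eqref{eq:updateV} returns (via the Moore-Penrose pseudoinverse) the minimum-norm minimizer; hence step~(a) minimizes $\tL$ over $\bV$. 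Step~(c) is analogous: with $\bV_{(k+1)}$ and $\bU_{(k+1)}$ fixed, $\tL$ decouples across $j$ into weighted quadratics in $\mu_j$ of the form~\eqref{eq:columndist}, and the weighted mean~\eqref{eq:tmu} is its unique (or minimum-norm) minimizer.

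Step~(b) is the delicate one, because the algorithm uses $\btW_i$ rather than $\bW_i$ inside the update~\eqref{eq:updateU}, and I must show that this is still a valid minimizer of $\tL$. With $\bV_{(k+1)}$ and $\bmu_{(k)}$ fixed, $\tL$ decouples across $i$ into the row-wise quadratics $\|\bW_i^{1/2}(\bx_i-\bmu_{(k)}-\bV_{(k+1)}\bu_i)\|^2$. Using the factorization $\bW_i = w_i^r\,\btW_i$, each row term equals $w_i^r\cdot\|\btW_i^{1/2}(\bx_i-\bmu_{(k)}-\bV_{(k+1)}\bu_i)\|^2$. When $w_i^r>0$, minimizing with weights $\bW_i$ or $\btW_i$ yields the same optimizer, namely \eqref{eq:updateU}; when $w_i^r=0$, the row contribution to $\tL$ is identically zero regardless of $\bu_i$, so any choice (including the one from \eqref{eq:updateU}) is trivially a minimizer. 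Combining the three block-minimizations gives $\tL(\bV_{(k+1)},\bU_{(k+1)},\bmu_{(k+1)}) \leqslant \tL(\bV_{(k)},\bU_{(k)},\bmu_{(k)})$, completing the proof.

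The only subtle point is the use of the Moore-Penrose pseudoinverse when $\bU_{(k)}^T\bW_{(k)}^j\bU_{(k)}$ or $\bV_{(k+1)}^T(\btW_{(k)})_i\bV_{(k+1)}$ is singular, but this is handled by the standard fact that for a weighted least squares problem with possibly rank-deficient design, the pseudoinverse formula yields one (minimum-norm) element of the argmin set, and every element of the argmin set attains the same minimum value of the quadratic. No additional technical obstacle arises.
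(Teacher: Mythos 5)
Your proposal is correct and follows essentially the same route as the paper: the paper's proof of this lemma likewise observes that steps (a), (b), (c) are exact block-coordinate minimizations of the frozen-weight objective~\eqref{eq:WPCA}, with the column/row decoupling, the pseudoinverse giving a genuine minimizer of each weighted least squares subproblem, and the same treatment of step (b) via the factorization of $\bW_i$ as $w_i^r$ times $\btW_i$ (equivalent minimizers when $w_i^r>0$, and a row contribution that is identically zero when $w_i^r=0$). Nothing is missing.
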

\begin{proof}
The objective function~\eqref{eq:WPCA} 
we want to minimize is 
\begin{equation} 
  \sum_{i=1}^n\sum_{j=1}^p w_{ij}
  \big(x_{ij}-\mu_j-(\bU\bV^T)_{ij}\big)^2
\end{equation}
where $w_{ij} = W_{(\st)_{ij}}$ is fixed. 
We start from the triplet $(\bV_{(\st)}, 
\bU_{(\st)}, \bmu_{(\st)})$ with objective
\begin{equation} 
  \sum_{i=1}^n\sum_{j=1}^p w_{ij}
  \big(x_{ij}-(\bmu_{(\st)})_j-
  (\bU_{(\st)}\bV_{(\st)}^T)_{ij}\big)^2.
\end{equation}
Step (a) minimizes the squared 
norm~\eqref{eq:normWUV}, which after summing
over $j=1,\ldots,p$ becomes the 
objective~\eqref{eq:WPCA} in the new triplet
$(\bV_{(\st+1)}, \bU_{(\st)}, \bmu_{(\st)})$.\\
\indent Next, step (b) minimizes the squared 
norm~\eqref{eq:normWVU}, which after summing
over $i=1,\ldots,n$ becomes the 
objective~\eqref{eq:WPCA} in the new triplet
$(\bV_{(\st+1)}, \bU_{(\st+1)}, \bmu_{(\st)})$.\\ 
\indent Finally, step (c) minimizes the squared 
norm~\eqref{eq:columndist}, whose sum
over $j=1,\ldots,p$ is the 
objective~\eqref{eq:WPCA} in the new triplet
$(\bV_{(\st+1)}, \bU_{(\st+1)}, \bmu_{(\st+1)})$.
\end{proof}

We will denote a potential fit to~\eqref{eq:objP} 
as $\btheta = \bhX = \bX^0\bP+\bmu$ where the
matrices have the appropriate dimensions.
We introduce the notation $\bof(\btheta)$ for 
$\vecmat((\bX - \btheta)\odot(\bX - \btheta))$
where $\vecmat(.)$ turns a matrix into a column 
vector. The vector $\bof(\btheta)$ has $np$ 
entries, which are the values 
$(\bX_{ij} - \btheta_{ij})^2$ for 
$i=1,\ldots,n$ and $j=1,\ldots,p$.
We can then write the cellPCA objective function 
\eqref{eq:objP} as $L(\bof(\btheta)) := 
L_{\rho_1,\rho_2}(\bX,\bV,\bU,\bmu)$. 

\begin{lemma}
\label{le_2}
The function $\bof \rightarrow L(\bof)$ is concave.
\end{lemma}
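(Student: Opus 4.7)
The plan is to exhibit $L$ as a nonnegative linear combination of compositions of the form ``concave and nondecreasing function of a concave function.'' First I would rewrite the objective in the $\bof$-coordinates. Writing $f_{ij} = (x_{ij}-\hx_{ij})^2 \geqslant 0$ for the entries of $\bof$, and introducing
\begin{equation*}
   \tilde\rho_1(z) := \rho_1(\sqrt{z}), \qquad
   \tilde\rho_2(z) := \rho_2(\sqrt{z}) \qquad (z \geqslant 0),
\end{equation*}
the relation $\rho_k(r/\sigma) = \tilde\rho_k(r^2/\sigma^2)$ lets me express the objective \eqref{eq:objP} as
\begin{equation*}
   L(\bof) \;=\; \frac{\hsigma_2^2}{m}\sum_{i=1}^{n} m_i\,
   \tilde\rho_2\!\left( S_i(\bof) \right),
   \qquad
   S_i(\bof) := \frac{1}{\hsigma_2^{2}\, m_i}
   \sum_{j=1}^{p} m_{ij}\,\hsigma_{1,j}^{2}\,
   \tilde\rho_1\!\left( \frac{f_{ij}}{\hsigma_{1,j}^{2}}\right).
\end{equation*}

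Next I would collect the analytic properties of $\tilde\rho_1$ and $\tilde\rho_2$ needed for the argument. As noted in the paper and illustrated in the right panel of Figure~\ref{fig:weight_sqrt}, the maps $z \mapsto \tilde\rho_k(z)$ are concave on $[0,\infty)$ for both $k=1,2$ (this can be verified directly from the piecewise definition of the tanh $\rho_{b,c}$ by computing the second derivative of $\rho_{b,c}(\sqrt{z})$, the only delicate interval being $b \leqslant \sqrt{z} \leqslant c$, where concavity follows from the fact that $z \mapsto -\ln\cosh(q_2(c-\sqrt{z}))$ has a nonpositive second derivative). In addition, $\tilde\rho_2$ is nondecreasing on $[0,\infty)$, which is immediate since $\rho_2$ is nondecreasing on $[0,\infty)$ (Figure~\ref{fig:rho_psi}, left) and $\sqrt{\cdot}$ is nondecreasing.

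With these ingredients the proof closes in three steps. (i) For each cell $(i,j)$ the map $f_{ij} \mapsto \tilde\rho_1(f_{ij}/\hsigma_{1,j}^{2})$ is concave in $\bof$, being the composition of a concave function with a linear function. (ii) Hence each $S_i$, being a nonnegative linear combination of such concave functions, is concave in $\bof$. (iii) Composing a nondecreasing concave function $\tilde\rho_2$ with the concave function $S_i$ preserves concavity, so $\bof \mapsto \tilde\rho_2(S_i(\bof))$ is concave. Finally, $L(\bof)$ is a nonnegative linear combination of these concave functions and is therefore concave on the nonnegative orthant where $\bof$ lives.

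The only genuinely nontrivial point is step (iii): the composition of two concave functions is concave only when the outer one is monotone nondecreasing, and this is exactly why the monotonicity check on $\tilde\rho_2$ is essential. The concavity of $\tilde\rho_1$ and $\tilde\rho_2$ themselves is a property of the specific tanh loss used (and would fail, for example, for the plain $\rho(z)=z^2$ which gives $\tilde\rho(z)=z$, linear rather than strictly concave, but that case is still compatible with weak concavity and is needed anyway for classical PCA to fit into the framework).
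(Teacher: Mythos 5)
Your proof is correct and follows essentially the same route as the paper's: both arguments rest on the concavity of $z \mapsto \rho_k(\sqrt{z})$ on $[0,\infty)$ and the monotonicity of $\rho_2$, with your composition-rule phrasing (concave inner functions, nonnegative combinations, and a nondecreasing concave outer function) being just a repackaging of the paper's explicit two-inequality chain. The one small point worth making explicit is that nonpositivity of the second derivative on the open pieces $(0,b^2)$, $(b^2,c^2)$, $(c^2,\infty)$ yields global concavity only because the first derivative of $\rho_{b,c}(\sqrt{z})$ is continuous at the junctions $b^2$ and $c^2$ (which holds since $\psi_{b,c}$ is continuous); the paper states this continuity explicitly, whereas your ``only delicate interval'' remark leaves it implicit.
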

\begin{proof}
We first show that the univariate function
$h: \mathbb R_{+} \rightarrow \mathbb R_{+}:
z \rightarrow \rho_{b,c}(\sqrt{z})$, in which 
$\rho_{b,c}$ is the wrapping $\rho$-function, 
is concave as suggested in 
Figure~\ref{fig:rho_psi}. Note that the 
derivative $h'$ is continuous, and 
differentiable except in the 
points $b^2$ and $c^2$.
Its derivative $h''(z)$ is zero on the 
intervals $(0,b^2)$ and $(c^2,\infty)$, and 
on $(b^2,c^2)$ we obtain
$$h''(z) = \frac{-z^{3/2}}{4}q_1 
\tanh(q_2(c-\sqrt{z})) - \frac{q_1 q_2}{4z}
\mbox{sech}^2(q_2(c-\sqrt{z})) \leqslant 0\,.$$
Since the function $h'(z)$ is continuous
and on each of the three open intervals its
derivative $h''(z)$ is nonpositive, it is
nonincreasing everywhere. 
Therefore $h$ is concave.

By the definition of concavity of a multivariate 
function, we now need to prove that for 
any column vectors $\bof,\bol$ in 
$\mathbb R_{+} ^{np}$ and any $\lambda$ in 
$(0,1)$ it holds that  
$L(\lambda\bof+(1-\lambda)\bol) \geqslant 
\lambda L(\bof)+(1-\lambda)L(\bol)$.
This works out as
\begin{align*}
 L(\lambda\bof+(1-\lambda)\bol)&=
 \frac{\hsigma_2^2}{m} 
 \sum_{i=1}^n h_2\left(\frac{1}{m_i\hsigma_2^2}
 \sum_{j=1}^p m_{ij}\hsigma_{1,j}^2
  h_1\left(
  \frac{\lambda f_{ij}+(1-\lambda)g_{ij}}
  {\hsigma_{1,j}^2}\right)\,\right)\\ & \geqslant
  \frac{\hsigma_2^2}{m}
  \sum_{i=1}^n h_2\left(\frac{1}{m_i\hsigma_2^2}
  \sum_{j=1}^{p} 
  m_{ij}\hsigma_{1,j}^2\left[\lambda h_1(
  f_{ij}/\hsigma_{1,j}^2)+
  (1-\lambda)h_1(g_{ij}/\hsigma_{1,j}^2)
  \right]\,\right) \\
 & = \frac{\hsigma_2^2}{m}
  \sum_{i=1}^n h_2\left(\lambda\frac{1}
  {m_i\hsigma_2^2} \sum_{j=1}^{p} 
  m_{ij}\hsigma_{1,j}^2 h_1(f_{ij}/\hsigma_{1,j}^2)
   \right.\\
   & \left.\hspace{5cm} +\, (1-\lambda)\frac{1}{m_i\hsigma_2^2}
   \sum_{j=1}^{p} m_{ij}\hsigma_{1,j}^2 
   h_1(g_{ij}/\hsigma_{1,j}^2)\right)\\
   & \geqslant
  \frac{\hsigma_2^2}{m}
  \sum_{i=1}^n \left[\lambda h_2\left(
  \frac{1}{m_i\hsigma_2^2} \sum_{j=1}^{p} 
  m_{ij}\hsigma_{1,j}^2h_1(f_{ij}/\hsigma_{1,j}^2)
  \right)\right.\\
  &\left.\hspace{5cm} +\,(1-\lambda)h_2\left(
  \frac{1}{m_i\hsigma_2^2}\sum_{j=1}^{p} 
  m_{ij}\hsigma_{1,j}^2 h_1(g_{ij}/\hsigma_{1,j}^2)
  \right)\right]\\
  & = \lambda L(\bof)+(1-\lambda)L(\bol).
\end{align*}
The first inequality derives from the concavity of $h_1$
and the fact that $h_2$ is nondecreasing. The second 
inequality is due to the concavity of $h_2$\,. 
Therefore $L$ is a concave function.
\end{proof}

We can also write the weighted PCA objective
\eqref{eq:WPCA} as a function of $\bof$. We will
denote it as $L_{\bW}(\bof):=(\vecmat(\bW))^T\bof$,
so $\bW$ was turned into a vector in the same
way as was done to obtain the column vector $\bof$. 
The next lemma makes a connection between the 
weighted PCA objective $L_{\bW}$ and the
original objective $L$.

\begin{lemma} \label{le_3}
If two column vectors $\bof,\bol$ in 
$\mathbb R_{+}^{np}$ satisfy
$L_{\bW}(\bof) \leqslant L_{\bW}(\bol)$, then 
$L(\bof) \leqslant L(\bol)$. 
\end{lemma}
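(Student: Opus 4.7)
The plan is to exploit Lemma~\ref{le_2} (concavity of $\bof \mapsto L(\bof)$) together with the key observation that the weights defined in \eqref{eq:updateW}--\eqref{eq:caseweight} are, up to a fixed positive constant, exactly the components of the gradient $\nabla L$. Once this is established, the supporting-hyperplane inequality for concave differentiable functions makes the conclusion immediate.

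First I would compute $\nabla L(\bol)$ entry by entry. Writing
\[
 L(\bof) = \frac{\hsigma_2^2}{m}\sum_{i=1}^{n} m_i\, h_2\!\left(\frac{1}{\hsigma_2^2 m_i}\sum_{j=1}^{p} m_{ij}\hsigma_{1,j}^2\, h_1\!\left(\frac{f_{ij}}{\hsigma_{1,j}^2}\right)\right),
\]
with $h_k(z)=\rho_k(\sqrt{z})$ as in Lemma~\ref{le_2}, I would apply the chain rule. Using the identities $h_k'(z) = \psi_k(\sqrt{z})/(2\sqrt{z})$ and the fact that $\psi_k$ is odd (so $\psi_k(|r|/\sigma)/(|r|/\sigma)=\psi_k(r/\sigma)/(r/\sigma)$), evaluating at the point $\bol$ with $g_{ij}=r_{ij}^2$ gives $h_1'(g_{ij}/\hsigma_{1,j}^2)=w_{ij}^c/2$ and $h_2'(\rt_i^2/\hsigma_2^2)=w_i^r/2$, where $w_{ij}^c$ and $w_i^r$ are precisely the cellwise and rowwise weights from \eqref{eq:cellweight}--\eqref{eq:caseweight} computed from $\bol$. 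The factor $m_i$ outside cancels against $1/m_i$ from inside $h_2$, and the factor $\hsigma_{1,j}^2$ cancels against $1/\hsigma_{1,j}^2$ from inside $h_1$. What remains is
\[
 \frac{\partial L}{\partial f_{ij}}(\bol) = \frac{1}{4m}\, m_{ij}\, w_i^r\, w_{ij}^c \;=\; \frac{w_{ij}}{4m},
\]
so that $\nabla L(\bol) = \tfrac{1}{4m}\,\vecmat(\bW)$ where $\bW=\bW(\bol)$ is the weight matrix in the hypothesis.

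Now I would invoke concavity. By Lemma~\ref{le_2} and the standard supporting-hyperplane characterization of concave differentiable functions,
\[
 L(\bof) \;\leqslant\; L(\bol) + \nabla L(\bol)^{\!\top}(\bof-\bol) \;=\; L(\bol) + \frac{1}{4m}\,\vecmat(\bW)^{\!\top}(\bof-\bol).
\]
Since $L_{\bW}(\bof)=\vecmat(\bW)^{\!\top}\bof$, the right-hand side equals $L(\bol) + \tfrac{1}{4m}\bigl(L_{\bW}(\bof)-L_{\bW}(\bol)\bigr)$. Under the assumption $L_{\bW}(\bof)\leqslant L_{\bW}(\bol)$, this quantity is bounded by $L(\bol)$, which yields $L(\bof)\leqslant L(\bol)$ as required.

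The main obstacle is the chain-rule computation identifying $\nabla L(\bol)$ with a positive multiple of $\vecmat(\bW(\bol))$. One has to track the $1/m_i$ and $1/\hsigma_{1,j}^2$ factors carefully and use the specific form of the weights in \eqref{eq:cellweight}--\eqref{eq:caseweight} to see that all normalizing constants collapse into a single global factor $1/(4m)$; a stray $i$- or $j$-dependent factor would break the proportionality and invalidate the argument. Once that identity is clean, the rest is just concavity.
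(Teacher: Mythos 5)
Your proof is correct and follows essentially the same route as the paper: the supporting-hyperplane inequality from the concavity in Lemma~\ref{le_2}, combined with the identification of $\nabla L(\bol)$ as a positive multiple of $\vecmat(\bW(\bol))$. The only difference is that you carry out the chain-rule computation explicitly and obtain the constant $1/(4m)$, where the paper simply asserts the proportionality ``by construction''; your explicit constant is correct.
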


\begin{proof}
From Lemma \ref{le_2} we know that $L(\bof)$ is 
concave as a function of $\bof$, and it is also 
differentiable because $h_1$ and $h_2$ are.
Therefore
\begin{equation*}
    L(\bof) \leqslant L(\bol) +
    (\nabla L(\bol))^T(\bof-\bol)
\end{equation*}
where the column vector $\nabla L(\bol)$ is the 
gradient of $L$ in $\bol$. But $\nabla L(\bol)$
is proportional to $\nabla L_W(\bol)$ which
equals $\vecmat(\bW)$ by construction, 
so $(\nabla L(\bol))^T(\bof-\bol)$ is proportional
to $L_{\bW}(\bof) - L_{\bW}(\bol) \leqslant 0$.
Therefore $L(\bof) \leqslant L(\bol)$.
\end{proof}
With this preparation we can prove
Proposition \ref{the_1P}.

\begin{proof}[Proof of Proposition \ref{the_1P}]
When we go from $\btheta_{(\st)}$ to $\btheta_{(\st+1)}$, 
Lemma~\ref{le_1P} ensures that\linebreak
$L_{\bW_{(\st)}}(\bof(\btheta_{(\st+1)})) \leqslant 
L_{\bW_{(\st)}}(\bof(\btheta_{(\st)}))$, so 
Lemma~\ref{le_3} implies
$L(\bof(\btheta_{(\st+1)})) \leqslant 
L(\bof(\btheta_{(\st)}))$. 
\end{proof}

\section{Proofs of influence functions}
\label{app:proofs}

We consider the contamination model 
\eqref{eq:cont_cell_z} 
where $H_Z= \Delta_{\bz}$ is the distribution that 
puts all of its mass in a fixed $p$-variate vector 
$\bz=\left(z_1, \ldots, z_p\right)^T$, given by 
\begin{equation*}
X_{\eps}=A \odot X + (\bone_p-A) \odot \bz 
\end{equation*}
with $A=(A_1,\dots,A_p)^T \sim G_\eps$.

Under both the dependent and independent 
contamination models with 
$P(A_j^{\cell}=1)=1-\eps^{\cell}$ for all 
$j=1,\ldots,p$, the distribution of $A$ satisfies
$P\left(A_j=1\right)=1-\eps$, $j=1, \dots, p$, 
and (ii) for any sequence 
$\left(j_1, j_2, \ldots, j_p\right)$ of zeroes 
and ones with $p-\ki$ ones and $\ki$ zeroes, 
$P(A_1=j_1, \ldots, A_p=j_p)$ has the same 
value, denoted as $\delta_\ki(\eps)$. Obviously, 
$\eps=\eps^{\case}$ 
under the fully dependent contamination model 
(FDCM) and $\eps=\eps^{\cell}$ under the fully 
independent contamination model (FICM). 
Under FDCM we have that $P(A_1=\cdots=A_p)=1$, 
and then $\delta_0(\eps)=(1-\eps),\; 
\delta_1(\eps)=\cdots=\delta_{p-1}(\eps)=0$, 
and $\delta_p(\eps)=\eps$. In that situation 
the distribution of $X_{\eps}$ simplifies to 
$(1-\eps)H_0 + \eps\Delta_{\bz}$ where 
$\Delta_{\bz}$ is the distribution which puts 
all of its mass in the point $\bz$. FICM instead 
assumes that $A_1, \dots, A_p$ are independent, 
hence
$$\delta_\ki(\eps)= \binom{p}{\ki}
   (1-\eps)^{p-\ki} \eps^{\ki}\,, 
  \quad \ki=0,1, \ldots, p\;. $$
The distribution $G_\eps$ is denoted as 
$G_\eps^D$ in the dependent model, and as 
$G_\eps^I$ in the independent model. 

The proof of Proposition \ref{prop2} is based on the 
implicit function theorem, see e.g.\ Rio Branco
de Oliveira (2012):
\begin{theorem}[Implicit Function Theorem]
\label{the_impl}
Let $\bof(x, \btheta)=(f_1, \ldots, f_p)$ be a function from $\mathbb{R} \times \mathbb{R}^p$ to $\mathbb{R}^p$ that is continuous in $(x_0,\tilde{\btheta})\in \mathbb{R} \times \mathbb{R}^p$ with $\bof(x_0, \tilde{\btheta})=\bzero$. Suppose the derivative of $\bof$ exists in a neighbourhood $N$ of $(x_0,\tilde{\btheta})$ and is continuous at $(x_0,\tilde{\btheta})$,  and that the derivative matrix $\partial \bof/\partial \btheta$ is nonsingular at $(x_0, \tilde{\btheta})$. Then there are neighbourhoods $N_1$ of $x_0$ and $N_p$ of $\tilde{\btheta}$ with $N_1\times N_p \subset N$, such that for every $x$ in $N_1$ there is a unique $\btheta=\bT(x)$ in $N_p$ for which $\bof(x, \bT(x))=\bzero$. In addition, $\bT$ is differentiable in $x_0$ with derivative matrix given by
\begin{equation*}
  \frac{\partial \bT(x)}{\partial x}\Big|_{x=x_0} 
  = -\Big(\frac{\partial \bof(x_0,\btheta)}
    {\partial \btheta}\Big|_{\btheta=
    \tilde{\btheta}}\Big)^{-1}
    \frac{\partial \bof(x,\tilde{\btheta)}}
    {\partial x}\Big|_{x=x_0}.
\end{equation*}
\end{theorem}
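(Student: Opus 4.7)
The plan is to prove the theorem in three stages: existence and uniqueness of $\bT(x)$ via a contraction mapping argument applied for each fixed $x$, continuity of $\bT$ at $x_0$, and finally its differentiability together with the explicit derivative formula. Since $x$ is scalar here, the setup is a touch lighter than the general $\mathbb{R}^k\times\mathbb{R}^p$ version, but the skeleton is identical.

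For existence and uniqueness, I would introduce the auxiliary map $\Phi_x(\btheta) := \btheta - \bA^{-1}\bof(x,\btheta)$, where $\bA := \frac{\partial \bof}{\partial \btheta}(x_0,\tilde{\btheta})$. By hypothesis $\bA$ is invertible, so $\Phi_x$ is well-defined on $N$, and a fixed point of $\Phi_x$ is precisely a zero of $\bof(x,\cdot)$. The task is to choose radii $r, s > 0$ so that, for every $x$ in $N_1 := (x_0-r, x_0+r)$, the map $\Phi_x$ carries the closed ball $N_p := \{\btheta : \|\btheta - \tilde{\btheta}\| \leqslant s\}$ into itself and is a strict contraction there. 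The contraction constant is controlled via $\frac{\partial \Phi_x}{\partial \btheta}(\btheta) = \bI - \bA^{-1}\frac{\partial \bof}{\partial \btheta}(x,\btheta)$, which vanishes at $(x_0,\tilde{\btheta})$; continuity of $\partial \bof/\partial \btheta$ at that point lets me shrink $r,s$ so that this operator norm stays below $1/2$ throughout $N_1\times N_p$, and then the integral form of the mean value theorem along segments yields $\|\Phi_x(\btheta_1) - \Phi_x(\btheta_2)\| \leqslant \tfrac{1}{2}\|\btheta_1 - \btheta_2\|$. The self-mapping property uses that $\|\Phi_x(\tilde{\btheta}) - \tilde{\btheta}\| = \|\bA^{-1}\bof(x,\tilde{\btheta})\|$ can be made as small as $s/2$ by further shrinking $r$, since $\bof(x_0,\tilde{\btheta}) = \bzero$ and $\bof$ is continuous in $x$ at that point. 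Banach's fixed point theorem then produces a unique $\bT(x)\in N_p$ with $\bof(x,\bT(x)) = \bzero$.

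For continuity of $\bT$ at $x_0$, the contraction estimate gives
\[\|\bT(x) - \tilde{\btheta}\| \leqslant \|\Phi_x(\bT(x)) - \Phi_x(\tilde{\btheta})\| + \|\Phi_x(\tilde{\btheta}) - \tilde{\btheta}\| \leqslant \tfrac{1}{2}\|\bT(x) - \tilde{\btheta}\| + \|\bA^{-1}\bof(x,\tilde{\btheta})\|,\]
whence $\|\bT(x) - \tilde{\btheta}\| \leqslant 2\|\bA^{-1}\bof(x,\tilde{\btheta})\| \to 0$ as $x\to x_0$. For differentiability, I would differentiate the identity $\bof(x,\bT(x)) = \bzero$ implicitly via a first-order Taylor expansion of $\bof$ at $(x_0,\tilde{\btheta})$, which is valid because the derivative exists on $N$ and is continuous at that point. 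This yields
\[\bzero = \tfrac{\partial \bof}{\partial x}(x_0,\tilde{\btheta})\,(x-x_0) + \bA\,(\bT(x)-\tilde{\btheta}) + o(|x-x_0| + \|\bT(x)-\tilde{\btheta}\|),\]
and solving for $\bT(x)-\tilde{\btheta}$, using the continuity bound above to upgrade the remainder to $o(|x-x_0|)$, and dividing by $x-x_0$ delivers the claimed formula.

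The main obstacle is setting up the contraction cleanly under the stated hypotheses, where $\partial \bof/\partial \btheta$ is only assumed continuous at the single point $(x_0,\tilde{\btheta})$ rather than on a full neighborhood. One must exploit this pointwise continuity together with invertibility of $\bA$ (to bound $\|\bA^{-1}\|$) to simultaneously guarantee (i) stabilisation of the ball $N_p$ by $\Phi_x$ and (ii) a Lipschitz constant uniformly below one on $N_p$ for all $x\in N_1$. Once these neighborhoods are locked in, the fixed-point step, the continuity estimate, and the implicit differentiation are each routine.
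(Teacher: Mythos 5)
Your proof is correct in substance, but there is nothing in the paper to compare it against: Theorem~\ref{the_impl} is not proved in the paper at all. It is quoted as a known result, attributed to Rio Branco de Oliveira (2012), and used only as a tool in the proof of Proposition~\ref{prop2}. So the real comparison is with that cited source, which is devoted precisely to an \emph{elementary} proof of this statement (one-variable mean value and intermediate value theorems, avoiding fixed-point and compactness arguments); that elementary route is also what makes the unusually weak hypotheses natural, namely differentiability of $\bof$ merely on a neighbourhood $N$ with continuity of the derivative only at the single point $(x_0,\tilde{\btheta})$. Your route is the classical Banach fixed-point argument, and you adapt it correctly to these weak hypotheses: the contraction constant is governed by $\|\bI-\bA^{-1}(\partial\bof/\partial\btheta)(x,\btheta)\|$, which pointwise continuity of $\partial\bof/\partial\btheta$ at $(x_0,\tilde{\btheta})$ keeps below $1/2$ on a small product neighbourhood; the self-map estimate, uniqueness, the bound $\|\bT(x)-\tilde{\btheta}\|\leqslant 2\|\bA^{-1}\bof(x,\tilde{\btheta})\|$, and the final implicit differentiation are all sound. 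What each approach buys: yours is shorter, standard, and generalizes verbatim to Banach spaces; the cited one stays entirely at the level of one-variable calculus.

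One repair is needed in your contraction step. You invoke ``the integral form of the mean value theorem along segments'' to convert the bound on $\partial\Phi_x/\partial\btheta$ into a Lipschitz estimate. Under the stated hypotheses the derivative of $\bof$ (hence of $\Phi_x$) is only assumed to \emph{exist} on $N$, and a derivative that merely exists need not be Riemann integrable along a segment (Volterra-type examples), so the integral form is not justified. Replace it by the mean value \emph{inequality} for vector-valued differentiable functions: for a unit vector $u$ apply the scalar mean value theorem to $t\mapsto \langle u,\,\Phi_x(\btheta_2+t(\btheta_1-\btheta_2))\rangle$ and take the supremum over $u$; this uses only existence of the derivative on the segment together with the uniform bound $1/2$, and yields $\|\Phi_x(\btheta_1)-\Phi_x(\btheta_2)\|\leqslant\tfrac12\|\btheta_1-\btheta_2\|$. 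Finally, your upgrade of the remainder to $o(|x-x_0|)$ is fine, but it deserves one explicit line: since $\bof(\cdot,\tilde{\btheta})$ is differentiable at $x_0$ and $\bof(x_0,\tilde{\btheta})=\bzero$, one has $\|\bof(x,\tilde{\btheta})\|=O(|x-x_0|)$, so the continuity estimate gives $\|\bT(x)-\tilde{\btheta}\|=O(|x-x_0|)$ and the $o(|x-x_0|+\|\bT(x)-\tilde{\btheta}\|)$ term is indeed $o(|x-x_0|)$. With these substitutions your argument goes through under exactly the stated hypotheses.
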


\vspace{1mm}
\begin{proof}[Proof of Proposition \ref{prop2}]
From \eqref{eq:gIF}, the IF of $\bP$ at a distribution $H_0$ is  defined as 
\begin{equation*}
    \IFu_{H}(\bz,\bP)=\vect\left(\left.\frac{\partial}{\partial \eps} \bP(H(G_\eps,\bz))\right|_{\eps=0}\right).
\end{equation*}
For $\eps=0$ we obtain $\bP(H(G_0,\bz)) = \bP(H_0) = \bP_0$.
We can then parametrize $\bP_0 = \bV_0\bV_0^T$ where $\bV_0$
has orthonormal columns, corresponding to an orthonormal
basis of the linear subspace $\bPi_0$\,. Note that $\bV_0$ is
not unique, but we will see later that different choices
lead to the same influence function of $\bP$.
When the distribution $H_0$ is contaminated by FDCM or FICM
to $H(G_\eps,\bz)$ we define $\bV(H(G_\eps,\bz))$ as the
result of the algorithm of Section~\ref{app:algo} above, 
translated from finite samples to population distributions
and starting from $\bV_0$. This construction makes
$\bV(H(G_\eps,\bz))$ unique, and it has to satisfy the
first-order condition \eqref{eq:C1} saying
\begin{equation} \label{eq:reparamcond}
  E_H[\bW_{\bx}(\bV\bu_{\bx}-\bx)
  \bu_{\bx}^T] = \bzero\,
\end{equation} 
hence
\begin{equation}\label{eq_newg}
  \bg(H(G_\eps,\bz),\bT(\eps),\bsigma(H(G_\eps,\bz)))= \vect\left(\E_{H(G_\eps,\bz)}\big[\bW_{\bx}(\bV\bu_{\bx}-\bx)\bu_{\bx}^T\big]\right)=\bzero
\end{equation}
where the $p\rk$-variate column vector $\bg$ is written
as a function of the $p\rk$-variate column vector 
$\bT(\eps):=\vect(\bV(H(G_\eps,\bz)))$.
In order to compute $\vect\left(\frac{\partial}{\partial \eps} \bV(H(G_\eps,\bz))\left.\right|_{\eps=0}\right)$ we would like to apply the implicit function theorem in the point $\eps = 0$, but the contaminated distribution $H(G_\eps,\bz)$ is only defined for $\eps > 0$. To circumvent this issue we extend the definition of $\bg$ to negative $\eps$ by defining a function
$\bof$ from $\mathbb{R} \times \mathbb{R}^{p\rk}$ 
to $\mathbb{R}^{p\rk}$ as 
\begin{equation*}
\bof(\eps, \btheta)= \begin{cases}
\bg(H(G_\eps,\bz),\btheta,\bsigma(H(G_\eps,\bz)))
& \mbox{for } \eps \geqslant 0 \\ 
2 \bg(H_0,\btheta,\bsigma(H_0))-\bg(H(|\eps|,\bz),\btheta,\bsigma(H(|\eps|,\bz))) & \mbox{for } \eps<0.\end{cases}
\end{equation*}
We now put $\eps_0=0$ and $\tilde{\btheta}=\bT(0)=\vect(\bV_0)$. 
Then $\bof(\eps_0,\btheta)=\bof(0,\bT(0))=\bzero$, and assuming that $\bg$ is sufficiently smooth for the differentiability requirements of the implicit function theorem, we can conclude that $\bT(\eps)$ is uniquely defined for small $\eps$ and that
\begin{align*} \label{eq_der}
  \frac{\partial \bT(\eps)}{\partial \eps}\Big|_{\eps=0} 
  &= -\Big(\frac{\partial \bof(0,\bT)}
    {\partial \bT}\Big|_{\bT=\bT(0)}\Big)^{-1}
    \;\frac{\partial \bof(\eps,\bT(0))}
    {\partial \eps}\Big|_{\eps=0}\\
  &= -\Big(\frac{\partial \bg(0,\bT)}
    {\partial \bT}\Big|_{\bT=\bT(0)}\Big)^{-1}
    \;\frac{\partial \bg(\eps,\bT(0))}
    {\partial \eps}\Big|_{\eps=0}\;.
\end{align*}
Note that the left hand side is $\vect\left(\frac{\partial}{\partial \eps} \bV(H(G_\eps,\bz))\left.\right|_{\eps=0}\right)$.
We now have to work out the right hand side. For 
the first factor we denote the matrix
\begin{equation*}
  \bB:= \frac{\partial \bg(H_0,\bT,\sigma_0)}
    {\partial \bT}\Big|_{\bT=\vect(\bV_0)} 
\end{equation*}
which does not depend on $\bz$ and will be computed 
numerically in Section \ref{app:DS}. For the second factor, 
from \eqref{eq_newg}
we know that $\bg(H(G_\eps,\bz),\btheta_\eps,\bsigma_\eps)$ 
is an expectation over the mixture distribution
$H(G_\eps,\bz)$, so we can write $\bg$ as a linear 
combination with coefficients $\delta_\ki(\eps)$ for 
$k=0,1,\ldots,p$\,.
For the FDCM model we know that $G_\eps^D$ has 
$\delta_0(\eps)=(1-\eps),\; 
\delta_1(\eps)=\cdots=\delta_{p-1}(\eps)=0$ and 
$\delta_p(\eps)=\eps$, so $\bg$ can be written as
\begin{align*}
\bg(H(G_\eps^D,\bz)&,\vect(\bV_0),
\bsigma(H(G_\eps^D,\bz)))\\
&= \delta_0(\eps) \bg(H_0,\vect(\bV_0),
\bsigma(H(G_\eps^D,\bz))) + 
\delta_p(\eps) \bg(H(G_\eps^D,\bz),\vect(\bV_0),
\bsigma(H(G_\eps^D,\bz)))\\
&= (1-\eps)\bg(H_0,\vect(\bV_0),\bsigma_\eps) + 
\eps\bg(\Delta_{\bz},\vect(\bV_0),\bsigma_\eps)
\end{align*}
which yields the derivative
\begin{align}
  &-\bg(H_0,\vect(\bV_0),\bsigma_0) 
  + \frac{\partial }{\partial \eps}\bg(H_0,
  \vect(\bV_0),\bsigma_\eps)\Big|_{\eps=0} 
  + \bg(\Delta_{\bz},\vect(\bV_0),\bsigma_0) \nonumber\\
  &= \bzero + \frac{\partial \bg}{\partial \bsigma} 
  (H_0, \vect(\bV_0), \bsigma)\Big|_{\bsigma=\bsigma_0}
   \frac{\partial \bsigma_\eps}{\partial \eps} 
   \Big|_{\eps=0} + 
   \bg(\Delta_{\bz},\vect(\bV_0),\bsigma_0) \nonumber\\
   \label{eq:IFcaseV}
  &= \bS\IFu_{\case}(\bz,\bsigma)+
  \bg(\Delta_{\bz},\vect(\bV_0),\bsigma_0)
\end{align}
where $\bS:=\left.\frac{\partial}{\partial \bsigma} 
\bg(H_0, \vect(\bV_0), \bsigma)\right|_{\bsigma=\bsigma_0}$ 
and  $\IFu_{\case}(\bz,\bsigma)$ is the influence
function of $\bsigma$ under FDCM. 

Under the FICM model the second factor is different. 
We have $\delta_0(\eps)=(1-\eps)^p$, $\delta_0(0)=1$, 
$\delta_1(\eps) = p (1-\eps)^{p-1}\eps$ so $\delta_1(0)=0$ 
and $\delta_1^{\prime}(0)=p$, and $\delta_\ki(0)=
\delta_\ki^{\prime}(0)=0$ for $\ki \geqslant 2$. Therefore 
$\bg$ can be written as the sum
\begin{align*}
\bg(H(G_\eps^I,\bz)&,\vect(\bV_0),\bsigma(H(G_\eps^I,\bz)))\\
&= \delta_0(\eps) \bg(H_0,\vect(\bV_0),\bsigma(H(G_\eps^I,\bz)))
+ \delta_1(\eps) 
\sum_{j=1}^p 
\bg(H(G_1^j,\bz), \vect(\bV_0),\bsigma(G_\eps^I,\bz)))\\
&= (1-\eps)^p\bg(H_0,\vect(\bV_0),\bsigma_\eps) 
+ p (1-\eps)^{p-1}\eps \sum_{j=1}^p 
\bg(H(G_1^j,\bz), \vect(\bV_0),\bsigma(G_\eps^I,\bz)))
\end{align*}
where $H(G_1^j,\bz)$ is the distribution of 
$X \sim H_0$ but with its $j$th component fixed 
at the constant $z_j$\,.  It is thus a degenerate 
distribution concentrated on the hyperplane 
$X_j = z_j$\,. 
The derivative now becomes
\begin{equation}\label{eq:IFcellV}
  \bS\IFu_{\cell}(\bz,\bsigma)+p\sum_{j=1}^p 
  \bg(H\left(G_1^j,\bz\right),\vect(\bV_0),\bsigma_0)
\end{equation}
where $\bS$ is the same as before but 
$\IFu_{\cell}(\bz,\bsigma)$ is now the cellwise 
influence function of $\bsigma$. 

Now that we have an expression for 
$\frac{\partial}{\partial \eps}\bV(H(G_\eps,\bz))$ 
we can use it to derive the IF of $\bP$. Note that 
the columns of $\bV_0$ were orthonormal, but the 
columns of $\bV(H(G_\eps,\bz))$ do not have to be.
Therefore the projection matrix $\bV(H(G_\eps,\bz))$ 
is given by
$$\bP(H(G_\eps,\bz))=\bV(H(G_\eps,\bz))
(\bV(H(G_\eps,\bz))^T\bV(H(G_\eps,\bz)))^{-1}
\bV(H(G_\eps,\bz))^T\,.$$ Differentiating yields
\begin{align*}
 \frac{\partial}{\partial \eps}\bP & 
  (H(G_\eps,\bz))\Big|_{\eps=0}=
  \frac{\partial}{\partial \eps}\bV(H(G_\eps,\bz))
  (\bV(H(G_\eps,\bz))^T\bV(H(G_\eps,\bz)))^{-1}
  \bV(H(G_\eps,\bz))^T\Big|_{\eps=0} \\
 =&\left.\frac{\partial}{\partial \eps} 
 \bV(H(G_\eps,\bz))\right|_{\eps=0}
 (\bV_0^T\bV_0)^{-1}\bV_0^T\\
 &\;\;-\bV_0(\bV_0^T\bV_0)^{-1}\left.
 \frac{\partial}{\partial \eps} 
 \bV(H(G_\eps,\bz))^T\right|_{\eps=0}
 \bV_0(\bV_0^T\bV_0)^{-1}\bV_0^T \\
 &\;\;-\bV_0(\bV_0^T\bV_0)^{-1}\bV_0^T\left.
 \frac{\partial}{\partial \eps} 
 \bV(H(G_\eps,\bz))\right|_{\eps=0}
 (\bV_0^T\bV_0)^{-1}\bV_0^T\\
 &\;\;+\bV_0(\bV_0^T\bV_0)^{-1}\left.
 \frac{\partial}{\partial \eps} 
 \bV(H(G_\eps,\bz))^T\right|_{\eps=0}\;
\end{align*}
where the derivative of 
$(\bV(H(G_\eps,\bz))^T\bV(H(G_\eps,\bz)))^{-1}$ 
comes form the identity 
$\frac{\partial \mathbf{Y}^{-1}}{\partial x} =
-\mathbf{Y}^{-1} \frac{\partial \mathbf{Y}}
{\partial x} \mathbf{Y}^{-1}$ (Magnus and
Neudecker, 2019). Since $\bV_0^T\bV_0 = \bI_\rk$ 
is the identity matrix and
$\bV_0\bV_0^T=\bP$, the expression simplifies to 
\begin{align}\label{eq:simple}
 \frac{\partial}{\partial \eps}&\bP(H(G_\eps,\bz))
 \Big|_{\eps=0}
  =\left.\frac{\partial}{\partial \eps}
  \bV(H(G_\eps,\bz))\right|_{\eps=0}\bV_0^T 
 -\bV_0\left.\frac{\partial}{\partial \eps} 
 \bV(H(G_\eps,\bz))^T\right|_{\eps=0}
 \bP_0 \nonumber \\
 &\;\;-\bP_0\left.\frac{\partial}{\partial \eps} 
  \bV(H(G_\eps,\bz))\right|_{\eps=0}\bV_0^T
 +\bV_0\left.\frac{\partial}{\partial \eps} 
 \bV(H(G_\eps,\bz))^T\right|_{\eps=0} \nonumber \\
 &=(\bI_p-\bP_0)\left.\frac{\partial}{\partial \eps}
  \bV(H(G_\eps,\bz))\right|_{\eps=0}\bV_0^T 
 +\bV_0\left.\frac{\partial}{\partial \eps} 
 \bV(H(G_\eps,\bz))^T\right|_{\eps=0}
 (\bI_p - \bP_0)\;.
\end{align}
Since the second term is the transpose of the first
we see that the derivative of $\bP(H(G_\eps,\bz))$ 
is symmetric, as it should be.

Now suppose we had chosen a different orthonormal 
basis of $\bPi_0$, corresponding to a matrix 
$\btV_0$. We need to verify that this would yield 
the same result. First compute the 
$\rk \times \rk$ matrix $\bO=(\btV_0^{-1}\bV_0)$. 
This matrix is orthogonal because $\bO^T\bO =
\bV_0^T (\btV_0 \btV_0^T)^{-1} \bV_0 = 
\bV_0^T \bV_0 = \bI_\rk$\,, and $\btV_0 = \bV_0\bO$.
Then construct $\btV(H(G_\eps,\bz))$ by running the 
algorithm starting from $\btV_0$ instead of $\bV_0$.
In Section~\ref{app:algo} we saw that every step will
have $\btV_{(\st+1)} = \bV_{(\st+1)}\bO$, so this holds 
in the limit as well, hence 
$\btV(H(G_\eps,\bz)) = \bV(H(G_\eps,\bz))\bO$.
Writing~\eqref{eq:simple} with $\btV(H(G_\eps,\bz))$
and $\btV_0$ yields factors $\bO\bO^T$ that cancel,
yielding~\eqref{eq:simple} again. 

Applying the $\vect$ operation to \eqref{eq:simple}
gives the IF. Applying it to the first term yields
$$\vect\Big((\bI_p-\bP_0)\left.
  \frac{\partial}{\partial \eps}
  \bV(H(G_\eps,\bz))\right|_{\eps=0}\bV_0^T\Big)
  = (\bV_0\otimes(\bI_p-\bP_0))\vect\Big(
  \left.\frac{\partial}{\partial \eps}
  \bV(H(G_\eps,\bz))\right|_{\eps=0}\Big)$$
by the rule 
$\vect(\bA\bB\bC) = (\bC^T\otimes\bA)\vect(\bB)$.
For the second term we find
$$\vect\Big(\bV_0\left.\frac{\partial}{\partial \eps} 
  \bV(H(G_\eps,\bz))^T\right|_{\eps=0}(\bI_p - \bP_0)\Big)
  = ((\bI_p-\bP_0)\otimes\bV_0)\vect\Big(\Big(
  \left.\frac{\partial}{\partial \eps}
  \bV(H(G_\eps,\bz))\right|_{\eps=0}\Big)^T\Big)$$
by the same rule. The last factor is the $\vect$ of a 
transposed matrix, which can be written as
$$\vect\Big(\Big(
  \left.\frac{\partial}{\partial \eps}
  \bV(H(G_\eps,\bz))\right|_{\eps=0}\Big)^T\Big) =
  \bK_{p,\rk}\vect\Big(
  \left.\frac{\partial}{\partial \eps}
  \bV(H(G_\eps,\bz))\right|_{\eps=0}\Big)$$
where $\bK_{p,\rk}$ is a $p\rk \times p\rk$ 
permutation 
matrix that rearranges the entries of the column 
vector $\vect(\left.\frac{\partial}{\partial \eps} 
\bV(H(G_\eps,\bz))\right|_{\eps=0})$ to become those 
of $\vect(\left.\frac{\partial}{\partial \eps} 
\bV(H(G_\eps,\bz))^T\right|_{\eps=0})$.
In all we can write
\begin{equation}\label{eq:long_IF_PV}
\vect\left(\left.\frac{\partial}{\partial \eps} 
\bP(H(G_\eps,\bz))\right|_{\eps=0}\right)=
\bR_0\vect\left(\left.\frac{\partial}{\partial \eps} 
\bV(H(G_\eps,\bz))\right|_{\eps=0}\right)
\end{equation}
where $\bR_0$ is the $p^2 \times p\rk$ matrix
\begin{equation}\label{eq:R0}
\bR_0 = \bV_0\otimes(\bI_p-\bP_0) +
((\bI_p-\bP_0)\otimes\bV_0)\bK_{(p,\rk)}\,.
\end{equation}
Combining \eqref{eq:IFcaseV} with $\bD:=\bR_0\bB^{-1}$ 
yields \eqref{eq:IFPFDCM}, and left multiplying 
\eqref{eq:IFcellV} by $\bD$ yields 
\eqref{eq:IFPFICM}.
\end{proof}

\begin{proof}[Proof of Proposition \ref{the_3}]
For each $\der = 1,\ldots,p\rk$\,, call 
$\ddot{\bPsi}_\der=\frac{\partial^2 \bPsi_\der}{
\partial \vect(\bV)\partial \vect(\bV)}$ the 
matrix of second derivatives of $\bPsi_\der$ 
with respect to the entries of $\bV$, and 
$\bC_n(\bx,\bV)$ the matrix with $\der$th 
row equal to $\vect(\bhV_n-\bV(H))^{T} 
\ddot{\bPsi}_\der(\bx, \bV)$. 
A Taylor expansion yields
\begin{align*}
    \bzero=\widehat{\Lambda}_n(\bhV_n)
    = \frac{1}{n}\sum_{i=1}^n \Big\{ &
      \bPsi(\bx_i, \bV(H))+\dot{\bPsi}(\bx_i, \bV(H))
      \vect(\bhV_n-\bV(H))\\
    & + \frac{1}{2} \bC_n(\bx_i, \bV(H))
    \vect(\bhV_n-\bV(H)) \Big\} \,.
\end{align*}
In other words
\begin{equation}
\label{eq:VsumABC}
\mathbf{0}=\bA_n+\left(\bB_n+\overline{\bC}_n\right)\vect(\bhV_n-\bV(H))
\end{equation}
with
\begin{equation*}
\bA_n=\frac{1}{n} \sum_{i=1}^n \bPsi(\bx_i,\bV(H)), \quad \bB_n=\frac{1}{n} \sum_{i=1}^n \dot{\bPsi}(\bx_i, \bV(H)), \quad \overline{\bC}_n=\frac{1}{2n} \sum_{i=1}^n \bC_n(\bx_i, \bV(H)).
\end{equation*}
The $\der$th row of the matrix $\overline{\bC}_n$ equals $\vect(\bhV_n-\bV(H))^{T} \overline{\ddot{\bPsi}_\der}$ where
\begin{equation*}
\overline{\ddot{\boldsymbol{\Psi}}_\der} =\frac{1}{n} \sum_{i=1}^n \ddot{\bPsi}_\der(\bx_i, \bV(H))
\end{equation*}
which is bounded. Since $\bhV_n \rightarrow \bV(H)$ in probability, this implies that  $\overline{\bC}_n \rightarrow \bzero$ in probability. 

Note that for $i=1,2, \ldots$ the vectors $\bPsi(\bx_i, \bV(H))$ are i.i.d.\ with mean $\bzero$ (since $\Lambda(\bV(H))=\bzero$) and covariance matrix $\bA$, where $\bA=\E_{H}[\bPsi(\bx, \bV(H)) \bPsi(\bx, \bV(H))^{T}]$, and the matrices $\dot{\bPsi}(\bx_i, \bV(H))$ are i.i.d.\ with mean $\bB$, where $\bB=\E_{H} [\dot{\bPsi}(\bx,\bV(H))]$. Hence when $n \rightarrow \infty$, the law of large numbers implies $\bB_n \rightarrow \bB$ in probability, which implies $\bB_n+\overline{\bC}_n \rightarrow \bB$ in probability, and we assume that $\bB$ is nonsingular. The central limit theorem implies $\sqrt{n} \bA_n \rightarrow N_p(\bzero,\bA)$ in distribution.
Then from \eqref{eq:VsumABC} and Slutsky's lemma we have that
\begin{equation*}
    \sqrt{n}\vect\Big(\bhV_n-\bV(H)\Big) 
    \rightarrow_d N_{p\rk}\Big(\bzero,
    \bB^{-1}\bA(\bB^{-1})^T\Big).
\end{equation*}

From $\bhV_n \rightarrow \bV(H)$ in probability it 
follows that $\bhP_n \rightarrow \bP(H)$ in 
probability. Consider the differentiable mapping 
$\bh(\bM) = \bM(\bM^T\bM)^{-1}\bM^T$ on the set of
$p \times \rk$ matrices $\bM$ of rank $\rk$. From 
the multivariate delta method, see e.g.\ 
Casella and Berger (2002), it follows that 
\begin{equation*}
  \sqrt{n}\vect\Big(\bh(\bhV_n)-\bh(\bV(H))\Big) 
  \rightarrow_d 
  N_{p^2}\!\Big(\bzero, 
  \bR_0\bB^{-1}\bA\bB^{-T}\bR_0^T\Big)
\end{equation*}
where $\bR_0=\frac{\partial \vect(\bh(\bV))}
{\partial \vect(\bV)}\left.\right|_{\bV=\vect(\bV(H))}$ 
was defined in \eqref{eq:R0}.
Moreover $\bR_0\bB^{-1}\bA\bB^{-T}\bR_0^T=\bTheta$ 
because $\IFu_{\case}(\bx,\bP)=\bR_0\bB^{-1}\bPsi
\left(\bx, \bV(H)\right)$ by Proposition~\ref{prop2}
with fixed $\bsigma$.
Since $\IFu_{\case}(\bx,\bP)$ does not depend on the
parametrization of $\bP$, neither does\linebreak
$\bTheta =
E_H[\IFu_{\case}(\bx,\bP)\IFu_{\case}(\bx,\bP)^T]$.
\end{proof}

\section{Derivation of $\bB$ and $\bS$}
\label{app:DS}
We now compute the $p\rk \times p\rk$ 
matrix $\bB =\left.\frac{\partial}{\partial 
\vect\left(\bV\right)} g
\left(H_0, \vect(\bV),\bsigma_0\right)
\right|_{\vect(\bV)=\vect(\bV(H))}$ 
and the $p\rk \times (p+1)$ matrix 
$\bS=\left.\frac{\partial}{\partial \bsigma} 
g\left(H_0, \vect(\bV(H)), \bsigma\right)
\right|_{\bsigma=\bsigma_0}$.

Recall that $\bmu$ is known and equal to $\bzero$.  
Then
\begin{equation*} 
   w_{j}^{\cell}=\psi_1\left(
   \frac{r_{j}}{\sigma_{1,j}}\right)
   /\frac{r_{j}}{\sigma_{1,j}}\; 
    \quad  r_{j} := x_{j}- \bv_j^{T}\bu_{\bx}\;\quad
   j=1\dots,p
\end{equation*}
and 
\begin{equation*} 
    w^{\case}=\psi_2\left(
   \frac{\rt}{\sigma_2}\right)
   /\frac{\rt}{\sigma_2}\;\quad \rt := \sqrt{\frac{1}{p}
  \sum_{j=1}^{p} \sigma_{1,j}^2\rho_1\left(\frac{
  r_j}{\sigma_{1,j}}\right)}.
\end{equation*}

For $j,l=1,\dots,p$ and $h,m=1,\dots,\rk$, we have that
\begin{align*}
\label{eq:D1}
    &-\frac{\partial}{\partial v_{lm}}\E_{H_0}\left[ w_j(x_j-\bv_j^T\bu_x)u_{x,h}\right]\nonumber=-\E_{H_0}\left[\frac{\partial w_j}{\partial v_{lm}} r_ju_{x,h}+w_j\frac{\partial r_j}{\partial v_{lm}}u_{x,h}+w_jr_j\frac{\partial u_{x,h}}{\partial v_{lm}}\right]
\end{align*}
where
\begin{align*}
  \frac{\partial r_j}{\partial v_{lm}}=-\left(\frac{\partial \bv_{j}^T}{\partial v_{lm}}\bu_{\bx}+\bv_{j}^T\frac{\partial \bu_{\bx}}{\partial v_{lm}}\right)=-\left(\delta_{j=l}u_{x,m}+\bv_{j}^T\frac{\partial \bu_{\bx}}{\partial v_{lm}}\right)
\end{align*}
and $\delta_{j=l}$ is the Kronecker delta. Then
\begin{align*}
  \frac{\partial w^{\cell}_j}{\partial v_{lm}}&=\left(\frac{\psi_1'\left(
   r_{j}/\sigma_{1,j}\right)}{\sigma_{1,j}}\left(\frac{\partial r_j}{\partial v_{lm}}\right)\frac{r_{j}}{\sigma_{1,j}}- \frac{\psi_1\left(
   r_{j}/\sigma_{1,j}\right)}{\sigma_{1,j}}\left(\frac{\partial r_j}{\partial v_{lm}}\right)\right)/\frac{r_{j}^2}{\sigma_{1,j}^2}\\
   &=\frac{\psi_1'\left(
   r_{j}/\sigma_{1,j}\right)}{r_j}\left(\frac{\partial r_j}{\partial v_{lm}}\right)- \frac{\sigma_{1,j}\psi_1\left(
   r_{j}/\sigma_{1,j}\right)}{r_j^2}\left(\frac{\partial r_j}{\partial v_{lm}}\right)\\
    &=(w^{\cell}_{j})'\left(\frac{\partial r_j}{\partial v_{lm}}\right)
\end{align*}
where $(w^{\cell}_{j})'=\psi_1'(r_{j}/\sigma_{1,j})/r_j
  -\sigma_{1,j}\psi_1(r_{j}/\sigma_{1,j})/r_j^2.$
Moreover
\begin{align*}
  \frac{\partial w^{\case}}{\partial v_{lm}}&=\left(\frac{\psi_2'\left(
   \rt/\sigma_{2}\right)}{\sigma_{2}}\frac{\partial \rt}{\partial v_{lm}}\frac{\rt}{\sigma_{2}}-\frac{\psi_2\left(
   \rt/\sigma_{2}\right)}{\sigma_{2}}\frac{\partial \rt}{\partial v_{lm}}\right)/\frac{\rt^2}{\sigma_{2}^2}\\
   &=\left(\frac{\psi_2'\left(
   \rt/\sigma_{2}\right)}{\rt}- \frac{\sigma_2\psi_2\left(
   \rt/\sigma_{2}\right)}{\rt^2}\right)\frac{\partial \rt}{\partial v_{lm}}\\
    &=(w^{\case})'\frac{\partial \rt}{\partial v_{lm}}
\end{align*}
where $(w^{\case})'=\psi_2'(\rt/\sigma_2)/r - 
  \sigma_2\psi_2(\rt/\sigma_2)/\rt^2$ and
\begin{align*}
  \frac{\partial \rt}{\partial v_{lm}}&=\frac{1}{2p\rt}\sum_{s=1}^{p}\sigma_{1,s}
  \psi_1\left(
   r_{s}/\sigma_{1,s}\right)\left(\frac{\partial r_s}{\partial v_{lm}}\right)\\
   &=\frac{1}{2p\rt}\sum_{s=1}^{p}
  w_s^{\cell}r_s\left(\frac{\partial r_s}{\partial v_{lm}}\right).
\end{align*}
Then 
\begin{align*}
  \frac{\partial w_j}{\partial v_{lm}}&=\frac{\partial w^{\cell}_j}{\partial v_{lm}}w^{\case}+w^{\cell}_j\frac{\partial w^{\case}}{\partial v_{lm}}\\
  &=w^{\case}(w^{\cell}_{j})'\left(\frac{\partial r_j}{\partial v_{lm}}\right)+w^{\cell}_j(w^{\case})'\frac{1}{2p\rt}\sum_{s=1}^{p}
  w_s^{\cell}r_s\left(\frac{\partial r_s}{\partial v_{lm}}\right)
  \\&=\sum_{s=1}^{p}\left(
  \frac{w^{\cell}_j(w^{\case})'w_s^{\cell}r_s}{2p\rt}+\delta_{s=j}w^{\case}(w^{\cell}_{j})'\right)\left(\frac{\partial r_s}{\partial v_{lm}}\right).
\end{align*}
Then $\bB(H_0,\vect(\bV(H)),\bsigma_0)$ is obtained from
$$\bB(H_0,\vect(\bV(H)),\bsigma_0)=\lbrace B_{\thh\tm}(H_0,\vect(\bV(H)),\bsigma_0)\rbrace,$$ where $B_{\thh\tm}(H_0,\vect(\bV(H)),\bsigma_0)=\left.-\frac{\partial}{\partial v_{lm}}\E_{H_0}\left[ w_j(x_j-\bv_j^T\bu_x)u_{x,h}\right]\right|_{\vect(\bV)=\vect(\bV(H))}$ 
with \linebreak
$\thh=(h-1)p+j$ and $\tm=(m-1)p+l$.
Note that $\frac{\partial \bu_x}{\partial v_{lm}}$ does not have a closed form, so it has to be computed numerically.

For $\bS(H_0,\vect(\bV(H)),\bsigma_0)$ we compute 
\begin{align*}
    &-\frac{\partial}{\partial \sigma_{1,j}}\E_{H_0}\left[ w_j(x_j-\bv_j^T\bu_x)u_{x,h}\right]\nonumber=-\E_{H_0}\left[\frac{\partial w_j}{\partial \sigma_{1,j}} r_ju_{x,h}+w_j\frac{\partial r_j}{\partial \sigma_{1,j}}u_{x,h}+w_jr_j\frac{\partial u_{x,h}}{\partial \sigma_{1,j}}\right]
\end{align*}
where
\begin{align*}
  \frac{\partial r_j}{\partial \sigma_{1,j}}=-\bv_{j}^T\frac{\partial \bu_{\bx}}{\partial \sigma_{1,j}}.
\end{align*}
Then 
\begin{align*}
  \frac{\partial w^{\cell}_j}{\partial \sigma_{1,j}}&=\left(\psi_1'\left(
   r_{j}/\sigma_{1,j}\right)\frac{\partial r_j/\sigma_{1,j}}{\partial \sigma_{1,j}}\frac{r_{j}}{\sigma_{1,j}}- \psi_1\left(
   r_{j}/\sigma_{1,j}\right)\frac{\partial r_j/\sigma_{1,j}}{\partial \sigma_{1,j}}\right)/\frac{r_{j}^2}{\sigma_{1,j}^2}\\
   &=\left(\frac{\psi_1'\left(
   r_{j}/\sigma_{1,j}\right)}{r_j}\sigma_{1,j}\frac{\partial r_j/\sigma_{1,j}}{\partial \sigma_{1,j}}- \sigma_{1,j}^2\frac{\psi_1\left(
   r_{j}/\sigma_{1,j}\right)}{r_j^2}\frac{\partial r_j/\sigma_{1,j}}{\partial \sigma_{1,j}}\right)\\
   &=\sigma_{1,j}(w^{\cell}_{j})'\frac{\partial r_j/\sigma_{1,j}}{\partial \sigma_{1,j}}.
\end{align*}
Moreover,
\begin{align*}
  \frac{\partial w^{\case}}{\partial \sigma_{1,j}}&=\left(\frac{\psi_2'\left(
   \rt/\sigma_{2}\right)}{\sigma_{2}}\frac{\partial \rt}{\partial \sigma_{1,j}}\frac{\rt}{\sigma_{2}}-\frac{\psi_2\left(
   \rt/\sigma_{2}\right)}{\sigma_{2}}\frac{\partial \rt}{\partial\sigma_{1,j}}\right)/\frac{\rt^2}{\sigma_{2}^2}\\
   &=\left(\frac{\psi_2'\left(
   \rt/\sigma_{2}\right)}{\rt}- \frac{\sigma_2\psi_2\left(
   \rt/\sigma_{2}\right)}{\rt^2}\right)\frac{\partial \rt}{\partial \sigma_{1,j}}\\
    &=(w^{\case})'\frac{\partial \rt}{\partial \sigma_{1,j}}
\end{align*}
where 
\begin{align*}
  \frac{\partial \rt}{\partial \sigma_{1,j}}&=\frac{1}{2p\rt}\sum_{s=1}^{p}\left[\delta_{s=j}2\sigma_{1,j}\rho_1\left(\frac{
  r_j}{\sigma_{1,j}}\right)+\sigma_{1,s}^2\psi_1\left(\frac{
  r_s}{\sigma_{1,s}}\right)\frac{\partial r_s/\sigma_{1,s}}{\partial \sigma_{1,j}}\right]\\
   &=\frac{1}{2p\rt}\sum_{s=1}^{p}\left[\delta_{s=j}2\sigma_{1,j}\rho_1\left(\frac{
  r_j}{\sigma_{1,j}}\right)+\sigma_{1,s}r_sw_s^{\cell}\frac{\partial r_s/\sigma_{1,s}}{\partial \sigma_{1,j}}\right],
\end{align*}
and
\begin{align*}
 \frac{\partial r_s/\sigma_{1,s}}{\partial \sigma_{1,j}}=\frac{\partial r_s}{\partial \sigma_{1,j}}\frac{1}{\sigma_{1,s}}-\delta_{s=j}\frac{r_j}{\sigma_{1,j}^2}\;.
\end{align*}
Then 
\begin{align*}
  \frac{\partial w_j}{\partial \sigma_{1,j}}&=\frac{\partial w^{\cell}_j}{\partial \sigma_{1,j}}w^{\case}+w^{\cell}_j\frac{\partial w^{\case}}{\partial \sigma_{1,j}}\\
  &=w^{\case}\sigma_{1,j}(w^{\cell}_{j})'\frac{\partial r_j/\sigma_{1,j}}{\partial \sigma_{1,j}}+\frac{w^{\cell}_j(w^{\case})'}{2p\rt}\sum_{s=1}^{p}\left[\delta_{s=j}2\sigma_{1,j}\rho_1\left(\frac{
  r_j}{\sigma_{1,j}}\right)+\sigma_{1,s}r_sw_s^{\cell}\frac{\partial r_s/\sigma_{1,s}}{\partial \sigma_{1,j}}\right].
\end{align*}
Moreover,
\begin{align*}
    &-\frac{\partial}{\partial \sigma_{2}}\E_{H_0}\left[ w_j(x_j-\bv_j^T\bu_x)u_{x,h}\right]\nonumber=-\E_{H_0}\left[\frac{\partial w_j}{\partial \sigma_{2}} r_ju_{x,h}+w_j\frac{\partial r_j}{\partial \sigma_{2}}u_{x,h}+w_jr_j\frac{\partial u_{x,h}}{\partial \sigma_{2}}\right]
\end{align*}
where
\begin{align*}
  \frac{\partial r_j}{\partial \sigma_{2}}=-\bv_{j}^T\frac{\partial \bu_{\bx}}{\partial \sigma_{2}}\;.
\end{align*}
Then
\begin{align*}
  \frac{\partial w^{\cell}_j}{\partial \sigma_{2}}&=\left(\psi_1'\left(
   r_{j}/\sigma_{1,j}\right)\frac{\partial r_j}{\partial \sigma_{2}}\frac{r_{j}}{\sigma_{1,j}^2}- \frac{\psi_1\left(
   r_{j}/\sigma_{1,j}\right)}{\sigma_{1,j}}\frac{\partial r_j}{\partial \sigma_{2}}\right)/\frac{r_{j}^2}{\sigma_{1,j}^2}\\
   &=\left(\frac{\psi_1'\left(
   r_{j}/\sigma_{1,j}\right)}{r_j}\frac{\partial r_j}{\partial \sigma_{2}}- \sigma_{1,j}\frac{\psi_1\left(
   r_{j}/\sigma_{1,j}\right)}{r_j^2}\frac{\partial r_j}{\partial \sigma_{2}}\right)\\
   &=(w^{\cell}_{j})'\frac{\partial r_j}{\partial \sigma_{2}}\;.
\end{align*}
Moreover,
\begin{align*}
  \frac{\partial w^{\case}}{\partial \sigma_{2}}&=\left(\psi_2'\left(
   \rt/\sigma_{2}\right)\frac{\partial \rt/\sigma_2}{\partial \sigma_{2}}\frac{\rt}{\sigma_{2}}-\psi_2\left(
   \rt/\sigma_{2}\right)\frac{\partial \rt/\sigma_2}{\partial \sigma_{2}}\right)/\frac{\rt^2}{\sigma_{2}^2}\\
   &=\left(\sigma_2\frac{\psi_2'\left(
   \rt/\sigma_{2}\right)}{\rt}- \sigma_2^2\frac{\psi_2\left(
   \rt/\sigma_{2}\right)}{\rt^2}\right)\frac{\partial \rt/\sigma_2}{\partial \sigma_{2}}\\
    &=\sigma_2(w^{\case})'\frac{\partial \rt/\sigma_2}{\partial \sigma_{2}}
\end{align*}
where 
\begin{align*}
  \frac{\partial \rt/\sigma_2}{\partial \sigma_{2}}&= \frac{\partial \rt}{\partial \sigma_{2}}\frac{1}{\sigma_{2}}-\frac{\rt}{\sigma_{2}^2}\\
  \end{align*}
and
  \begin{align*}
  \frac{\partial \rt}{\partial \sigma_{2}}&=\frac{1}{2p\rt}\sum_{s=1}^{p}\sigma_{1,s}\psi_1\left(\frac{
  r_j}{\sigma_{1,s}}\right) \frac{\partial r_s}{\partial \sigma_{2}}\\
   &=\frac{1}{2p\rt}\sum_{s=1}^{p}\sigma_{1,s}r_sw^{\cell}_s \frac{\partial r_s}{\partial \sigma_{2}}\;.
\end{align*}
Then 
\begin{align*}
  \frac{\partial w_j}{\partial \sigma_{2}}&=\frac{\partial w^{\cell}_j}{\partial \sigma_{2}}w^{\case}+w^{\cell}_j\frac{\partial w^{\case}}{\partial \sigma_{2}}\\
  &=w^{\case}(w^{\cell}_{j})'\frac{\partial r_j}{\partial \sigma_{2}}+w^{\cell}_j\sigma_2(w^{\case})'\left(\frac{1}{2p\rt}\sum_{s=1}^{p}\sigma_{1,s}r_sw^{\cell}_s \frac{\partial r_s}{\partial \sigma_{2}}\frac{1}{\sigma_{2}}-\frac{\rt}{\sigma_{2}^2}\right).
\end{align*}
Here $\frac{\partial \bu_x}{\partial \sigma_{1,j}}$ and $\frac{\partial \bu_x}{\partial \sigma_{2}}$ do not have not a closed form either, so they must be computed numerically as well.

\clearpage
\section{Pseudocode of the cellPCA algorithm}
\label{app:pseudocode}

Algorithm~\ref{alg:irls} provides the pseudocode of 
the steps in Sections~\ref{sec:obj}--\ref{sec:algo} 
of the main text.\\

\begin{algorithm}[H]
\caption{IRLS algorithm for cellPCA} \label{alg:irls}
\begin{algorithmic}[1]
  \STATE{Initialize $\bV_{(0)}$, $\bU_{(0)}$, $\bmu_{(0)}$ by
         the MacroPCA initial fit.}
  \STATE{Compute $\hsigma_{1,j}$ and 
     $\hsigma_2$ according to Section~\ref{sec:obj}.}
    \STATE{Compute the initial weight matrix $\bW_{(0)}$ using \eqref{eq:updateW}, \eqref{eq:cellweight}, and \eqref{eq:caseweight}.}
  \STATE{Set $\st = 0$.}
  \REPEAT  
     \STATE{(a) Update $\bV$:  
        \[
        (\bv_{(\st+1)})_j = 
        \big(\bU_{(\st)}^T\bW_{(\st)}^j\bU_{(\st)}
        \big)^{\dagger}\;
        \bU_{(\st)}^T\bW_{(\st)}^j\big(\bx^j-
        (\bmu_{(\st)})_j\bone_n\big) \quad j=1,\dots,p.
        \]
     }
     \STATE{(b) Update $\bU$:   
        \[
        (\bu_{(\st+1)})_i = \big(\bV_{(\st+1)}^T
        (\btW_{(\st)})_i\bV_{(\st+1)}\big)^{\dagger}\;
        \bV_{(\st+1)}^T(\btW_{(\st)})_i\big(\bx_i-\bmu_{(\st)}
        \big) \quad i=1,\dots,n.
        \]
     }
     \STATE{(c) Update $\bmu$:   
        \[
        \bmu_{(\st+1)} = \Big(\sum_{i=1}^n (\bW_{(\st)})_i\Big)^{-1}
        \sum_{i=1}^n (\bW_{(\st)})_i\big(\bx_i - \bV_{(\st+1)}
        (\bu_{(\st+1)})_i\big).
        \]
     }
     \STATE{(d) Update $\bW$: Compute $\bW_{(\st+1)}$ using \eqref{eq:updateW}, \eqref{eq:cellweight}, and \eqref{eq:caseweight}.}
     \STATE{Increment $\st$: $\st = \st + 1$.}
  \UNTIL{$\;||\bU_{(\st)}\bV_{(\st)}^T - \bU_{(\st-1)}\bV_{(\st-1)}^T||_F
         < \nu\, ||\bU_{(\st-1)}\bV_{(\st-1)}^T||_F$ for some tolerance $\nu$.}
\end{algorithmic}
\end{algorithm} 

\newpage
Algorithm~\ref{alg:outofsample} gives the pseudocode of 
the out-of-sample prediction in Section~\ref{sec:newx}.\\

\begin{algorithm}[H]
\caption{Out-of-Sample Prediction in cellPCA} \label{alg:outofsample}
\begin{algorithmic}[1]
  \STATE{Given a new data point $\bx^*$ with missingness indicator $\bm{m}^*$.}
  \STATE{Retrieve the $\bV$, $\bmu$ and $\hsigma_{1,j}$ 
         estimated in the training stage.}
  \STATE{Set $J = \{ j \;;\; m^*_j = 1 \}$.}
  \IF{$J = \varnothing$} 
      \STATE{Set $\bu^*$ and $\bhx^*$ to NA.} 
  \ELSE
      \STATE{Initialize $\bu^*_{(0)}$ as $\bV_J^T(\bx^*_J - \bmu_J)$
      where $\bx^*_J$ has the coordinates of $\bx^*$ in $J$.
      \STATE{Compute the initial weights  $\bw^*_{(0)}$ using \eqref{eq:cellweightnew}.}
      \STATE{Set $\st = 0$.}
      \REPEAT
         \STATE{Update $\bu^*$ by
          \[
          \bu^*_{(\st+1)} = \big(\bV^T
          \bW^*_{(\st)}\bV\big)^{\dagger}\;
          \bV^T\bW^*_{(\st)}(\bx^*-\bmu)
          \]
          \hspace{7mm} where $\bW^*_{(\st)}=\diag(\bw^*_{(\st)})$.}
          \STATE{Update cellwise weights by 
          \[
          (\bw^*_{(\st+1)})_j = \psi_1\!\left(
          \frac{(r_{(q+1)})_j}{\hsigma_{1,j}}\right)
          \Big/ \frac{(r_{(q+1)})_j}{\hsigma_{1,j}} 
          \quad \text{for } j \in J
          \] \hspace{7mm} and set $(\bw^*_{(\st+1)})_j = 0$ for $j \notin J$, with $(r_{(q+1)})_j=x^*_j-\mu_j-\bv_j^T\bu^*_{(q+1)}$\,.
          }
      \UNTIL{$\;||\bV\bu^*_{(\st)} - \bV\bu^*_{(\st-1)}||_F
         < \nu\, ||\bV\bu^*_{(\st-1)}||_F$ for some tolerance $\nu$.}
      \STATE{Compute the final estimate  $\bhx^*=(\hx^*_1,\dots, \hx^*_p)^T$ with $
      \hx^*_j := \mu_j + \bv_j^T\bu^*_{(\st+1)}$\,.}
  \ENDIF
  }
\end{algorithmic}
\end{algorithm}

\clearpage
\section{Complexity of the cellPCA algorithm}
\label{app:complexity}

We will first study the time complexity of the algorithm 
described in Sections~\ref{sec:obj}--\ref{sec:algo}
and~\ref{sec:est_pd}.

Starting from the initial estimates 
of $\bV$, $\bU$ and
$\bmu$ we first have to compute the robust M-scales 
$\hsigma_{1,j}$ for $j = 1,\ldots,p$ as described in
Section~\ref{sec:obj}. Each M-scale can be computed in
$O(n)$ time and we compute $p$ of them, so the 
complexity is $O(np)$.
The computation of the single M-scale $\hsigma_{2}$ is 
$O(n)$ which does not increase the overall complexity
$O(np)$ for all of these scale estimates.

Next we have to carry out the IRLS algorithm 
described in Section~\ref{sec:algo}, which initializes 
$\bW$ and then updates $\bV$, $\bU$, $\bmu$ and $\bW$.
Initializing the $n \times p$ matrix $\bW$ according 
to~\eqref{eq:updateW}--\eqref{eq:caseweight} is an
elementwise operation with complexity $O(np)$. 
The complexity for updating $\bV$ as in \eqref{eq:updateV} 
is obtained as follows.
\begin{enumerate}
\item \textbf{Compute $\bx^j - \bmu_j \bone_n$.}
     Subtracting a scalar from an $n$-dimensional 
		vector requires $O(n)$ operations.
\item \textbf{Compute $\bW^j \big(\bx^j - \bmu_j \bone_n\big)$.}
    Matrix-vector multiplication of $\bW^j$ ($n \times n$) and this  vector ($n \times 1$) would require $O(n^2)$ operations in general. However, as $\bW^j$ is diagonal the matrix by vector multiplication simplifies to elementwise multiplication, which requires $O(n)$ operations. 
\item \textbf{Compute $\bU^T \bW^j \big(\bx^j - \bmu_j \bone_n\big)$.}
       Matrix-vector multiplication of $\bU^T$ ($k \times n$) and  resulting vector ($n \times 1$) requires $O(nk)$ operations.
\item \textbf{Compute $\bU^T \bW^j \bU$.}
   The matrix multiplication $\bW^j \bU$  of $\bW^j$ ($n \times n$) and $\bU$ ($n \times k$) would require $O(n^2 k)$ operations in general, since each of the $nk$ entries of the product matrix is a sum of $n$ products of two scalars.  However, because $\bW^j$ is diagonal it is only the number of entries in $\bU$, so $O(n k)$ operations. To this we must add the complexity of multiplying $\bU^T$ ($k \times n$) and the resulting $\bW^j \bU$ ($n \times k$) which is $O(nk^2)$ which is bigger. Therefore the computation of $\bU^T \bW^j \bU$ requires $O(nk^2)$ operations.
\item \textbf{Compute the pseudoinverse $\big(\bU^T \bW^j \bU\big)^{\dagger}$.}
    Computing the pseudoinverse of a $k \times k$ matrix by a singular value decomposition requires $O(k^3)$ operations.
\item \textbf{Carry out the final multiplication.}
     Multiplying the $k \times k$ pseudoinverse with the $k \times 1$ vector requires $O(k^2)$ operations.
\end{enumerate}

\noindent Since \eqref{eq:updateV} is repeated 
$p$ times,  adding up all the steps gives
\begin{align*}
    &p\big( O(n)+O(n)+O(nk)+O(nk^2)+O(k^3)+O(k^2)\big)\\&=
O(npk)+O(npk^2)+O(pk^3) = O(npk^2 + pk^3) = O(npk^2)
\end{align*}
where the last equality follows from $npk^2 \geqslant pk^3$ due 
to $n \geqslant k$. 
Therefore, the time complexity for updating $\bV$ is
$O(npk^2)$.

For updating $\bU$ we have to repeat~\eqref{eq:updateU}
$n$ times. By a totally similar reasoning we obtain the
complexity 
\begin{align*}
    &n\big(O(p)+O(p)+O(pk)+O(pk^2)+O(k^3)+O(k^2)\big)\\&=
O(npk)+O(npk^2)+O(nk^3) = O(npk^2 + nk^3) =O(npk^2)
\end{align*}
where the last equality follows from $npk^2 \geqslant nk^3$ 
due to $p \geqslant k$.
So the time complexity for updating $\bU$ is
$O(npk^2)$ also.

Updating $\bmu$ consists of the following steps:
\begin{enumerate}
\item \textbf{Compute $\bx_i - \bV \bu_i$.}
     Multiplying $\bV$ ($p \times k$) with $\bu_i$ ($k \times 1$) requires $O(pk)$ operations. Subtracting the resulting $p \times 1$ vector from $\bx_i$ ($p \times 1$) requires $O(p)$ operations. Repeating this for all $i = 1,\ldots,n$ yields $O(np)+O(npk)=O(npk)$ operations.
\item \textbf{Compute $\bW_i \big(\bx_i - \bV \bu_i\big)$.}
    Since $\bW_i$ is a diagonal $p \times p$ matrix, the matrix by vector multiplication simplifies to elementwise multiplication, which requires $O(p)$ operations. For all $i$ this becomes $O(np)$ operations.
\item \textbf{Sum over $i = 1, \dots, n$.} Summing $n$ vectors
     ($p \times 1$) requires $O(np)$ operations.
\item \textbf{Compute $\sum_{i=1}^n \bW_i$.}
     Adding $n$ diagonal $p \times p$ matrices is performed by summing the diagonal elements. That is adding $n$ vectors ($p \times 1$), which requires $O(np)$ operations.
\item \textbf{Compute the inverse $\Big(\sum_{i=1}^n \bW_i\Big)^{-1}$.}
    Since $\sum_{i=1}^n \bW_i$ is diagonal, the inversion simplifies to inverting each diagonal entry, which requires $O(p)$ operations.
\item \textbf{Multiply the inverse with the vector sum.} Multiplying the $p \times p$ diagonal inverse matrix with the $p \times 1$ vector requires $O(p)$ operations.
\end{enumerate}
By adding up all the steps, the complexity for updating $\bmu$
becomes $O(npk) + O(np) + O(np) + O(np) + O(p) + O(p) = O(npk)$.

Finally, updating $\bW$ is again an elementwise 
operation with complexity $O(np)$.

Therefore, the total complexity of the algorithm in 
Sections~\ref{sec:obj}--\ref{sec:algo} is
$O(np) + O(npk^2) + O(npk^2) + O(npk)$
which equals $O(npk^2)$.

If we also want to estimate the principal directions as
in Section \ref{sec:est_pd} we need to carry out the DetMCD
method \citep{hubert2012deterministic} for $n$ points 
in $k$ dimensions. This requires 
$O(n \log(n)k^2)+ O(nk^2)=O(n \log(n)k^2)$ operations. 

Therefore, all the computations in
Sections~\ref{sec:obj}--\ref{sec:algo}
and~\ref{sec:est_pd} together have complexity 
$O((n\log(n) + np)k^2)$.

However, the cellPCA algorithm starts from the 
MacroPCA initial estimator, which has time complexity  
$O(np(\min(n, p) + \log(n) + \log(p)))$ as shown 
in \citep{hubert2019macropca}.
Its complexity does not depend on $k$ because MacroPCA
requires $k \leqslant k_{max}$ where by default
$k_{max} = 10$.
Therefore, the rank $k$ in the remainder of the cellPCA
algorithm also cannot increase beyond $k_{max}$\,,
so this part has complexity $O(n\log(n) + np)).$
The overall complexity of cellPCA is thus
\begin{align*}
  &O(np(\min(n, p)+\log(n)+\log(p)))+O(n\log(n)+np)\\
  &=O(np(\min(n, p)+\log(n)+\log(p)))
\end{align*}
so the additional steps of cellPCA do not increase the time
complexity beyond that of the initial estimator MacroPCA.
In robust estimation it is indeed typical that the overall 
complexity remains that of the initial estimator.

Figure~\ref{fig:results_time} shows the average 
computation times of the cellPCA components in seconds, 
as a function of $n$ and $p$ over 500 replications. 
These were measured on a workstation equipped with two 24-core 
sockets with an Intel\textsuperscript{\textregistered} 
Xeon\textsuperscript{\textregistered} Platinum 8160 processor 
with a clock frequency of 2.10GHz and 192GB of RAM.
The top row in Figure~\ref{fig:results_time} shows times
in function of $n$, for fixed choices of $(p,k).$ The
bottom row shows them in function of $p$, for fixed choices
of $(n,k)$.
The leftmost panels have the times for running the initial
MacroPCA. In the middle we see the times for 10 iterations
of the IRLS algorithm, which indeed look linear in both
$n$ and $p$. The rightmost panels show the total times of
cellPCA, which are the sums of the leftmost and the middle 
panels. The shapes of its curves are similar to those of
the initial MacroPCA method, as expected.

\begin{figure}[!ht]
\centering
\begin{tabular}{ccc}
\includegraphics[width=.3\textwidth] {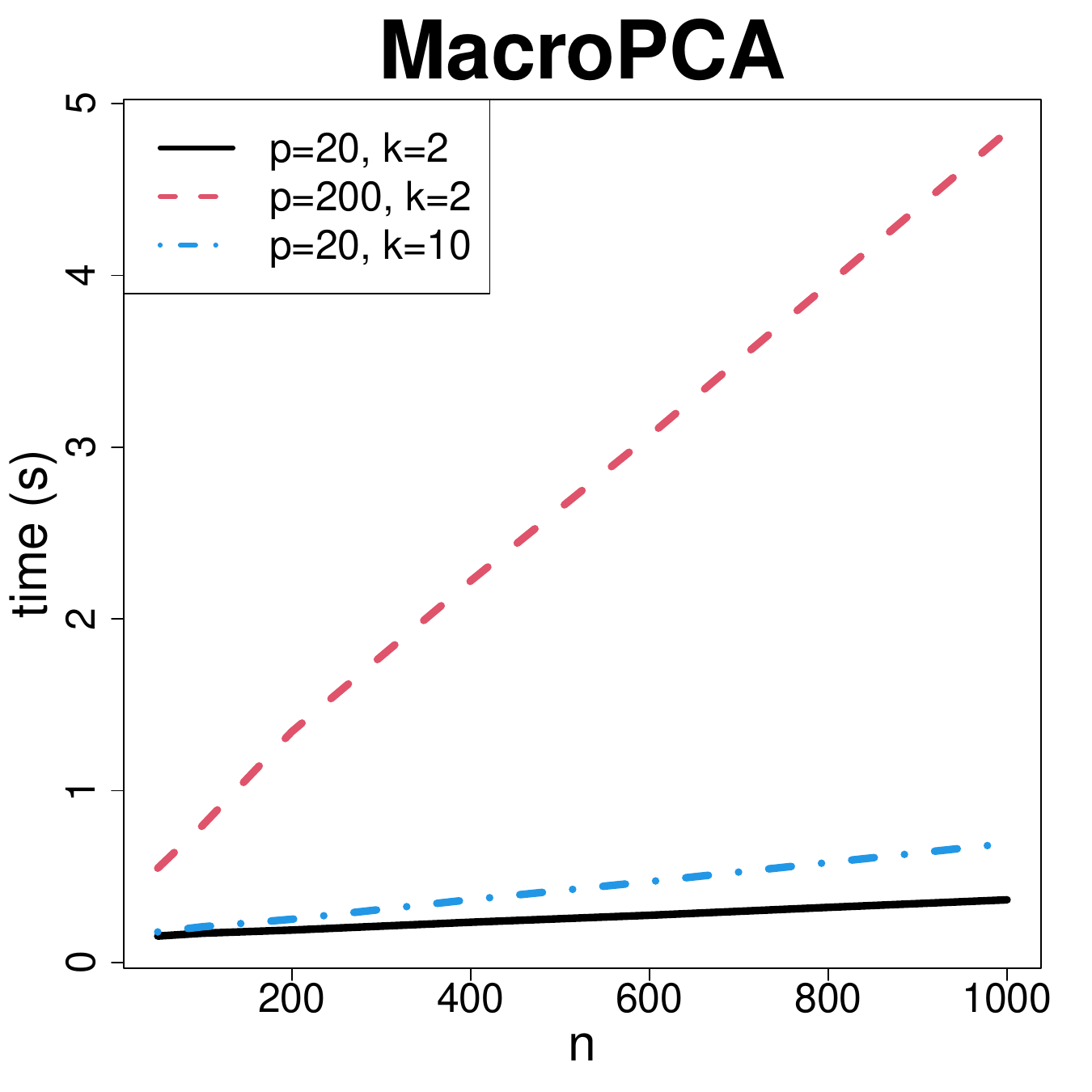} &
\includegraphics[width=.3\textwidth] {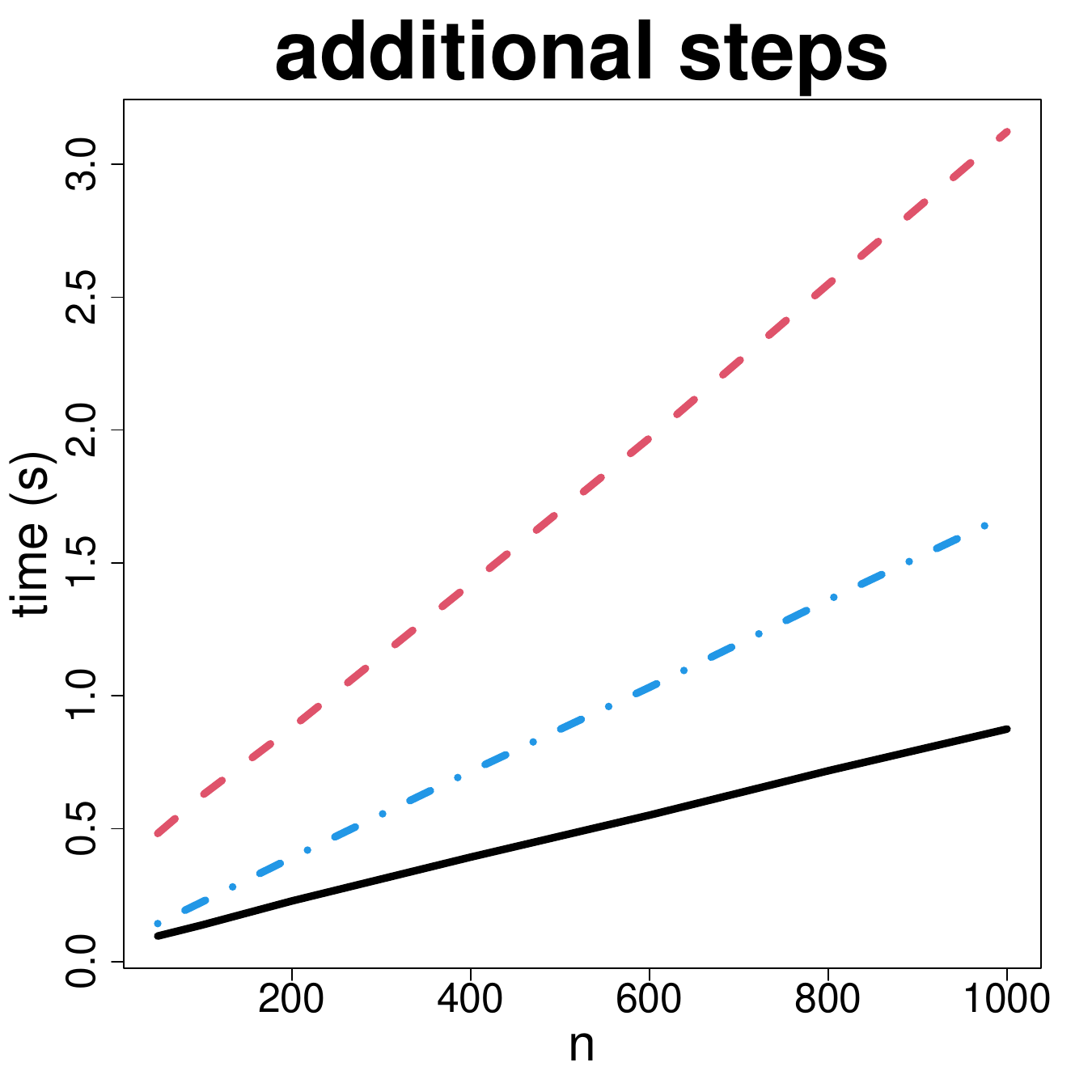} &
\includegraphics[width=.3\textwidth] {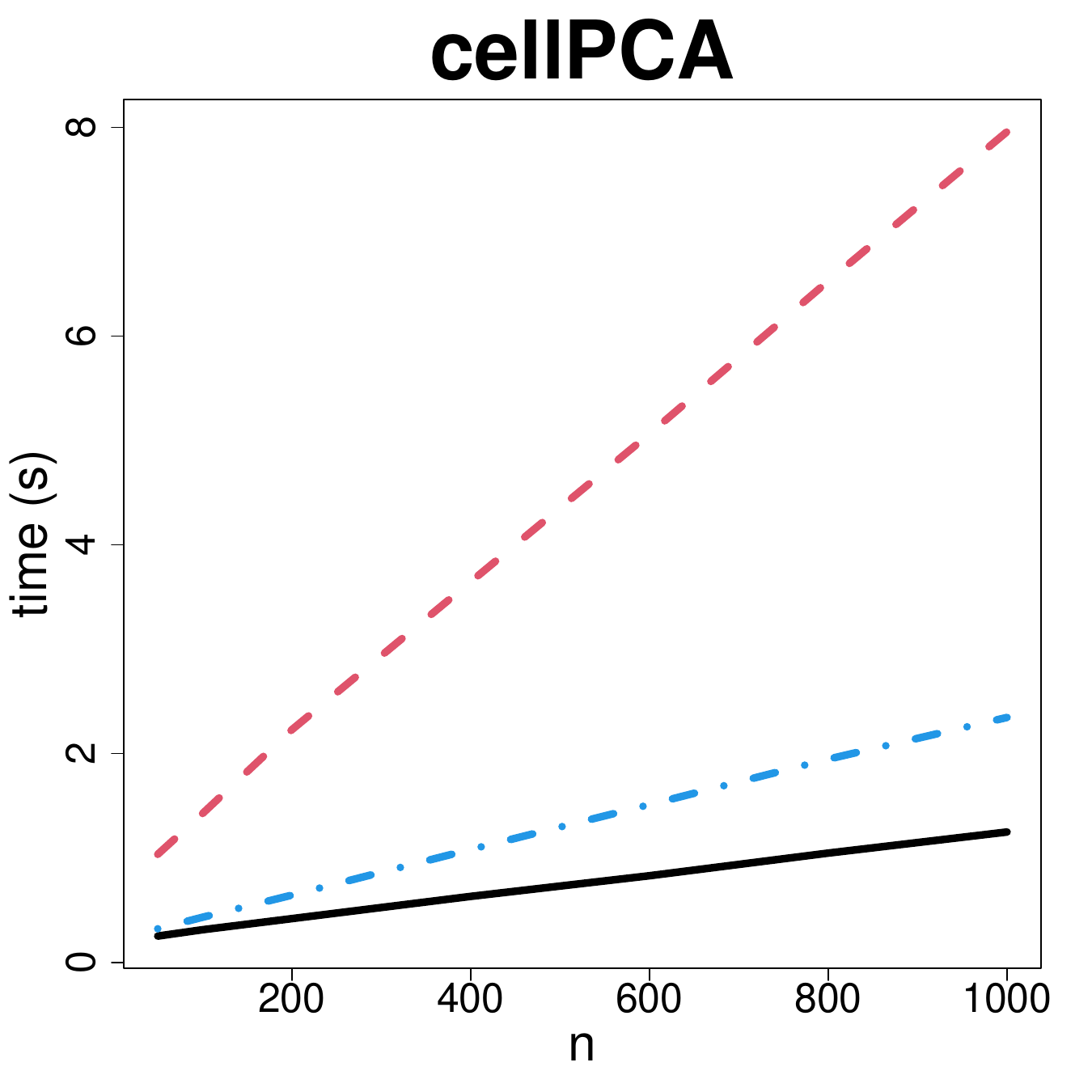}  \\
\includegraphics[width=.3\textwidth] {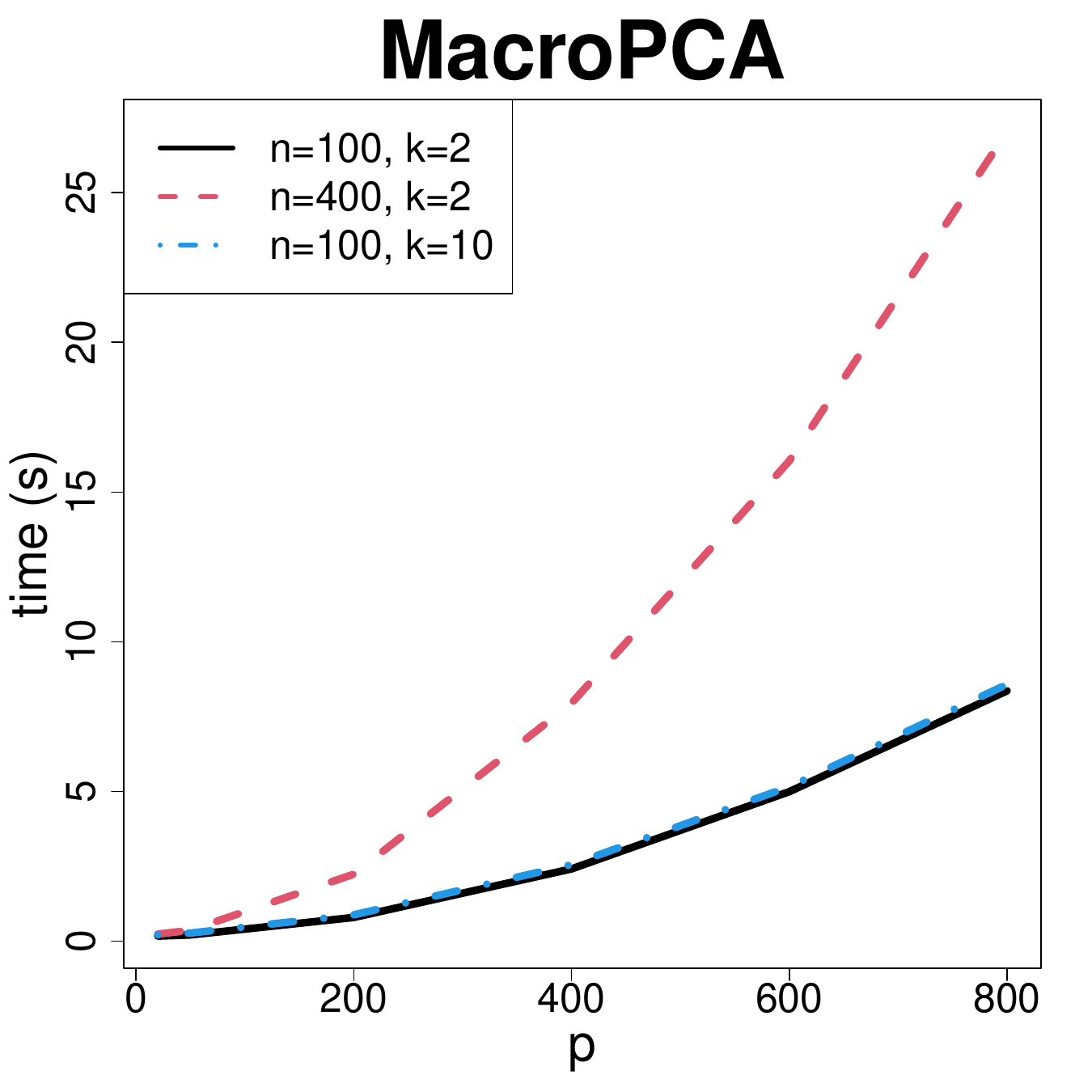} &
\includegraphics[width=.3\textwidth] {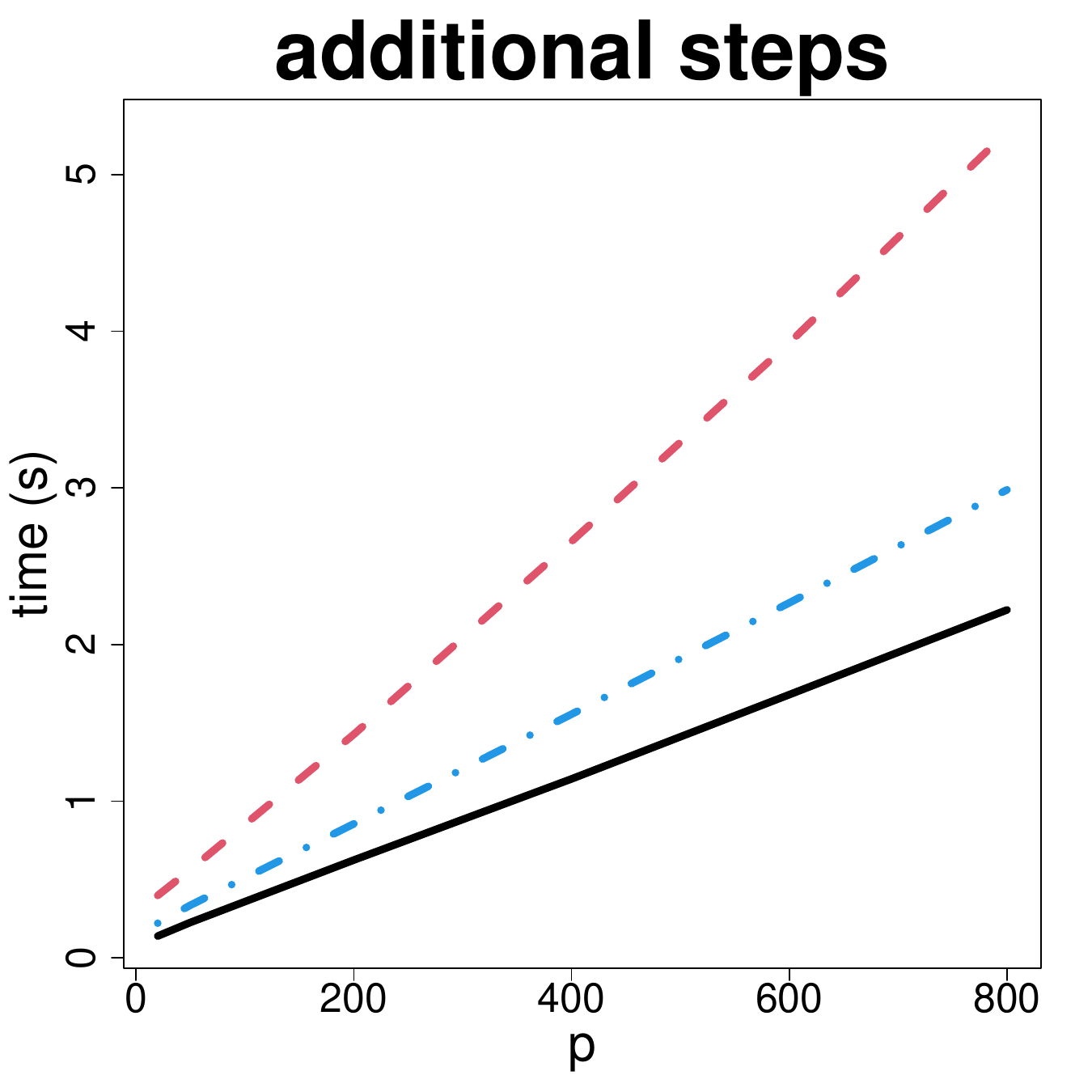} &
\includegraphics[width=.3\textwidth] {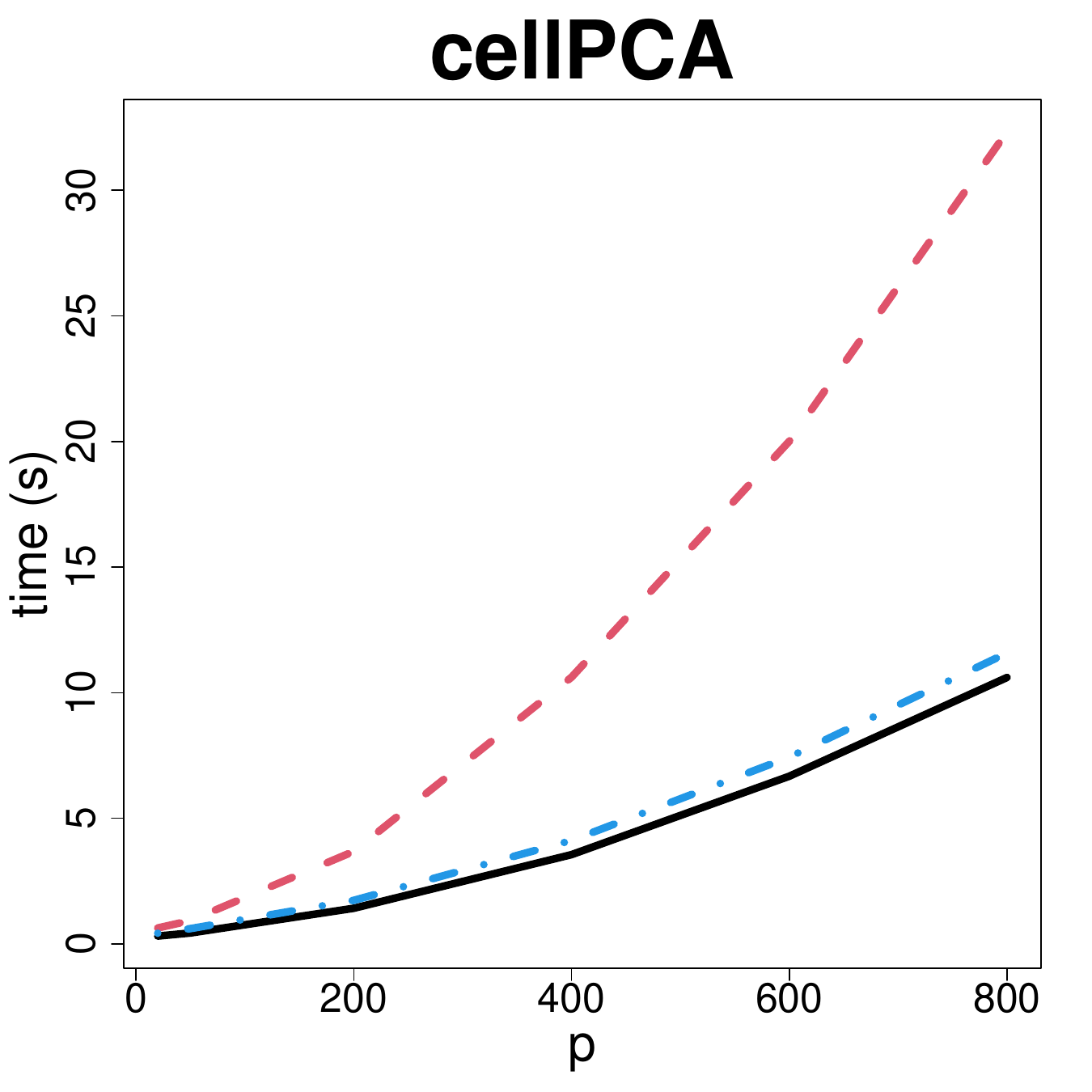} 
\end{tabular}
\caption{Average computation times in seconds 
of the different steps of cellPCA, as a function of 
the number of cases $n$ (top row) and the dimension 
$p$ (bottom row).}
\label{fig:results_time}
\end{figure}

For the space complexity of cellPCA we look
at each of its steps again. The first step of the
initial estimator MacroPCA is the DDC method, whose 
space complexity is $O(np)$ for large $n$ and $p$.
The other steps of MacroPCA also have space 
complexity $O(np)$. Its DetMCD component requires 
$O(nk + k^2 + k)$,
but since $k \leqslant \min(n, p)$ this is also at 
most $O(np)$. Therefore, the overall space 
complexity of MacroPCA remains $O(np)$. Similarly, 
updating $\bU, \bV, \bmu, \bW$ requires $O(np)$ 
space. Computing robust scales has a space 
complexity of $O(p)$, while estimating the center 
and principal directions requires $O(np)$. Hence, 
the total space complexity of cellPCA is $O(np)$. 
Since this is also the space complexity of the 
dataset, the algorithm does not add to it.

\clearpage
\section{Additional Results for the Ionosphere Data}
\label{app:ionos}

For cases 119 and 138 of the ionosphere data, Figure~\ref{fig:ionos_imp} illustrates that each imputed cell $\impx_i$ can be seen as a weighted average of the observed $x_{ij}$ and the fitted $\hx_{ij}$. Only a subset of the variables are shown. The plot also displays the cellwise weights $w^{\cell}_{ij}$\,. The smaller the weight, the more the imputed value becomes close to the fitted value. For $w^{\cell}_{ij} = 1$ the imputed cell is the observed value, and for $w^{\cell}_{ij} = 0$ the imputed cell is the fitted value. 

\vspace{-2mm}
\begin{figure}[!ht]
\centering
\includegraphics[width=0.63\textwidth]{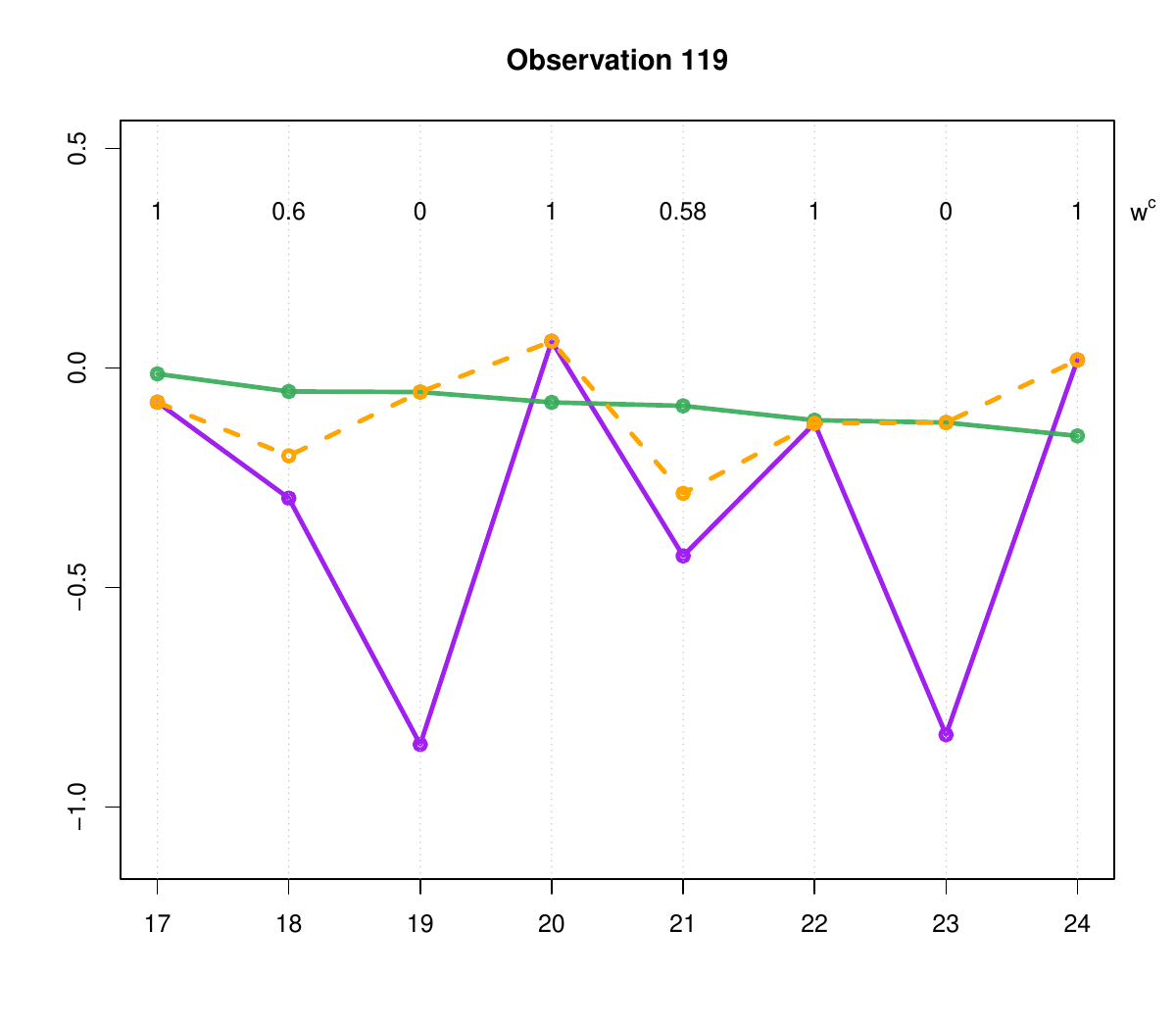}\\

\vspace{-6mm}
\includegraphics[width=0.63\textwidth]{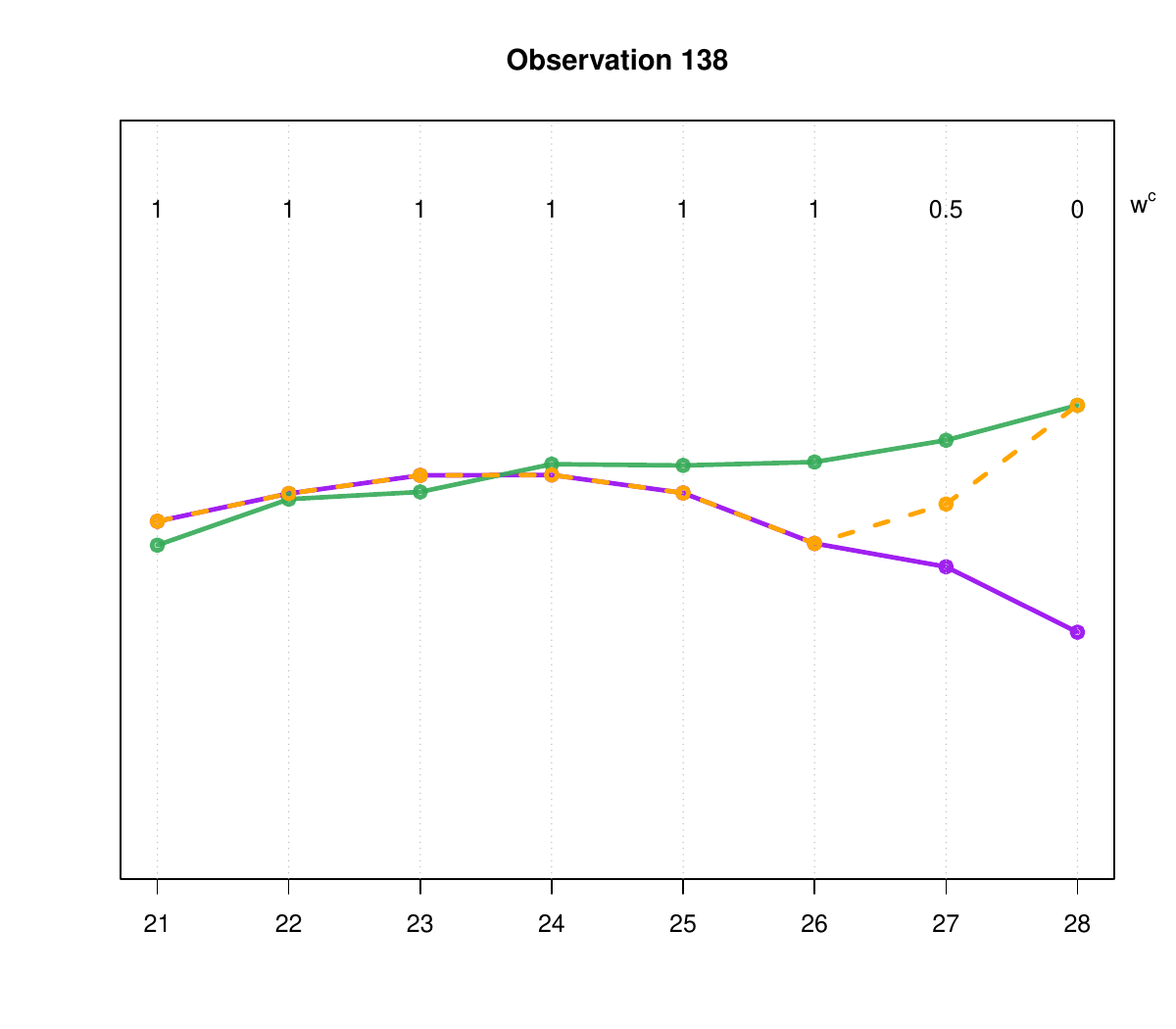}\\

\vspace{-6mm}
\caption{Observed (purple), fitted (green) and 
imputed (orange dashed) curves of cases 119 
and 138 of the Ionosphere dataset. Above each cell 
we see its cellwise weight.}
\label{fig:ionos_imp}
\end{figure}

\newpage
\section{Additional simulation results}
\label{app:addsim}

Figure~\ref{fig:results_p20_NA0} shows the median 
angle and MSE for the A09-type covariance model 
in the presence of either cellwise outliers, 
casewise outliers, or both, without NAs, this time 
for $p=20$. For $\gamma_{\cell} = 0$ no cellwise 
outliers were generated, and for 
$\gamma_{\case} = 0$ no casewise outliers.
Figure~\ref{fig:results_p20_NA0.2} shows the 
corresponding results when $\eps^{\obs}=20\%$ of 
randomly selected cells were made NA. All of 
these curves look a lot like those for $p=200$
in Section~\ref{sec:simulation} of the paper.\\

\vspace{5mm}

\begin{figure}[!ht]
\centering
\begin{tabular}{ccc}
   \large \textbf{Cellwise}  & \large \textbf{Casewise} &\large{\textbf{Casewise \& Cellwise}} \\
   [-4mm]
  \includegraphics[width=.3\textwidth]
  {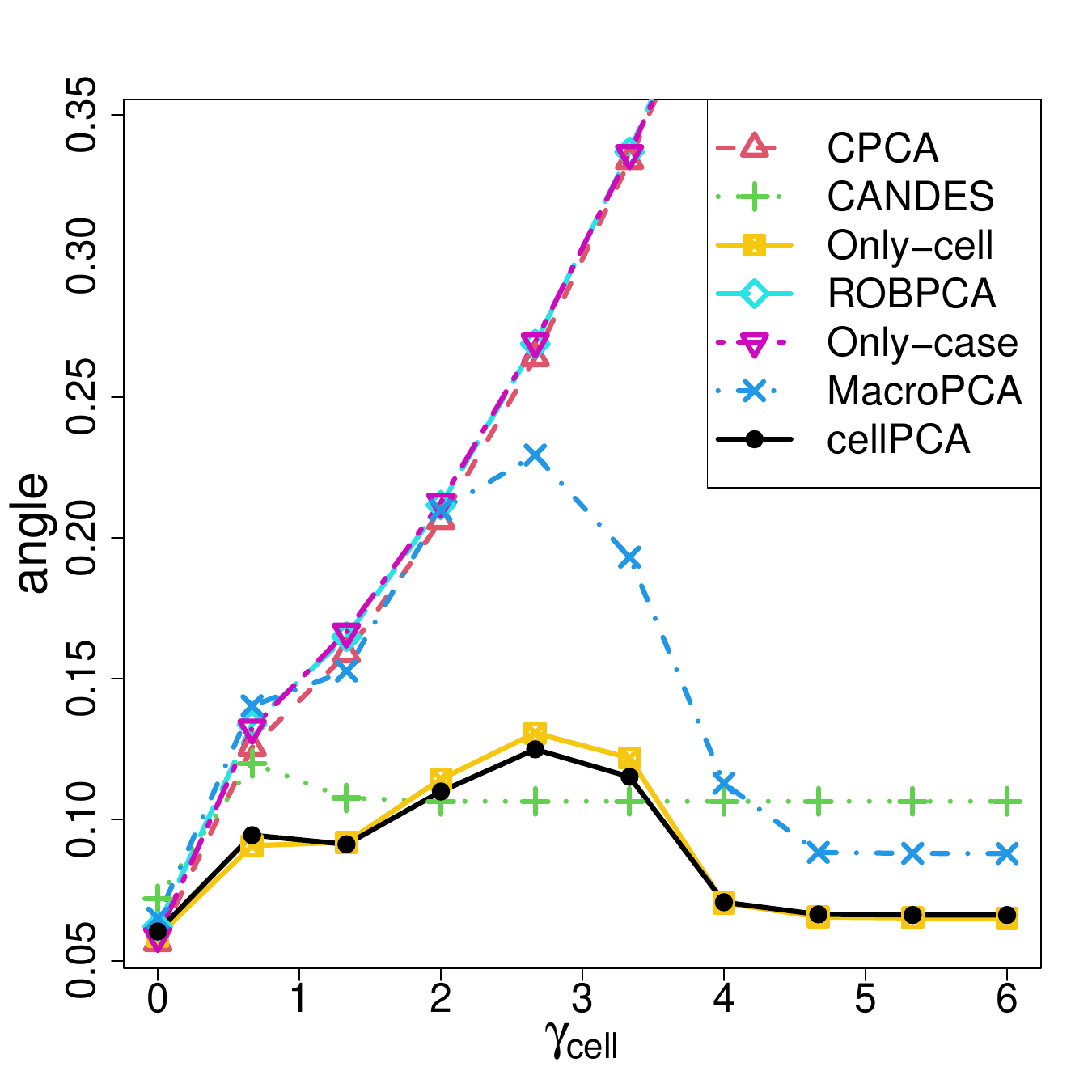} &\includegraphics[width=.3\textwidth]
  {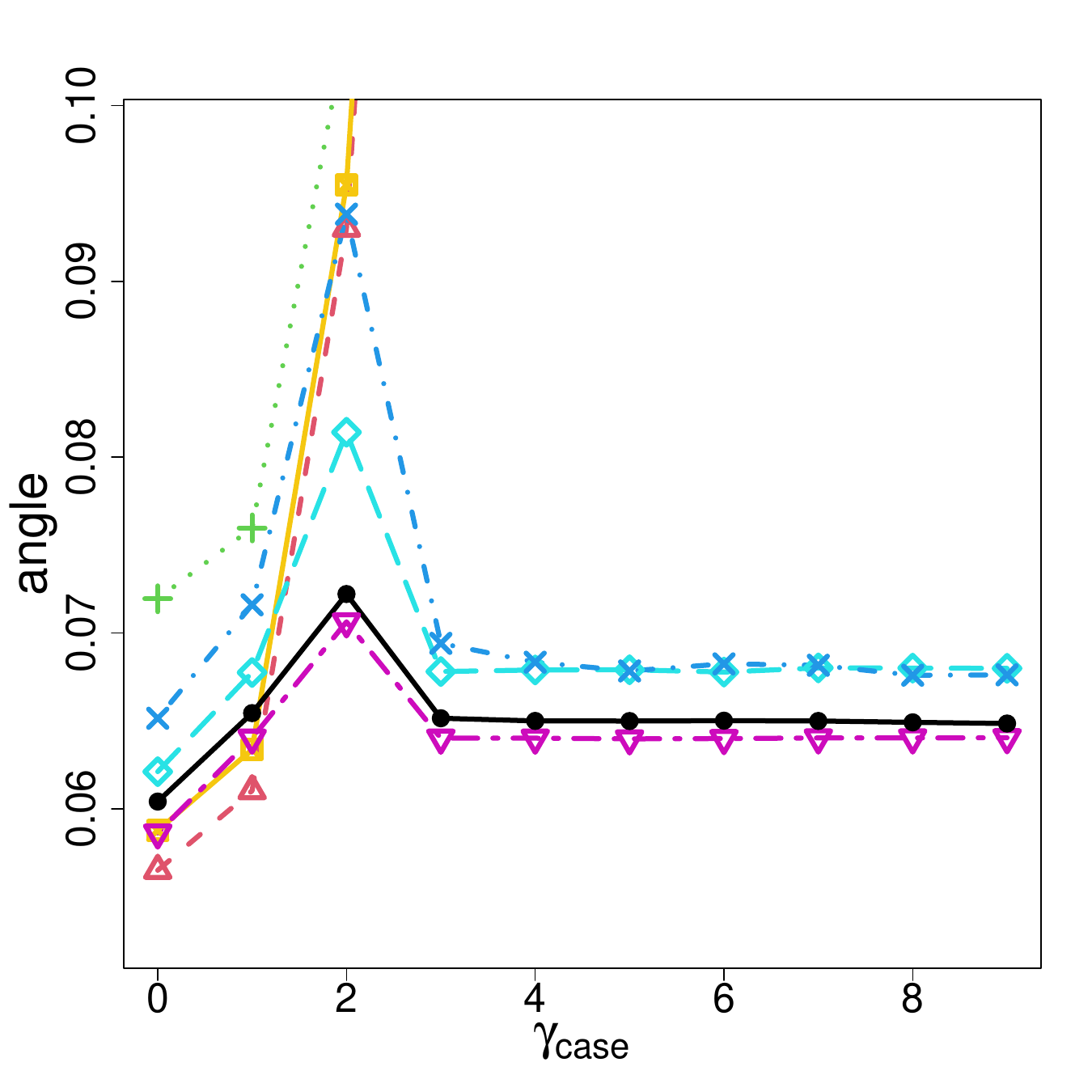} &\includegraphics[width=.3\textwidth]
  {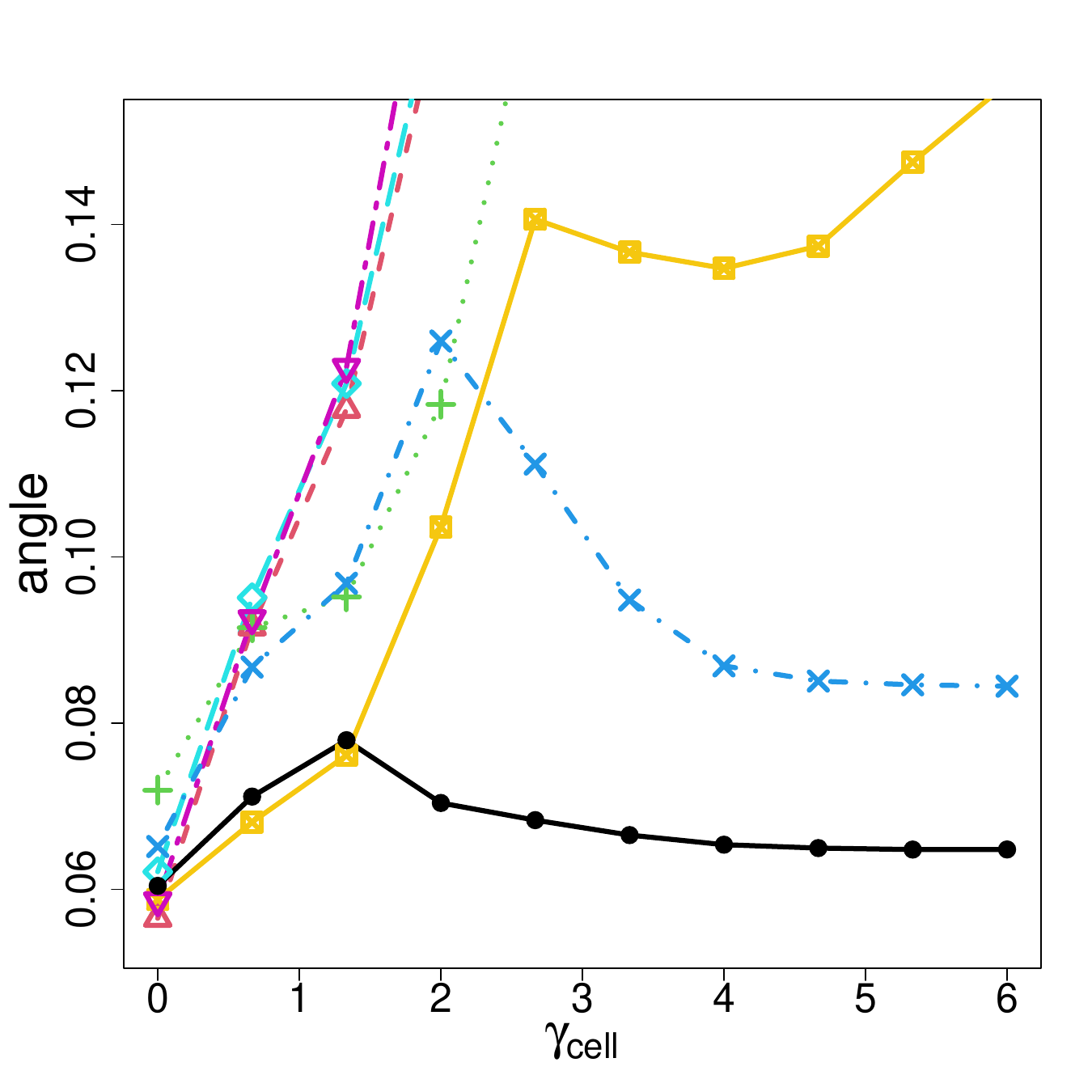}  \\
   [-4mm]
  \includegraphics[width=.3\textwidth]
  {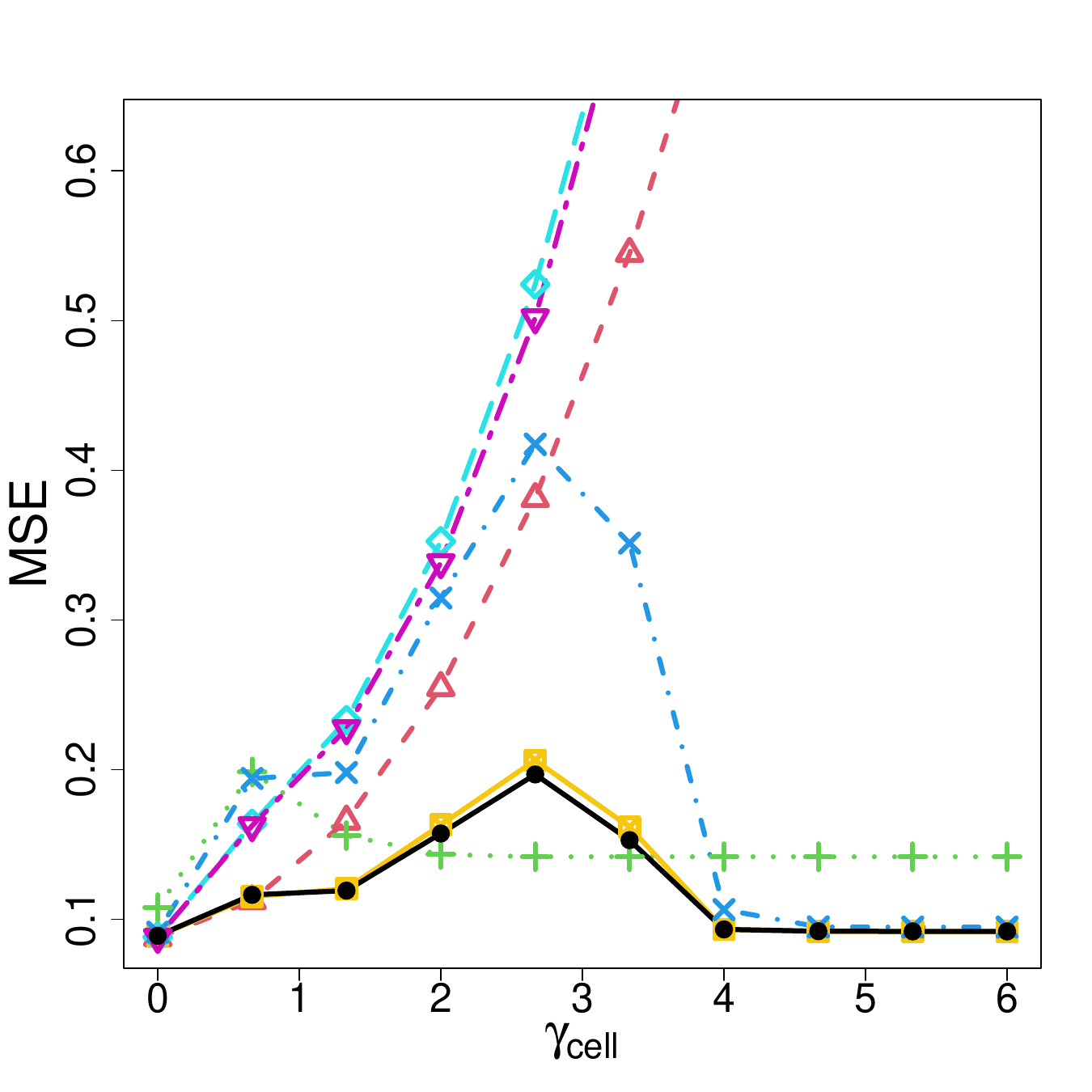} &\includegraphics[width=.3\textwidth]
  {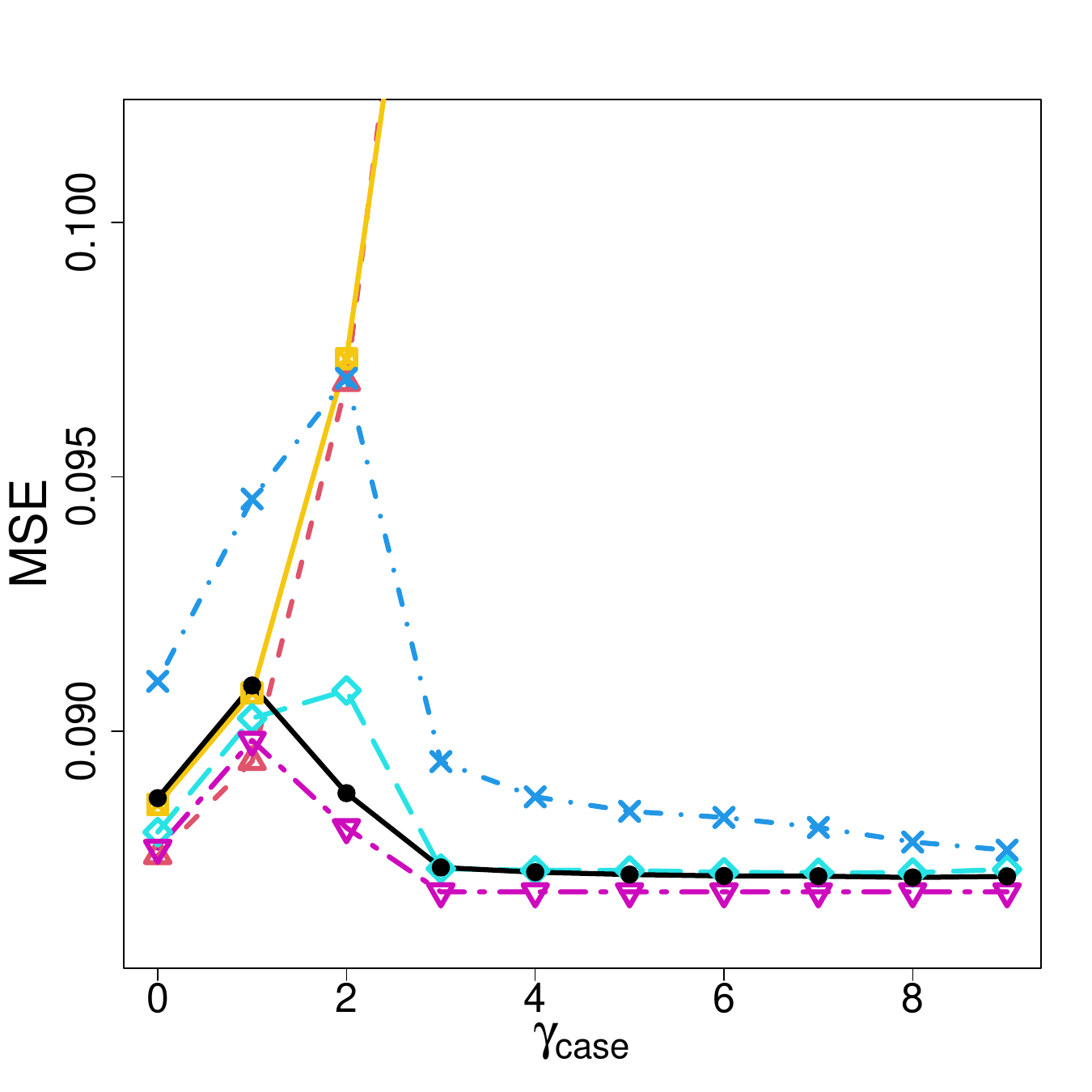} &\includegraphics[width=.3\textwidth]
  {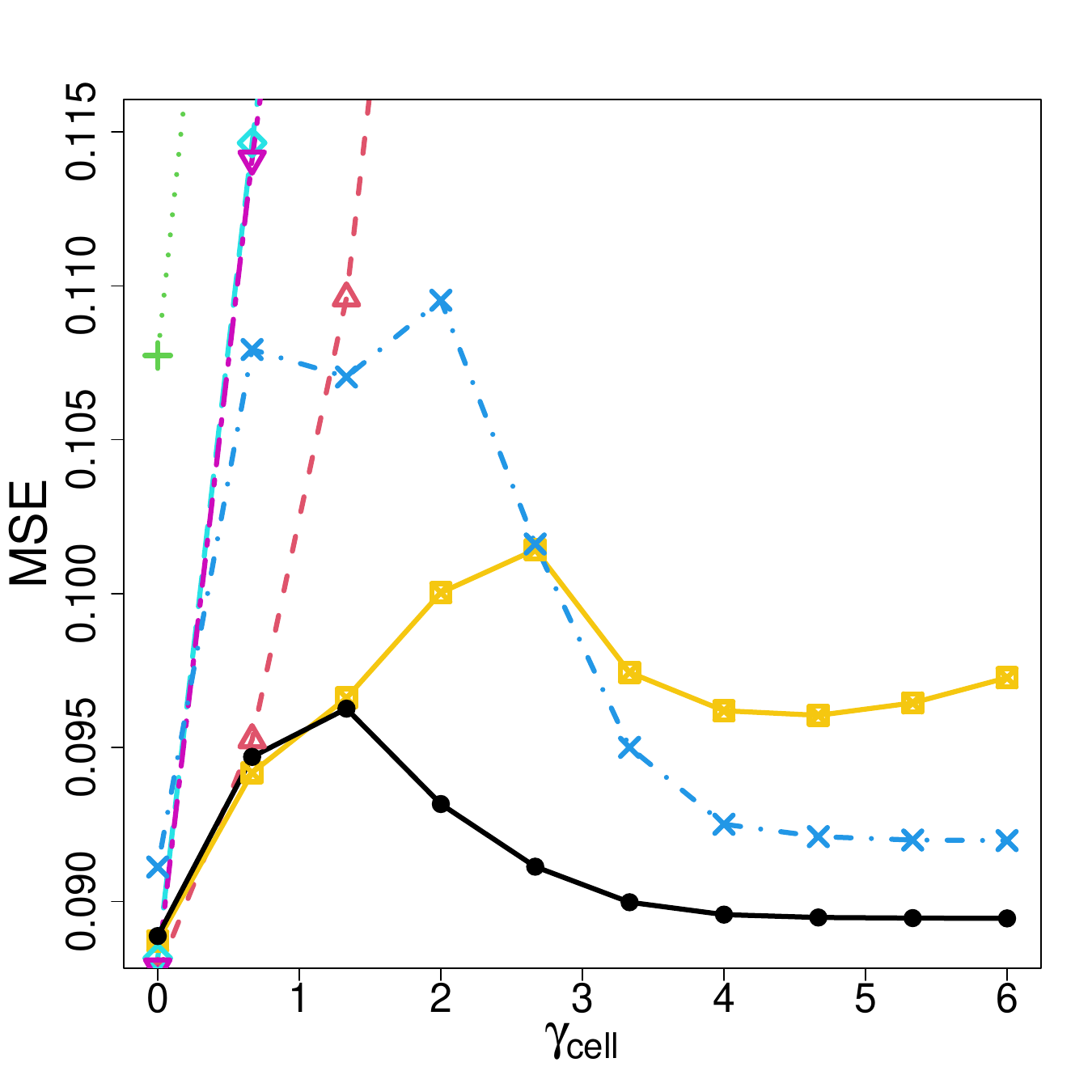} 
\end{tabular}
\caption{Median angle (top) and MSE (bottom) attained by CPCA, CANDES, Only-cell, ROBPCA, Only-case, MacroPCA, and cellPCA in the presence of either cellwise outliers, casewise outliers, or both. The covariance model was A09 with $n=100$ and $p=20$, without NAs.}
\label{fig:results_p20_NA0}
\end{figure}

\begin{figure}[!ht]
\centering
\begin{tabular}{ccc}
   \large \textbf{Cellwise}  & \large \textbf{Casewise} &\large{\textbf{Casewise \& Cellwise}} \\
   [-4mm]
  \includegraphics[width=.3\textwidth]
  {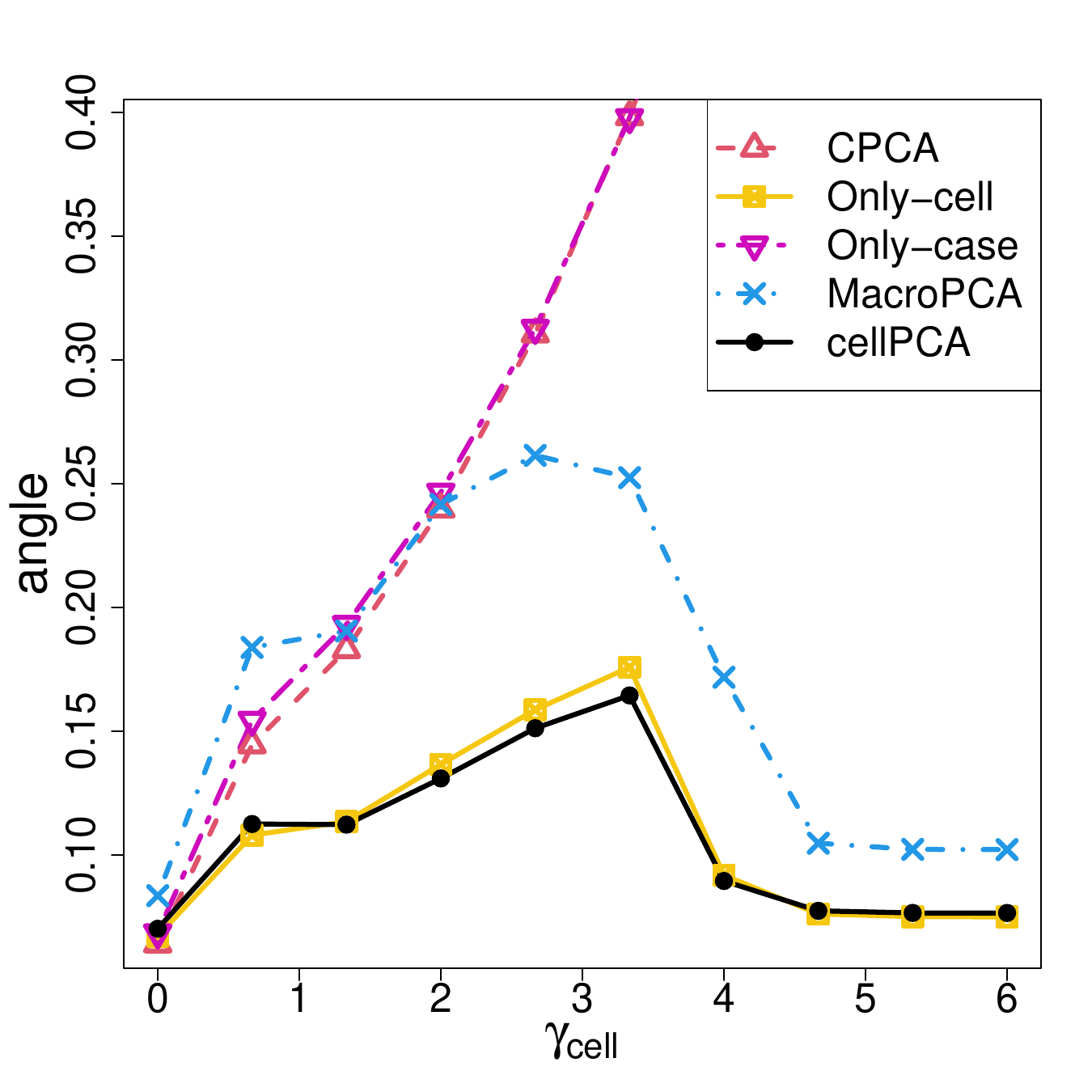} &\includegraphics[width=.3\textwidth]
  {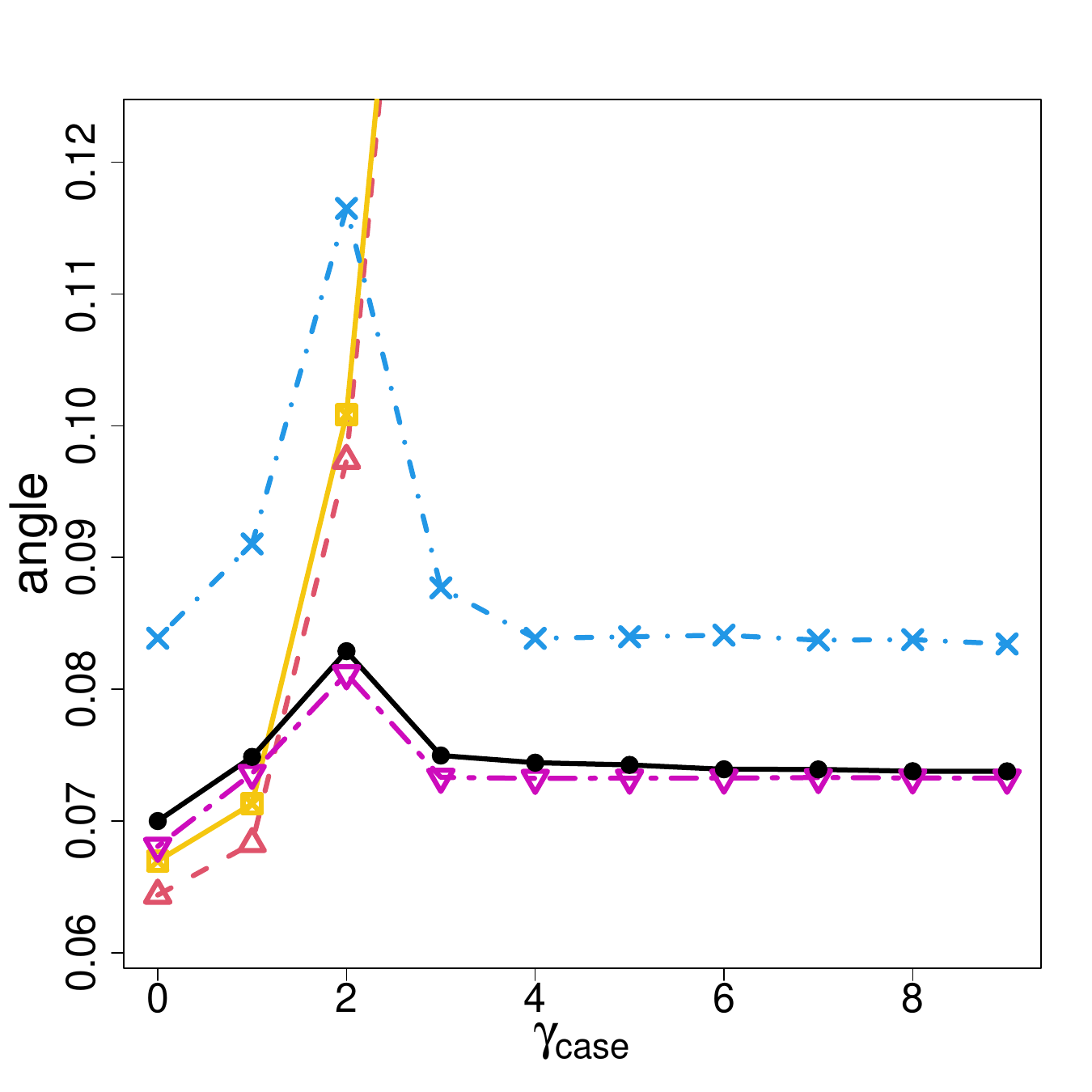} &\includegraphics[width=.3\textwidth]
  {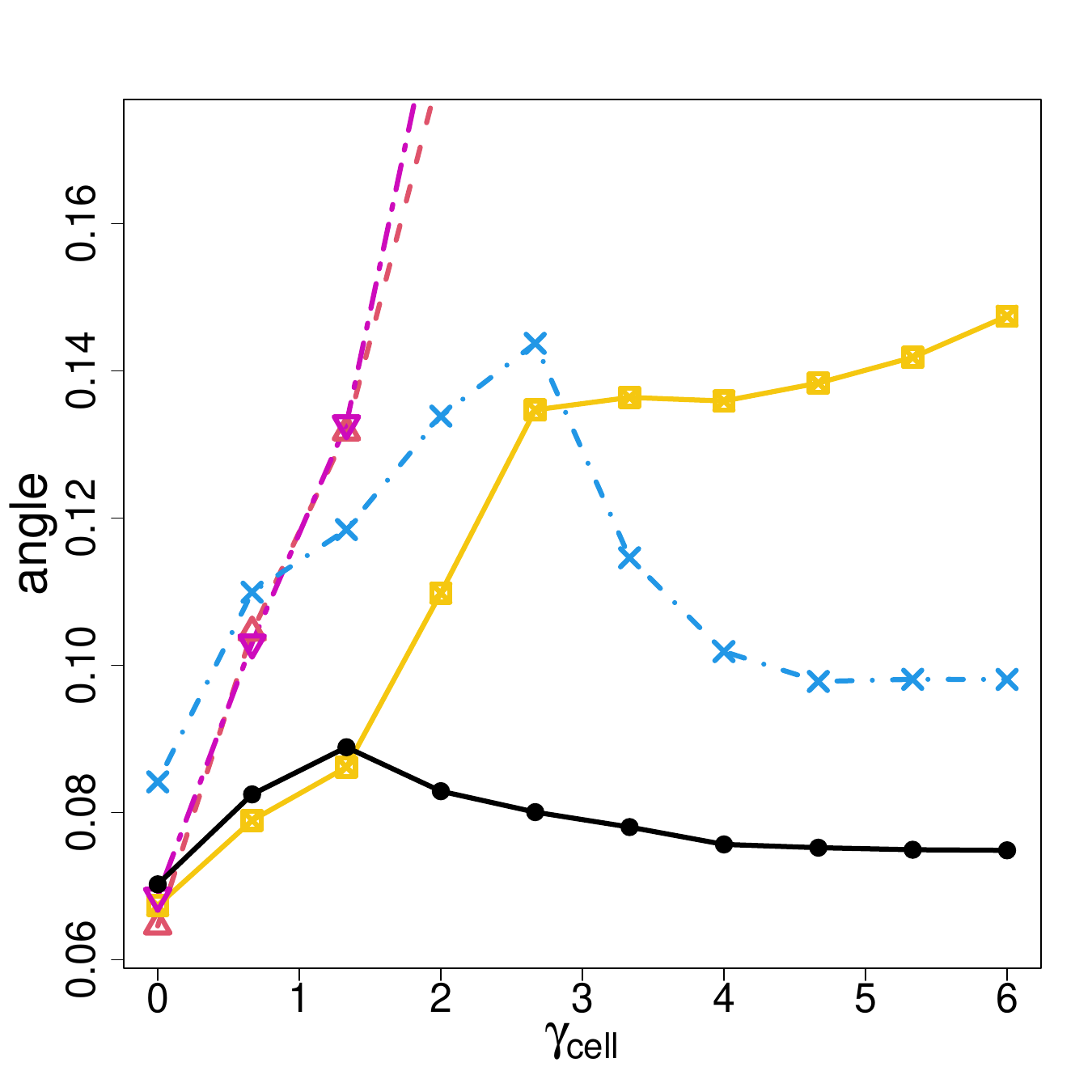}  \\
   [-4mm]
  \includegraphics[width=.3\textwidth]
  {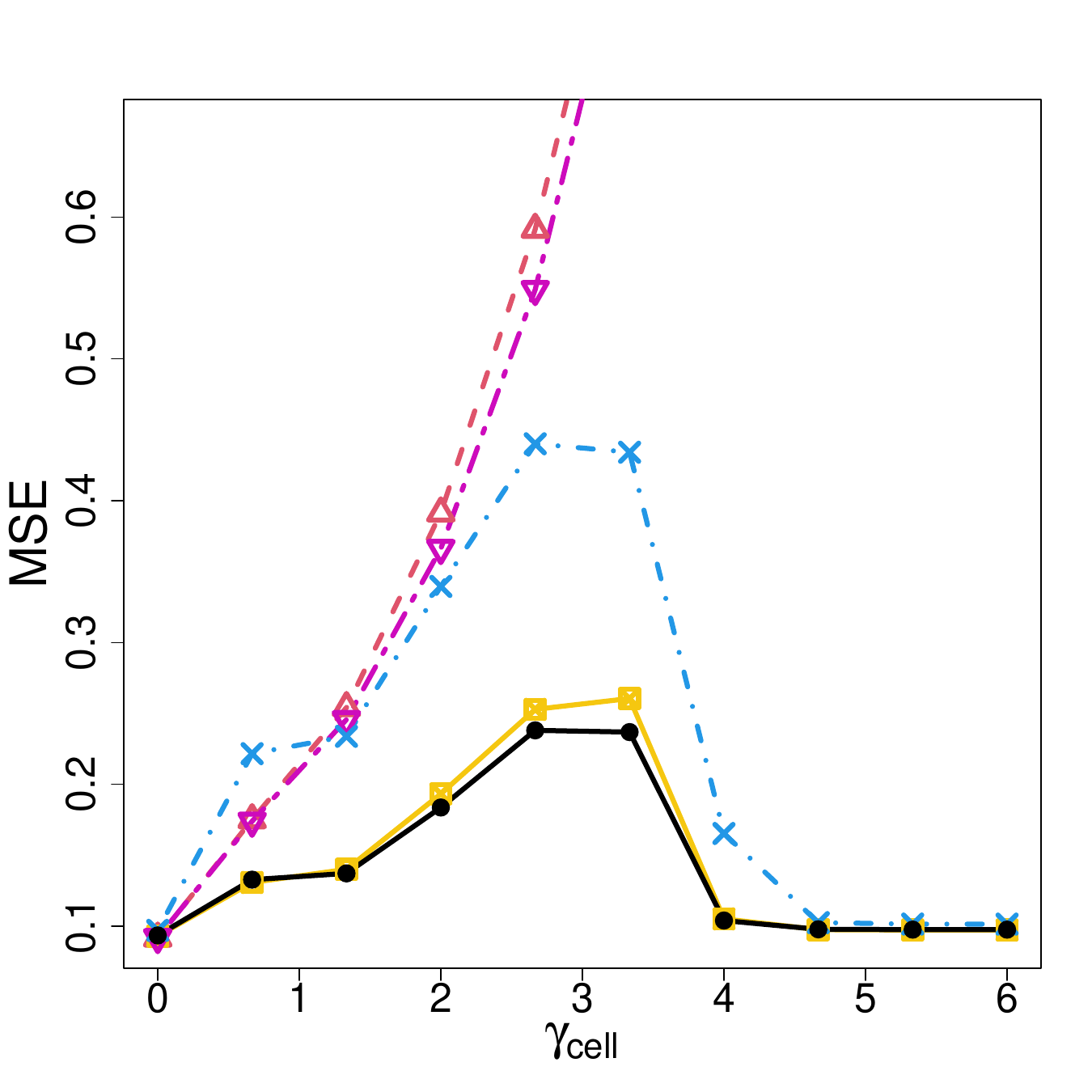} &\includegraphics[width=.3\textwidth]
  {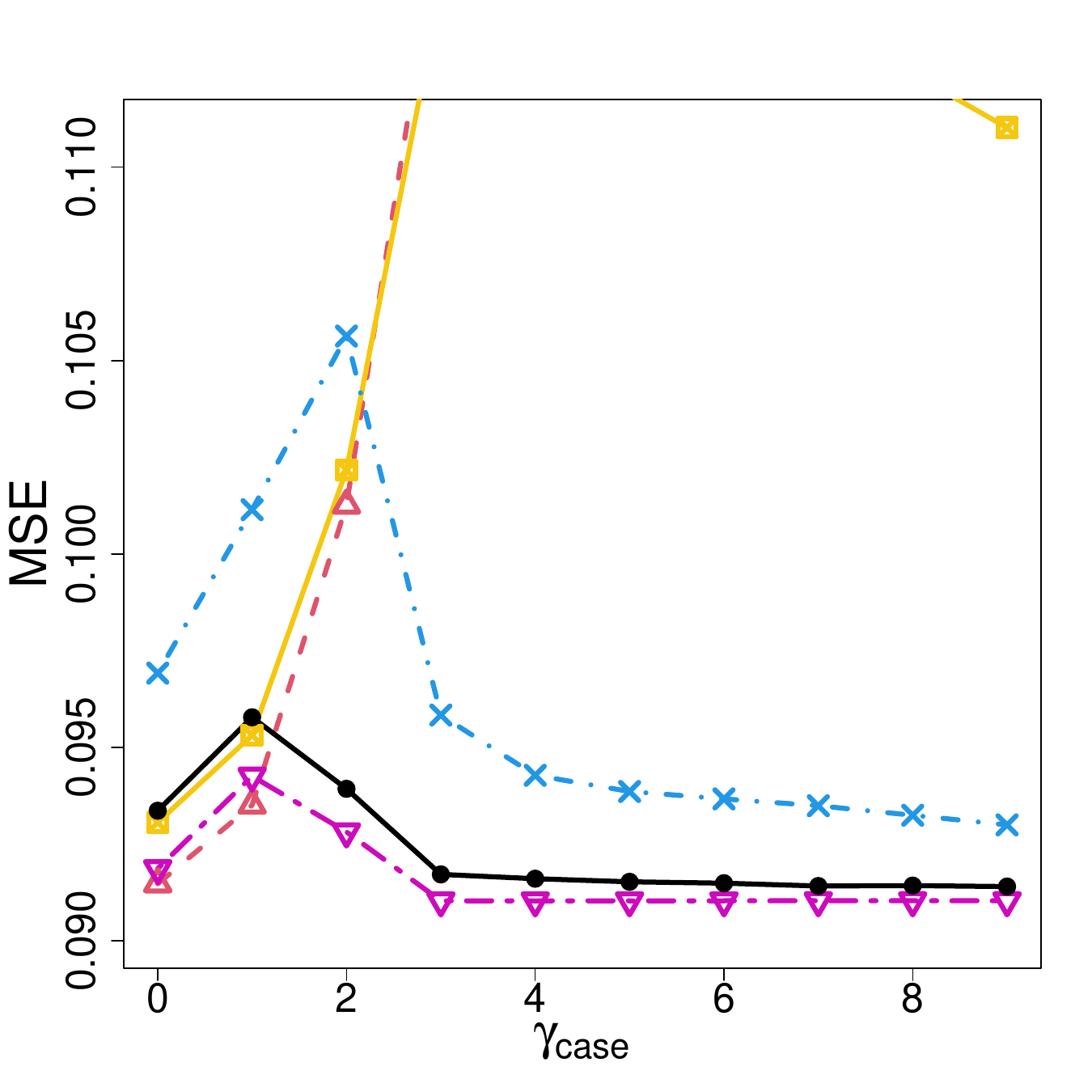} &\includegraphics[width=.3\textwidth]
  {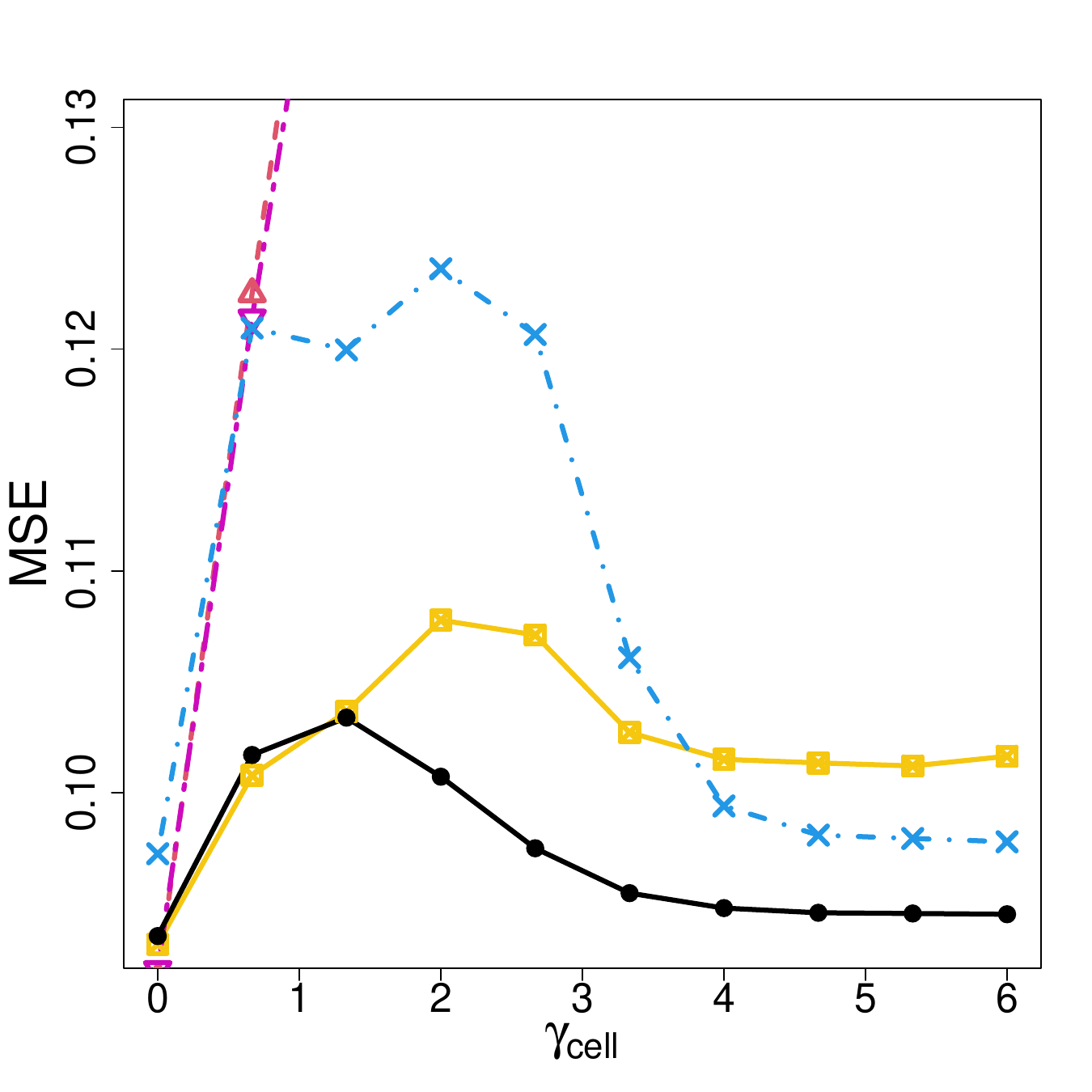} 
\end{tabular}
\caption{Median angle (top) and MSE (bottom) attained by CPCA, Only-cell, Only-case, MacroPCA, and cellPCA in the presence of either cellwise outliers, casewise outliers, or both. The covariance model was A09 with $n=100$ and $p=20$, and $20\%$ of randomly selected cells were set to NA.}
\label{fig:results_p20_NA0.2}
\end{figure}

\clearpage

The second type of covariance matrix is based on the 
random correlation matrices of 
\cite{agostinelli2015robust} and will be called ALYZ. 
These correlation matrices are turned into covariance 
matrices with other eigenvalues. More specifically, 
the matrix $\bL$ in the spectral decomposition 
$\bO \bL \bO^T$ of the correlation matrix is replaced 
by\linebreak 
$\diag(9.57, 6.70, 0.11, 0.10, \dots, 0.10)$ 
for $p=20$ and by
$\diag(104.86, 73.41, 0.11, 0.10, \dots, 0.10)$
for $p=200$. These are the same eigenvalues as
used in Figure~\ref{fig:results_p200_NA0}  
and Figures
\ref{fig:results_p20_NA0}-\ref{fig:results_p20_NA0.2}, 
but now the directions of the eigenvectors vary a lot. 
Again the first two components explain 90\% of 
the variance, and we set $\rk=2$ in the simulation.

Again three contamination types are considered. In 
the cellwise outlier scenario we randomly replace 
$\eps^{\cell}=20\%$ of the cells $x_{ij}$ with 
$\gamma_{\cell}\sigma_j$, where $\gamma_{\cell}$ 
varies from 0 to 6 and
$\sigma_j^2$ is the $j$th diagonal element of 
$\bSigma$. Note that the diagonal entries of
this $\bSigma$ are no longer~1 as in A09.
In the casewise outlier setting 
$\eps^{\case}=20\%$ of the cases are
generated from 
$N\left(\gamma_{\case} (\be_1+\be_{\rk+1}), 
\bSigma/1.5\right)$, where  
$\be_1$ and $\be_{\rk+1}$ 
are the first and $(\rk+1)$-th eigenvectors of 
$\bSigma$, and $\gamma_{\case}$ varies from 0 to 9 
when $p=20$, and from 0 to 24 when $p=200$.
In the third scenario, the data is contaminated by
$\eps^{\cell}=10\%$ of cellwise outliers as well 
as $\eps^{\case}=10\%$ of casewise outliers. Here 
$\gamma_{\case}=1.5\gamma_{\cell}$ when $p=20$ and 
$\gamma_{\case}=4\gamma_{\cell}$ when $p=200$, where 
$\gamma_{\cell}$ again varies from 0 to 6. When 
$\gamma_{\cell}=\gamma_{\case}=0$ we do not 
contaminate the data.

Repeating the entire simulation with the ALYZ 
covariance model instead of A09 yields 
Figure~\ref{fig:results_NA0_ALYZ} without NAs, and
Figure~\ref{fig:results_NA0.2_ALYZ} with NAs.
These figures are qualitatively similar to those
for A09, and yield the same conclusions.

\begin{figure}[!ht]
\centering
\begin{tabular}{ccc}
   \large \textbf{Cellwise}  & \large \textbf{Casewise} &\large{\textbf{Casewise \& Cellwise}} \\
   [-4mm]
  \includegraphics[width=.3\textwidth]
  {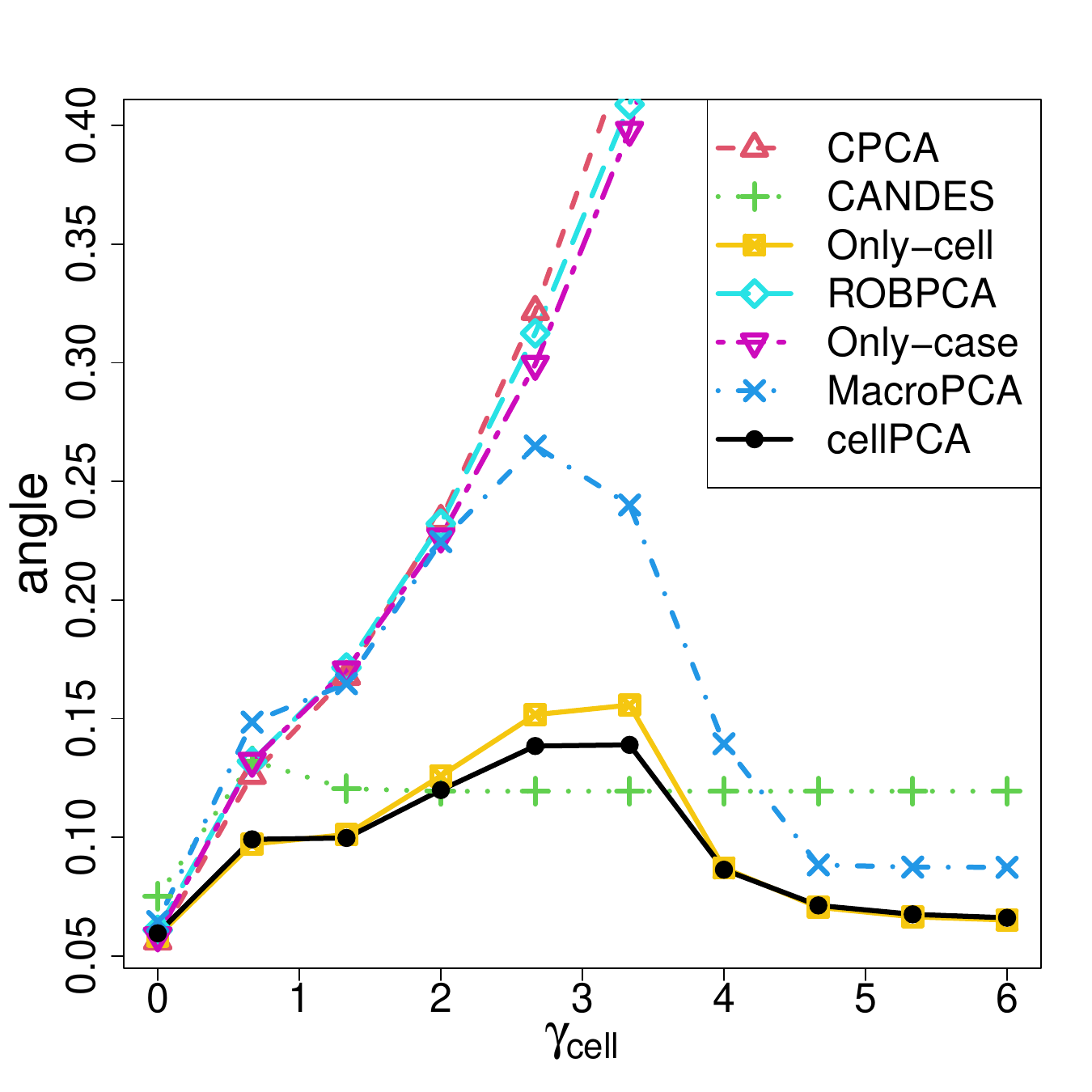} &\includegraphics[width=.3\textwidth]
  {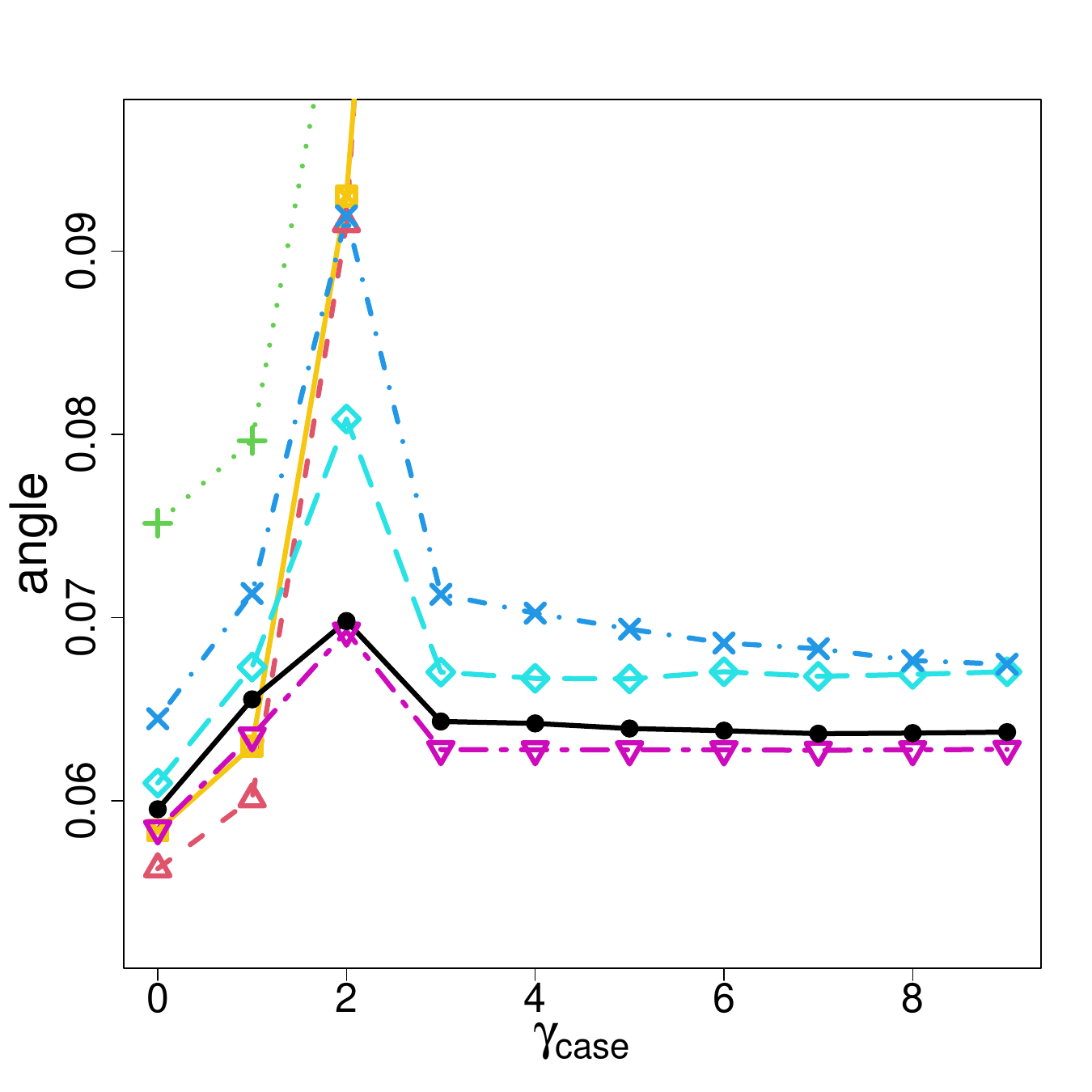} &\includegraphics[width=.3\textwidth]
  {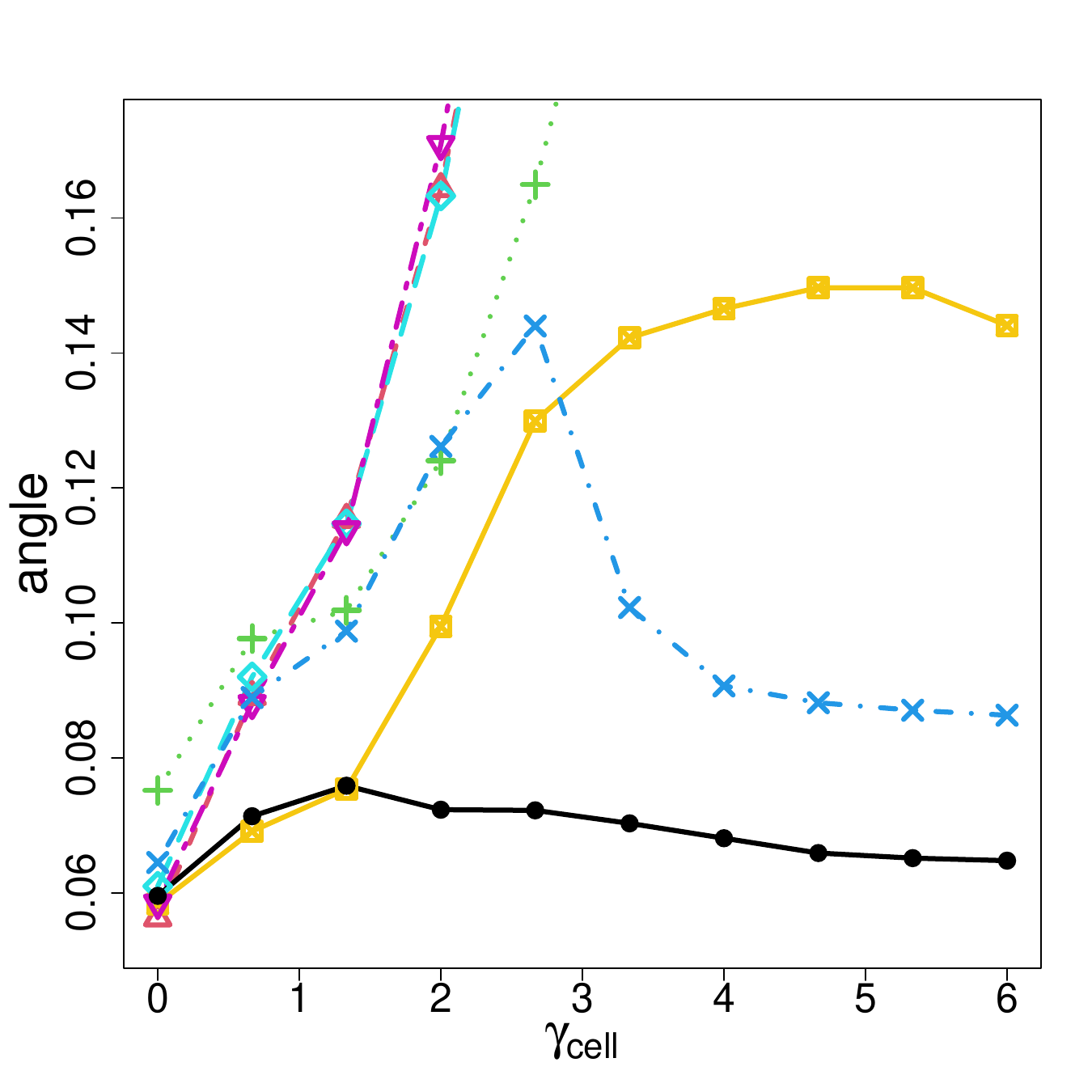}  \\
   [-4mm]
  \includegraphics[width=.3\textwidth]
  {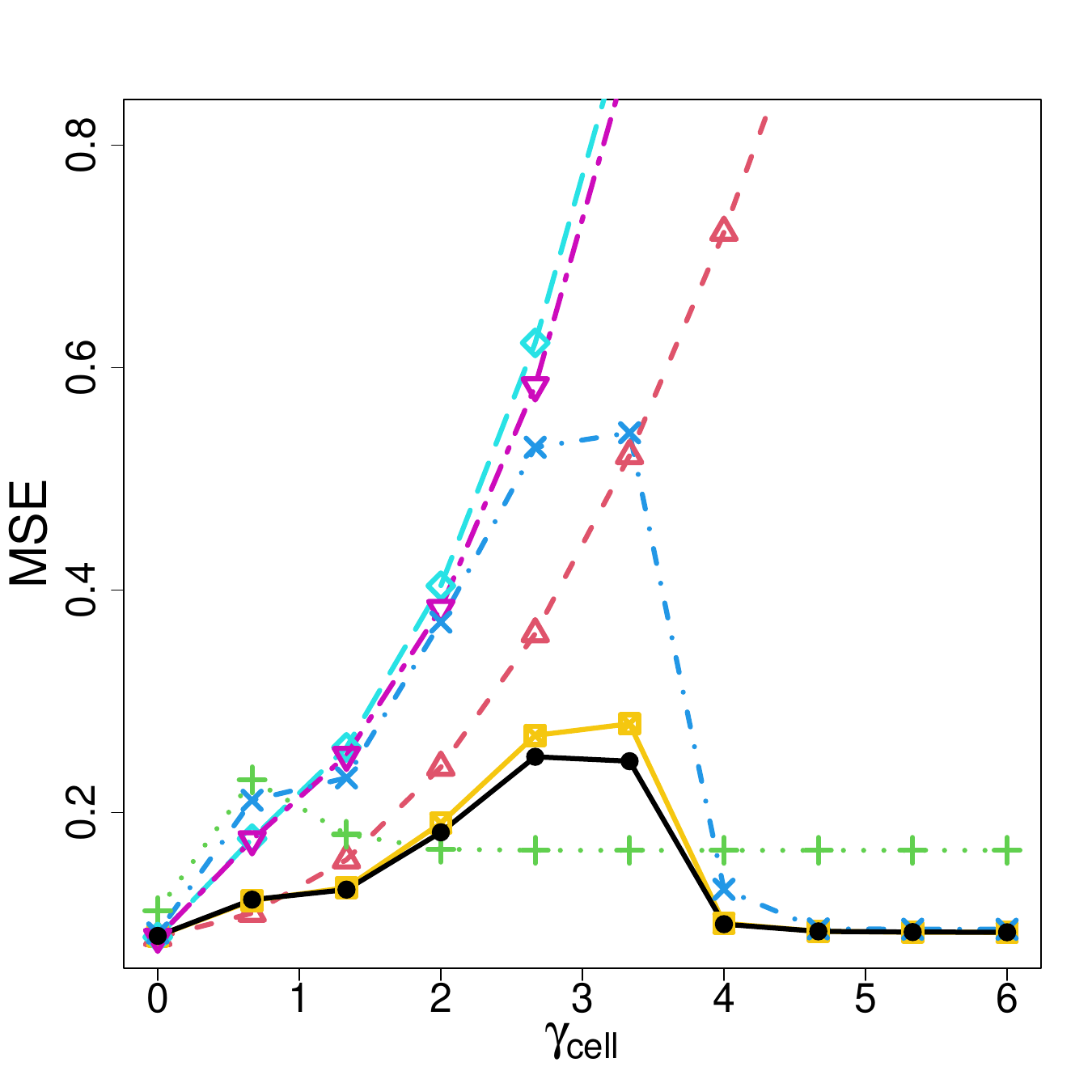} &\includegraphics[width=.3\textwidth]
  {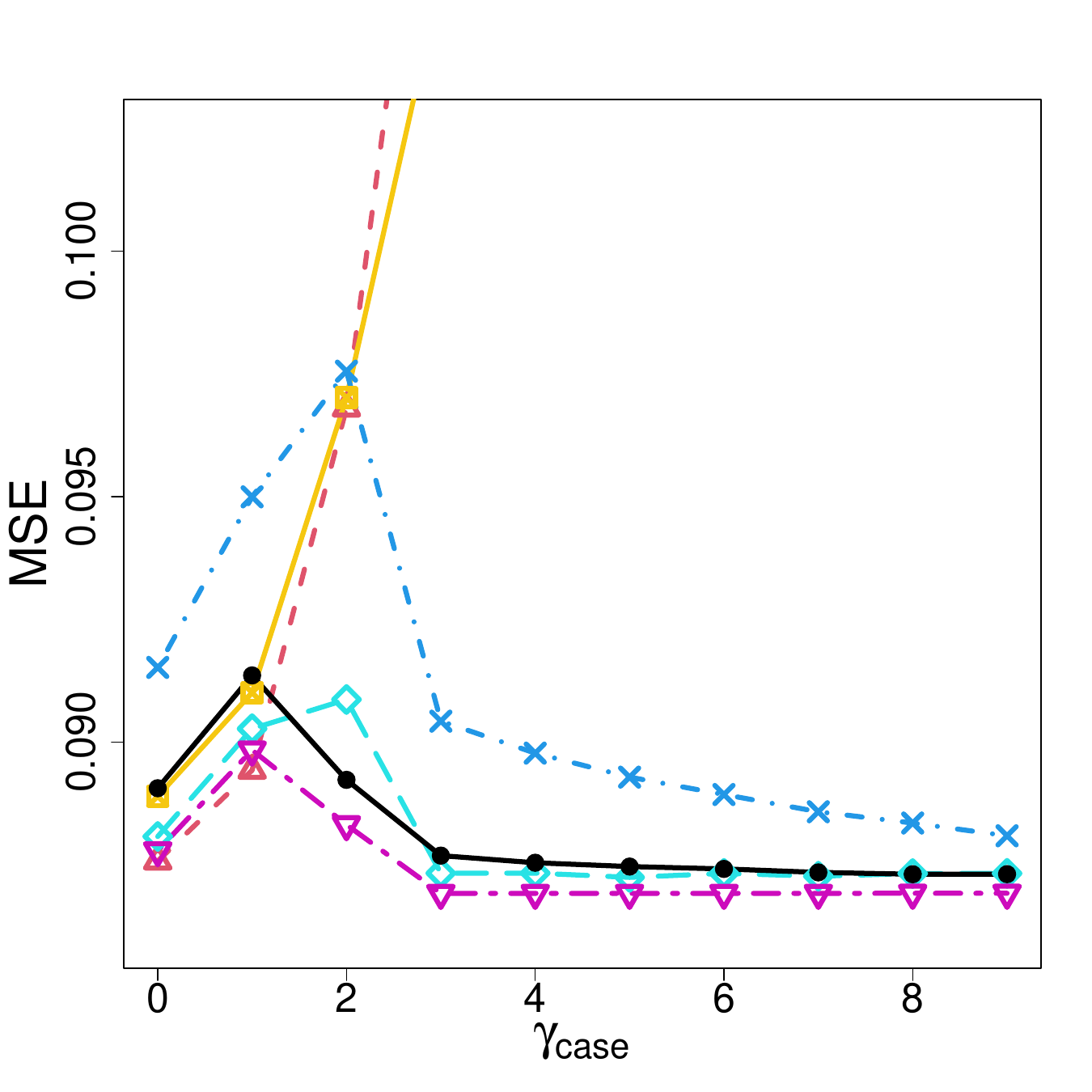} &\includegraphics[width=.3\textwidth]  {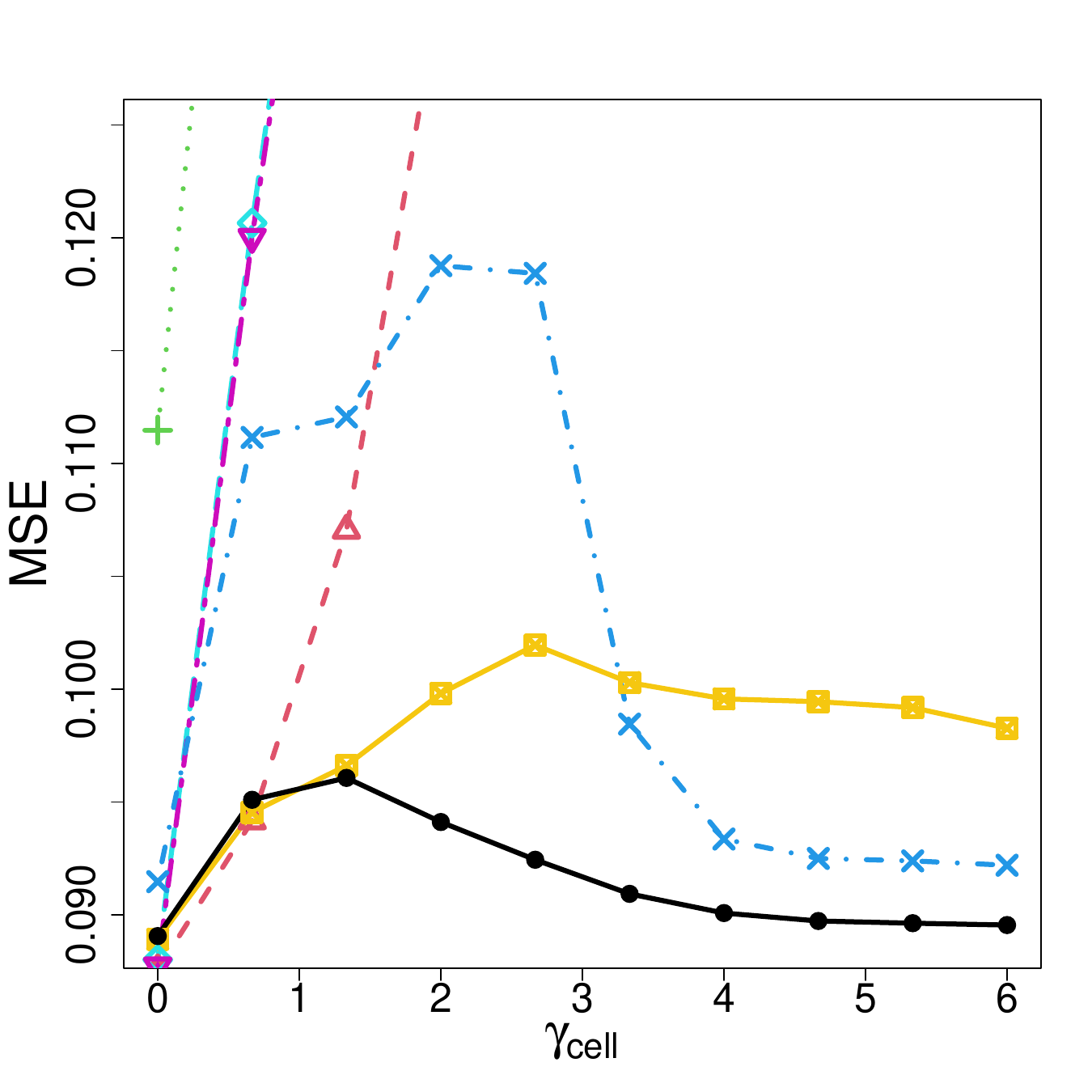} \\

  \includegraphics[width=.3\textwidth]
  {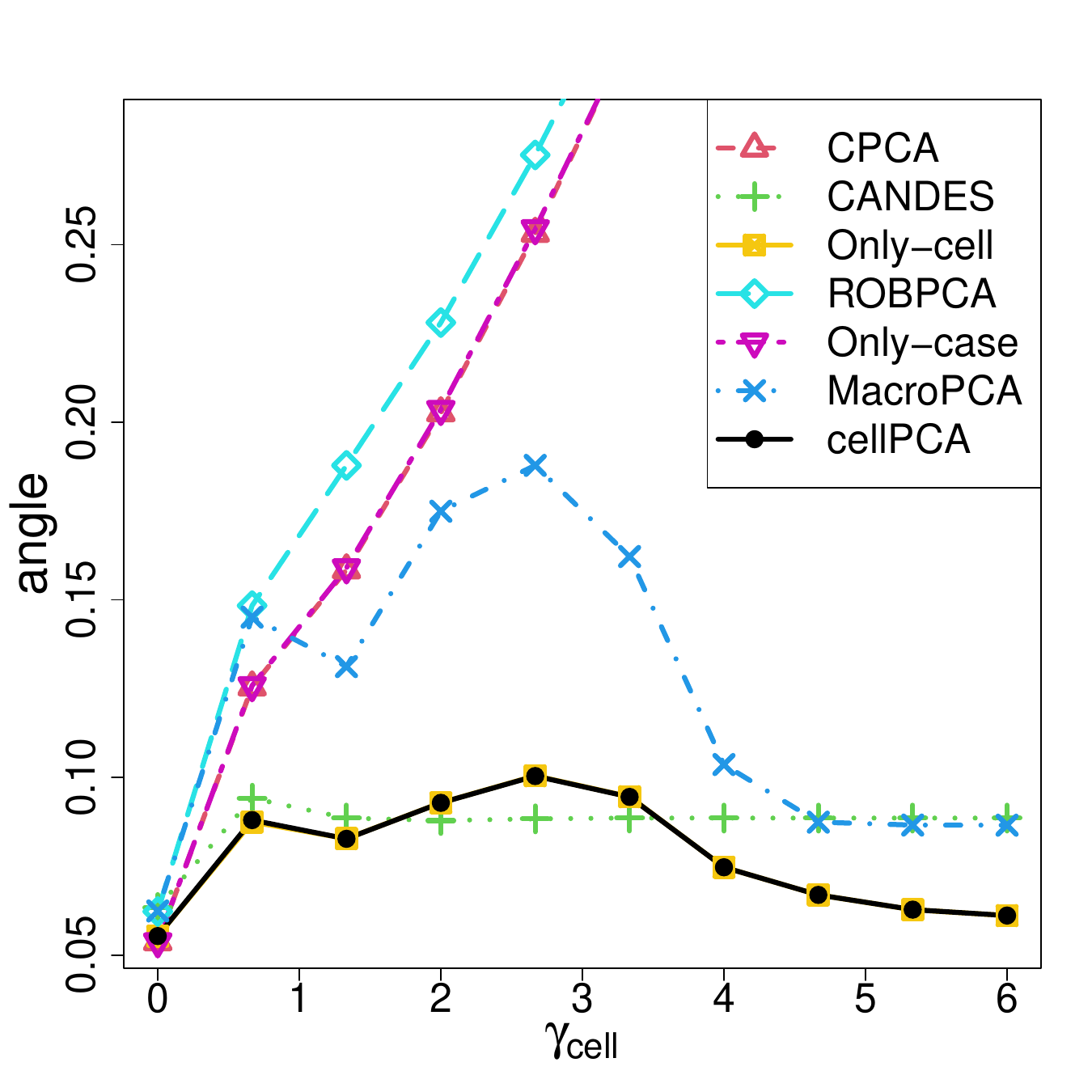} &\includegraphics[width=.3\textwidth]
  {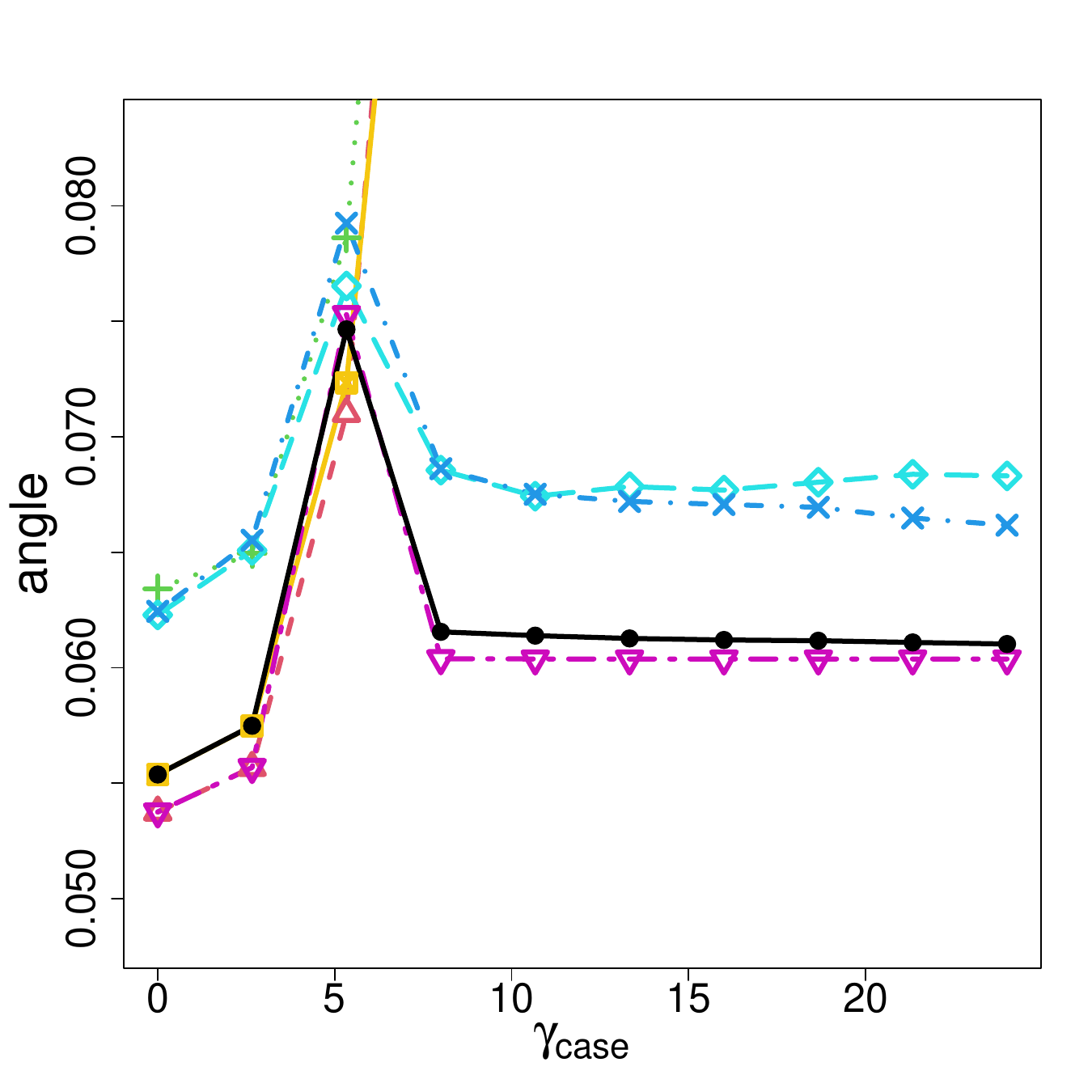} &\includegraphics[width=.3\textwidth]
  {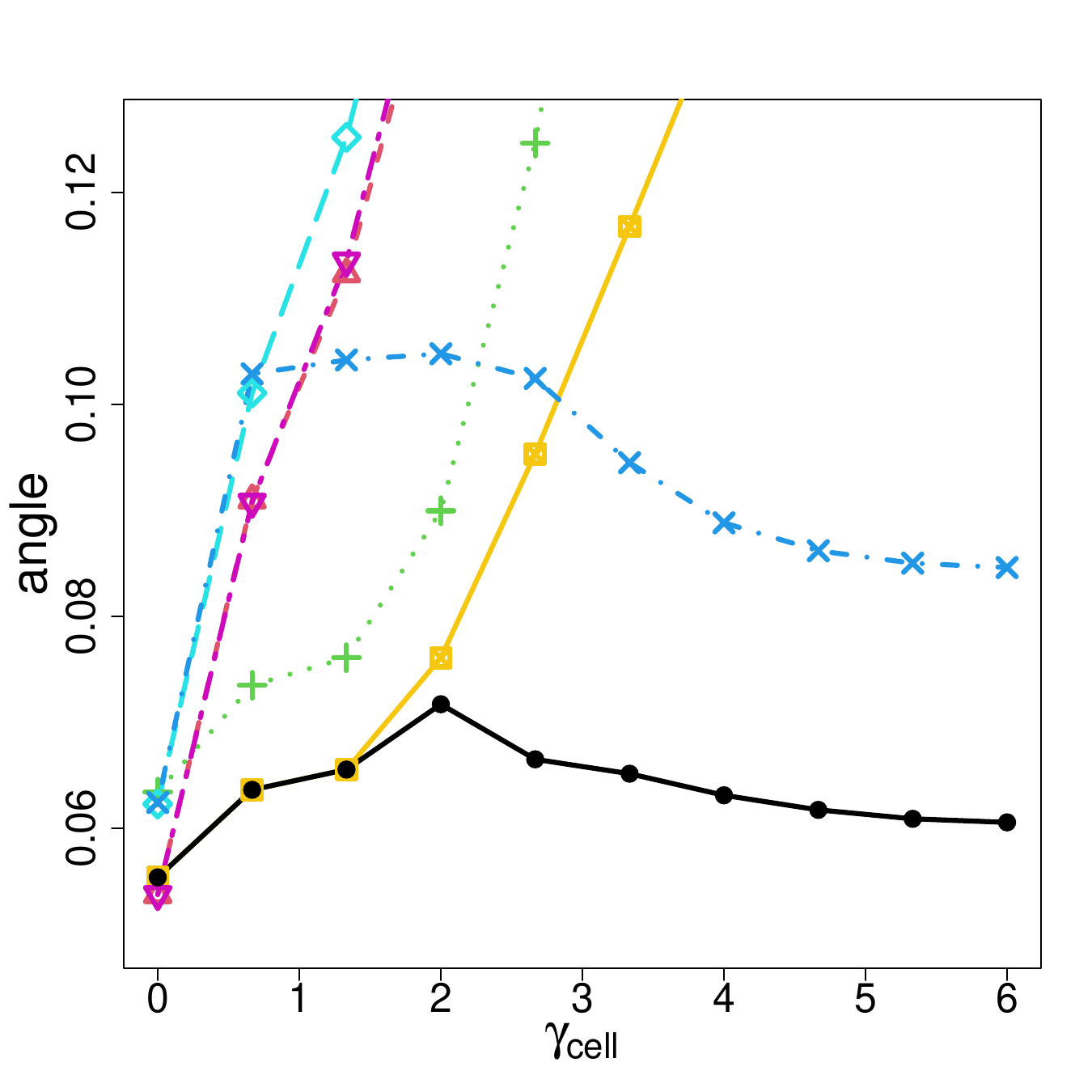}  \\
   [-4mm]
  \includegraphics[width=.3\textwidth]
  {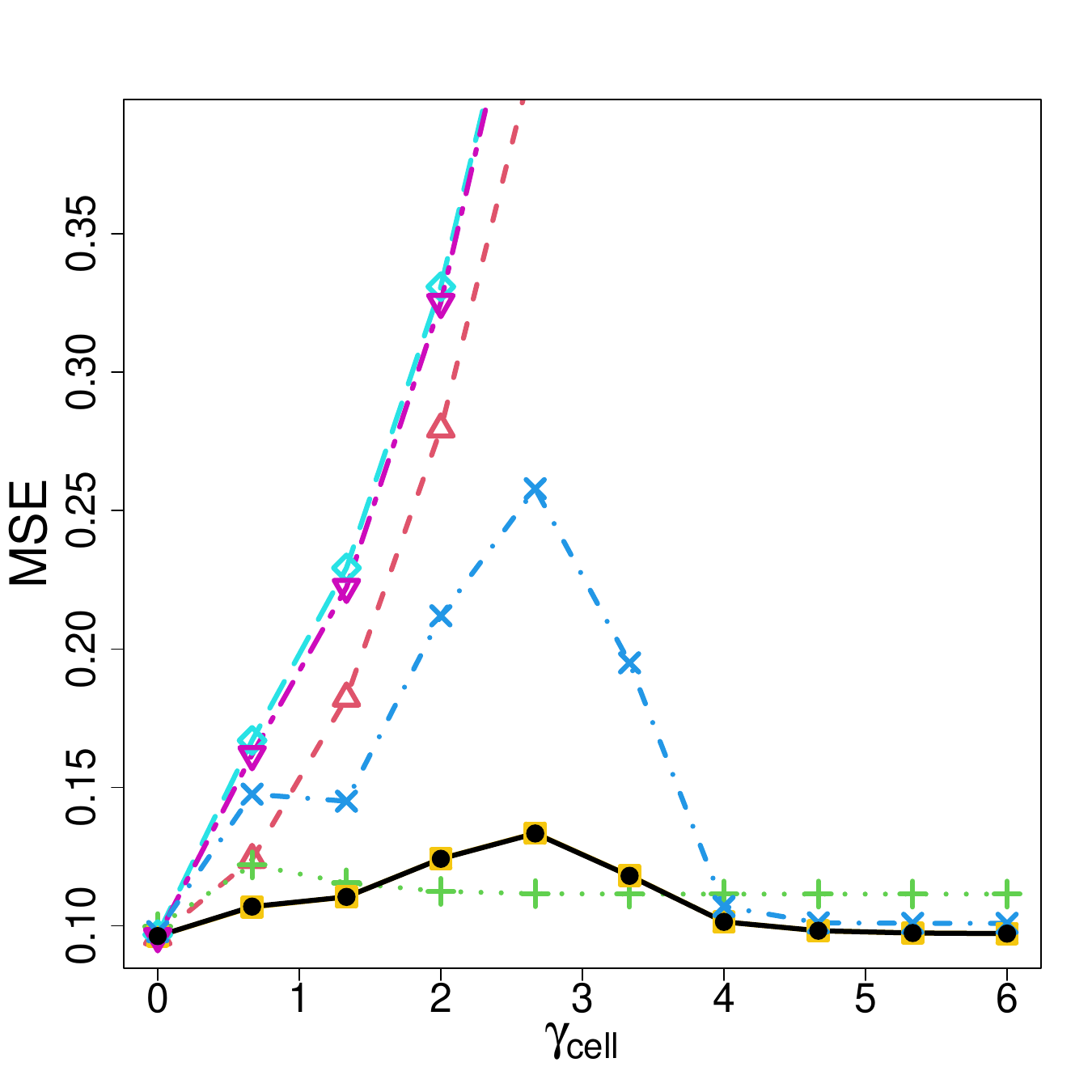} &\includegraphics[width=.3\textwidth]
  {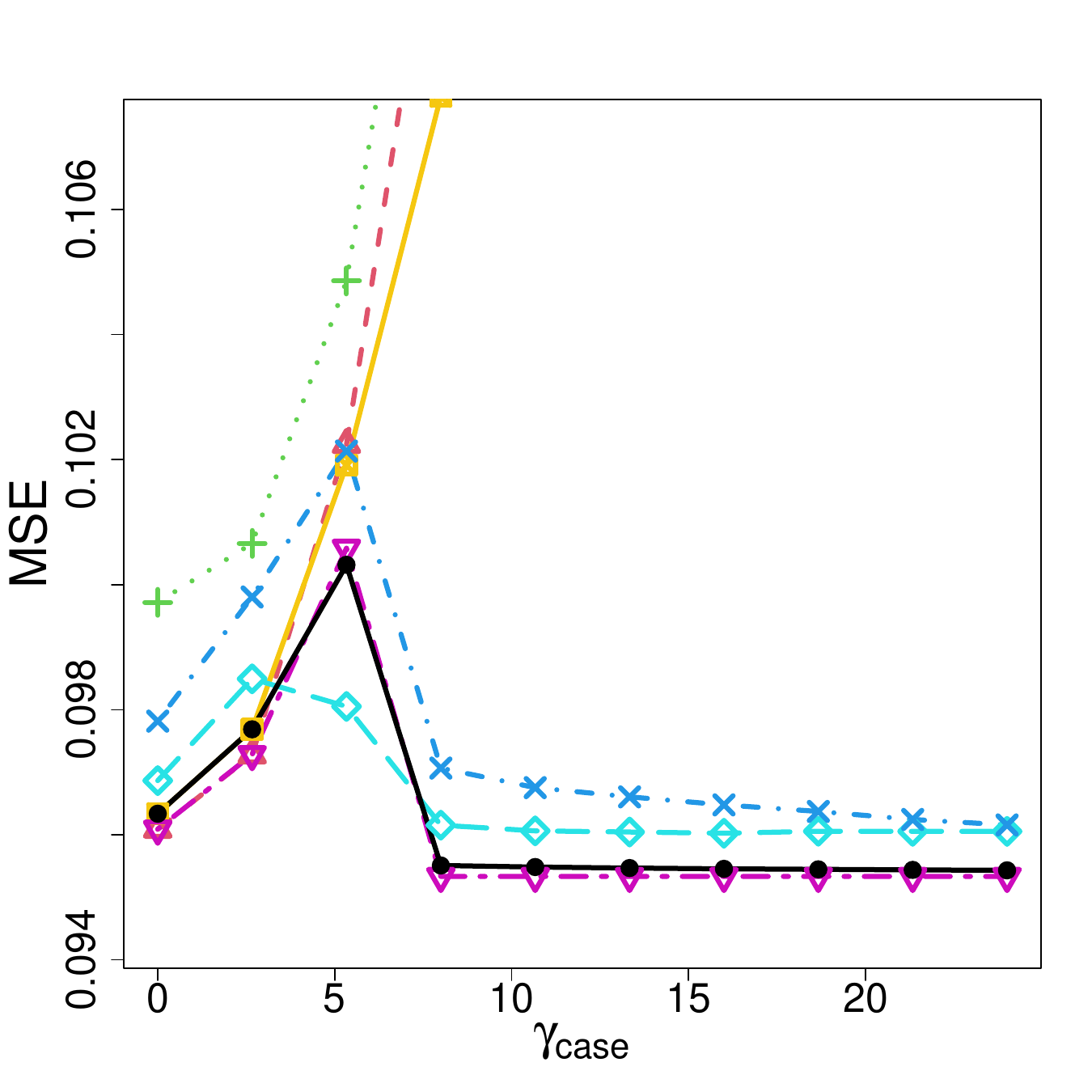} &\includegraphics[width=.3\textwidth]
  {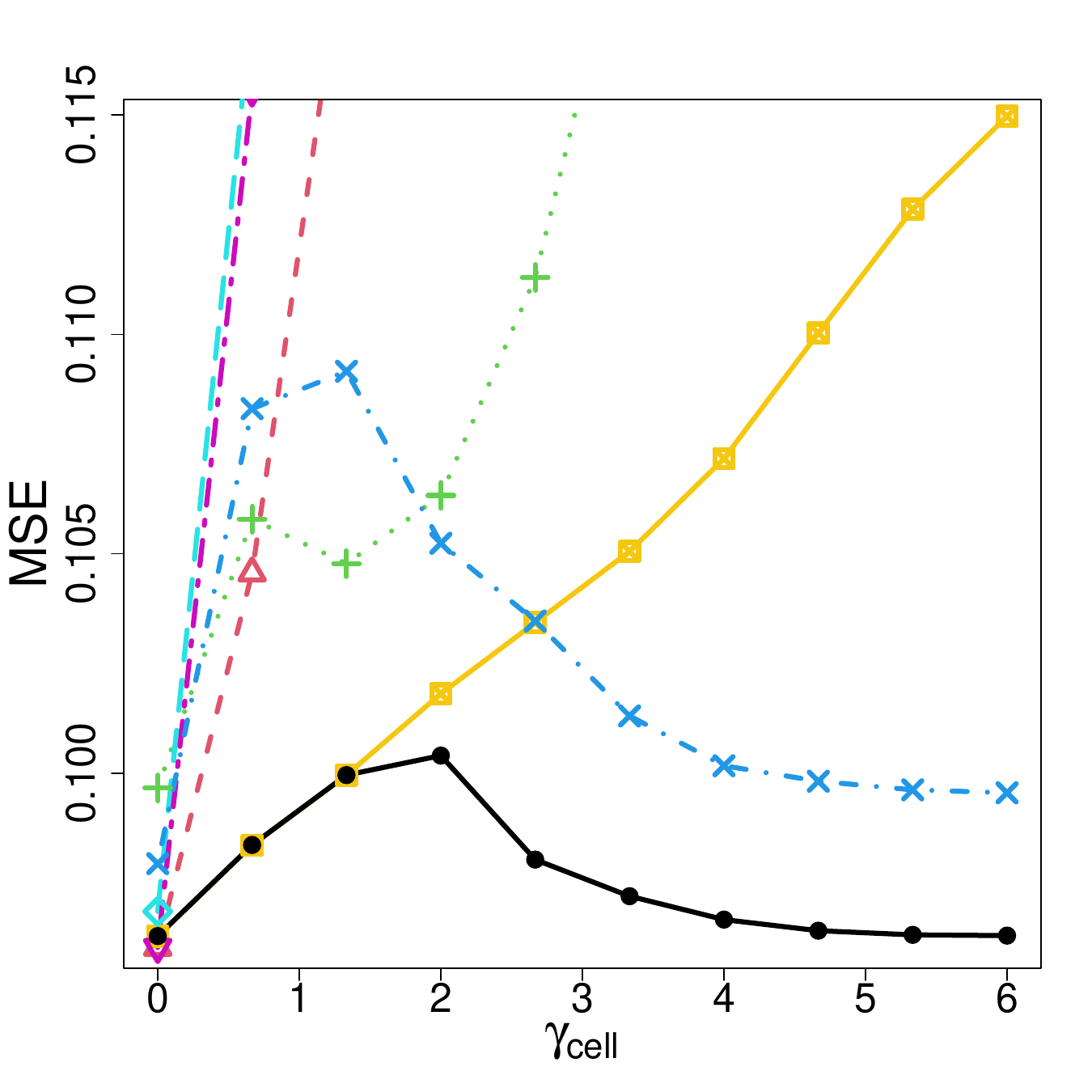} 
\end{tabular}
\caption{Median angle and MSE attained by CPCA, CANDES, Only-cell, ROBPCA, Only-case, MacroPCA, and cellPCA in the presence of either cellwise outliers, casewise outliers, or both. The covariance model was ALYZ with $n=100$, without NAs. The top two rows are for $p=20$, and the bottom two rows for $p=200$.}
\label{fig:results_NA0_ALYZ}
\end{figure}

\begin{figure}[!ht]
\centering
\begin{tabular}{ccc}
   \large \textbf{Cellwise}  & \large \textbf{Casewise} &\large{\textbf{Casewise \& Cellwise}} \\
   [-4mm]
  \includegraphics[width=.3\textwidth]
  {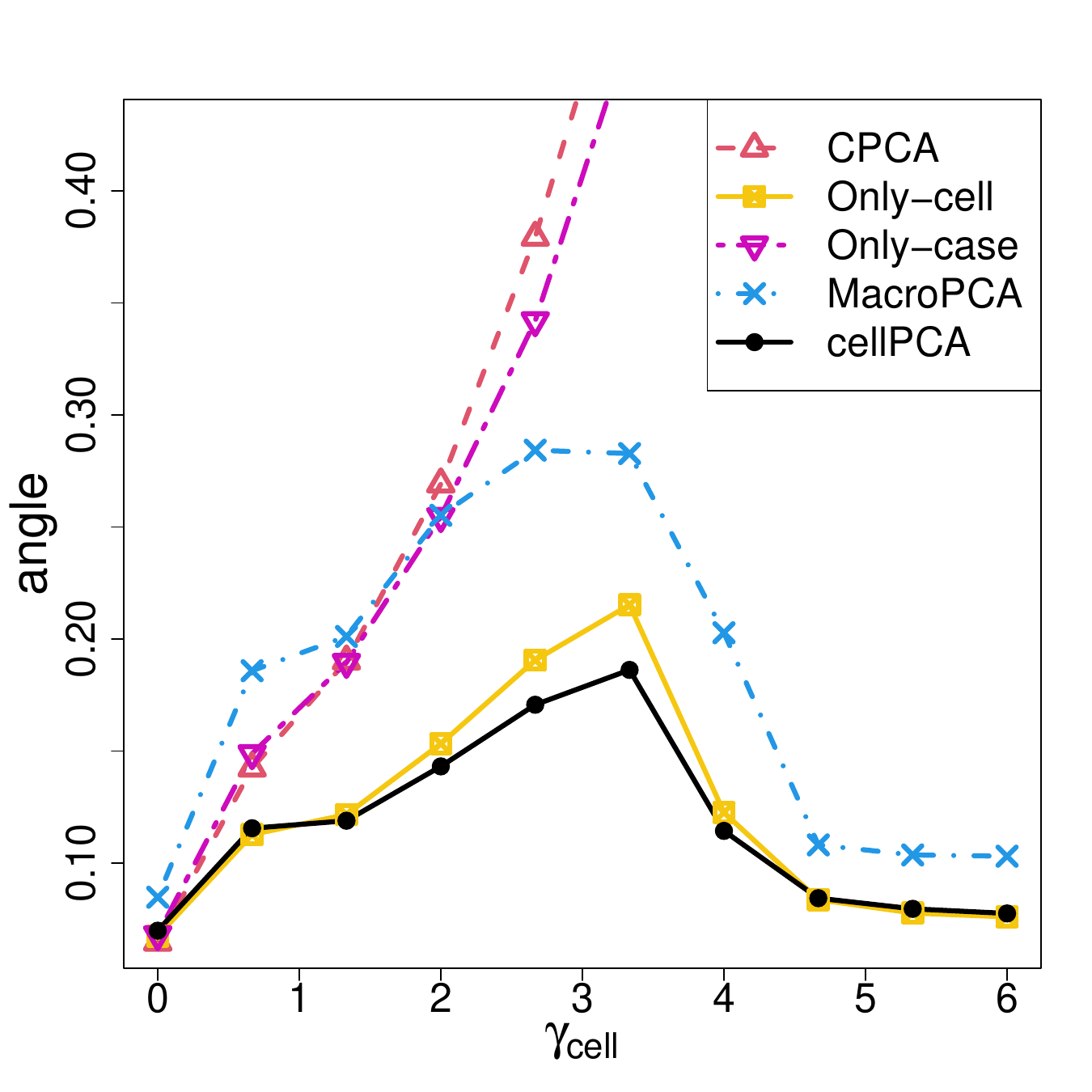} &\includegraphics[width=.3\textwidth]
  {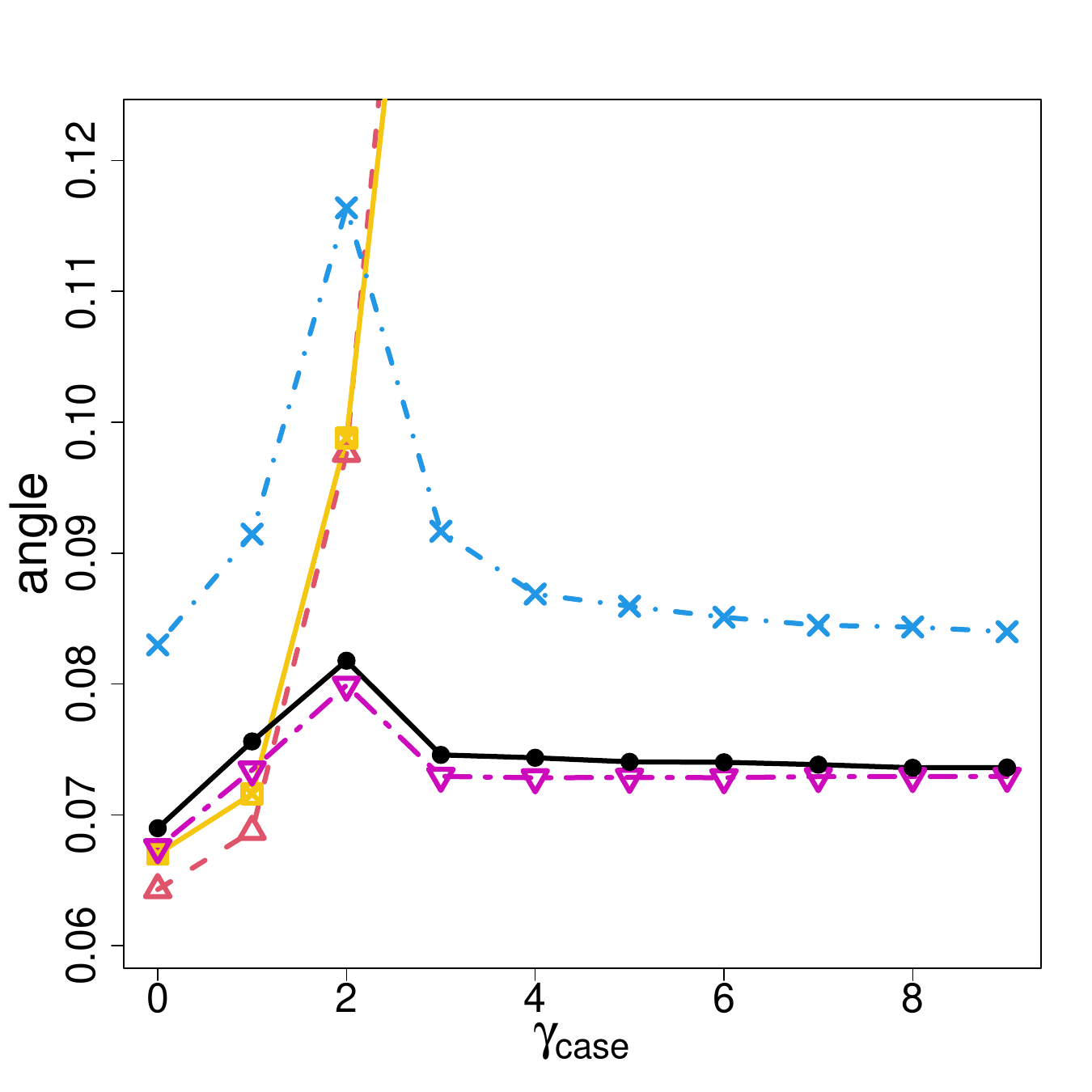} &\includegraphics[width=.3\textwidth]
  {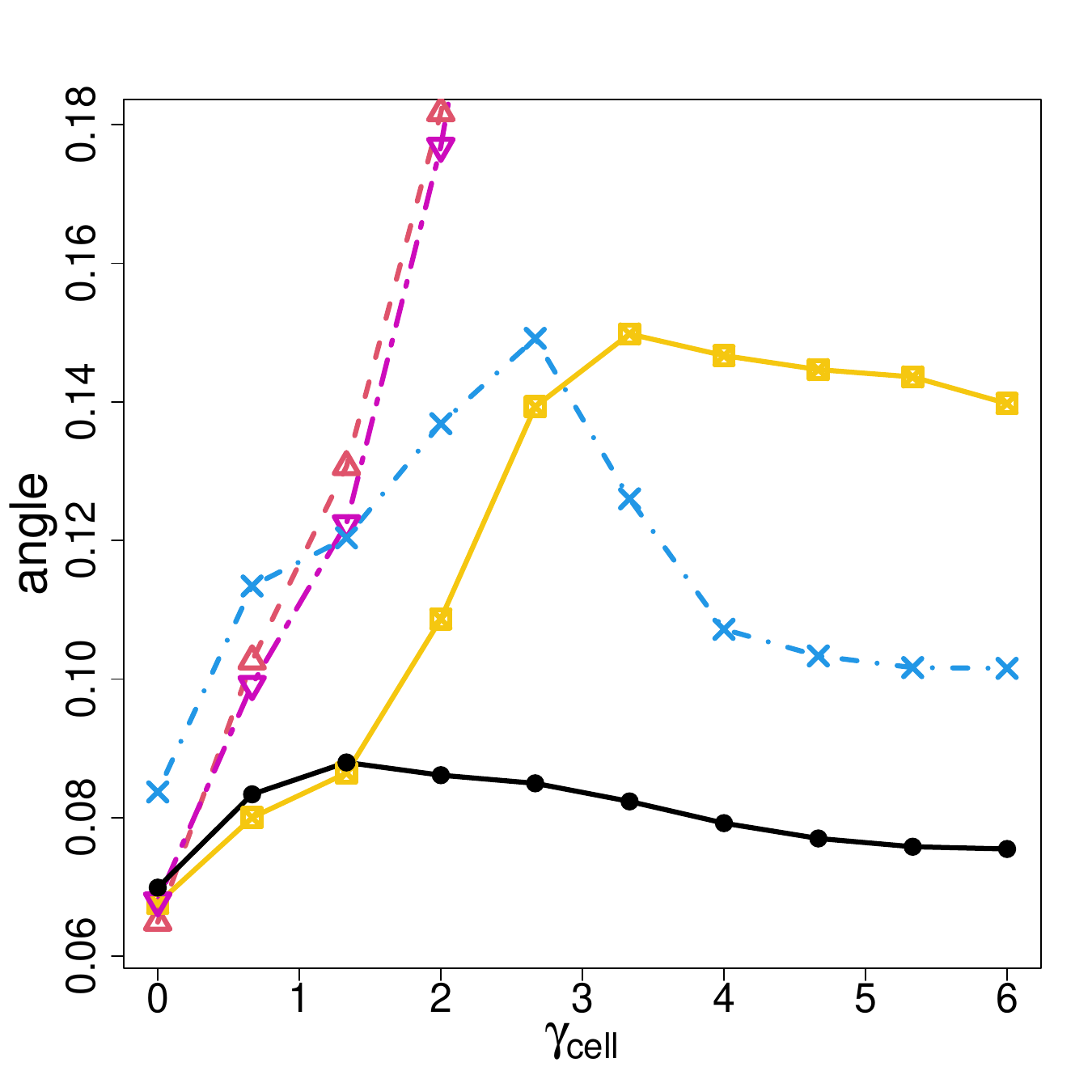}  \\
   [-4mm]
  \includegraphics[width=.3\textwidth]
  {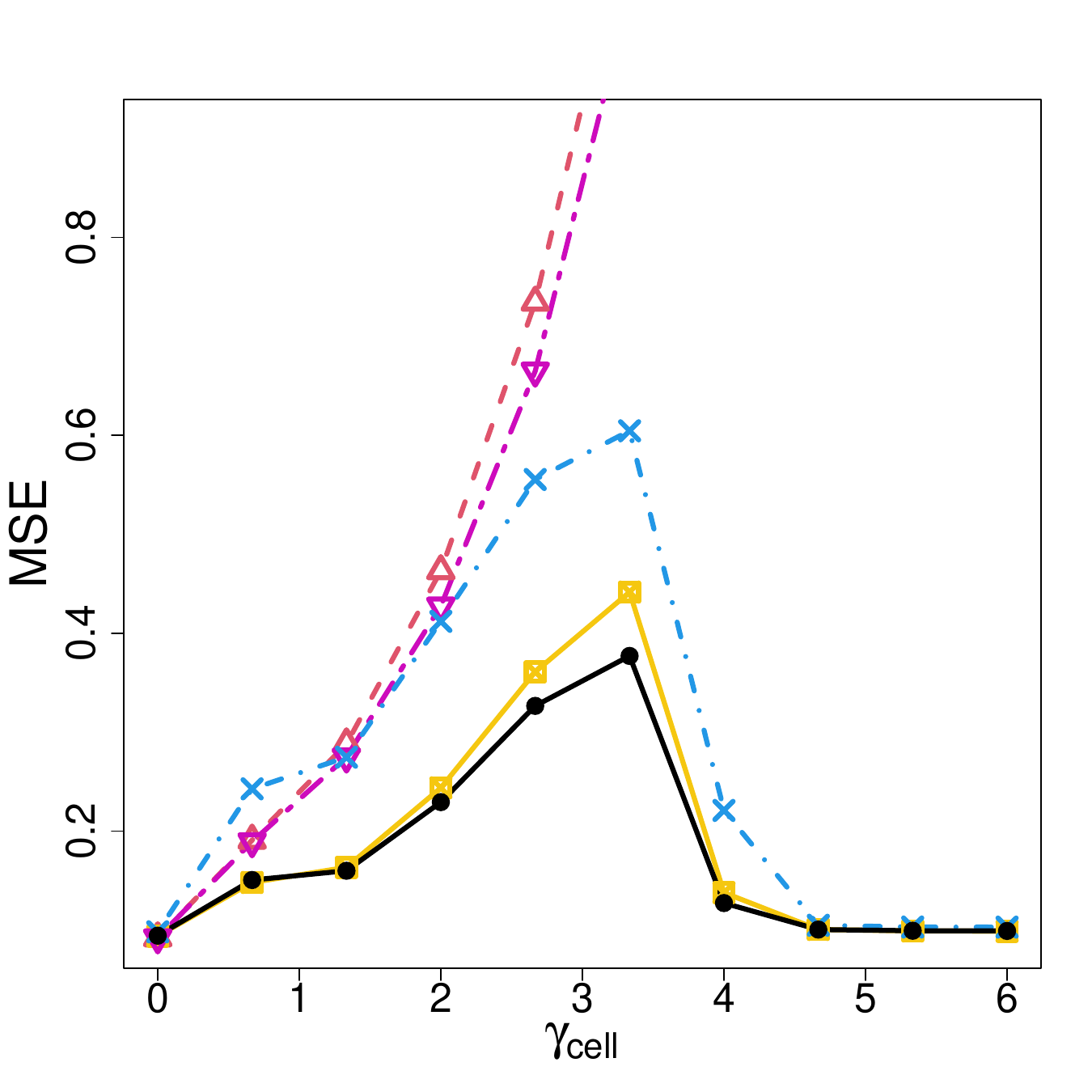} &\includegraphics[width=.3\textwidth]
  {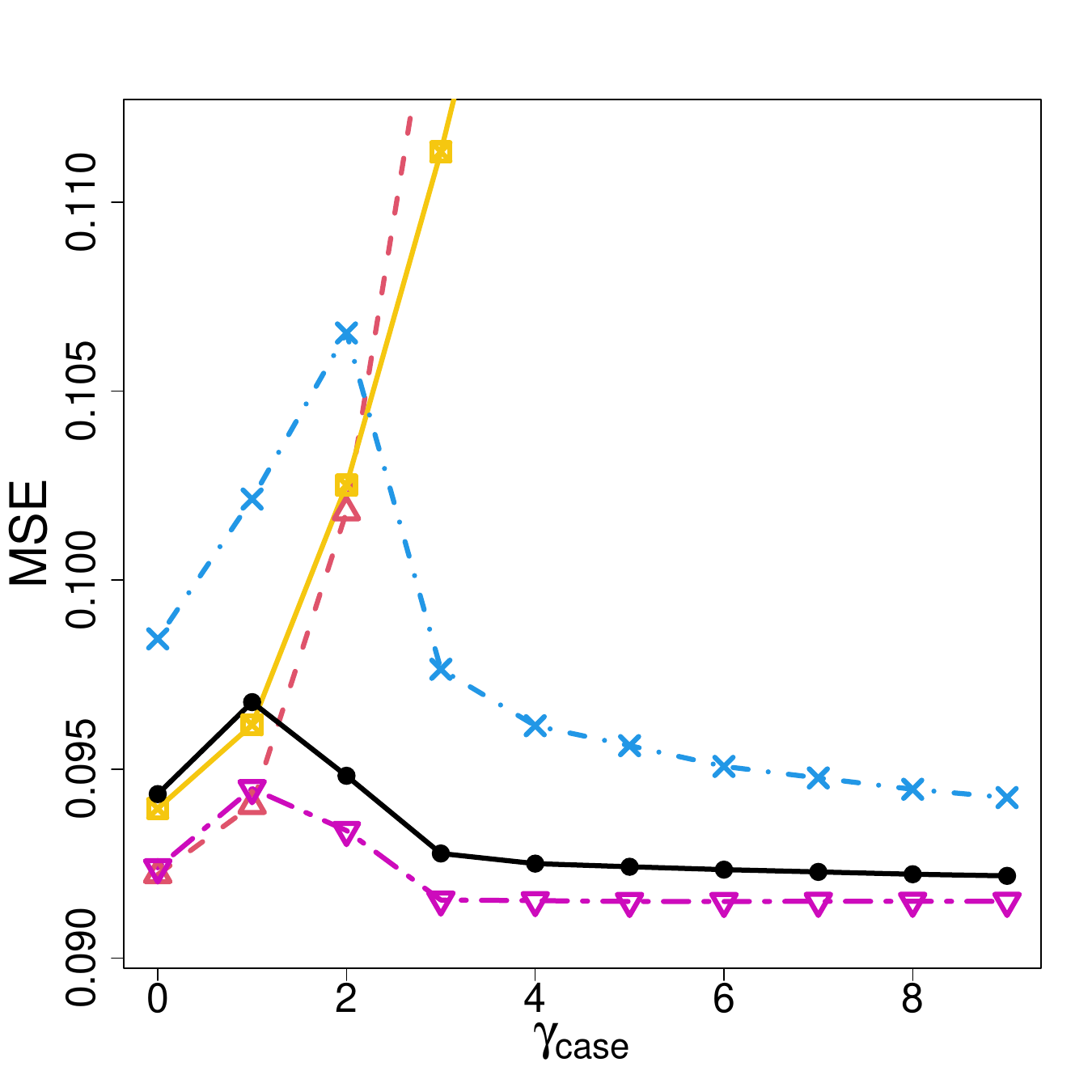} &\includegraphics[width=.3\textwidth]
  {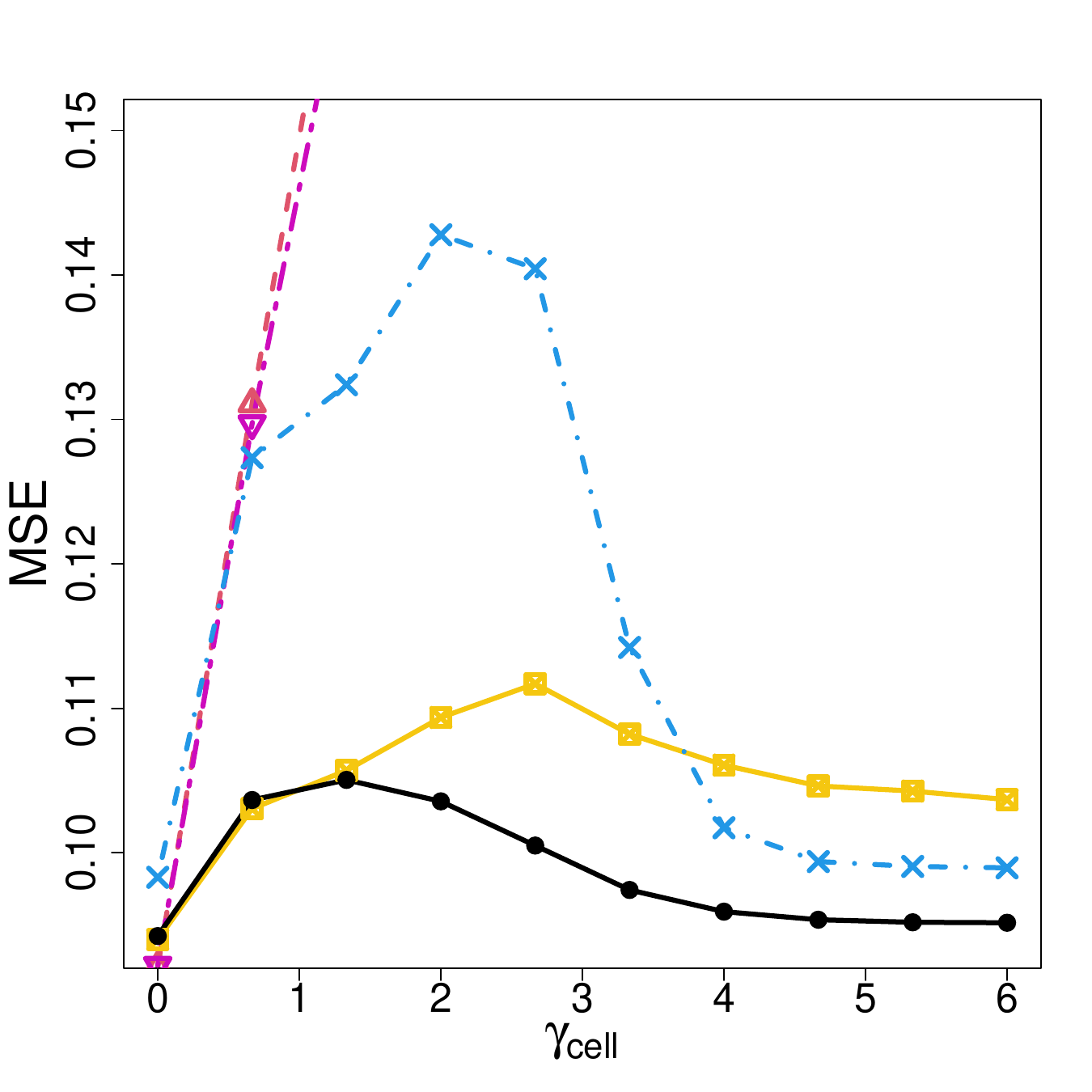} \\
\includegraphics[width=.3\textwidth]
  {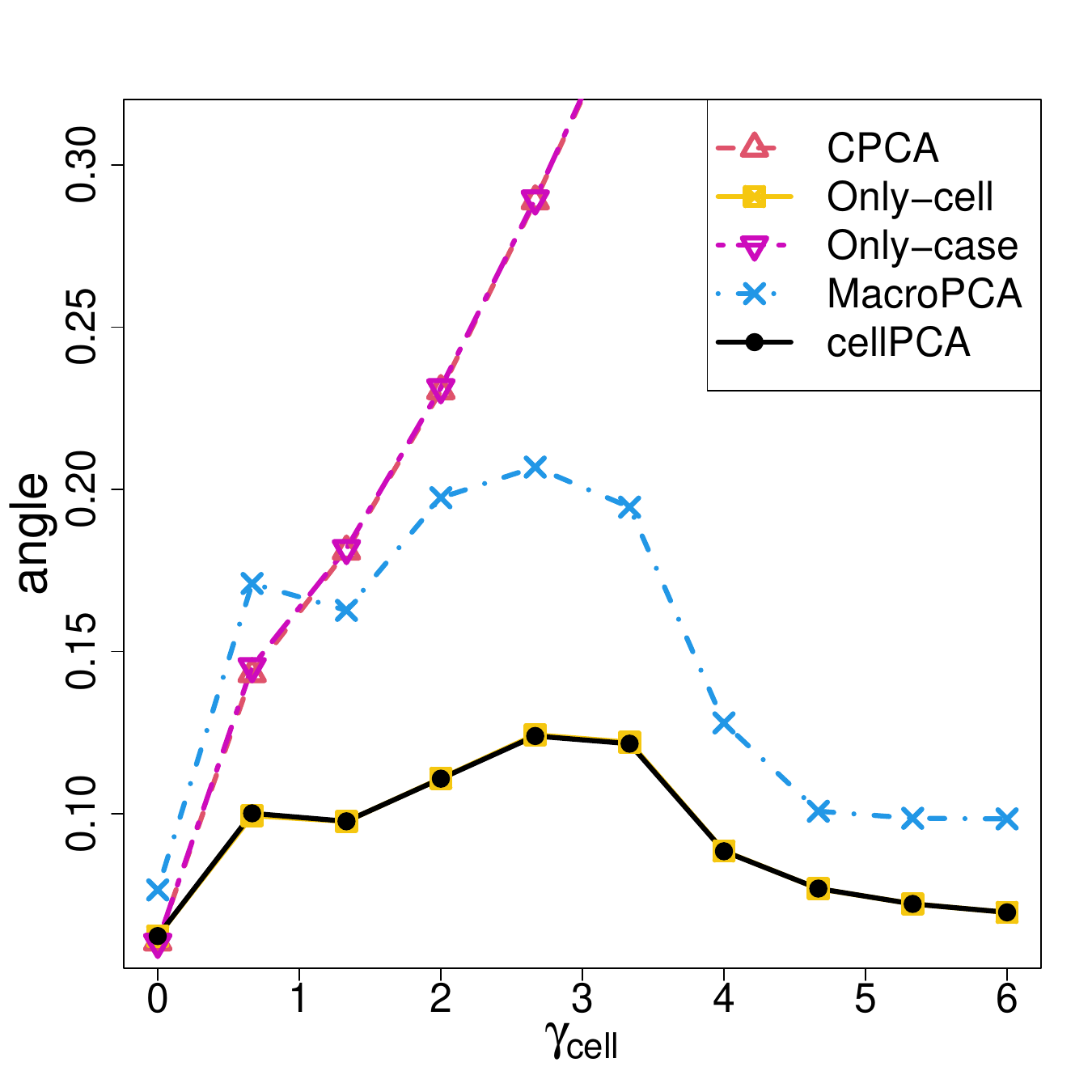} &\includegraphics[width=.3\textwidth]
  {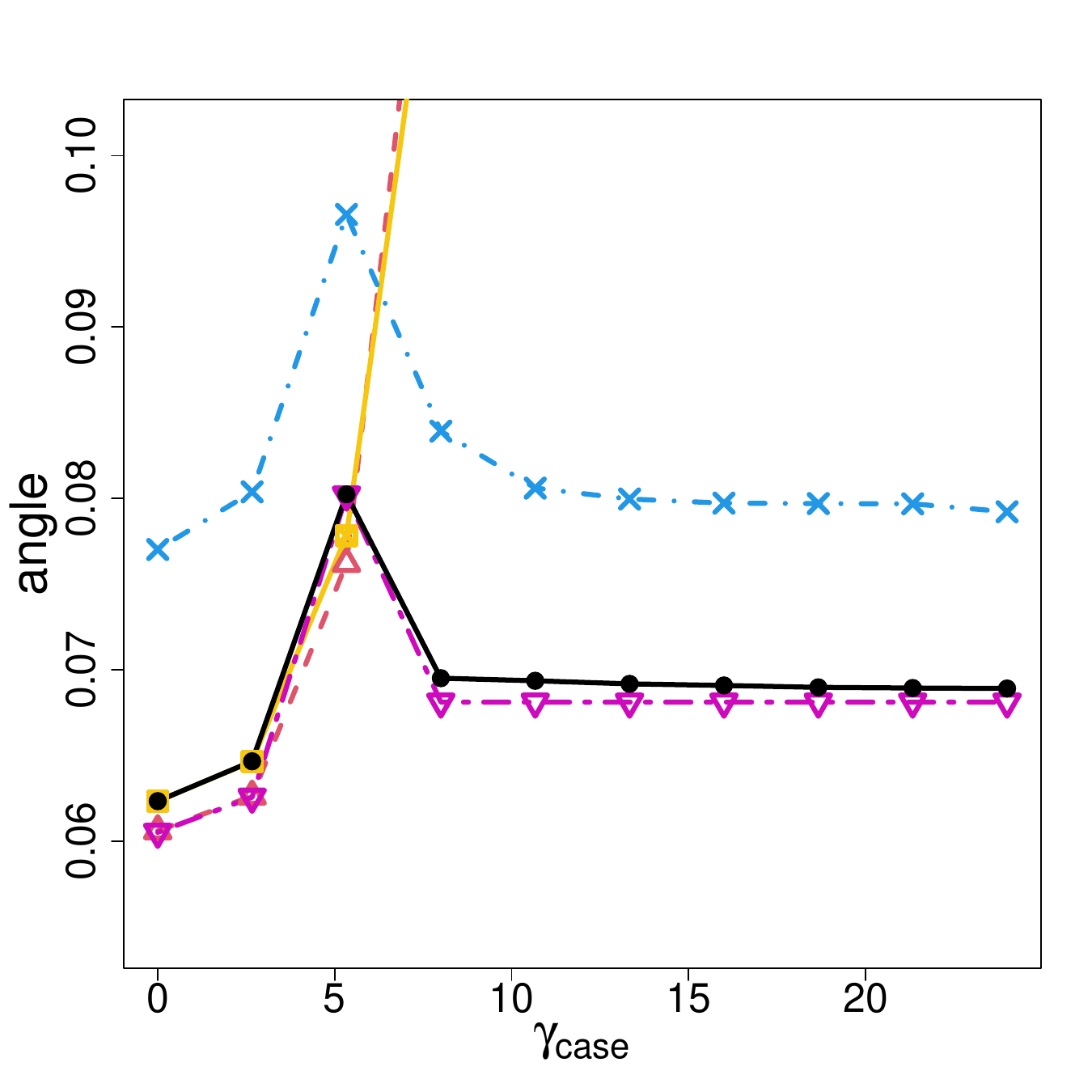} &\includegraphics[width=.3\textwidth]
  {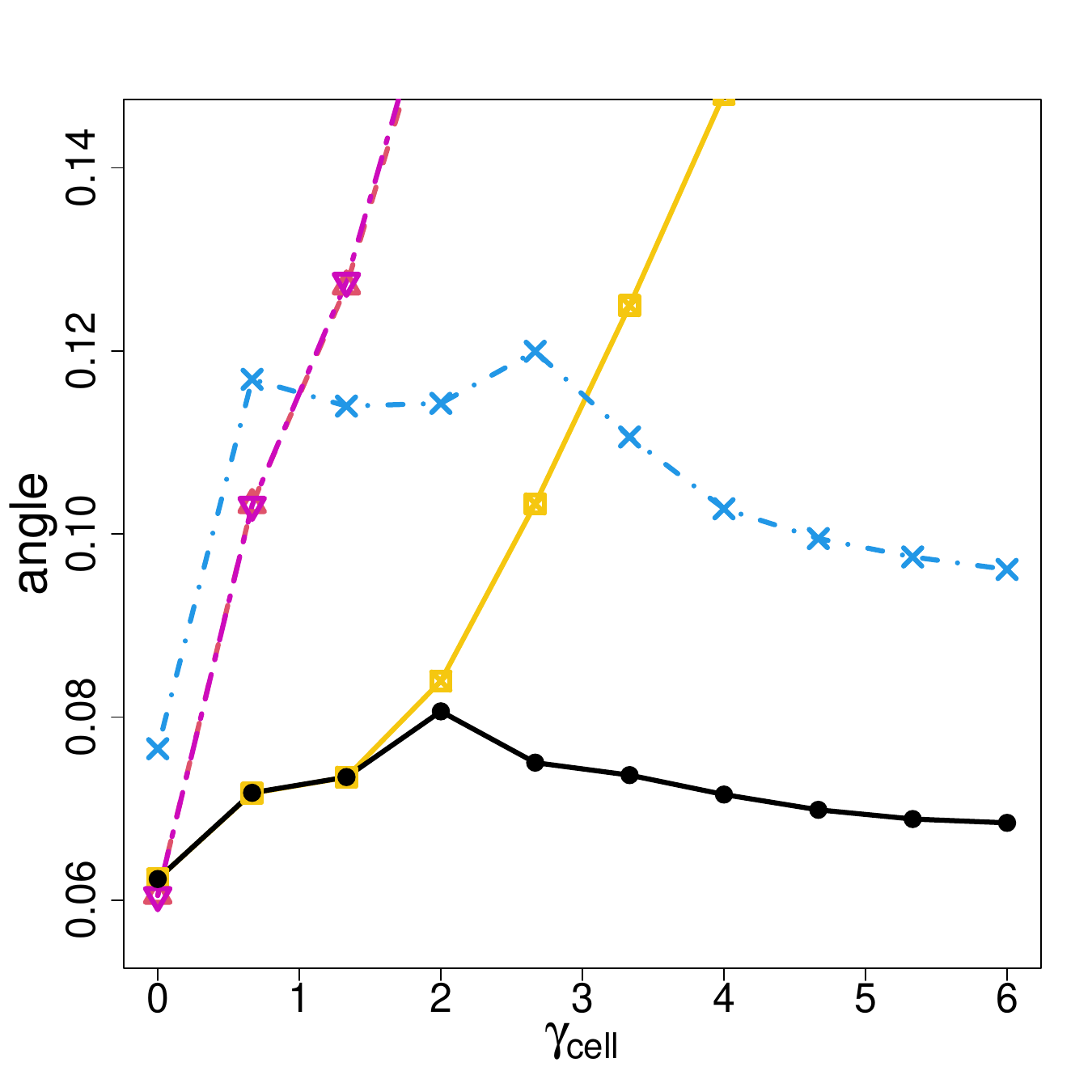}  \\
   [-4mm]
\includegraphics[width=.3\textwidth]
  {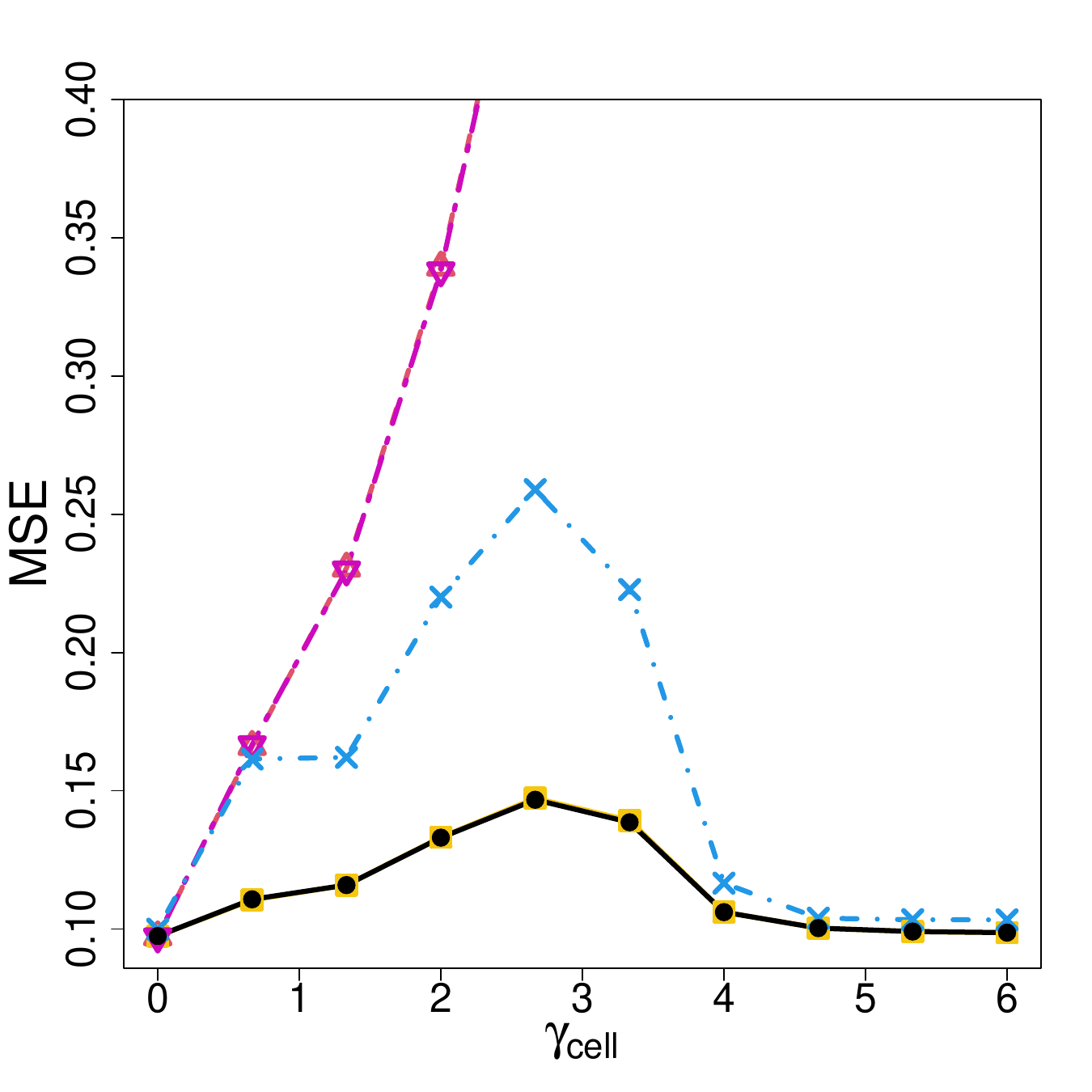} &\includegraphics[width=.3\textwidth]
  {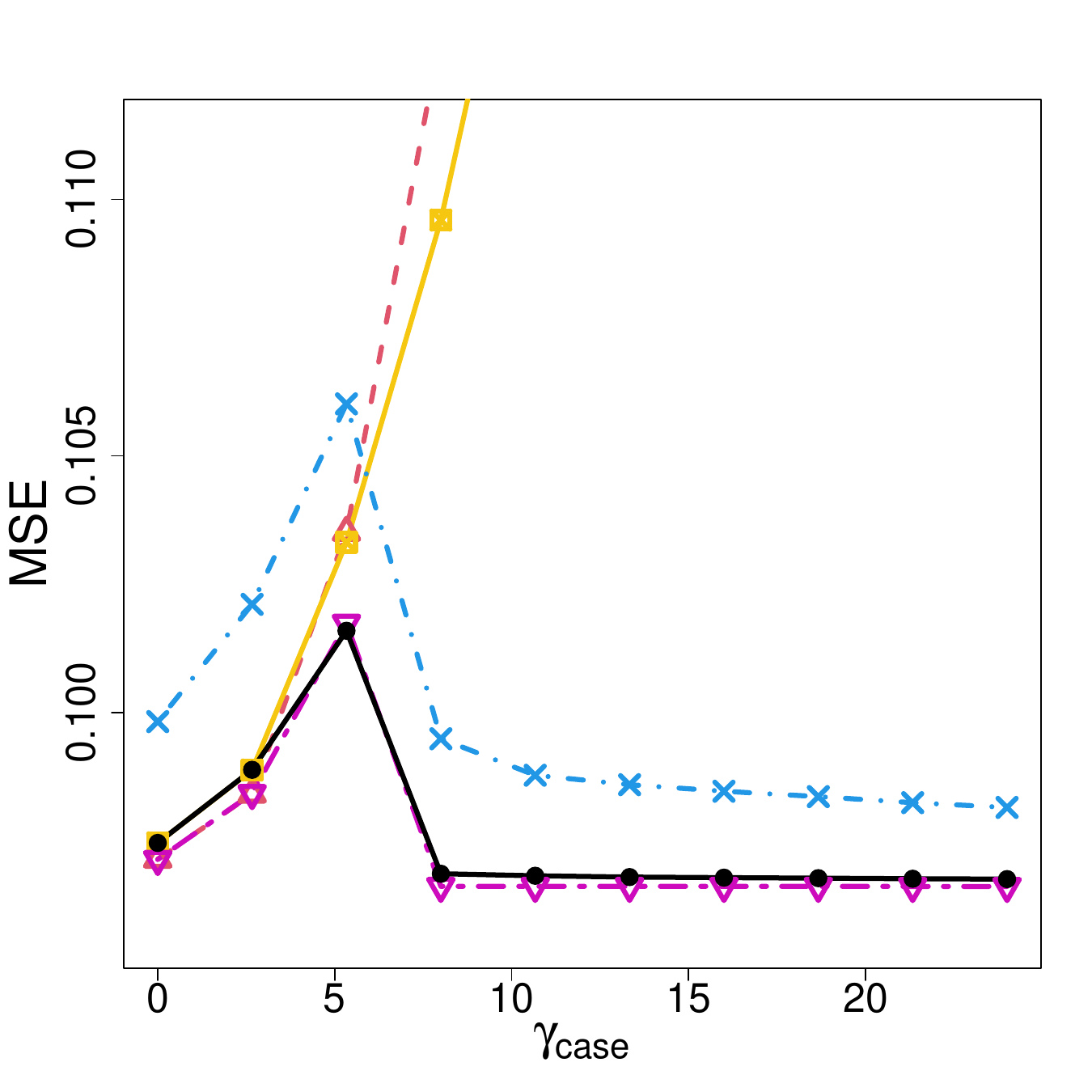} &\includegraphics[width=.3\textwidth]
  {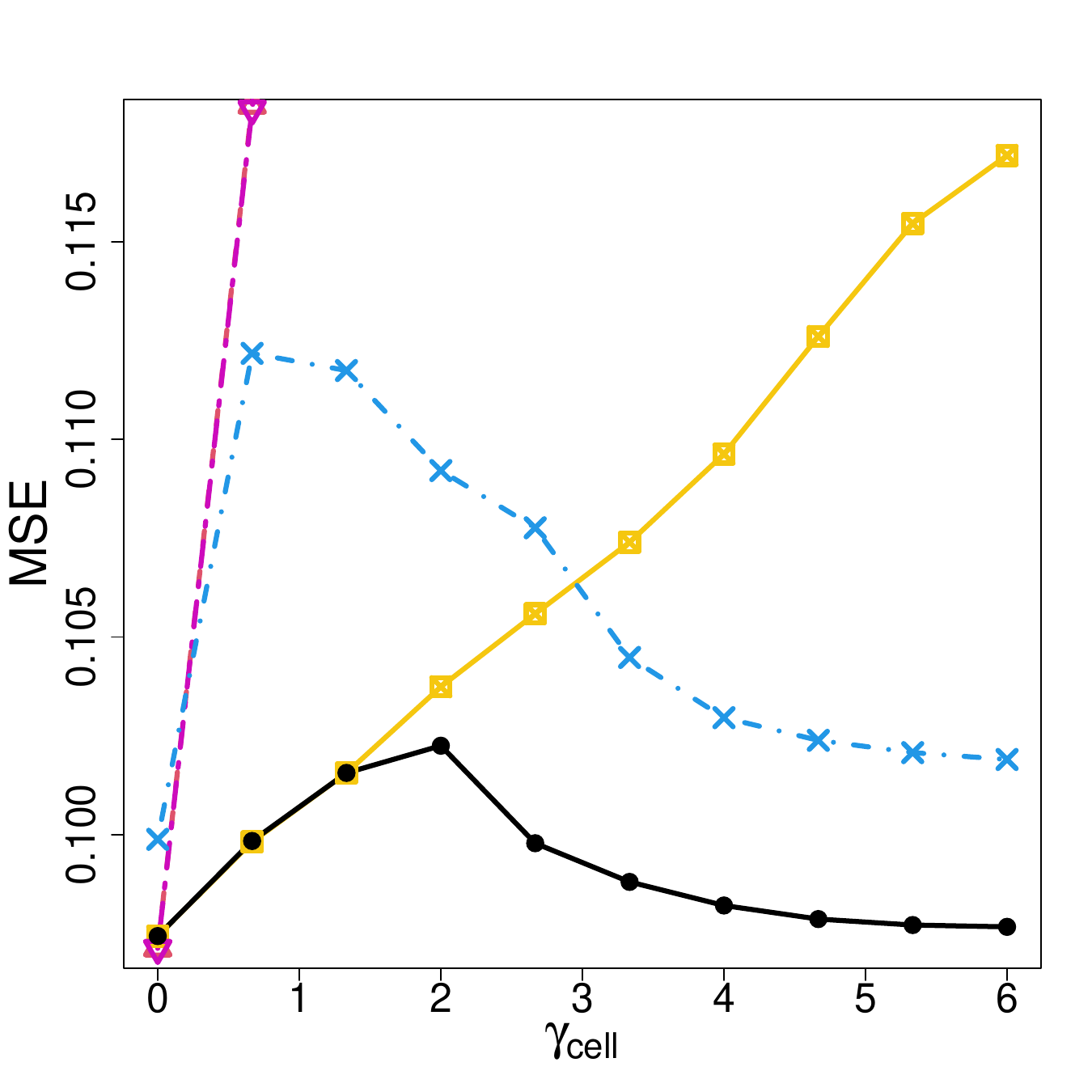} 
\end{tabular}
\caption{Median angle and MSE attained by CPCA, Only-cell, Only-case, MacroPCA, and cellPCA in the presence of either cellwise outliers, casewise outliers, or both. The covariance model was ALYZ with $n=100$, and $20\%$ of randomly selected cells were set to NA. The top two rows are for $p=20$, and the bottom two rows for $p=200$.}
\label{fig:results_NA0.2_ALYZ}
\end{figure}

The cellwise outliers considered in the simulation 
study in Section \ref{sec:simulation} are also referred 
to as \textit{elementwise} outliers, since they occur 
in random individual entries of the data matrix. To 
further evaluate the performance of cellPCA, we examine a more challenging \textit{structured cellwise} contamination scenario, where clean data, generated as in Section~\ref{sec:simulation}, are partially replaced by contaminated cells, with contamination percentages $\eps_j^{\cell}$ linearly varying across variables from $15\%$ to $25\%$. For each column of the data matrix, we randomly sample the corresponding percentage of cell indices to be replaced. In each row, say $(z_1, \ldots, z_p)$, we collect the indices of the cells selected for contamination. Denote this index set of size $q$ by $Q = \{j_1, \ldots, j_q\}$. We then replace the cells $(z_{j_1}, \ldots, z_{j_q})$ with the $q$-dimensional vector
$\gamma_{\cell} \, \sqrt{q}\,\bv_Q/(\bv_q^T\bSigma_Q^{-1}\bv_q)$,
where  $\bSigma_Q$ denotes the covariance matrix $\bSigma$ restricted to the indices in $Q$. The vector $\bv_Q$ is the normalized eigenvector of $\bSigma_Q$ corresponding to its smallest eigenvalue. The contamination level $\gamma_{\cell}$ varies from 0 to 24 for $p=20$ and from 0 to 12 for $p=200$.
In each row, the contaminated cells are structurally outlying in the subspace spanned by the variables in $Q$. Consequently, these cells are often not marginally outlying, especially when the size of $Q$ is large and $\gamma_{\cell}$ is relatively small, which makes them difficult to detect. Moreover, since the probability of observing cellwise outliers depends on the variable, this contamination scenario does not fall under the standard notion of elementwise contamination.
We also consider the scenario where the data is contaminated by a percentage of structured cellwise outliers that goes from $7.5\%$ to $12.5\%$ across variables, and by $\eps^{\case}=10\%$ of 
casewise outliers generated as described in Section~\ref{sec:simulation}. Here $\gamma_{\cell}$ again varies from 0 to 24 with
$\gamma_{\case}=2\gamma_{\cell}$ when $p=20$ and $\gamma_{\cell}$ again varies from 0 to 12 with
 $\gamma_{\case}=8\gamma_{\cell}$ when $p=200$.

Figure~\ref{fig:results_p20_NA0_struc} and Figure~\ref{fig:results_p200_NA0_struc} show the median 
angle and MSE in the presence of either structured cellwise
outliers, or structured cellwise and casewise outliers, for $p=20$ and $p=200$. The results are even more convincing than those obtained for elementwise contamination. In this setting, cellPCA significantly outperforms the competitors across all scenarios. In particular, the CANDES approach is adversely affected by the presence of structured outliers, since its development relies on the assumption of elementwise contamination. 

\begin{figure}[!ht]
\centering
\begin{tabular}{cc}
   \large \textbf{Structured Cellwise}  &\large{\textbf{Casewise \& Structured Cellwise}} \\
   [-4mm]
  \includegraphics[width=.3\textwidth]
  {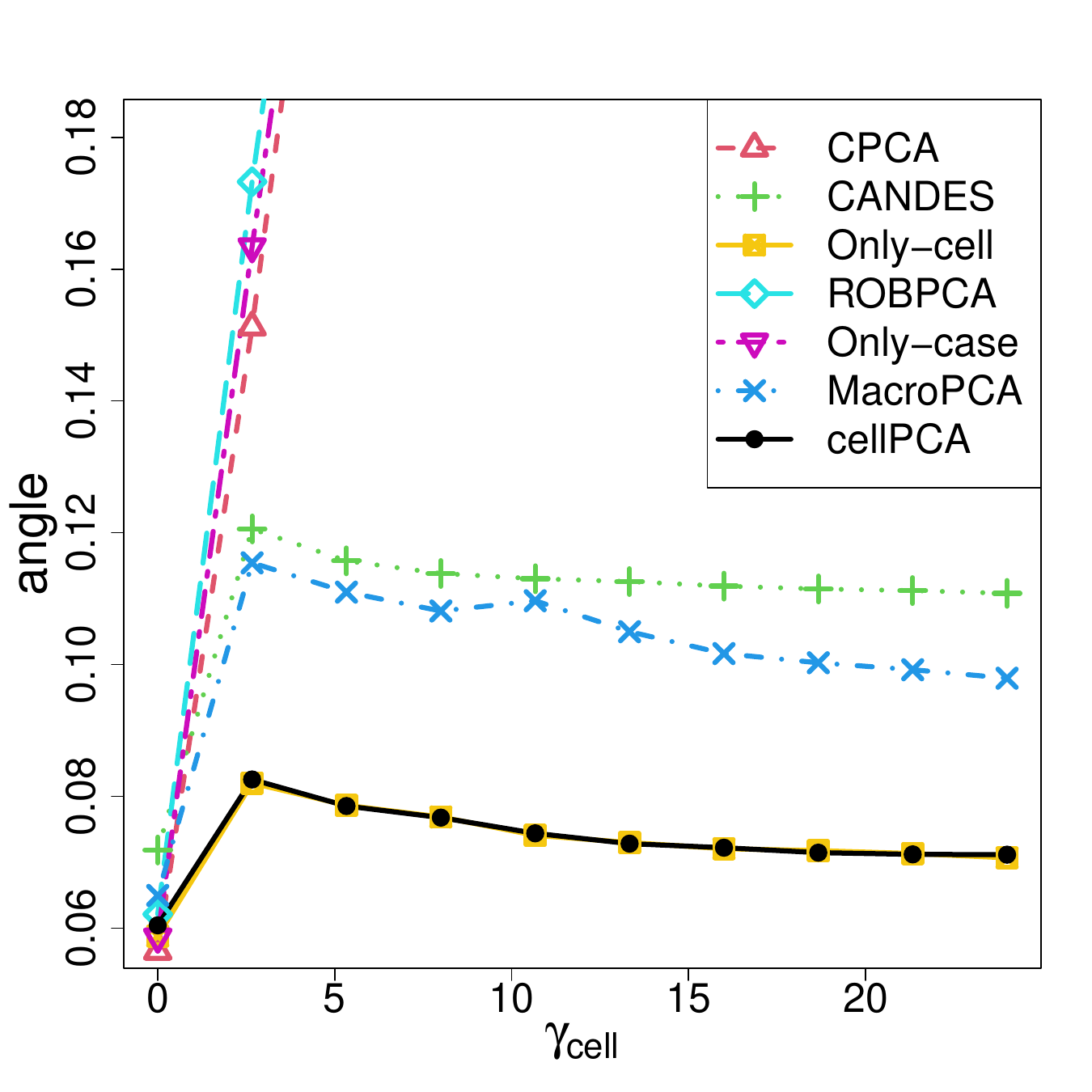}  &\includegraphics[width=.3\textwidth]
  {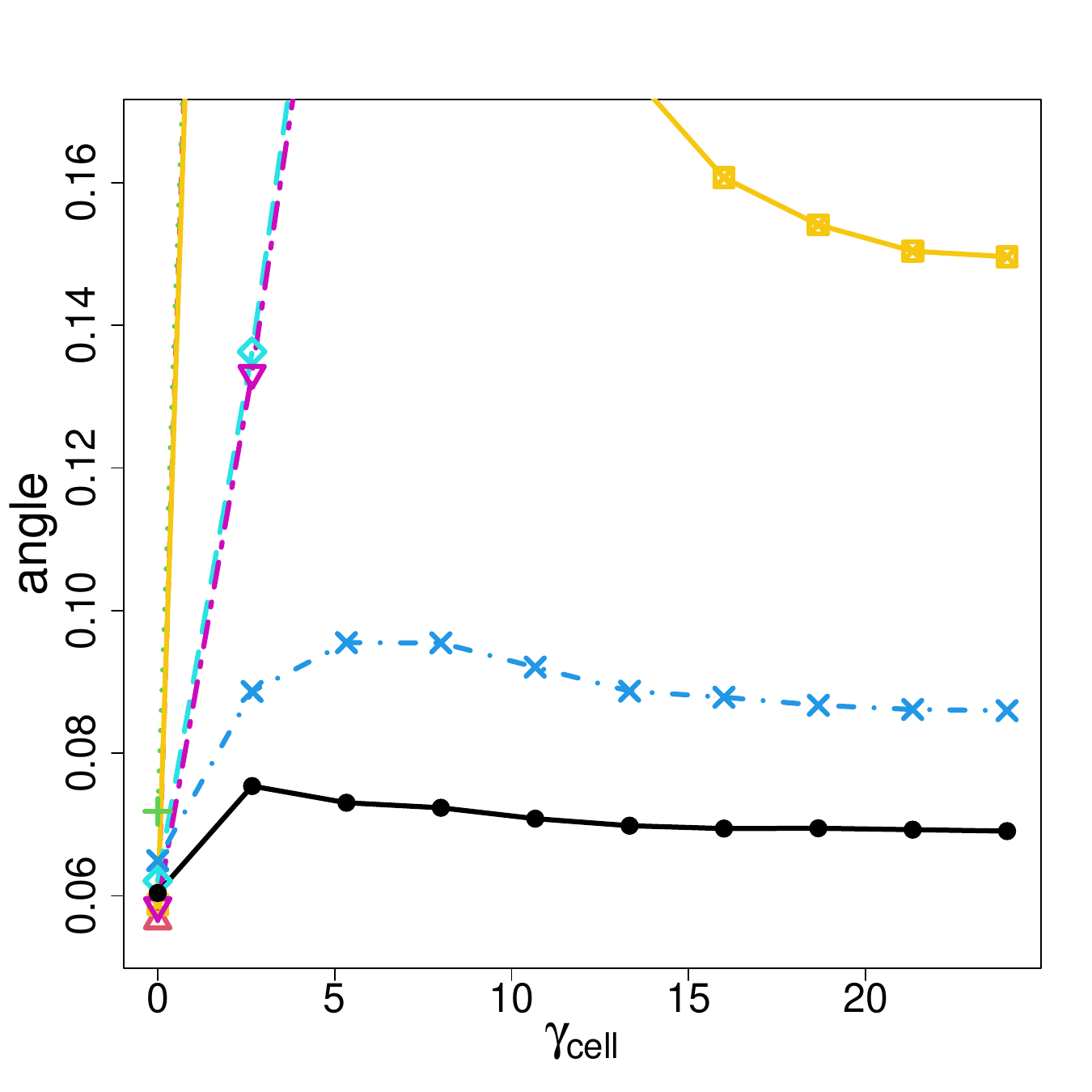}  \\
   [-4mm]
  \includegraphics[width=.3\textwidth]
  {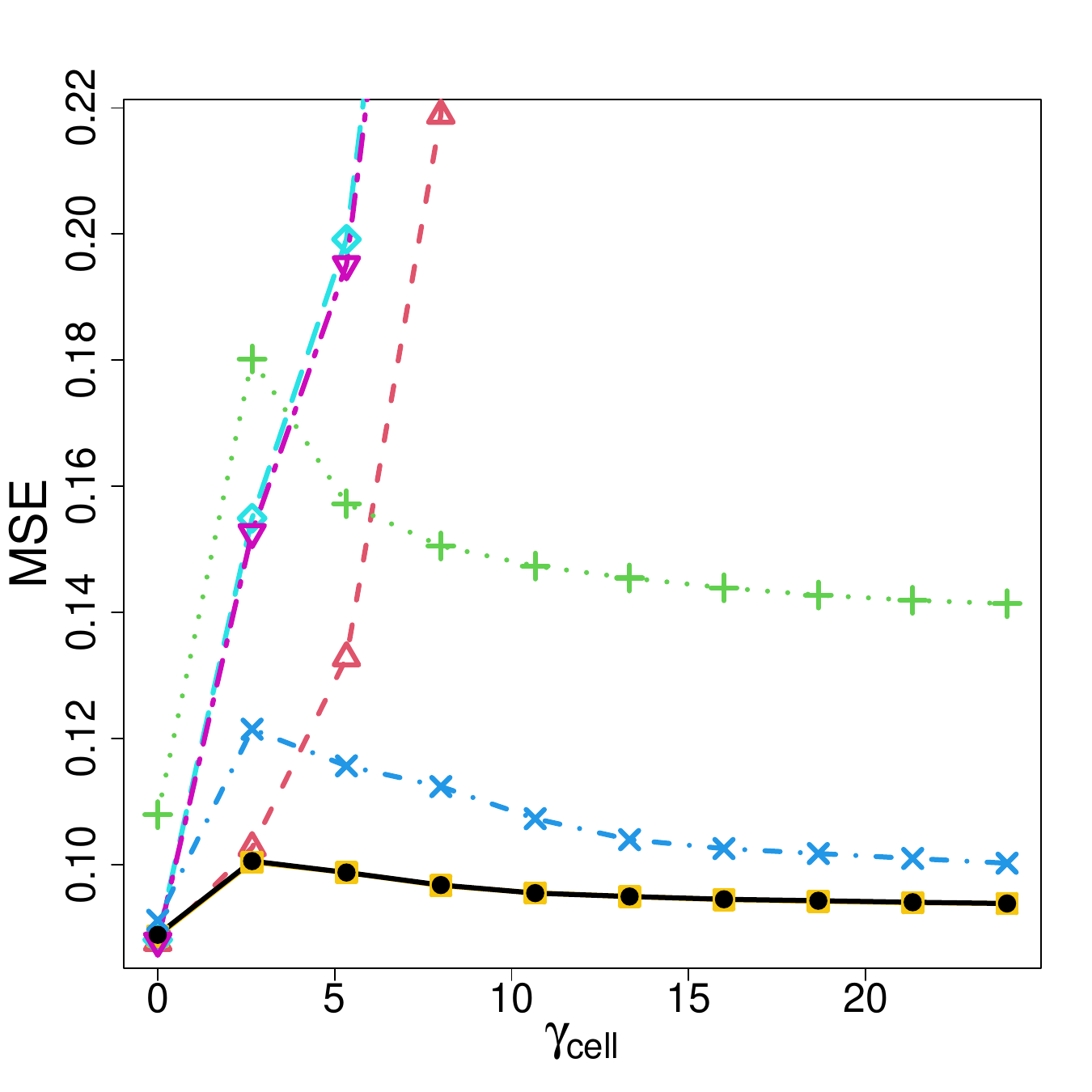} &\includegraphics[width=.3\textwidth]{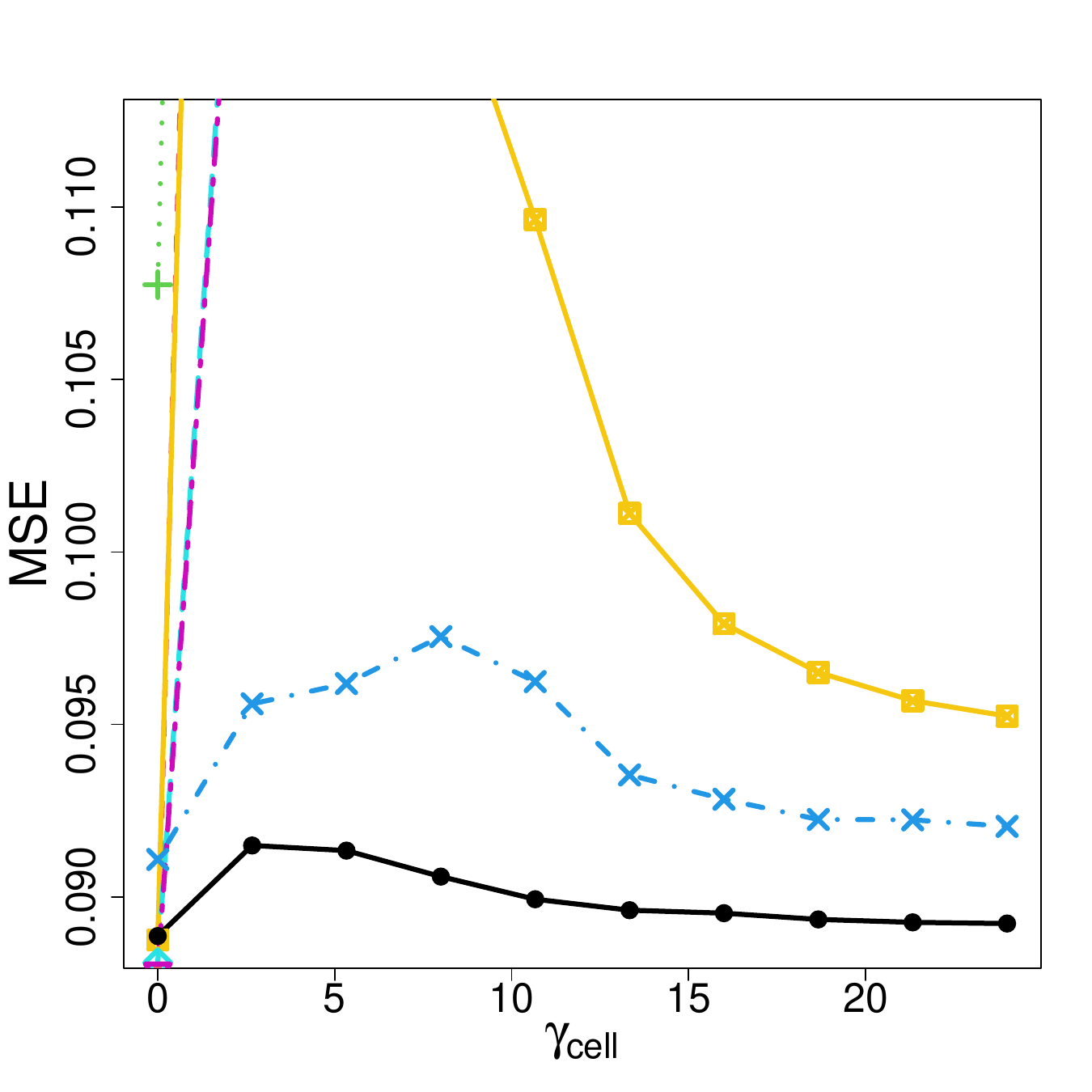} 
\end{tabular}
\caption{Median angle (top) and MSE (bottom) 
attained by CPCA, CANDES, Only-cell, ROBPCA, 
Only-case, MacroPCA, and cellPCA in the presence 
of structured cellwise outliers, and structured cellwise and casewise outliers. 
The covariance model was A09 with 
$n=100$ and $p=20$, without NAs.}
\label{fig:results_p20_NA0_struc}
\end{figure}\begin{figure}[!ht]
\centering
\begin{tabular}{cc}
   \large \textbf{Structured Cellwise}  &\large{\textbf{Casewise \& Structured Cellwise}} \\
   [-4mm]
  \includegraphics[width=.3\textwidth]
  {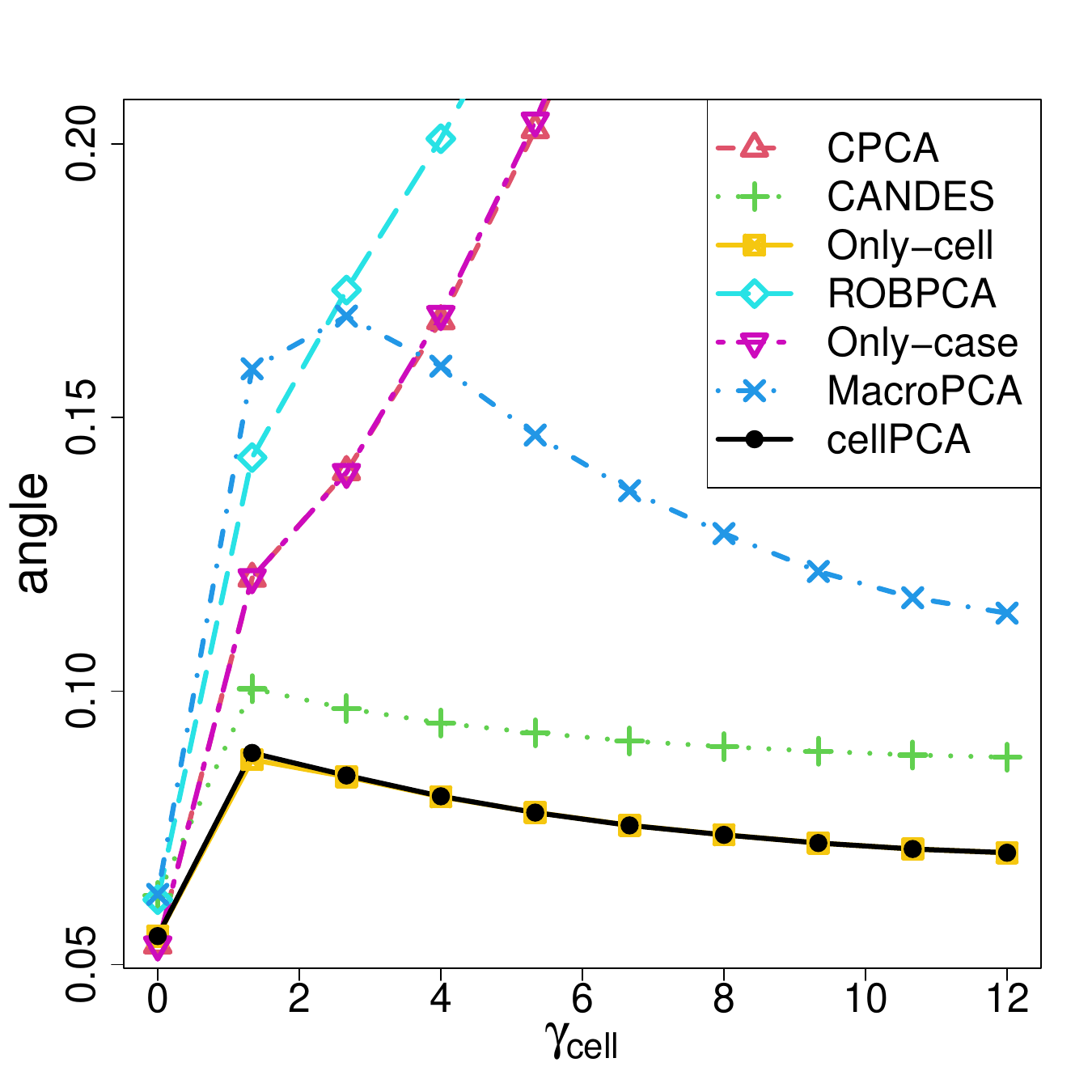}  &\includegraphics[width=.3\textwidth]
  {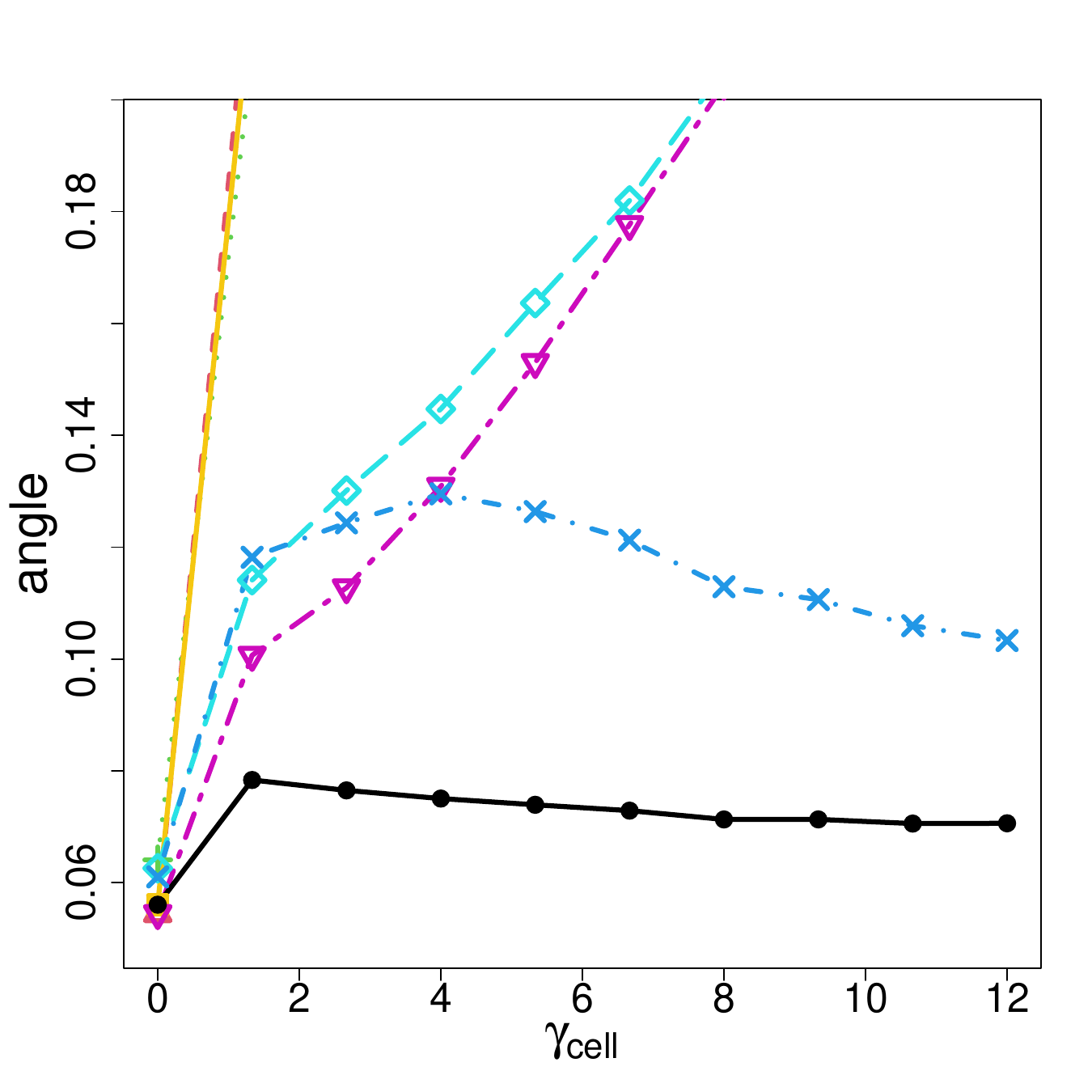}  \\
   [-4mm]
  \includegraphics[width=.3\textwidth]
  {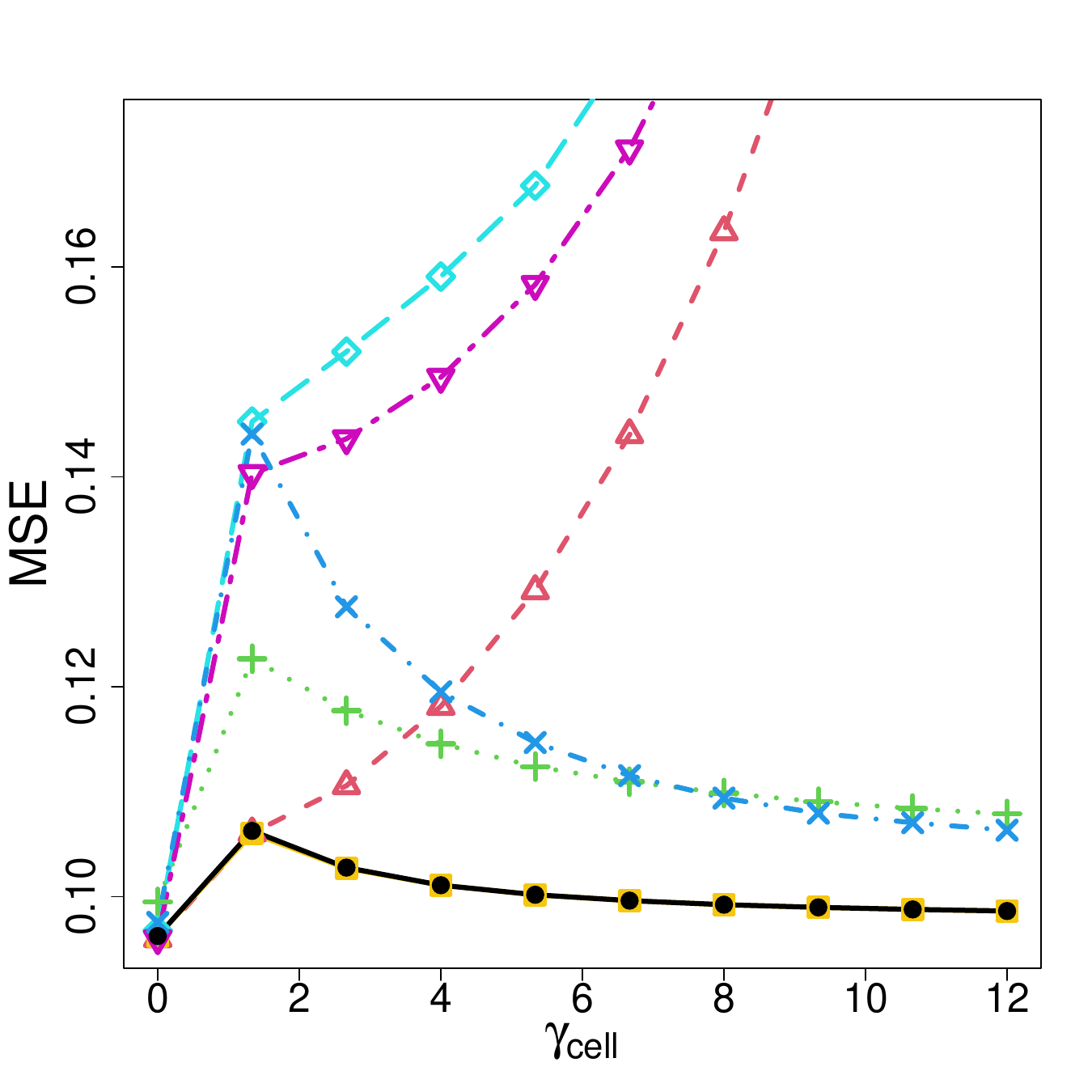} &\includegraphics[width=.3\textwidth]{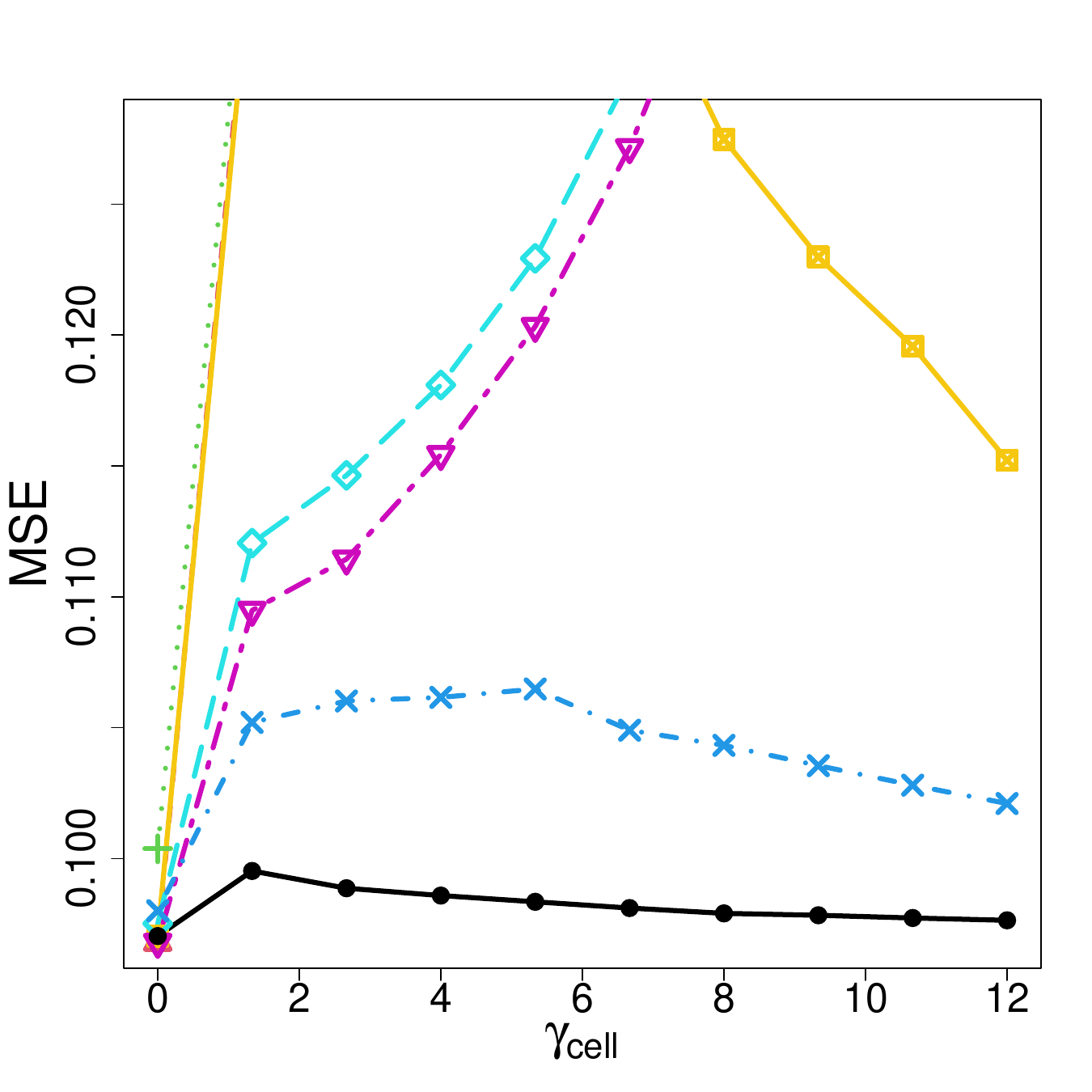} 
\end{tabular}
\caption{Median angle (top) and MSE (bottom) 
attained by CPCA, CANDES, Only-cell, ROBPCA, 
Only-case, MacroPCA, and cellPCA in the presence 
of structured cellwise outliers, and structured cellwise and casewise outliers. 
 The covariance model was A09 with 
$n=100$ and $p=200$, without NAs.}
\label{fig:results_p200_NA0_struc}
\end{figure}

\clearpage
To study the sensitivity of cellPCA and competing methods to the contamination fraction, Figures~\ref{fig:results_varying_pou1} and \ref{fig:results_varying_pou2} show the median 
angle and MSE as a function of $\varepsilon$ for data contaminated as in Section~\ref{sec:simulation} with cellwise
outliers generated with $\eps^{\cell}=\eps$ and $\gamma_{\cell}=\lbrace 3, 5\rbrace$, casewise outliers generated with $\eps^{\case}=\eps$ and $\gamma_{\case}=\lbrace 3,5 \rbrace$, and both together with $\eps^{\cell}=\eps^{\case}=\eps/2$, $\gamma_{\cell}=\lbrace 3, 5\rbrace$ and $\gamma_{\case}$ obtained from $\gamma_{\cell}$ as in Figures~\ref{fig:results_NA0_ALYZ} and~\ref{fig:results_NA0.2_ALYZ}.
The covariance model was A09 with $n=100$ and $p=20$, without NAs.
The two values of $\gamma$ represent intermediate and far outliers. 

\vspace{5mm}
\begin{figure}[!ht]
\centering
\begin{tabular}{ccc}
   \large \textbf{Cellwise}& \large \textbf{Casewise} &\large{\textbf{Casewise \& Cellwise}} \\
   [-4mm]
  \includegraphics[width=.3\textwidth]
  {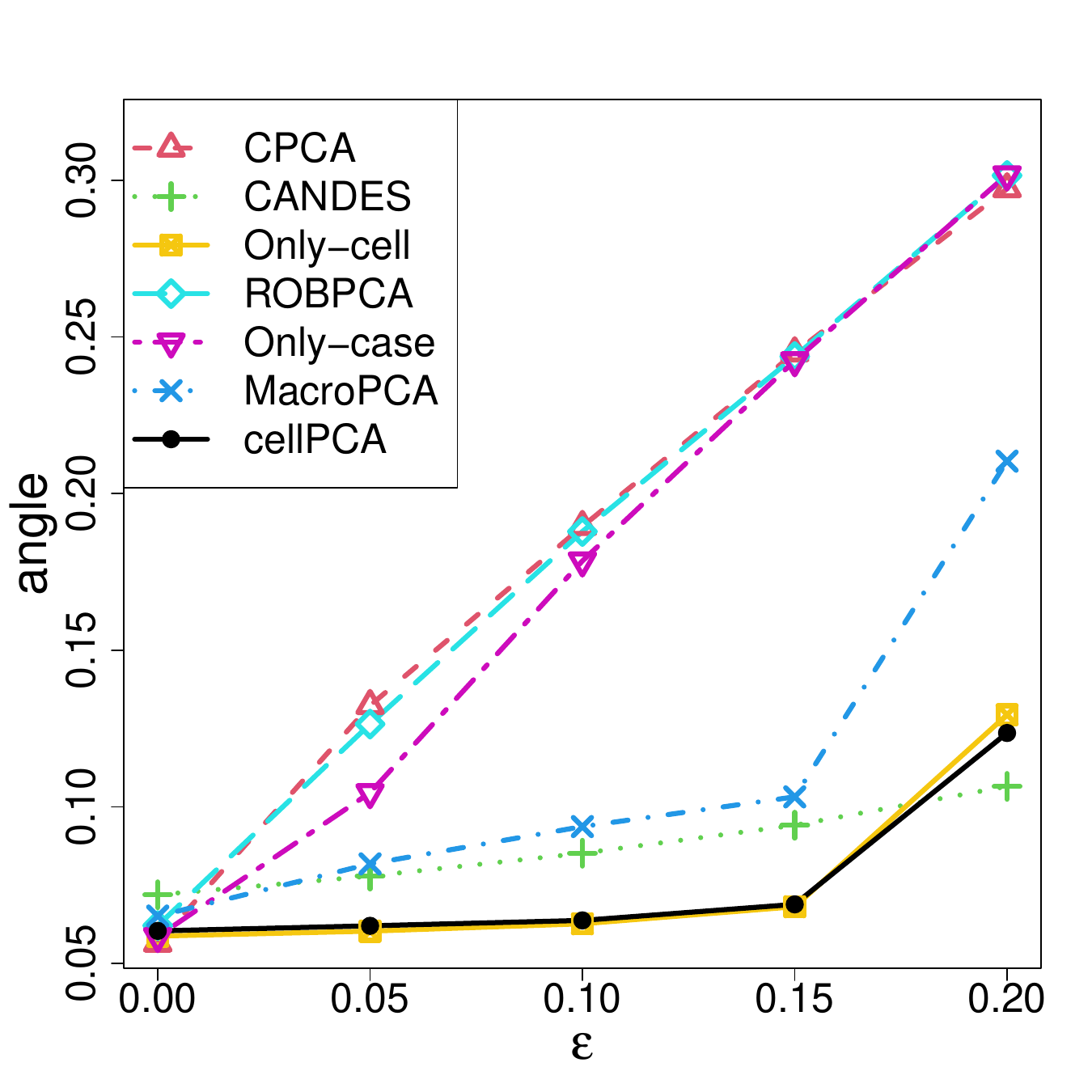} &\includegraphics[width=.3\textwidth]
  {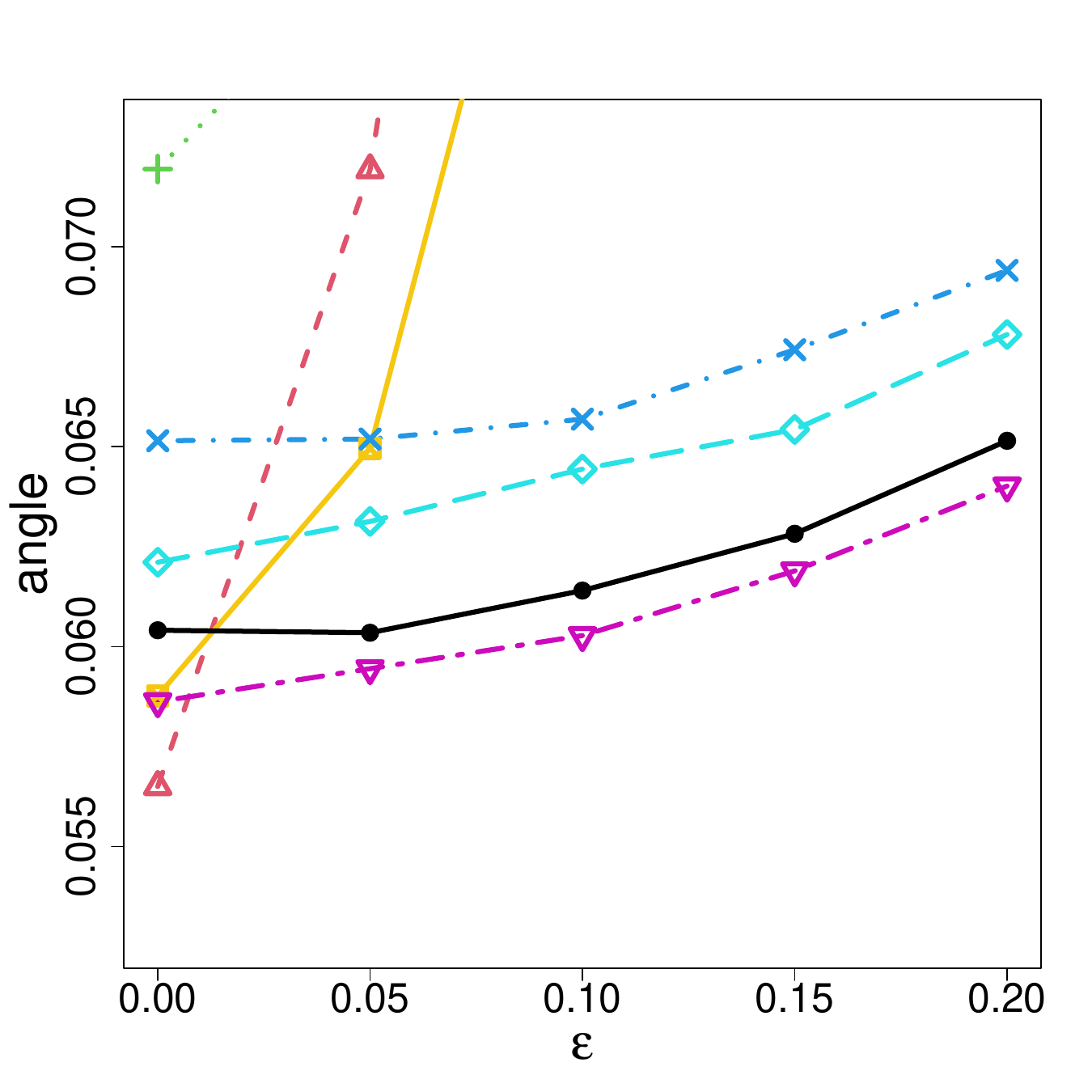} &\includegraphics[width=.3\textwidth]
  {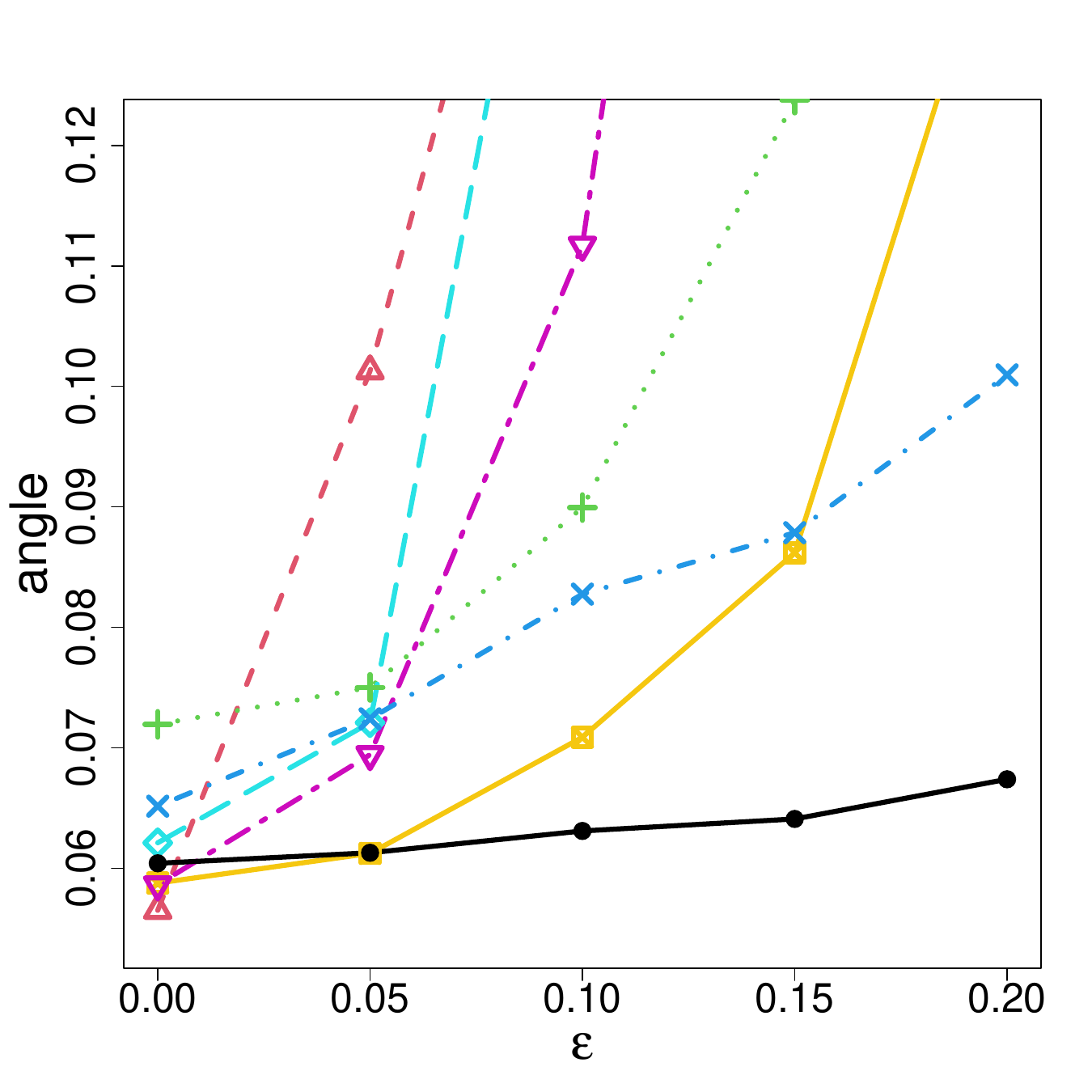}  \\
   [-4mm]
  
  \includegraphics[width=.3\textwidth]
  {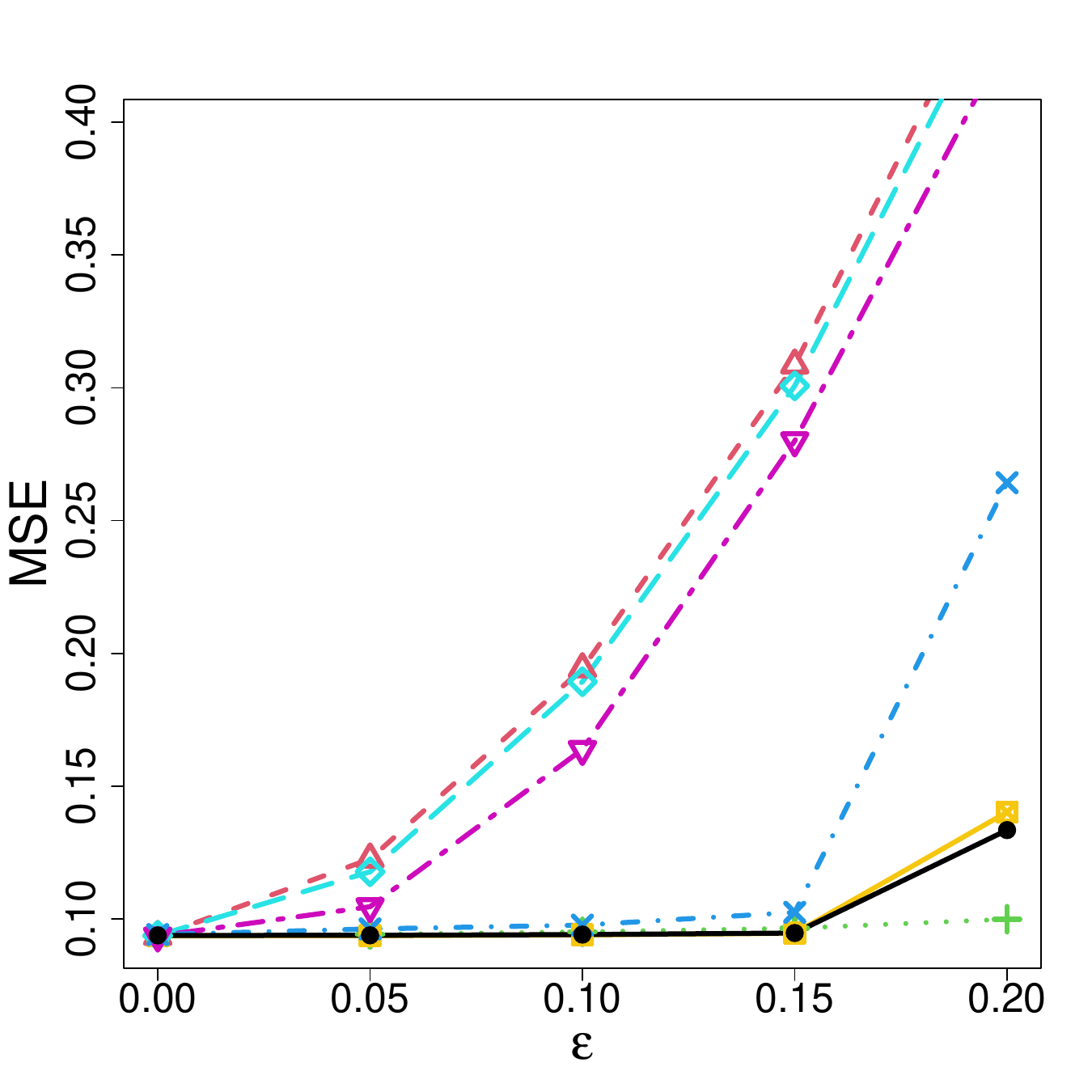} &\includegraphics[width=.3\textwidth]
  {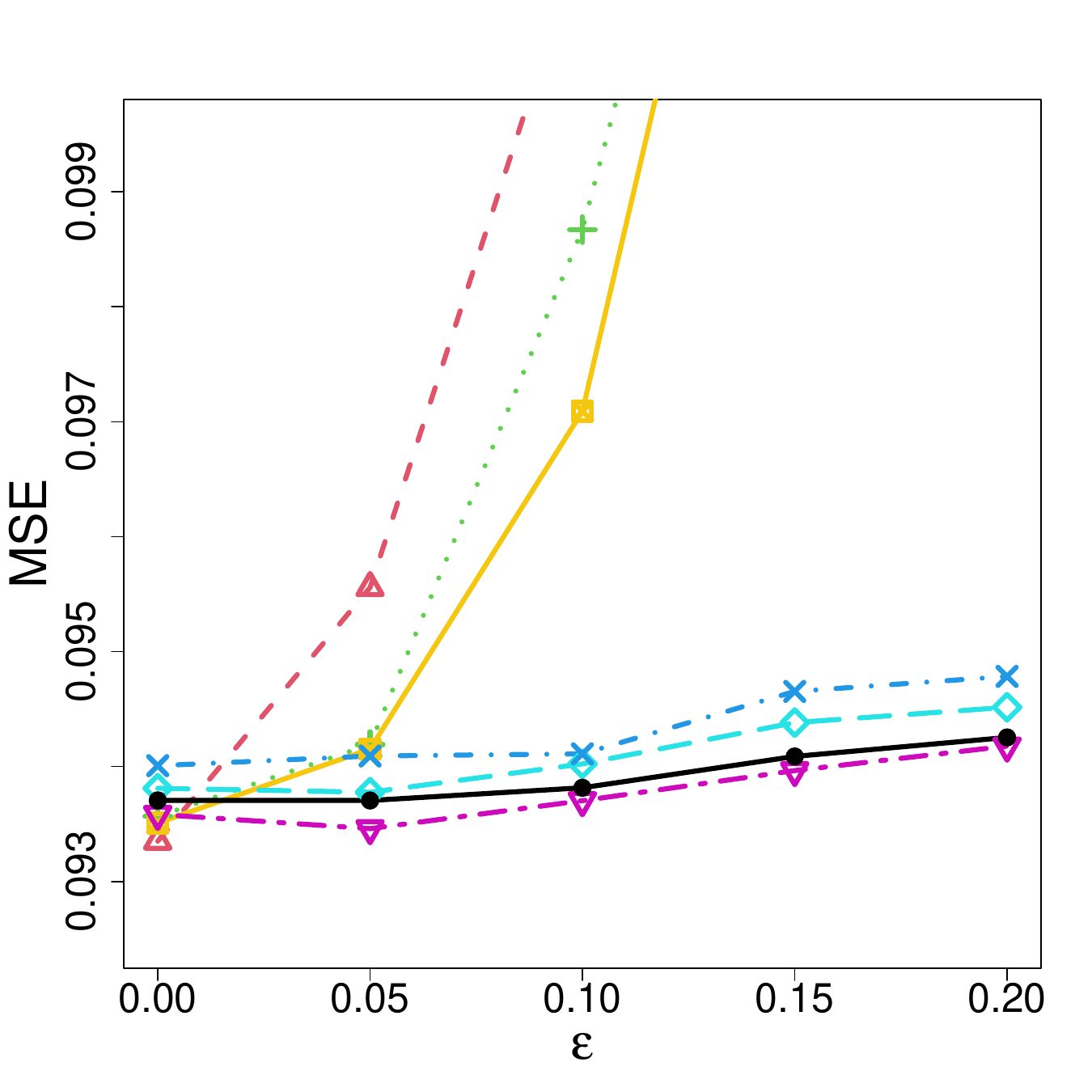} &\includegraphics[width=.3\textwidth]
  {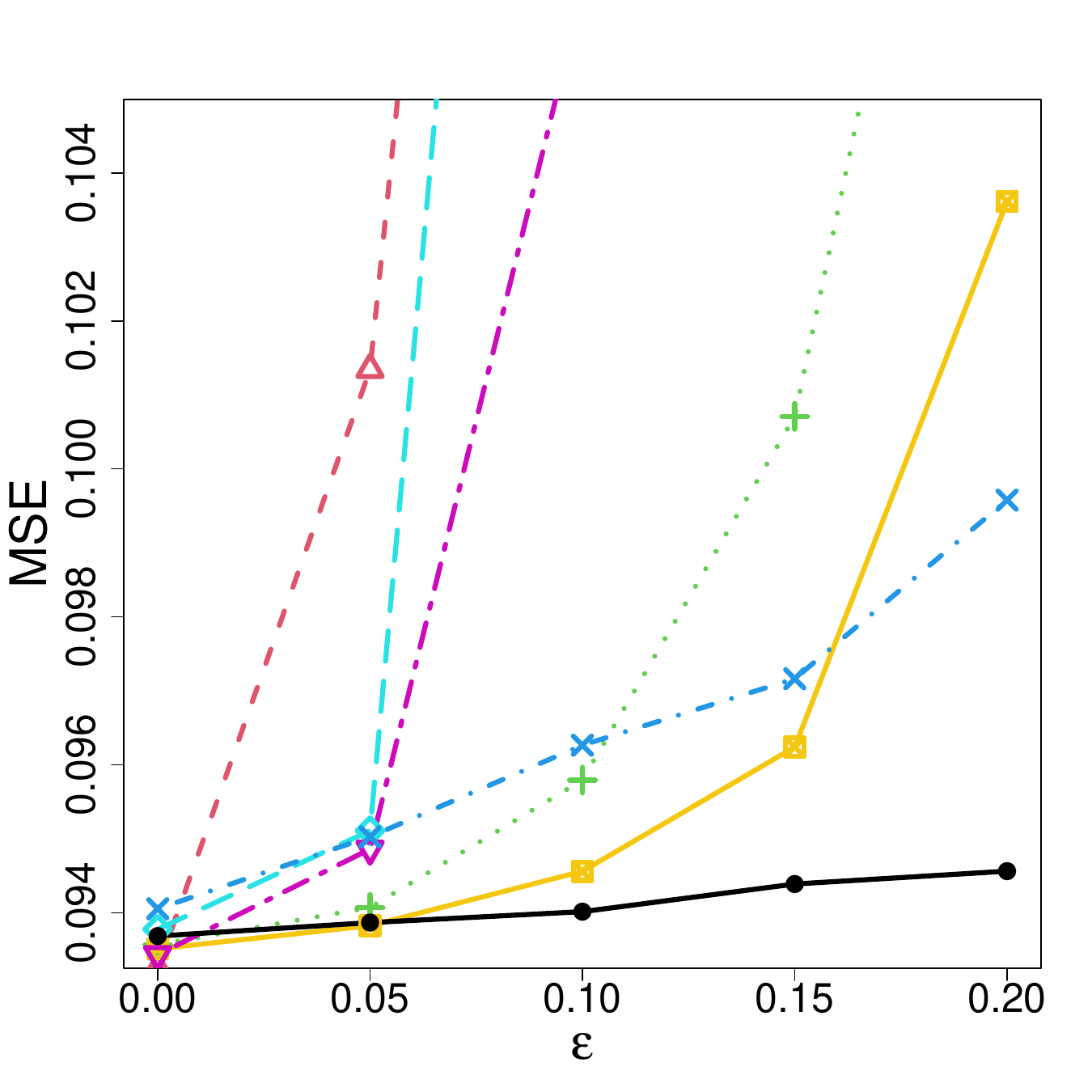}
\end{tabular}
\caption{Median angle (top) and MSE (bottom) attained by CPCA, CANDES, Only-cell, ROBPCA, Only-case, MacroPCA, and cellPCA as a function of $\varepsilon$ in the presence of cellwise outliers generated with $\gamma_{\cell}=3$, casewise outliers generated with $\gamma_{\case}=3$, and both generated with $\gamma_{\cell}=3$. The covariance model was A09 with $n=100$ and $p=20$, without NAs.}
\label{fig:results_varying_pou1}
\end{figure}

\begin{figure}[!ht]
\centering
\begin{tabular}{ccc}
   \large \textbf{Cellwise}& \large \textbf{Casewise} &\large{\textbf{Casewise \& Cellwise}} \\
   [-4mm]
  \includegraphics[width=.3\textwidth]
  {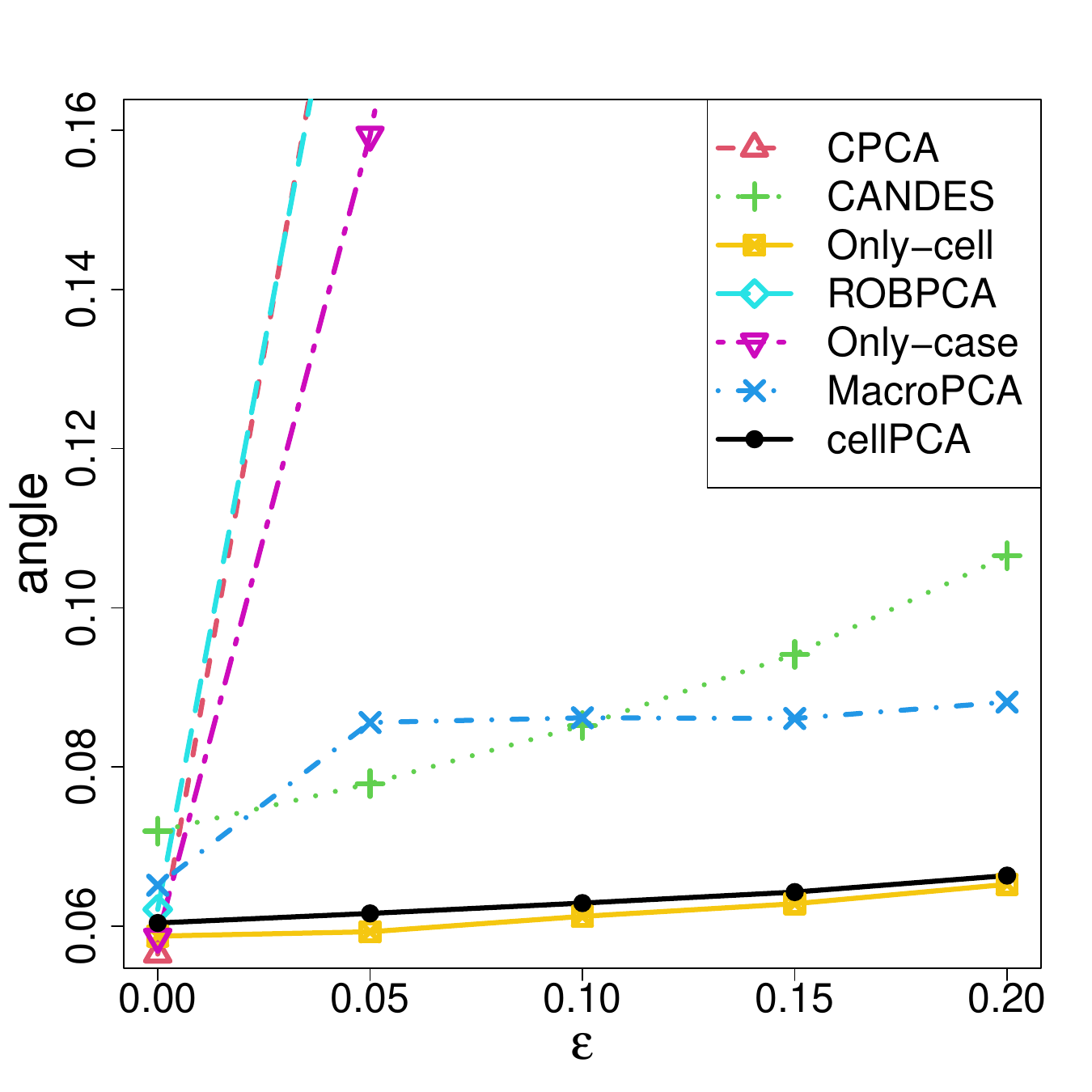} &\includegraphics[width=.3\textwidth]
  {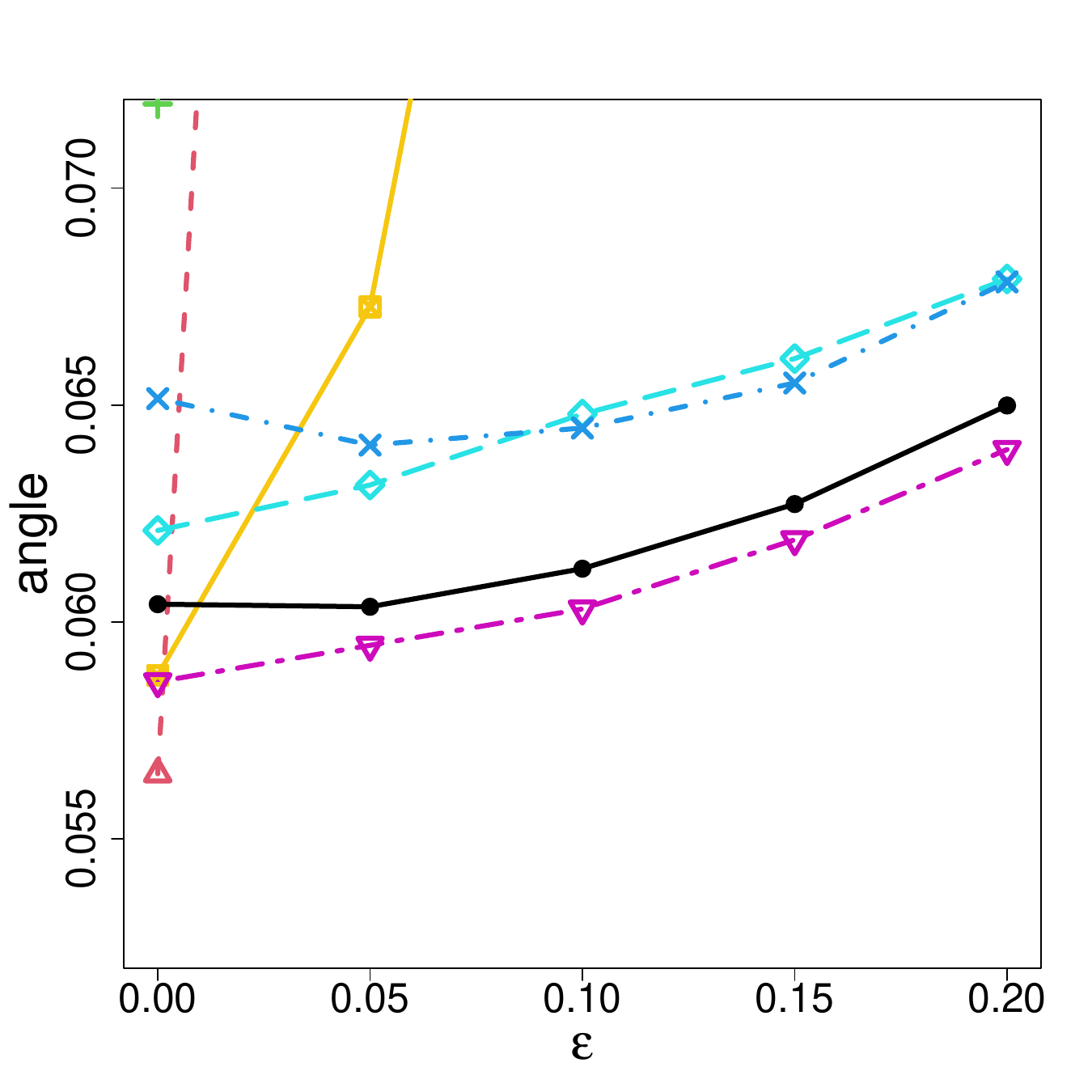} &\includegraphics[width=.3\textwidth]
  {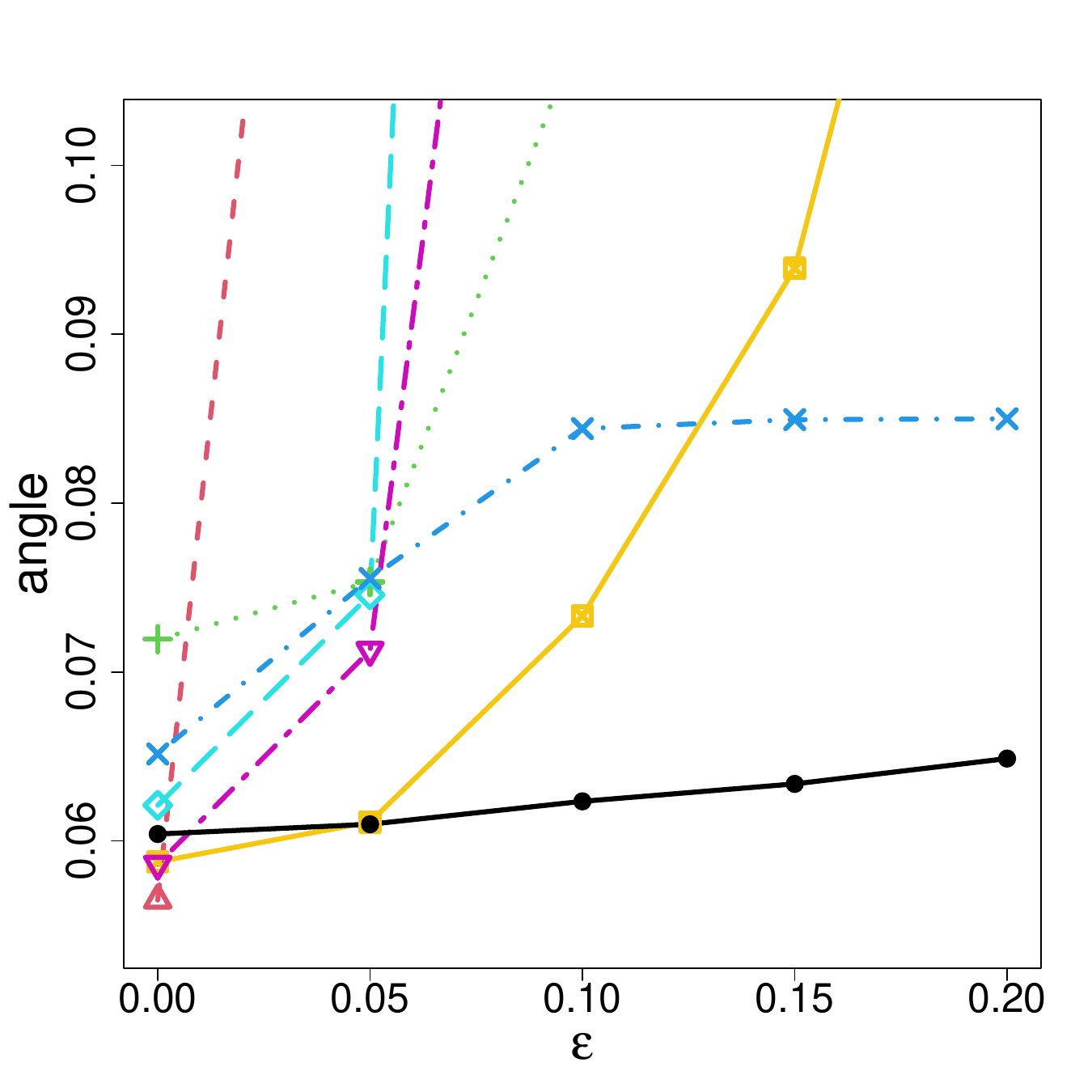}  \\
   [-4mm]

    \includegraphics[width=.3\textwidth]
  {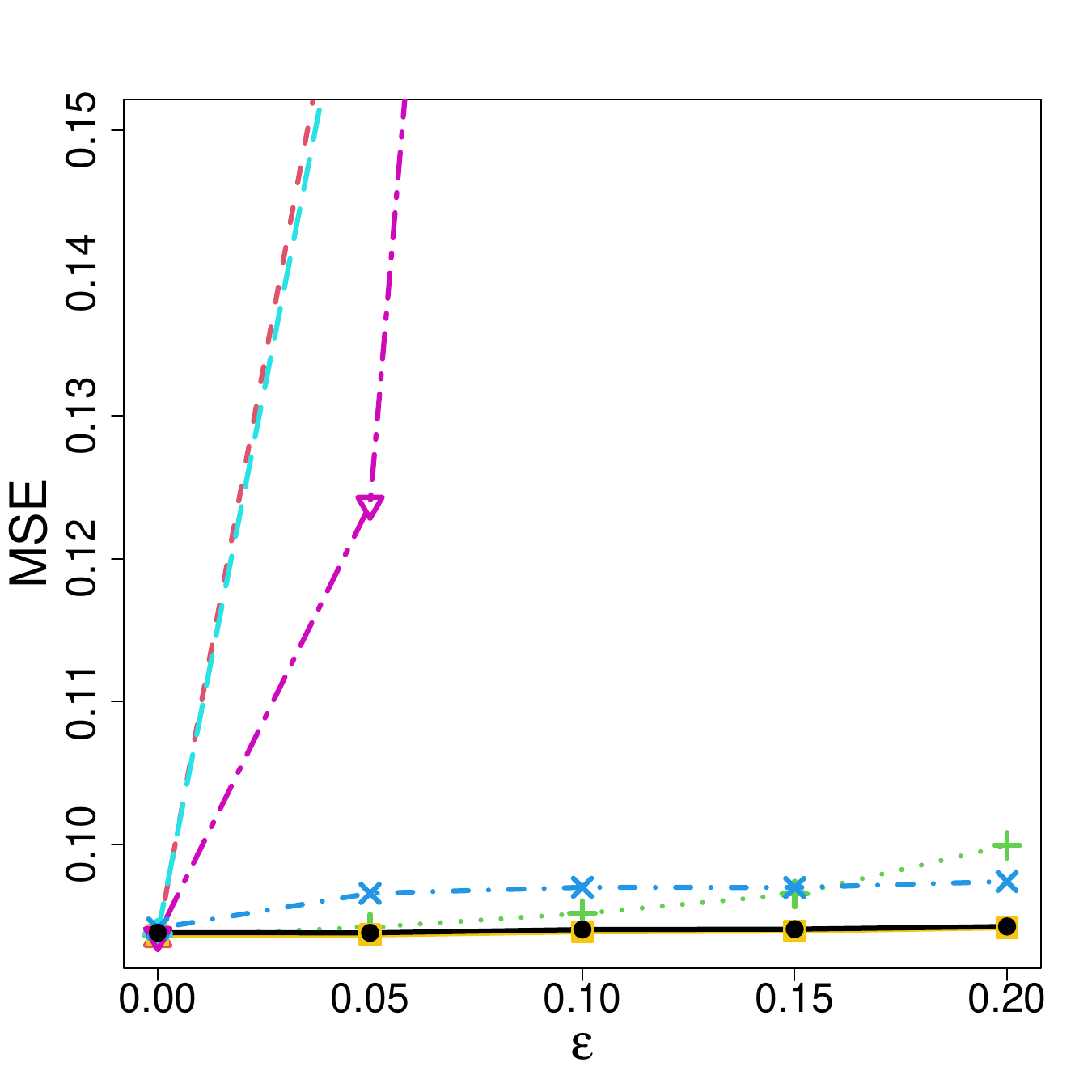} &\includegraphics[width=.3\textwidth]
  {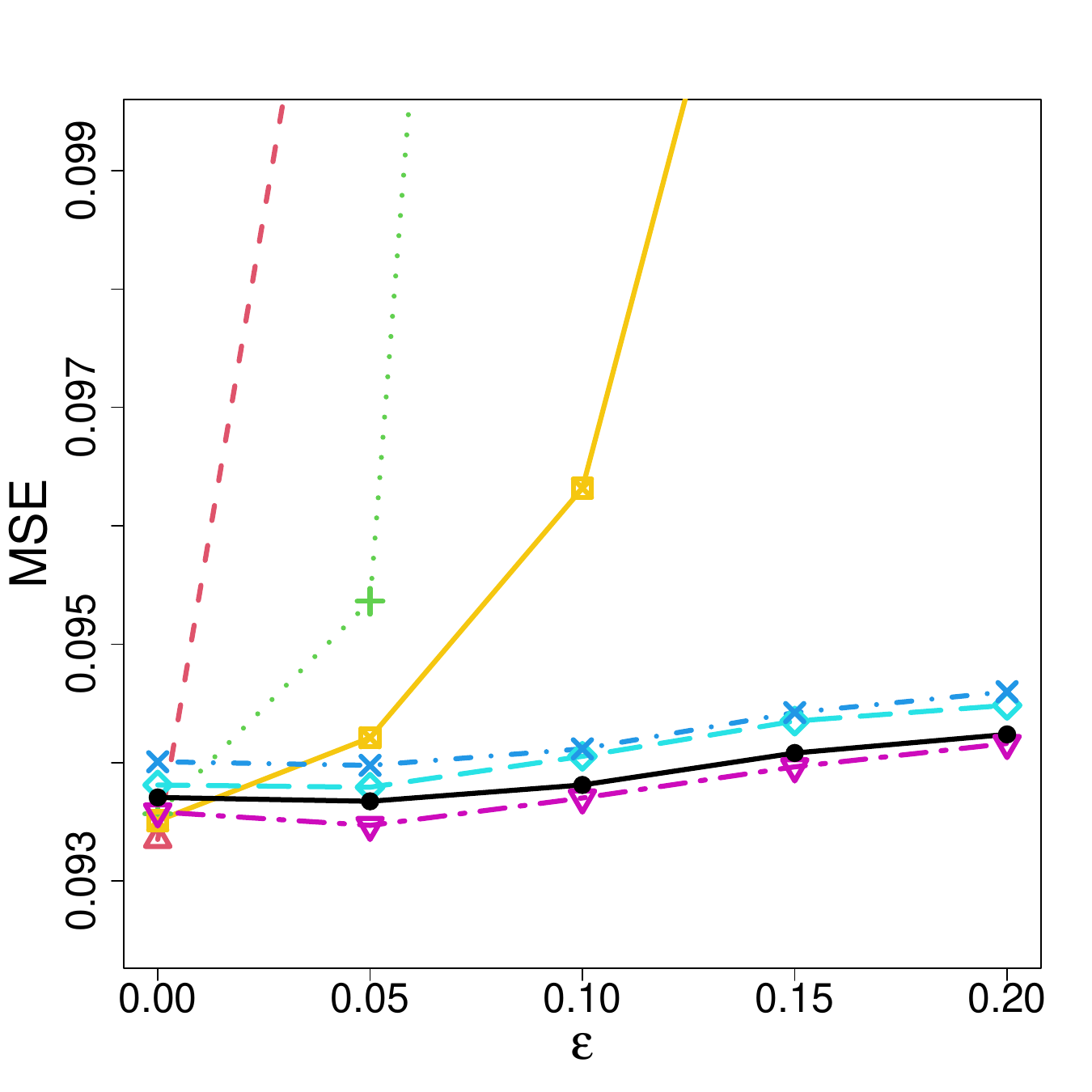} &\includegraphics[width=.3\textwidth]
  {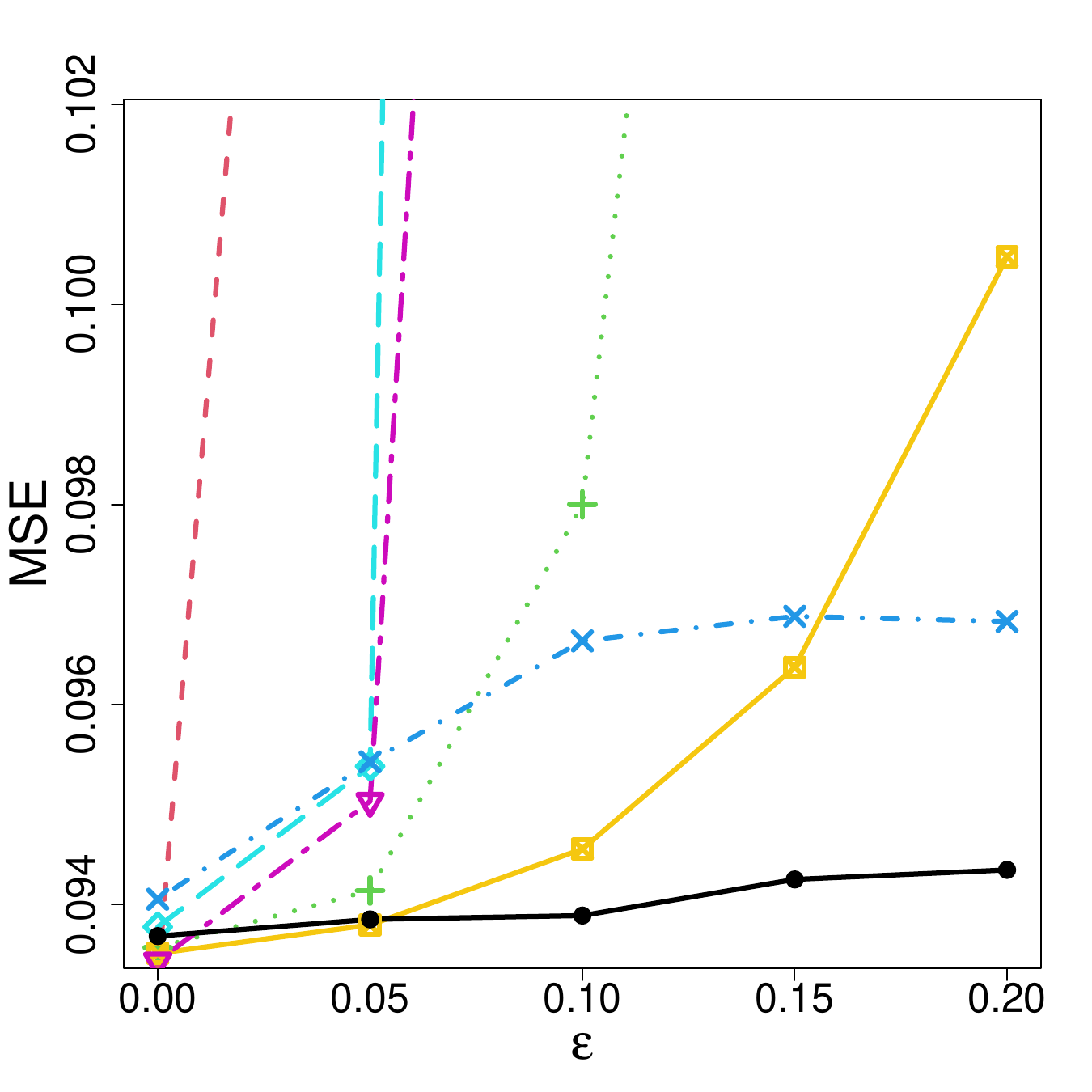}
\end{tabular}
\caption{Median angle (top) and MSE (bottom) attained by CPCA, CANDES, Only-cell, ROBPCA, Only-case, MacroPCA, and cellPCA as a function of $\varepsilon$ for data contaminated with cellwise outliers generated with $\gamma_{\cell}=5$, casewise outliers generated with $\gamma_{\case}=5$, and both generated with $\gamma_{\cell}= 5$.  The covariance model was A09 with $n=100$ and $p=20$, without NAs.}
\label{fig:results_varying_pou2}
\end{figure}

In Figures~\ref{fig:results_varying_pou1} 
and~\ref{fig:results_varying_pou2} we see that, as expected,
increasing $\varepsilon$ hurts all methods, but not to the same
extent. CellPCA did best in the presence of both cellwise
and casewise outliers, and did very well also in the other settings.

Indeed, in the cellwise contamination setting, cellPCA performs similarly to Only-cell and CANDES, which are specifically designed to tackle this type of contamination. However, in the casewise contamination setting the latter methods break down, whereas cellPCA still performs very well and is comparable to robust casewise methods such as Only-case and ROBPCA. In the presence of both types of contamination, cellPCA outperforms all other methods, including MacroPCA.
As expected, differences in performance become more pronounced as the contamination fraction increases, with higher levels representing more challenging scenarios. Nevertheless, even for a small fraction of contamination, e.g., $\eps = 0.05$, cellPCA is often the best, or among the best performing methods.
Note that for very small values of $\eps$, some methods are slightly more efficient than cellPCA. However, this reflects the common trade-off between robustness and efficiency.
In real applications, however, the amount and type of contamination are rarely known in advance. This uncertainty further supports the use of the proposed method, which overall performs best across a wide range of settings.

\clearpage
To study the effect of different fractions $\varepsilon^{\obs}$ of missing values, Figures~\ref{fig:results_varying_NA1},~\ref{fig:results_varying_NA2}, and~\ref{fig:results_varying_NA3}
show the median 
angle and MSE  as a function of $\varepsilon^{\obs}$ for uncontaminated data, data contaminated with cellwise
outliers generated with $\gamma_{\cell}=\lbrace 3, 5\rbrace$, casewise outliers generated with $\gamma_{\case}=\lbrace 2,5 \rbrace$, and both with $\gamma_{\cell}=\lbrace 3, 5\rbrace$. 
The covariance model was A09 with $n=100$ and $p=20$, and contamination was added as described in Section~\ref{sec:simulation}.

\begin{figure}[!ht]
\centering
\includegraphics[width=.3\textwidth]
  {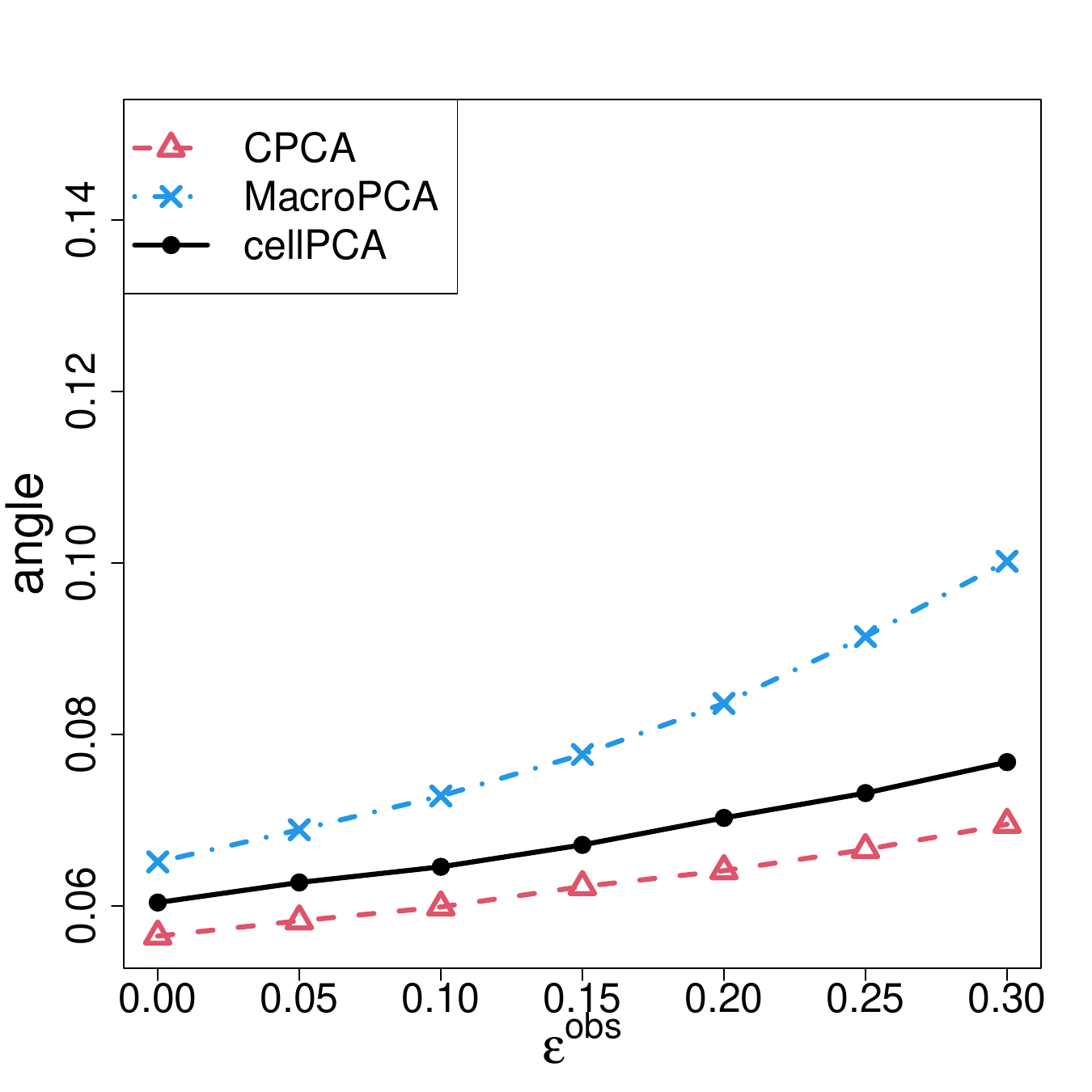} 
\includegraphics[width=.3\textwidth]
  {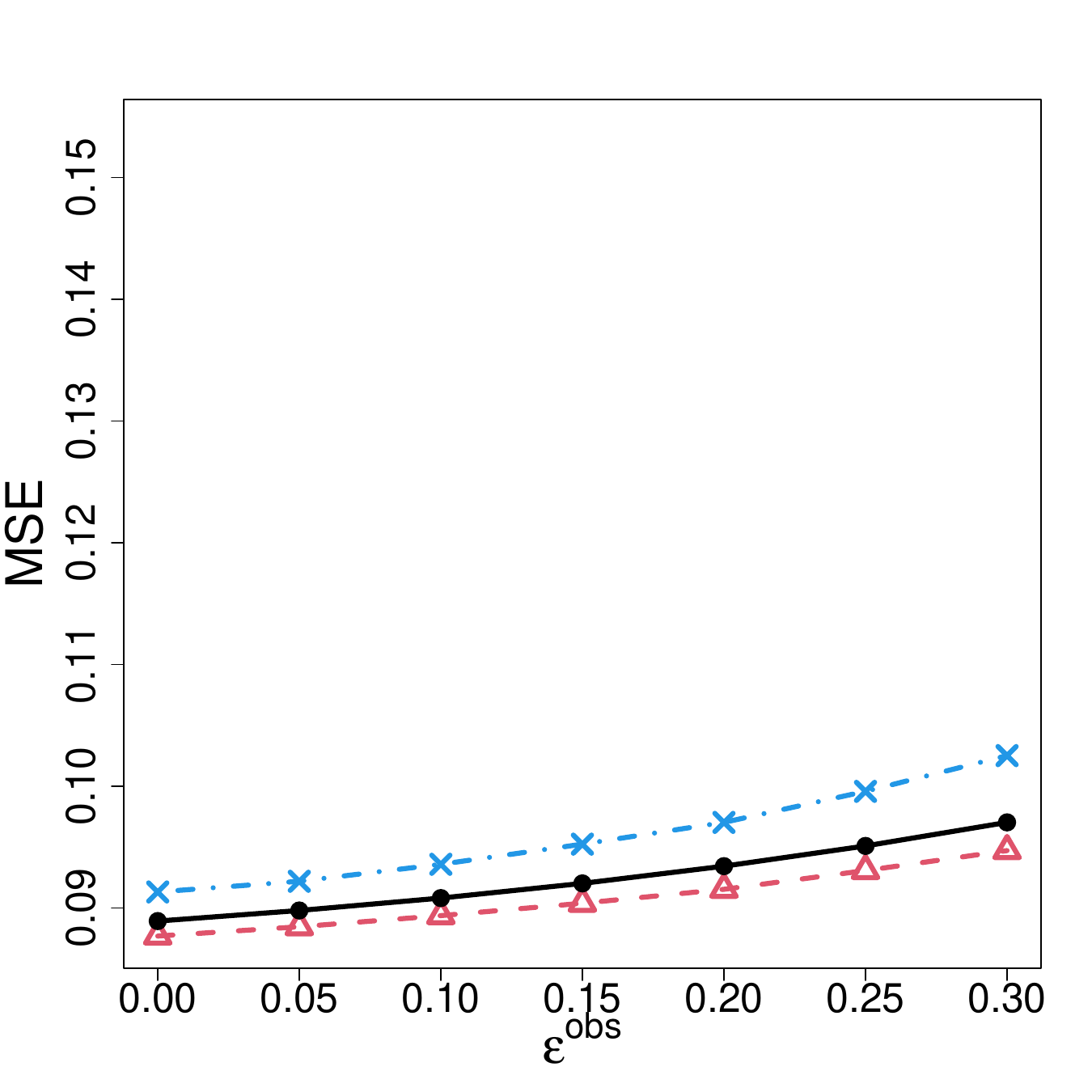} 
\caption{Median angle (left) and MSE (right) attained by CPCA, MacroPCA, and cellPCA as a function of $\varepsilon^{\obs}$ for uncontaminated data. The covariance model was A09 with $n=100$ and $p=20$.}
\label{fig:results_varying_NA1}
\end{figure}

In Figure~\ref{fig:results_varying_NA1} we see
that for uncontaminated data the classical method did
best, as expected. It is followed by cellPCA, which
outperforms its predecessor MacroPCA.\\

\clearpage

Figure~\ref{fig:results_varying_NA2} shows the
effect of a varying percentage of NAs in the presence
of intermediate outliers ($\gamma_{\cell} = 3$ and/or
$\gamma_{\case} = 3$). Now cellPCA performs best, followed
by MacroPCA, and CPCA is the most affected.

The situation is similar in the presence of far
outliers ($\gamma_{\cell} = 5$ and/or $\gamma_{\case} = 5$) in Figure~\ref{fig:results_varying_NA3}, except that 
CPCA now underperforms more substantially relative 
to the other methods.\\

\begin{figure}[!ht]
\centering
\begin{tabular}{ccc}
   \large \textbf{Cellwise}& \large \textbf{Casewise} &\large{\textbf{Casewise \& Cellwise}} \\
   [-4mm]
  \includegraphics[width=.3\textwidth]
  {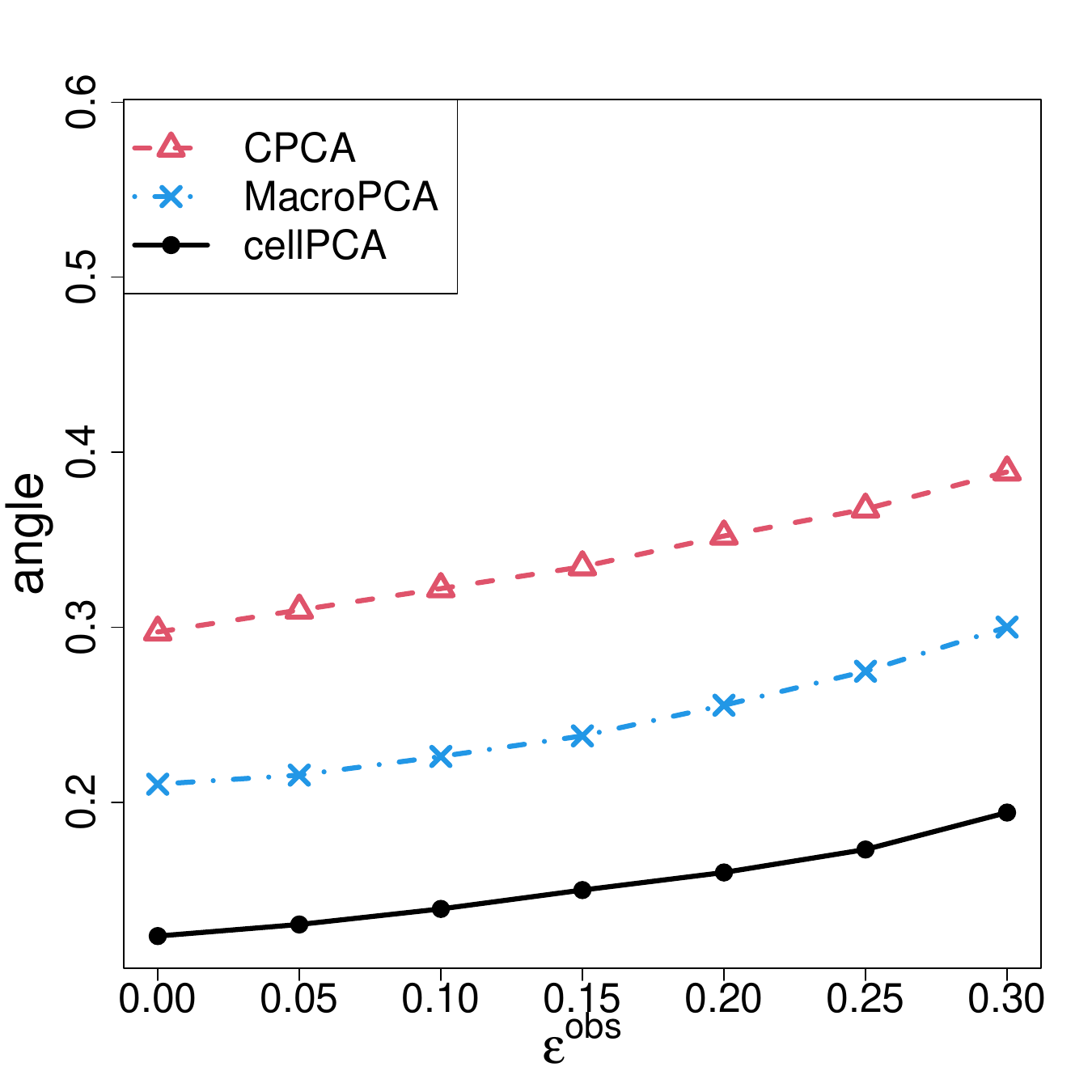} &\includegraphics[width=.3\textwidth]
  {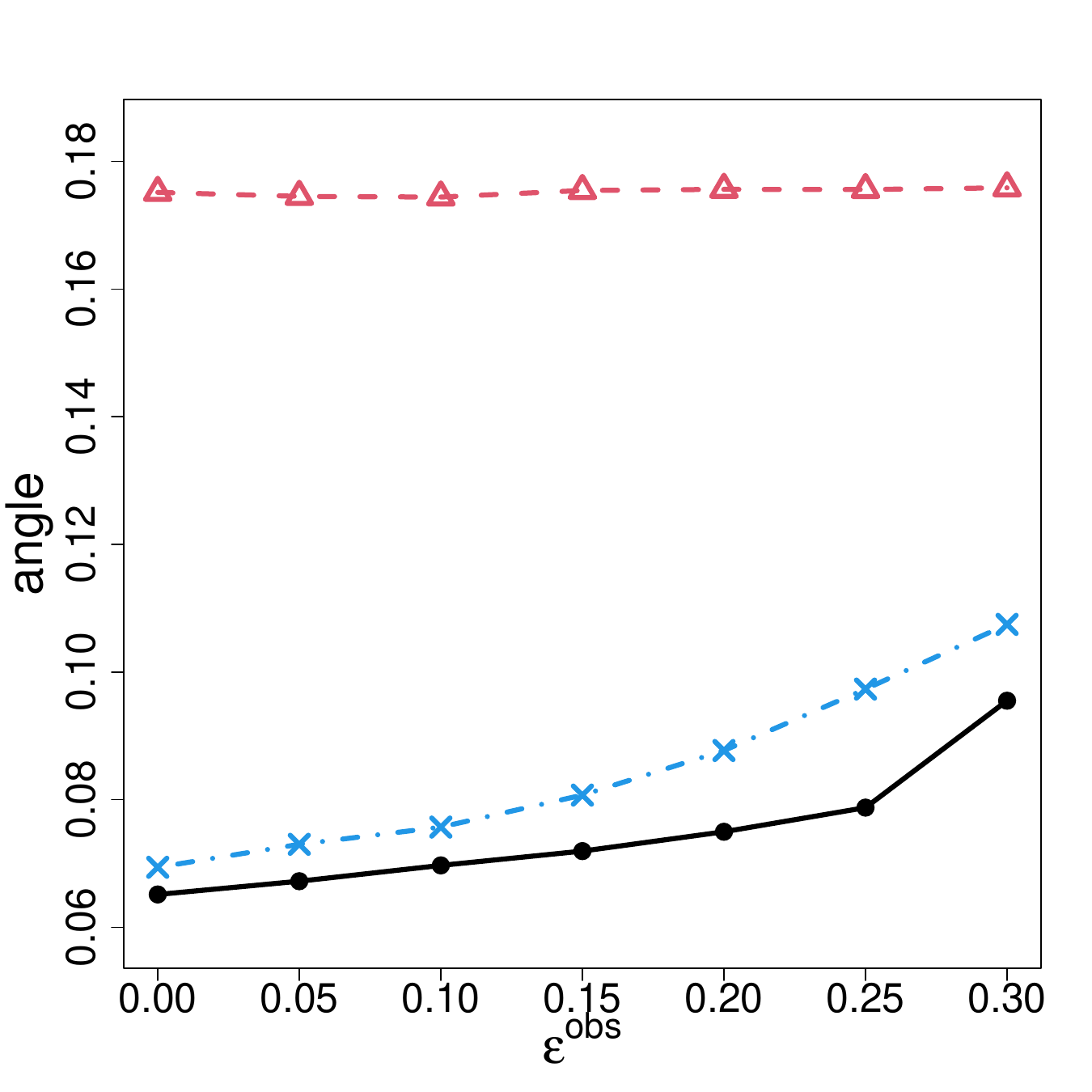} &\includegraphics[width=.3\textwidth]
  {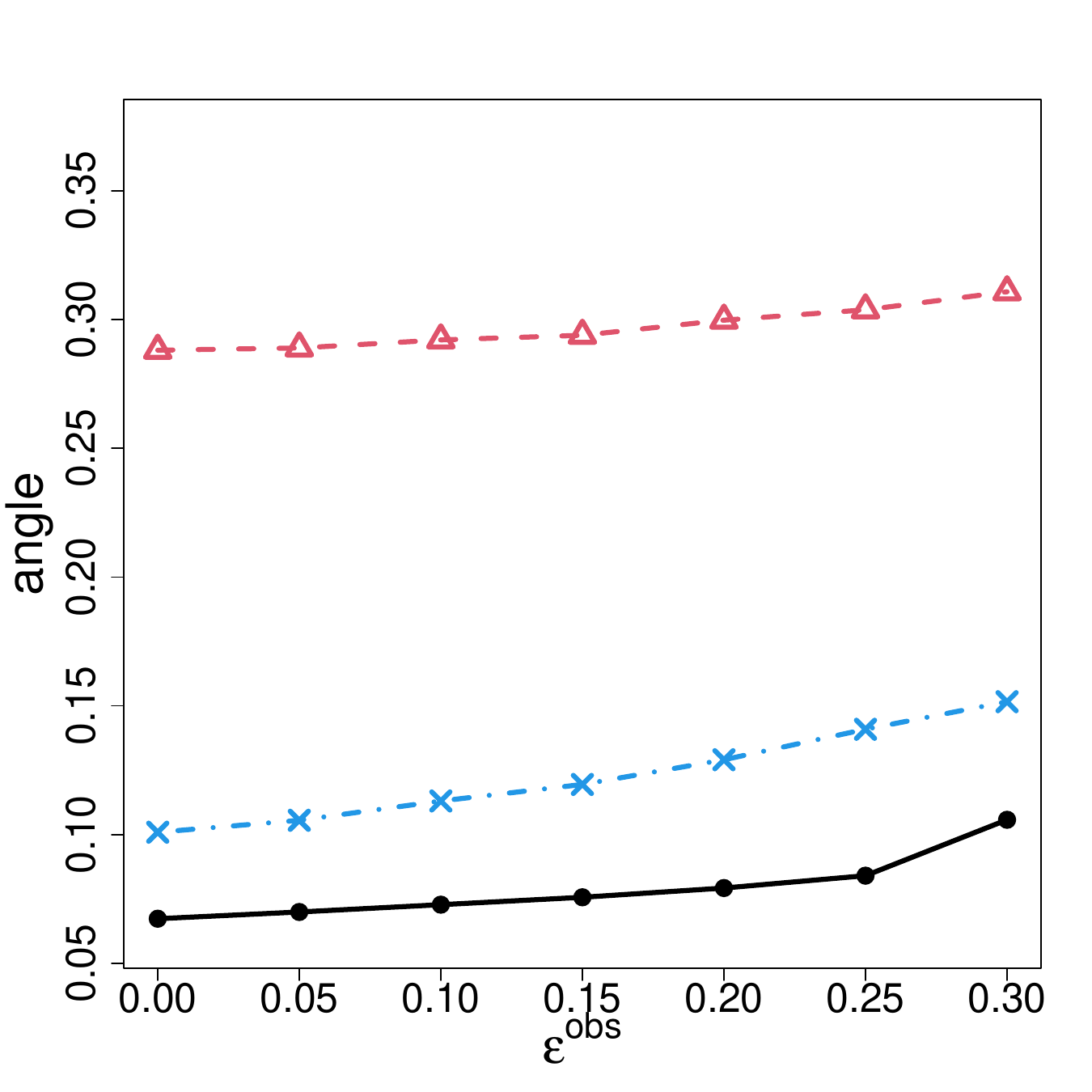}  \\
   [-4mm]
  \includegraphics[width=.3\textwidth]
  {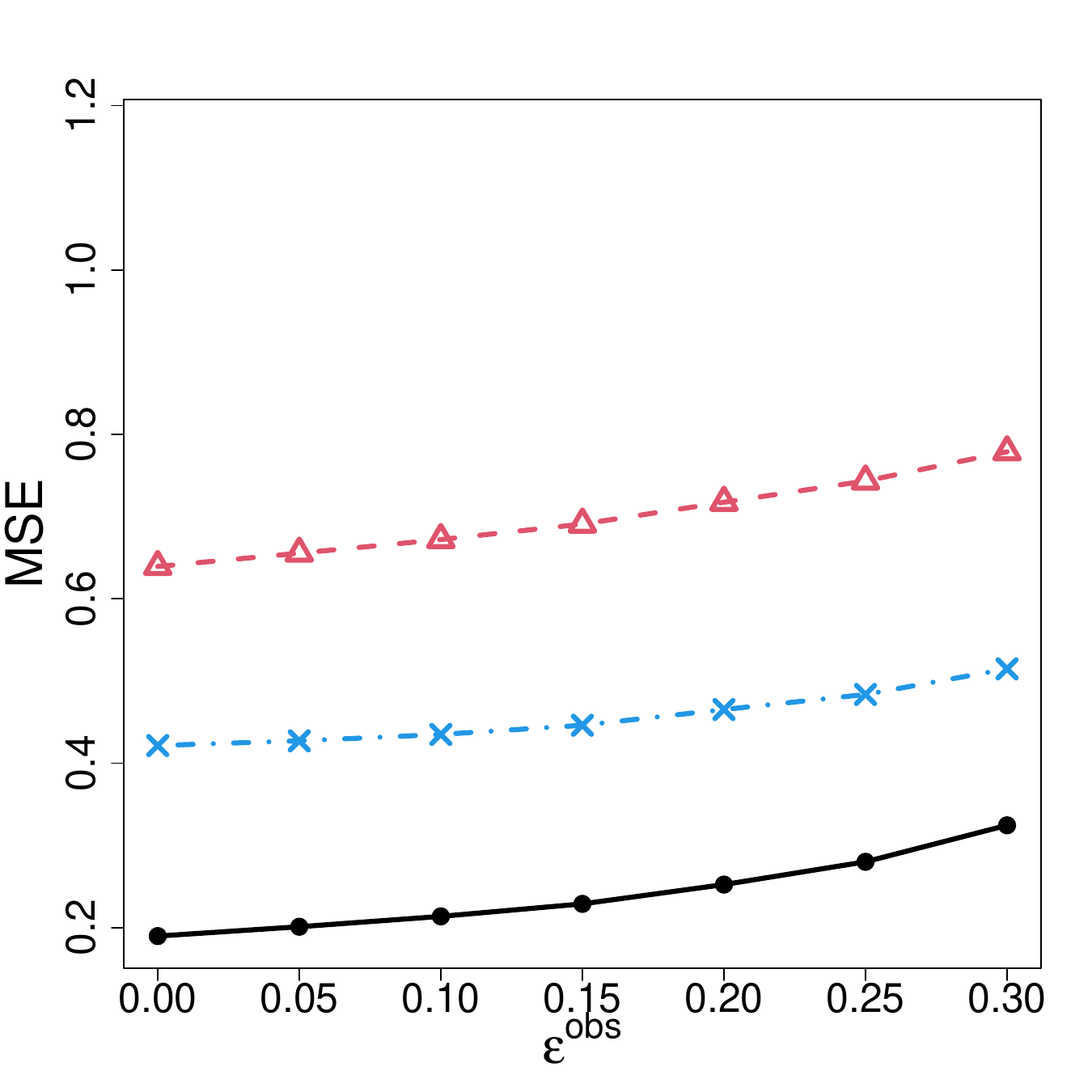} &\includegraphics[width=.3\textwidth]
  {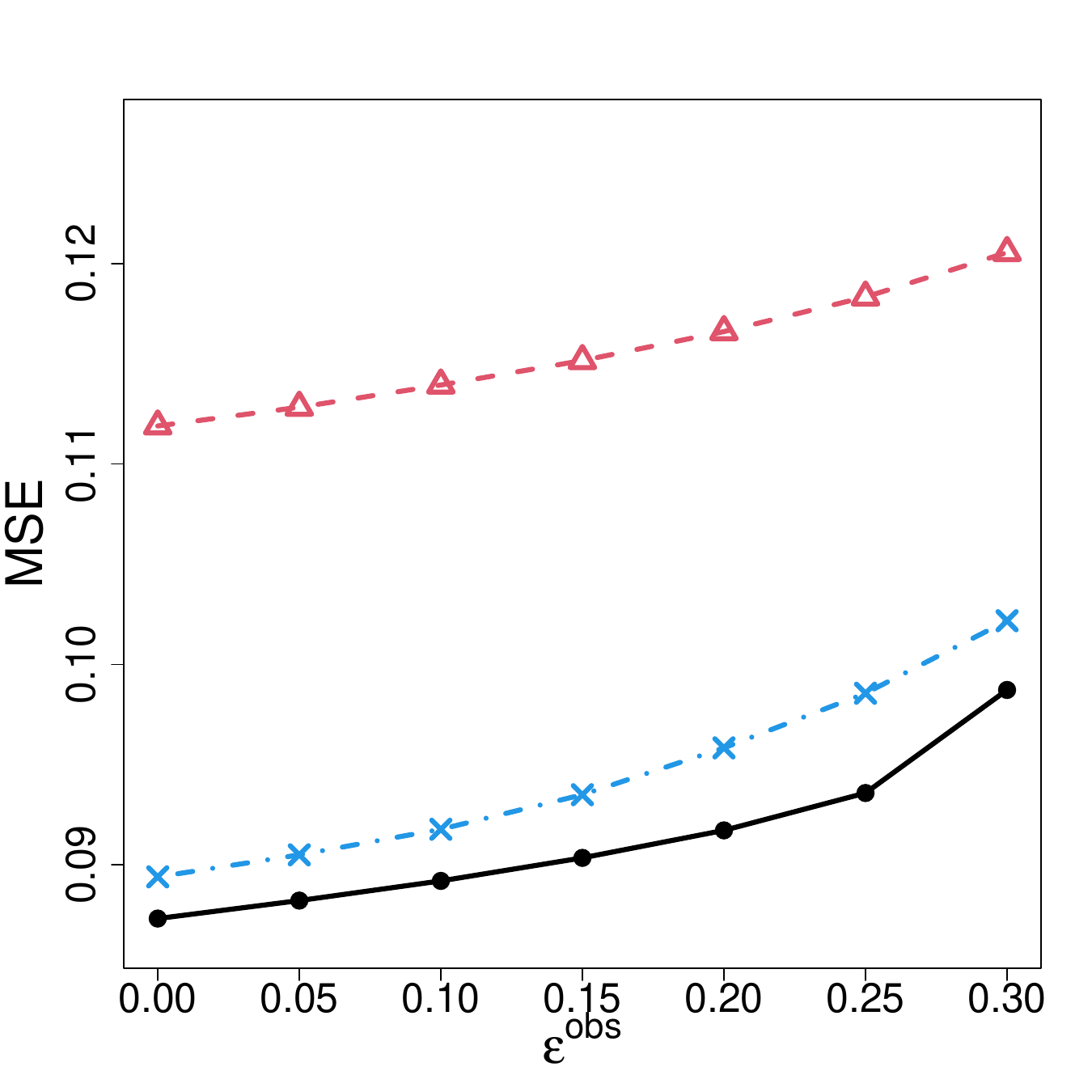} &\includegraphics[width=.3\textwidth]
  {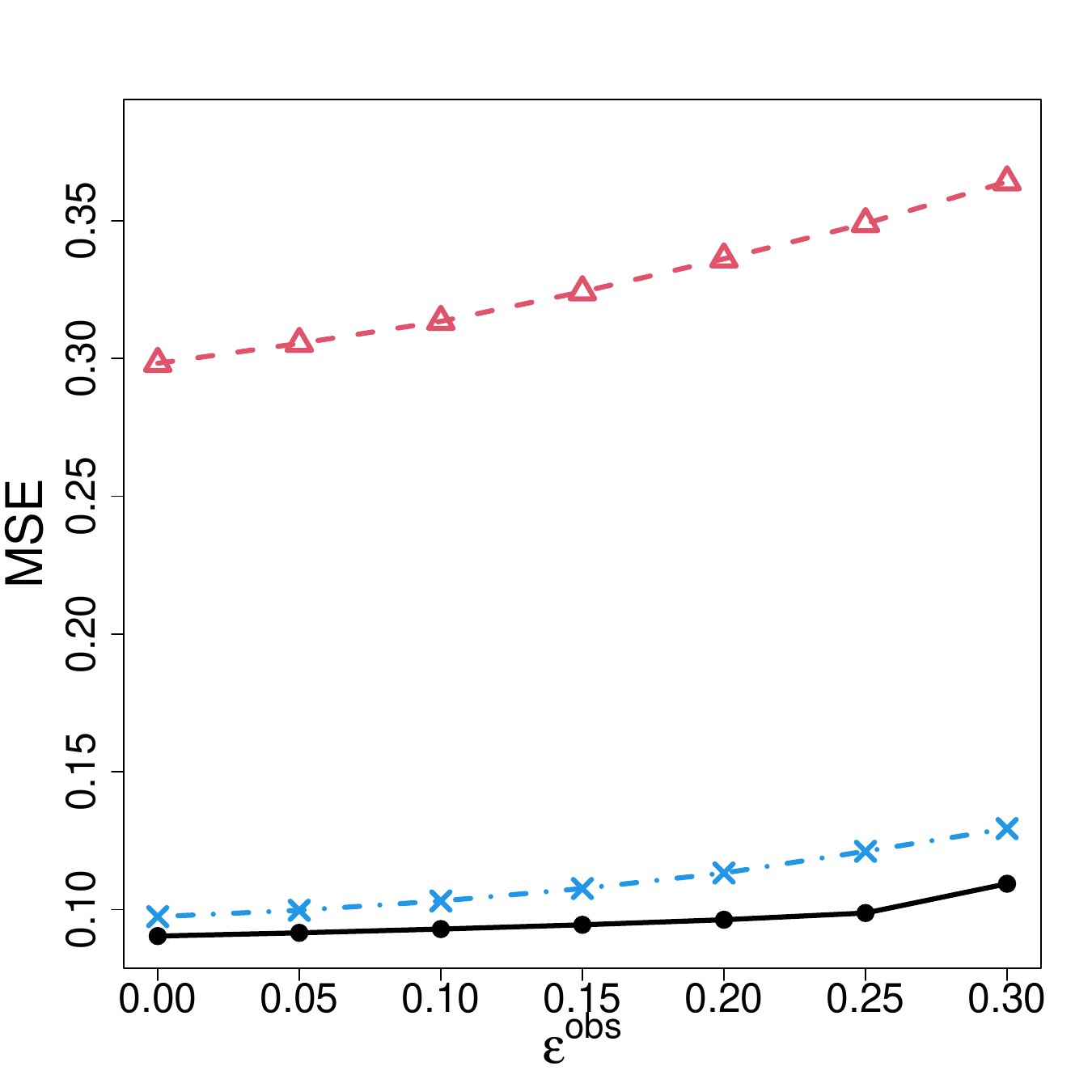}
\end{tabular}
\caption{Median angle (top) and MSE (bottom) attained by CPCA, MacroPCA, and cellPCA as a function of $\varepsilon^{\obs}$ for data contaminated with cellwise
outliers generated with $\gamma_{\cell}= 3$, casewise outliers generated with $\gamma_{\case}= 3$, and both with $\gamma_{\cell}= 3 $. The covariance model was A09 with $n=100$ and $p=20$.} 
\label{fig:results_varying_NA2}
\end{figure}

\begin{figure}[!ht]
\centering 
\begin{tabular}{ccc}
   \large \textbf{Cellwise}& \large \textbf{Casewise} &\large{\textbf{Casewise \& Cellwise}} \\
   [-4mm]
  \includegraphics[width=.3\textwidth]
  {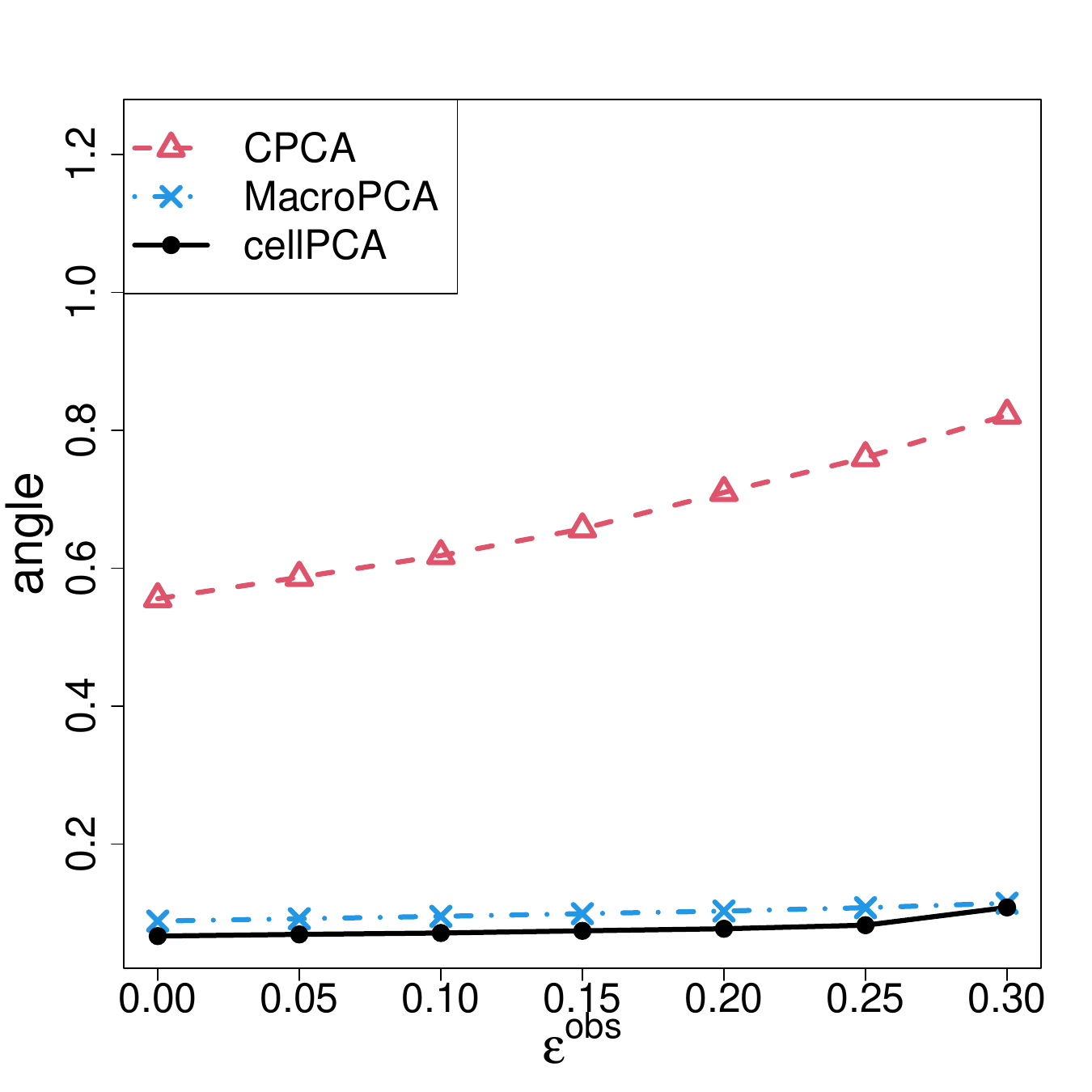} &\includegraphics[width=.3\textwidth]
  {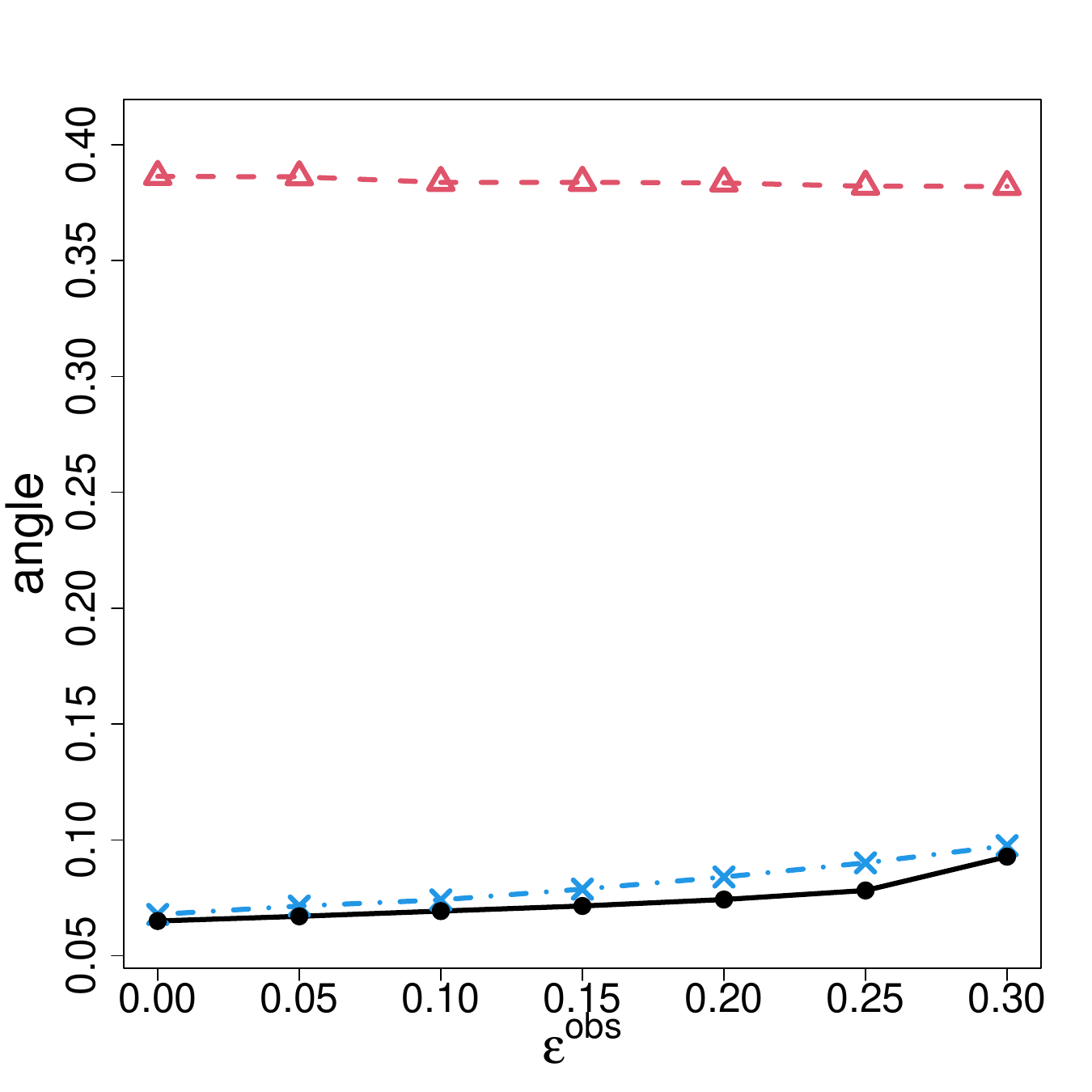} &\includegraphics[width=.3\textwidth]
  {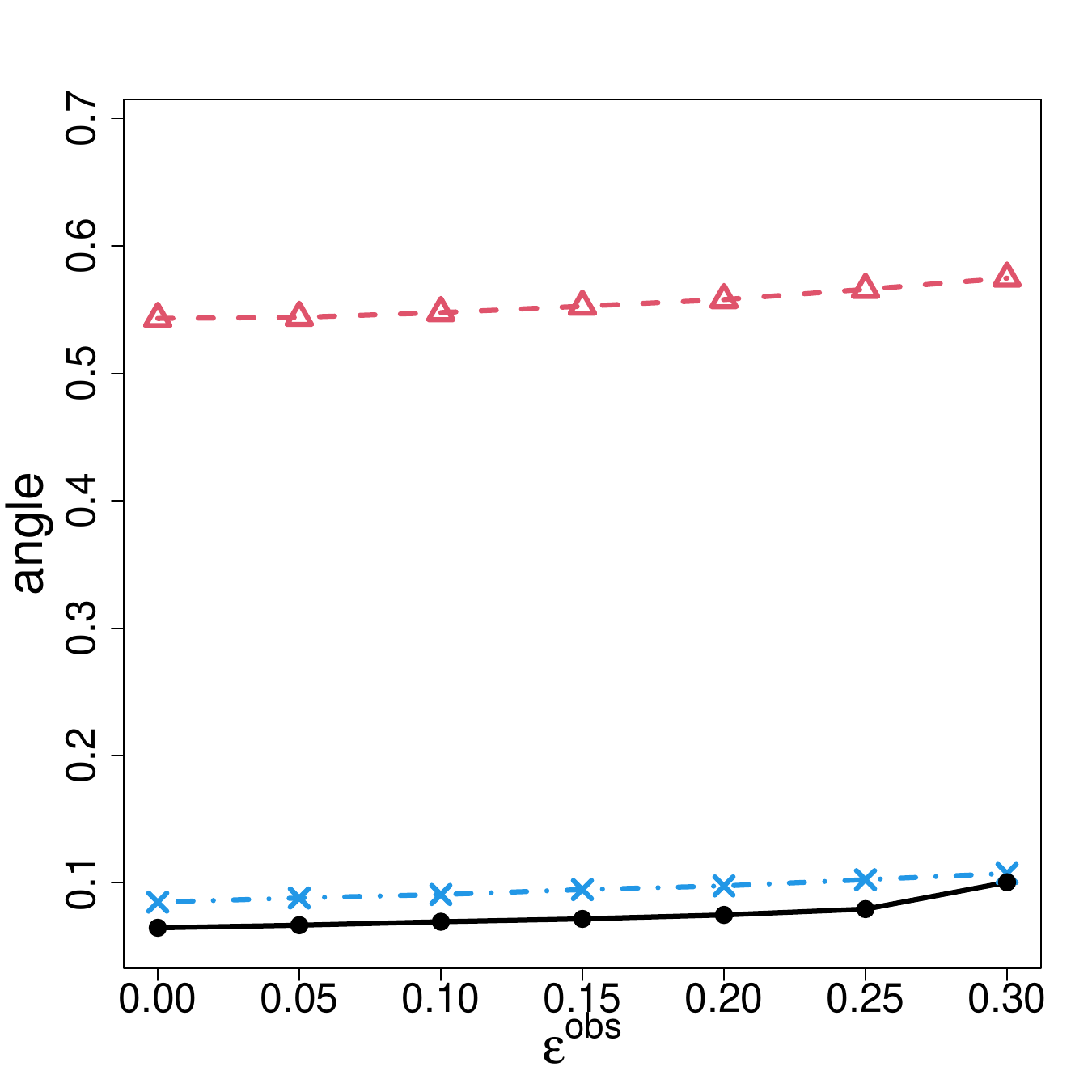}  \\
   [-4mm]
  \includegraphics[width=.3\textwidth]
  {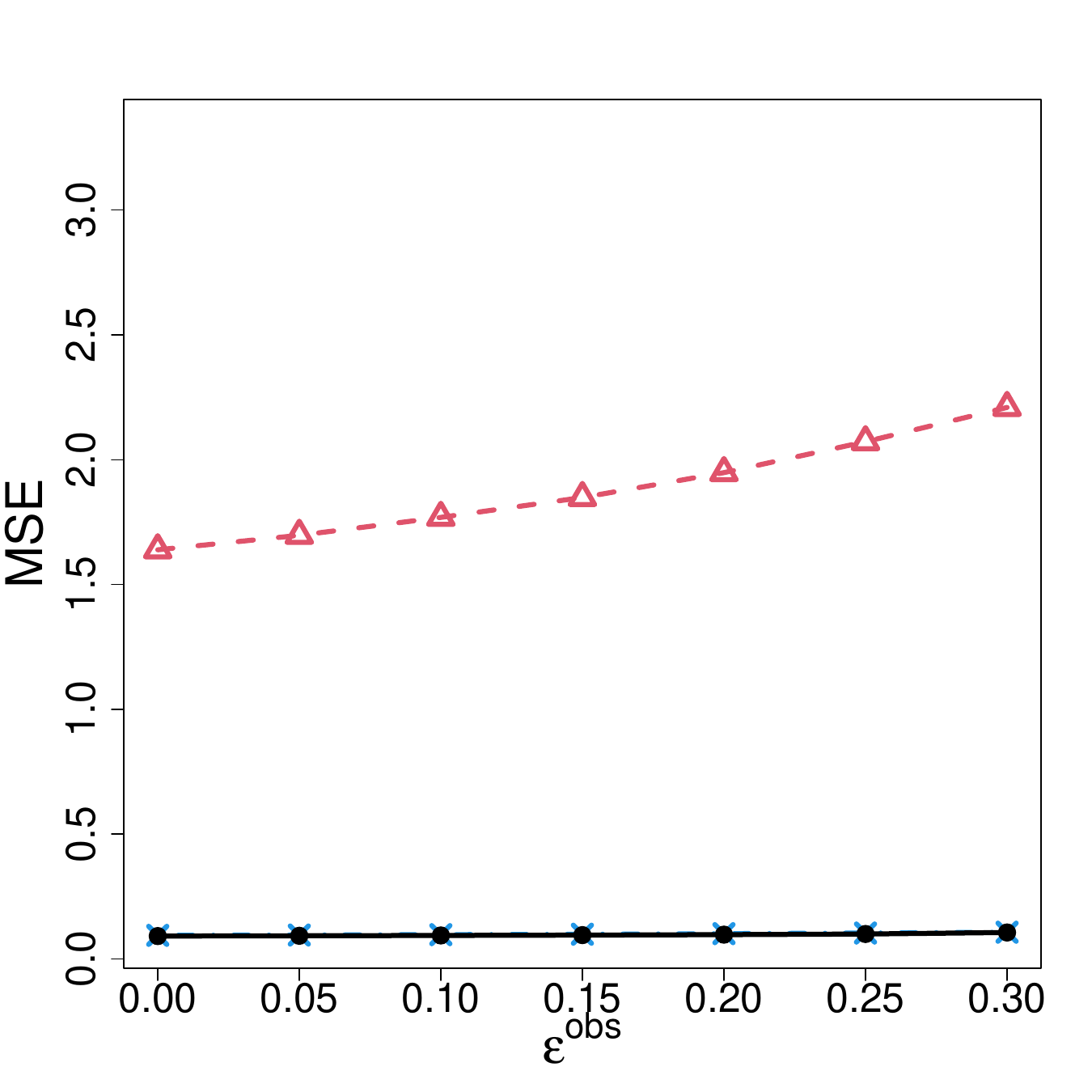} &\includegraphics[width=.3\textwidth]
  {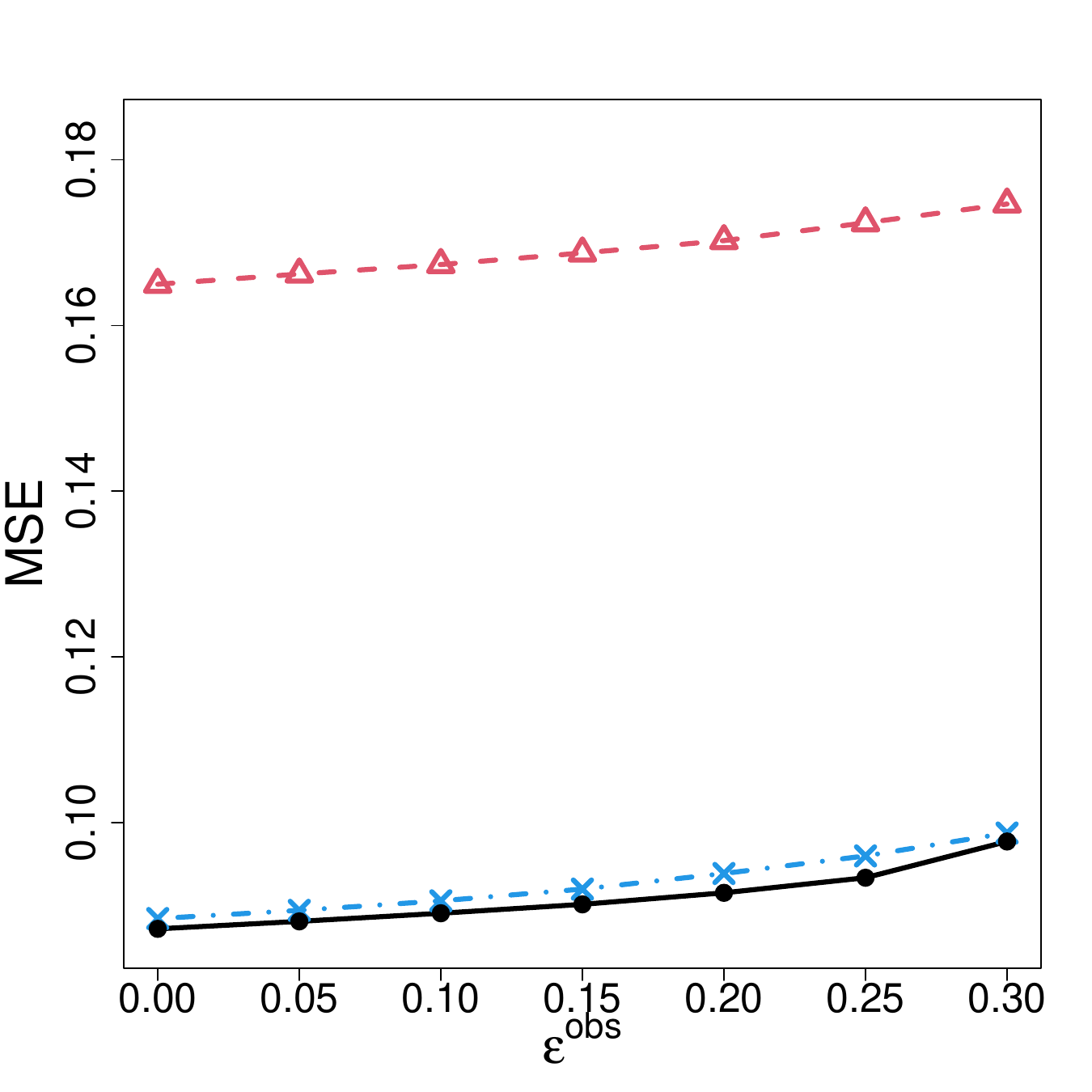} &\includegraphics[width=.3\textwidth]
  {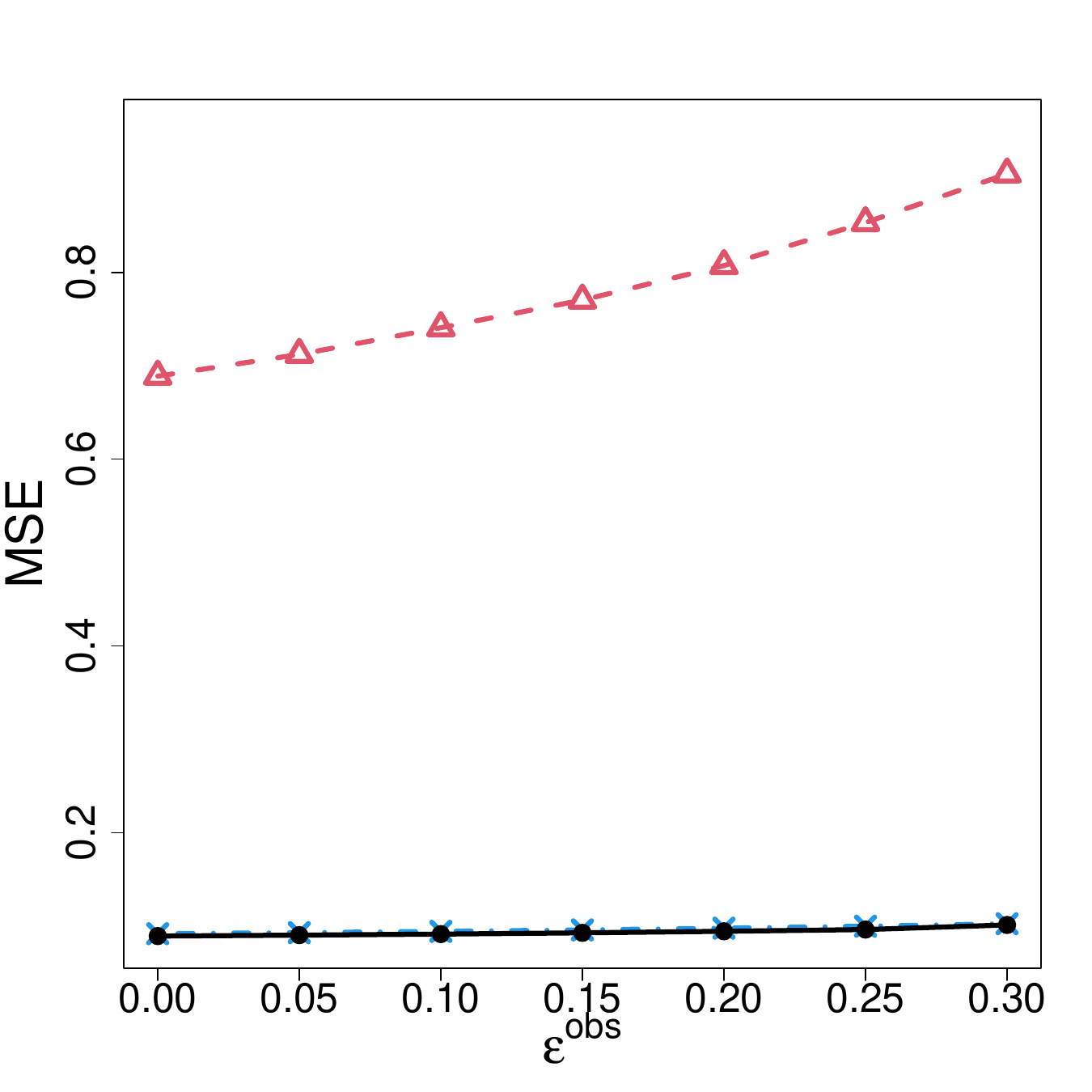}
\end{tabular}
\caption{Median angle (top) and MSE (bottom) attained by CPCA, MacroPCA, and cellPCA as a function of $\varepsilon^{\obs}$ for data contaminated with cellwise
outliers generated with $\gamma_{\cell}= 5$, casewise outliers generated with $\gamma_{\case}= 5$, and both with $\gamma_{\cell}= 5$. The covariance model was A09 with $n=100$ and $p=20$.}
\label{fig:results_varying_NA3}
\end{figure}

\clearpage
We now assess the performance of the out-of-sample 
prediction method described in Section~\ref{sec:newx}. 
The {\it training data} are generated as in Section~\ref{sec:simulation} by covariance model A09 with $n=100$ and $p=20$ without NAs, in the presence of cellwise outliers, casewise outliers, or both. The {\it test set}  follows the same covariance model with again $n=100$ and is either clean, contaminated with cellwise outliers, or simultaneously affected by cellwise outliers and missing values. Note that the test set does not contain casewise outliers because it is impossible to predict those, as they can be chosen arbitrarily.
Analogously to~\eqref{eq:MSE} we compute the out-of-sample MSE 
given by
\begin{equation} 
  \text{MSE}=\frac{1}{n}
  \sum_{i=1}^n\sum_{j=1}^p
  \left(\widehat{x^*_{ij}} - x^{*0}_{ij}\right)^2
\end{equation}
where $\widehat{x^*_{ij}}$ is the prediction of $x^*_{ij}$ 
and $x^{*0}_{ij}$ is the original value of that cell before 
any contamination took place. 

Figure \ref{fig:results_oos_pred} shows the median out-of-sample MSE of MacroPCA and cellPCA over 1000 replications. It has 9 panels. The three columns correspond to how the training data was generated, and the three rows reflect how the test set was generated. We clearly see that cellPCA has outperformed MacroPCA in all 9 situations.

\begin{figure}[!ht]
\centering
\begin{tabular}{cccc}
   &\large \textbf{Cellwise}  & \large \textbf{Casewise} &\large{\textbf{Casewise \& Cellwise}} \\
    [-4mm]
    \rotatebox{90}{\large \textbf{\parbox{5cm}{\centering Clean}}} &\includegraphics[width=.3\textwidth]
  {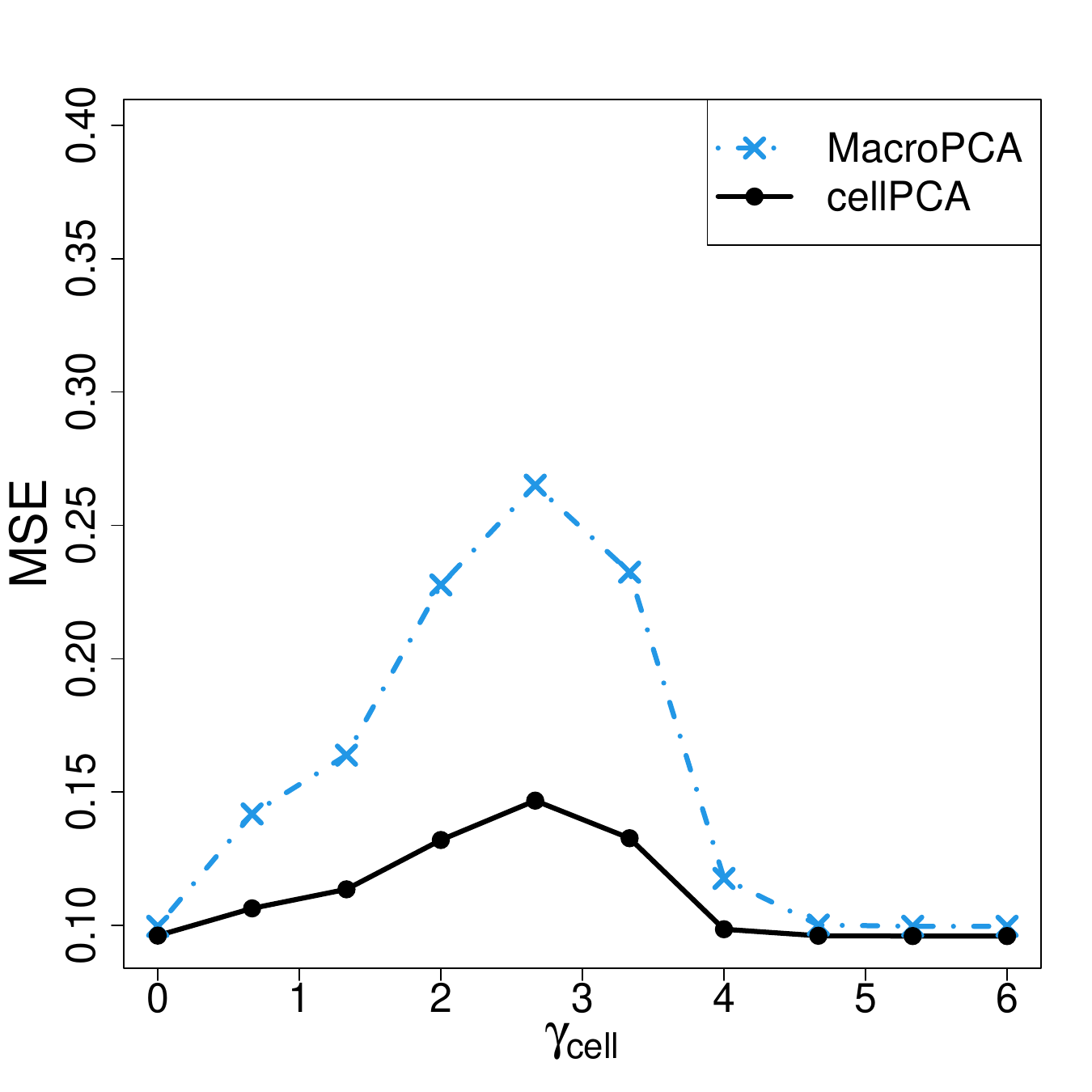} &\includegraphics[width=.3\textwidth]
  {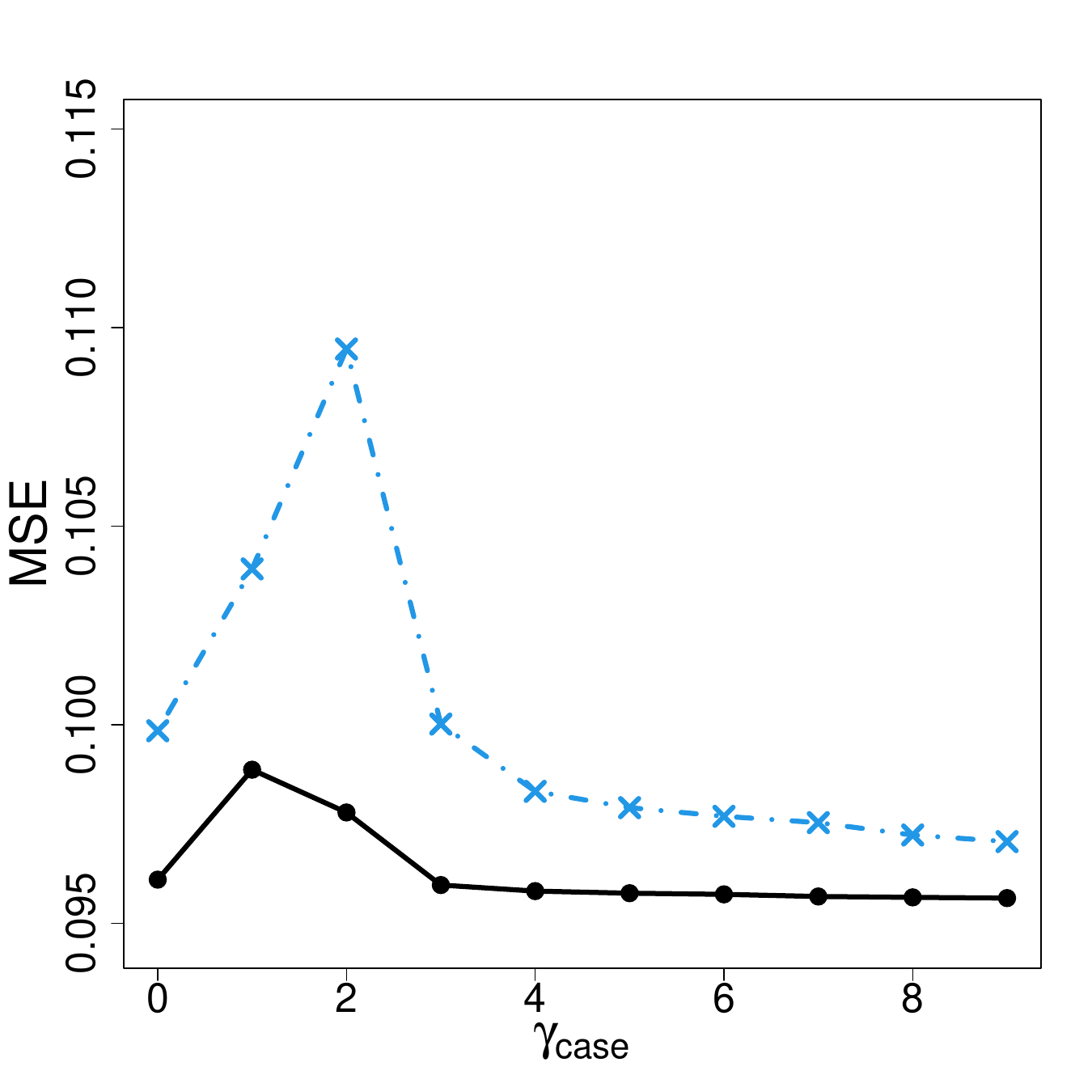} &\includegraphics[width=.3\textwidth]
  {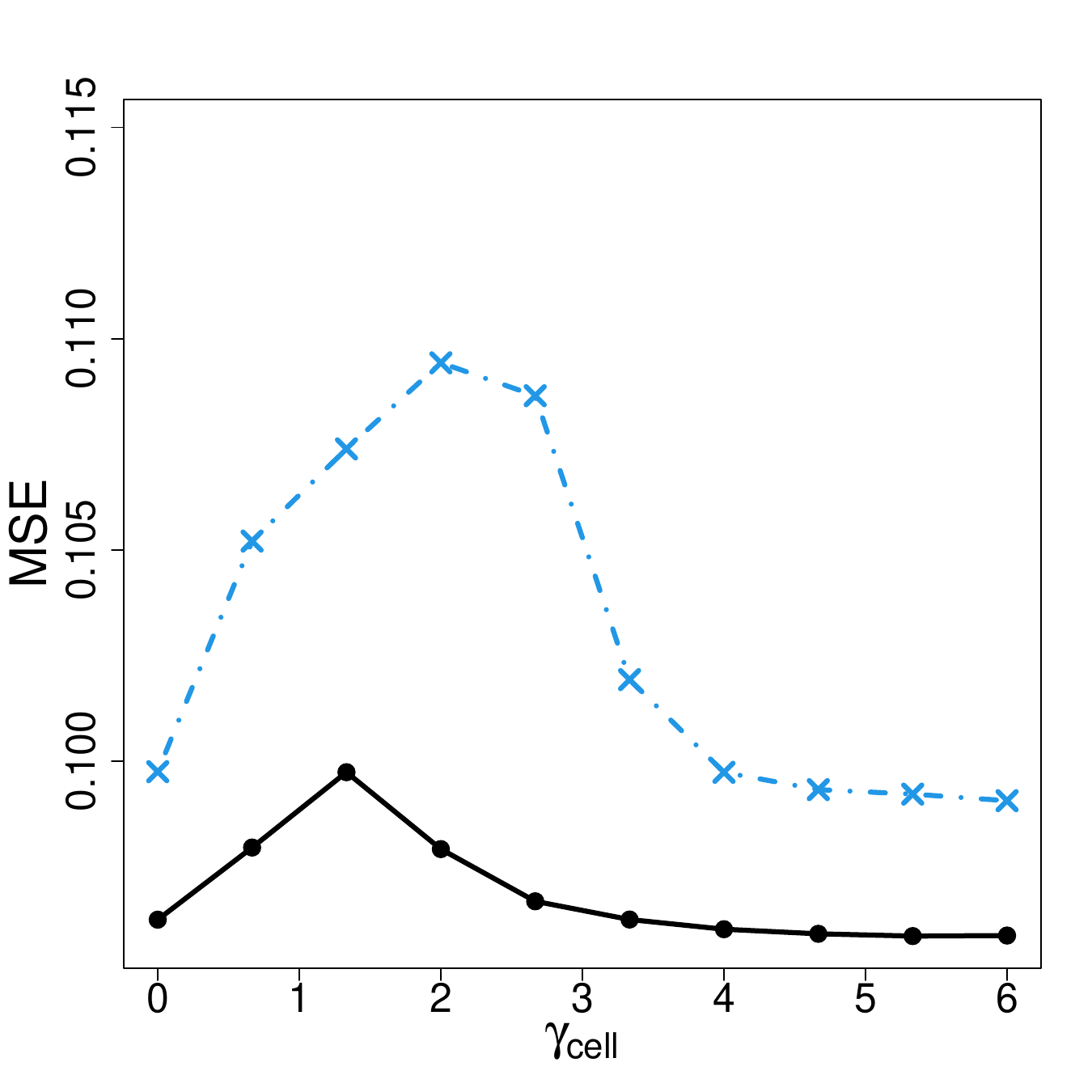}  \\
    [-4mm]
   \rotatebox{90}{\large \textbf{\parbox{5cm}{\centering Cellwise}}} &\includegraphics[width=.3\textwidth]
  {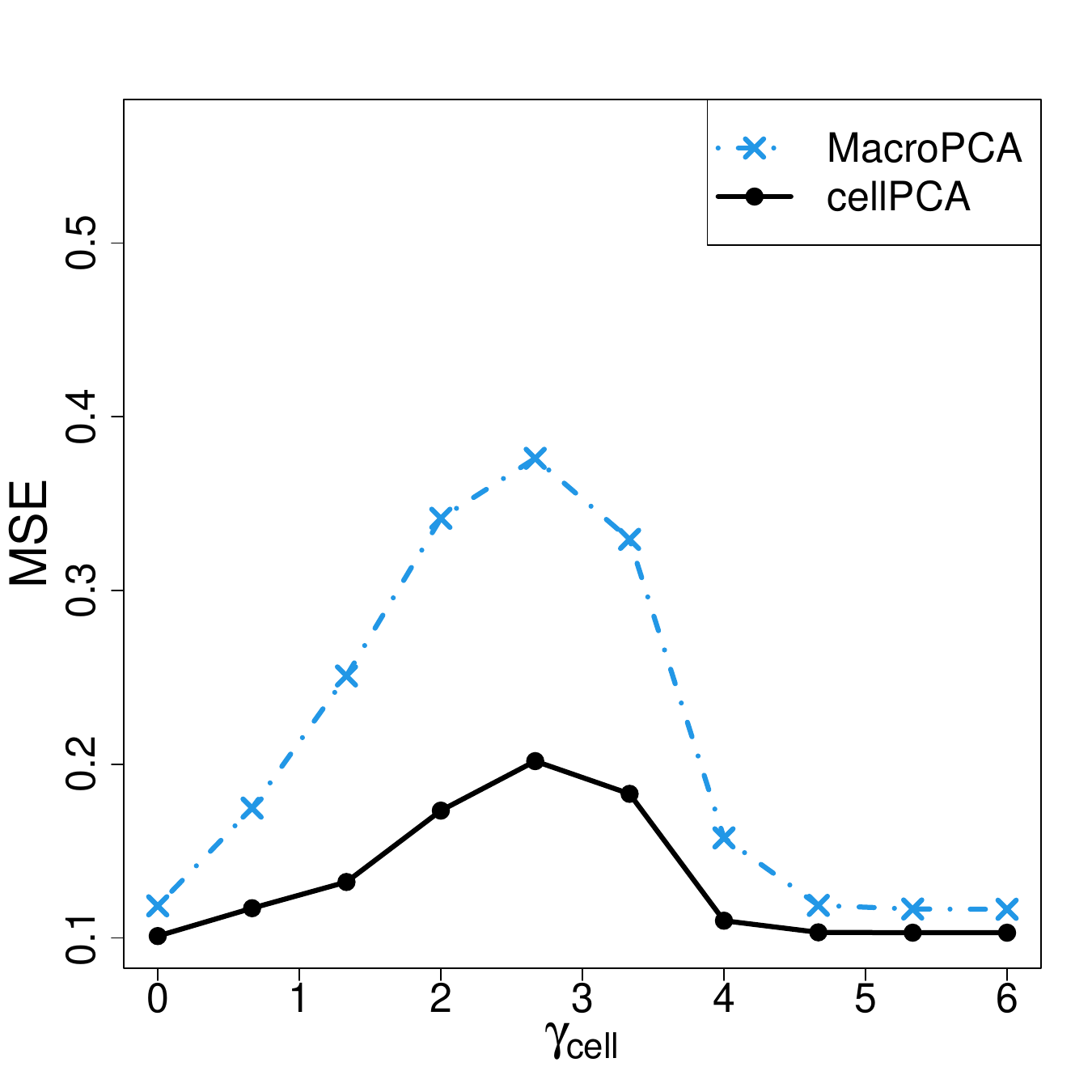} &\includegraphics[width=.3\textwidth]
  {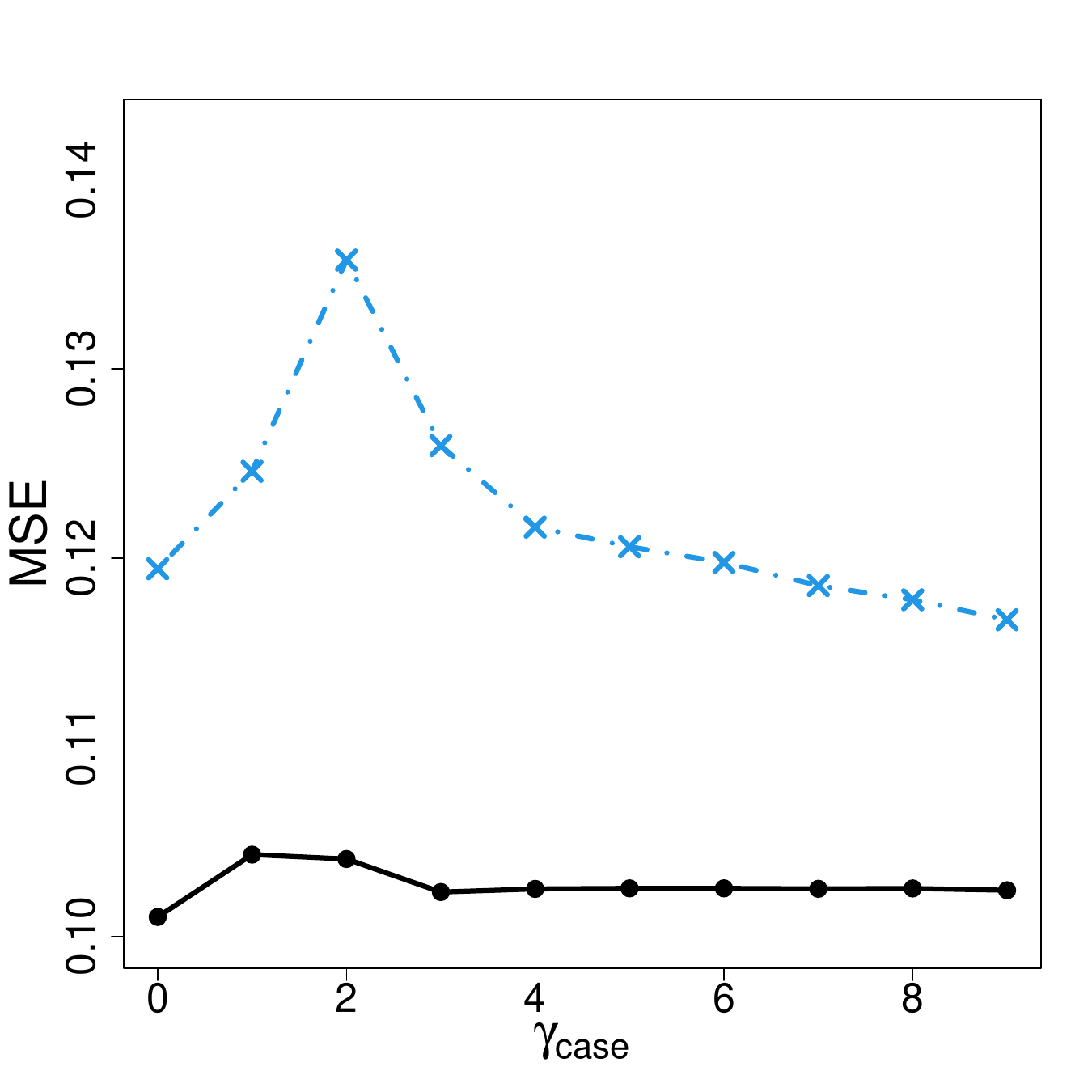} &\includegraphics[width=.3\textwidth]
  {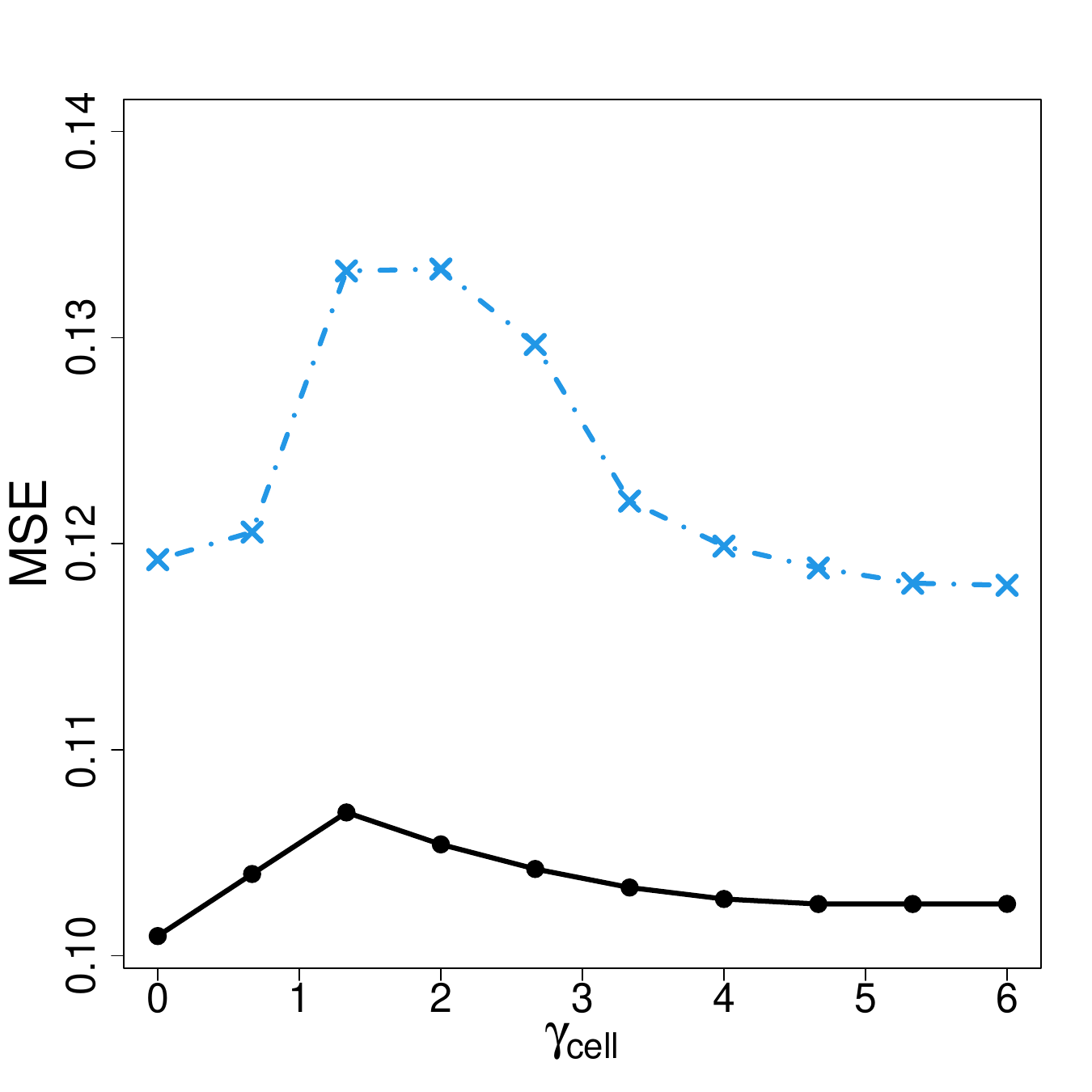}  \\
   [-4mm]
   \rotatebox{90}{\large \textbf{\parbox{5cm}{\centering Cellwise \& NAs}}}&\includegraphics[width=.3\textwidth]
  {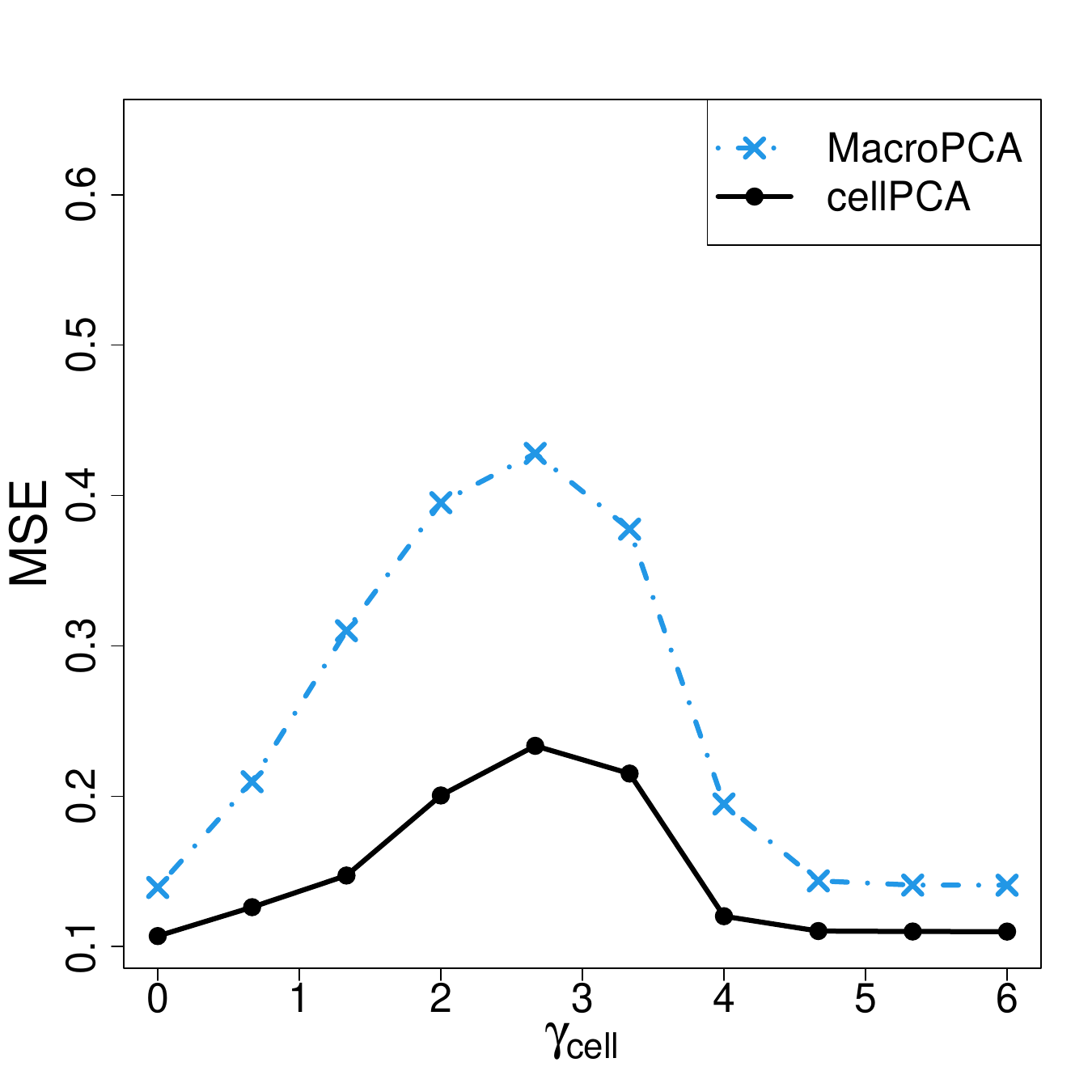} &\includegraphics[width=.3\textwidth]
  {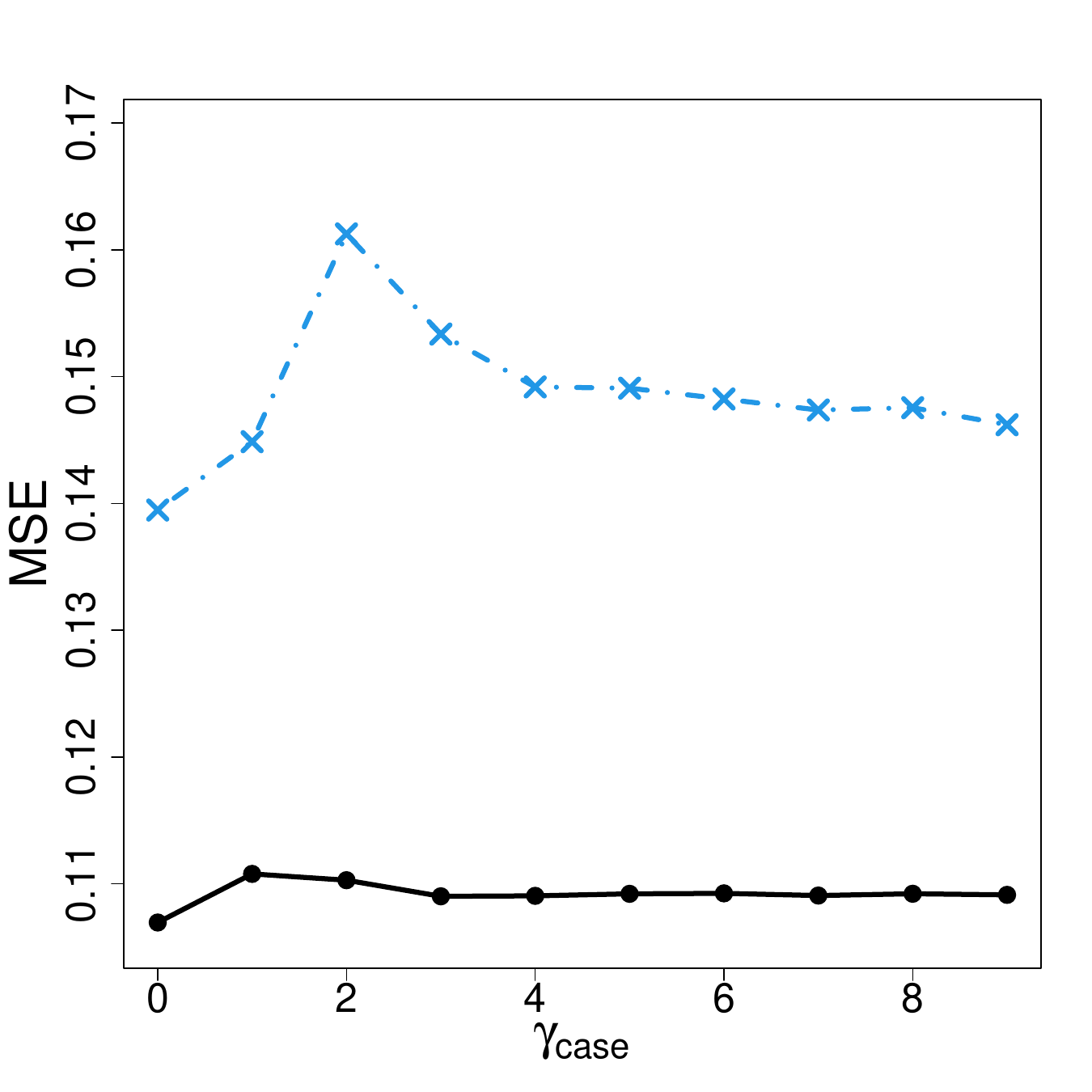} &\includegraphics[width=.3\textwidth]
  {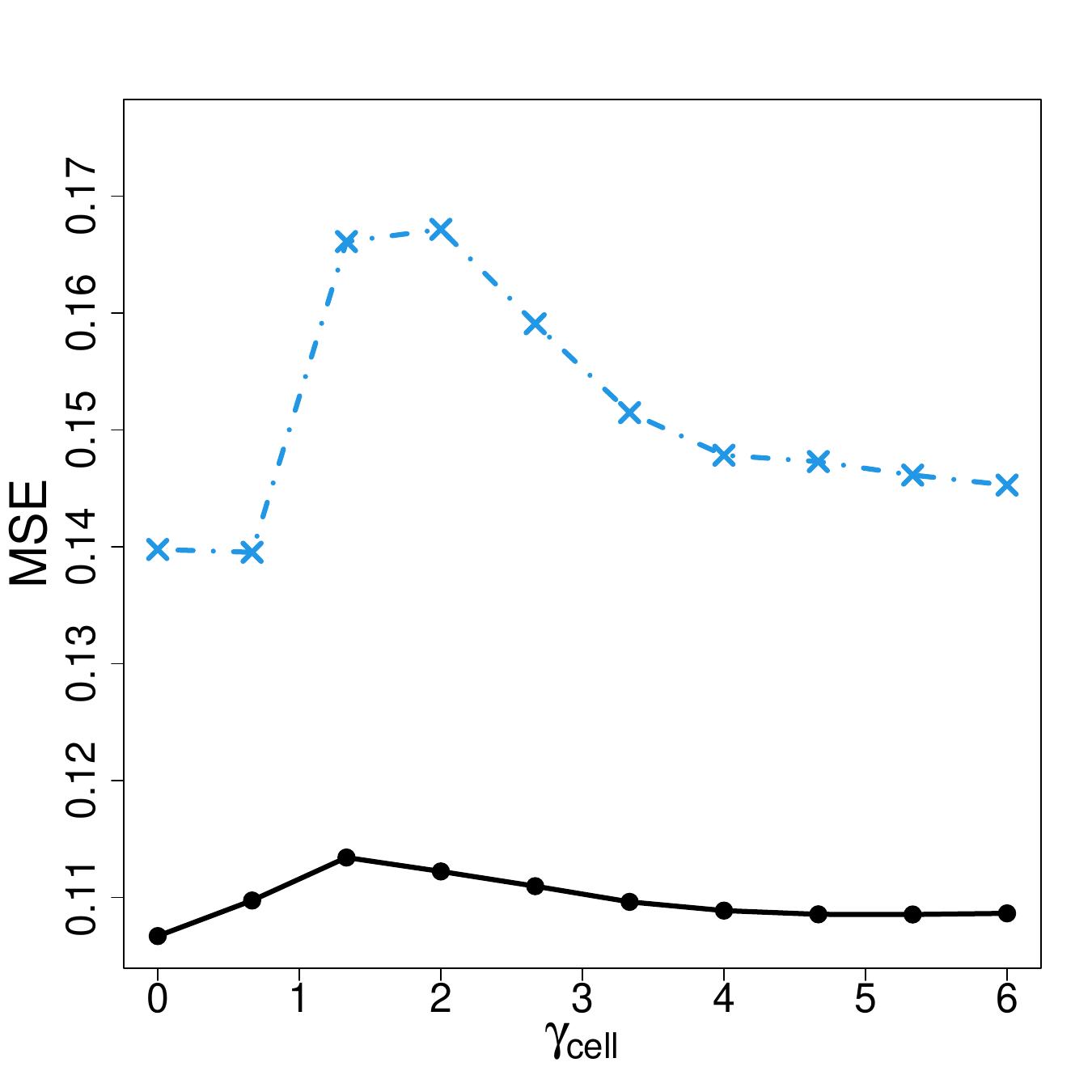}  
\end{tabular}
\caption{Median MSE of the out-of-sample prediction $\bhx$ obtained by MacroPCA and cellPCA where the training data are generated by the covariance model A09 with $n=100$ and $p=20$ with cellwise outliers, casewise outliers, or both. The test set is either clean (top row), contaminated with cellwise outliers (middle row), or simultaneously affected by cellwise outliers and missing values (bottom row).}
\label{fig:results_oos_pred}
\end{figure}

\clearpage
To investigate the effectiveness of the imputation method described in Section~\ref{sec:imput}, we compare cellPCA and MacroPCA based on the out-of-sample imputation MSE, which is given by
\begin{equation} 
  \text{MSE}=\frac{1}{n}
  \sum_{i=1}^n\sum_{j=1}^p
  \left(\impxsij - x^{*0}_{ij}\right)^2
\end{equation}
where $\impxsij$ is the imputation of
$x^*_{ij}$ and $x^{*0}_{ij}$ is the original value of 
that cell before any contamination took place. The data 
were generated as in Figure~\ref{fig:results_oos_pred}.

The resulting Figure~\ref{fig:results_imputation} 
has rows and columns corresponding to those of
Figure~\ref{fig:results_oos_pred}. Also here we see that
cellPCA has substantially outperformed MacroPCA.

\begin{figure}[!ht]
\centering
\begin{tabular}{cccc}
   &\large \textbf{Cellwise}  & \large \textbf{Casewise} &\large{\textbf{Casewise \& Cellwise}} \\
    [-4mm]
    \rotatebox{90}{\large \textbf{\parbox{5cm}{\centering Clean}}} &\includegraphics[width=.3\textwidth]
  {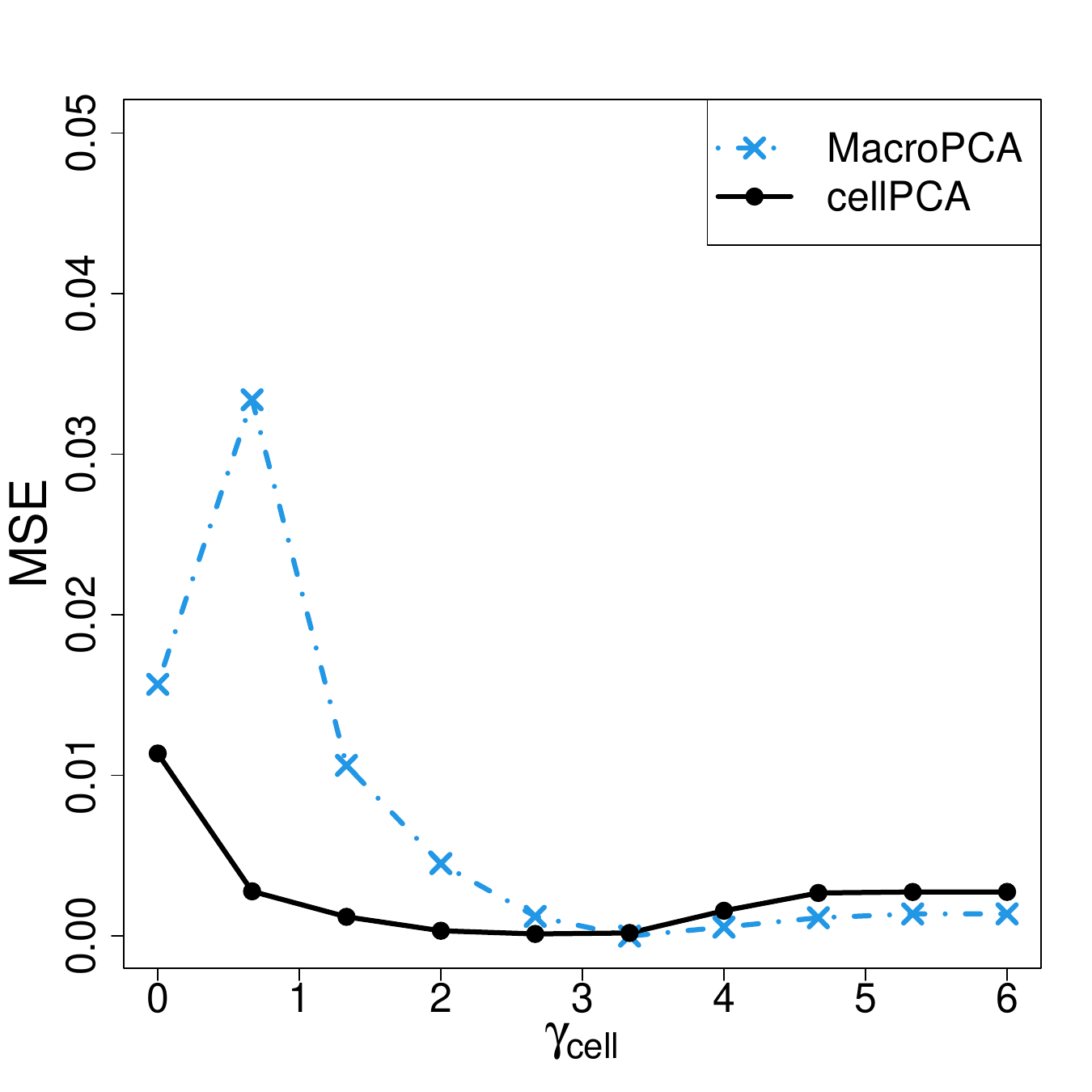} &\includegraphics[width=.3\textwidth]
  {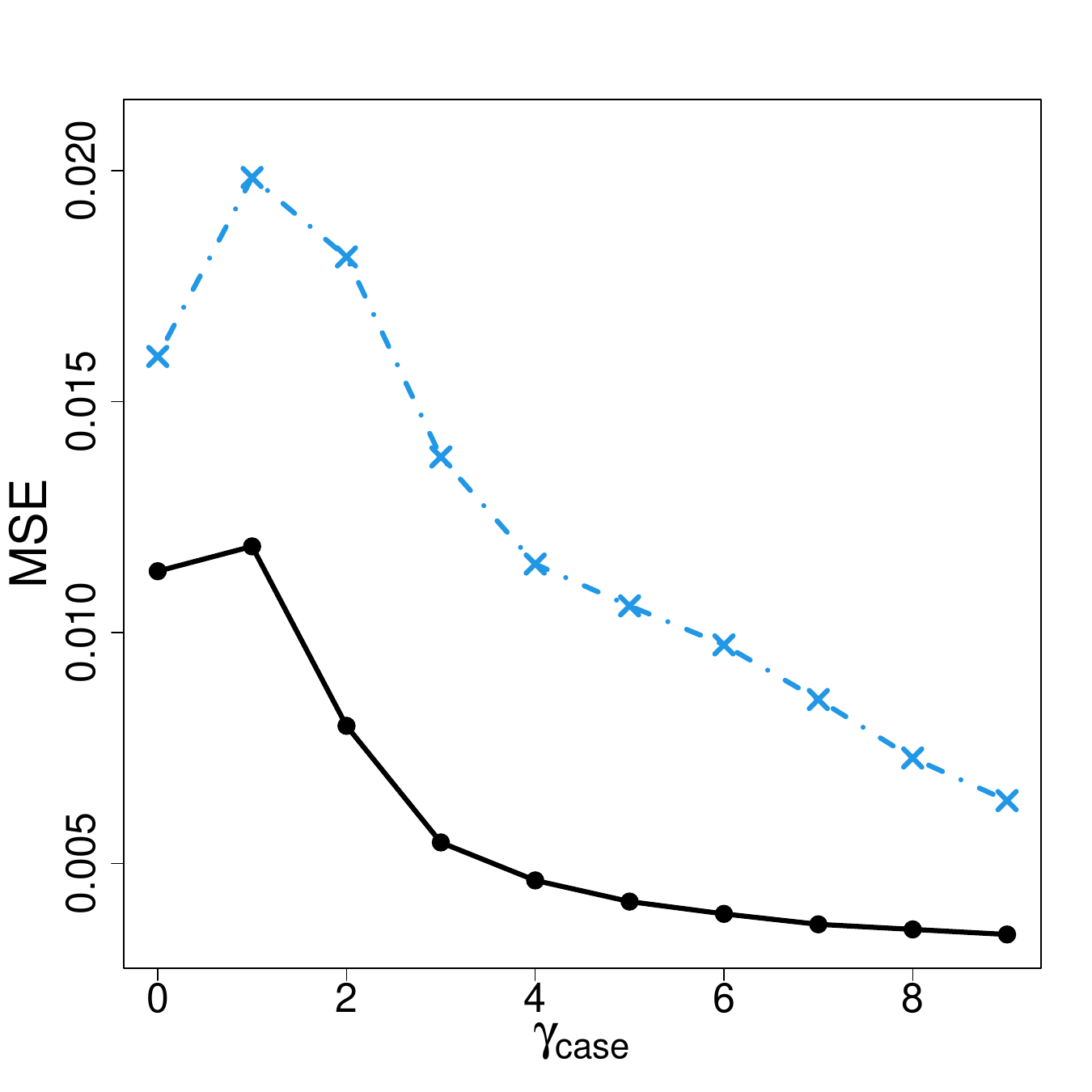} &\includegraphics[width=.3\textwidth]
  {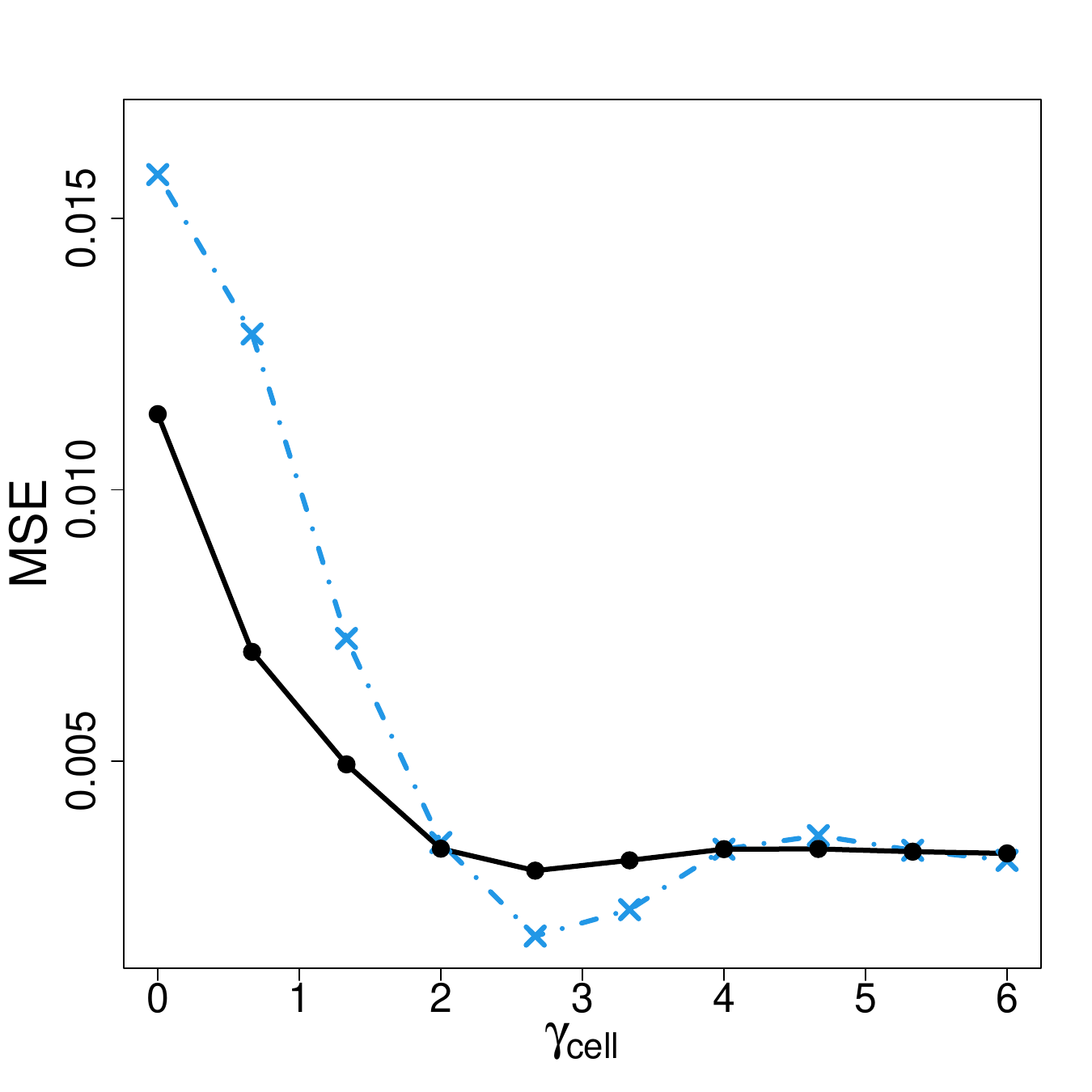}  \\
    [-4mm]
   \rotatebox{90}{\large \textbf{\parbox{5cm}{\centering Cellwise}}} &\includegraphics[width=.3\textwidth]
  {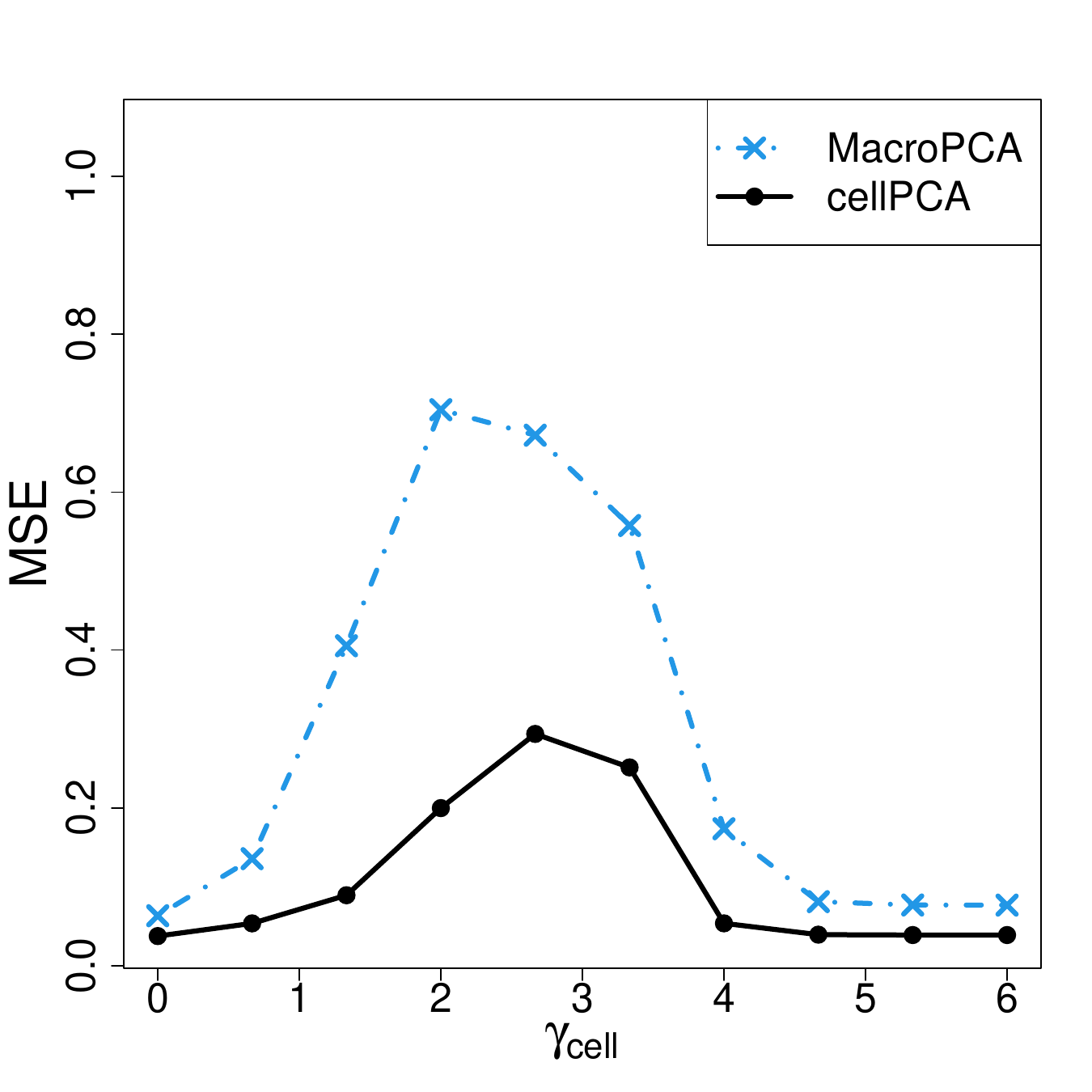} &\includegraphics[width=.3\textwidth]
  {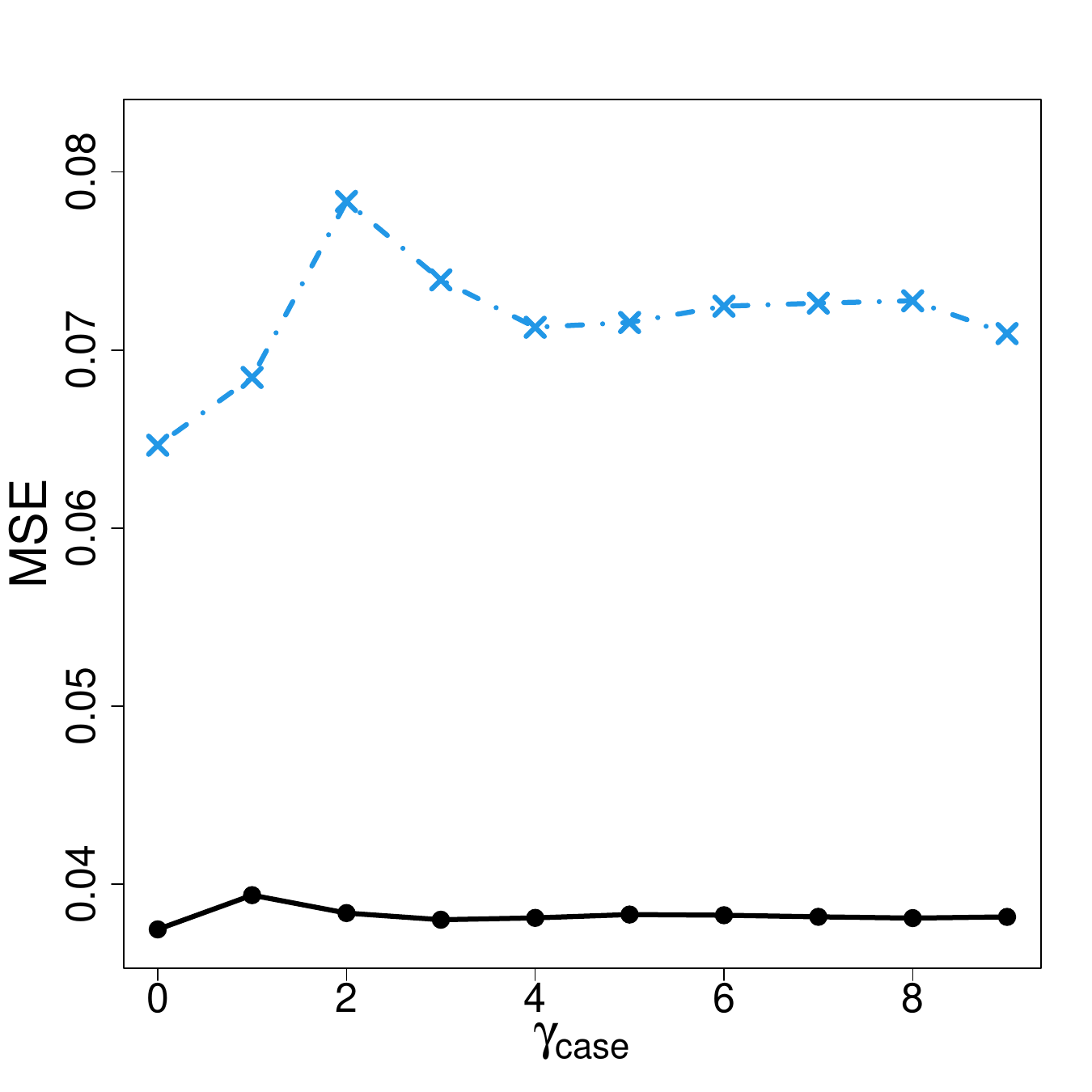} &\includegraphics[width=.3\textwidth]
  {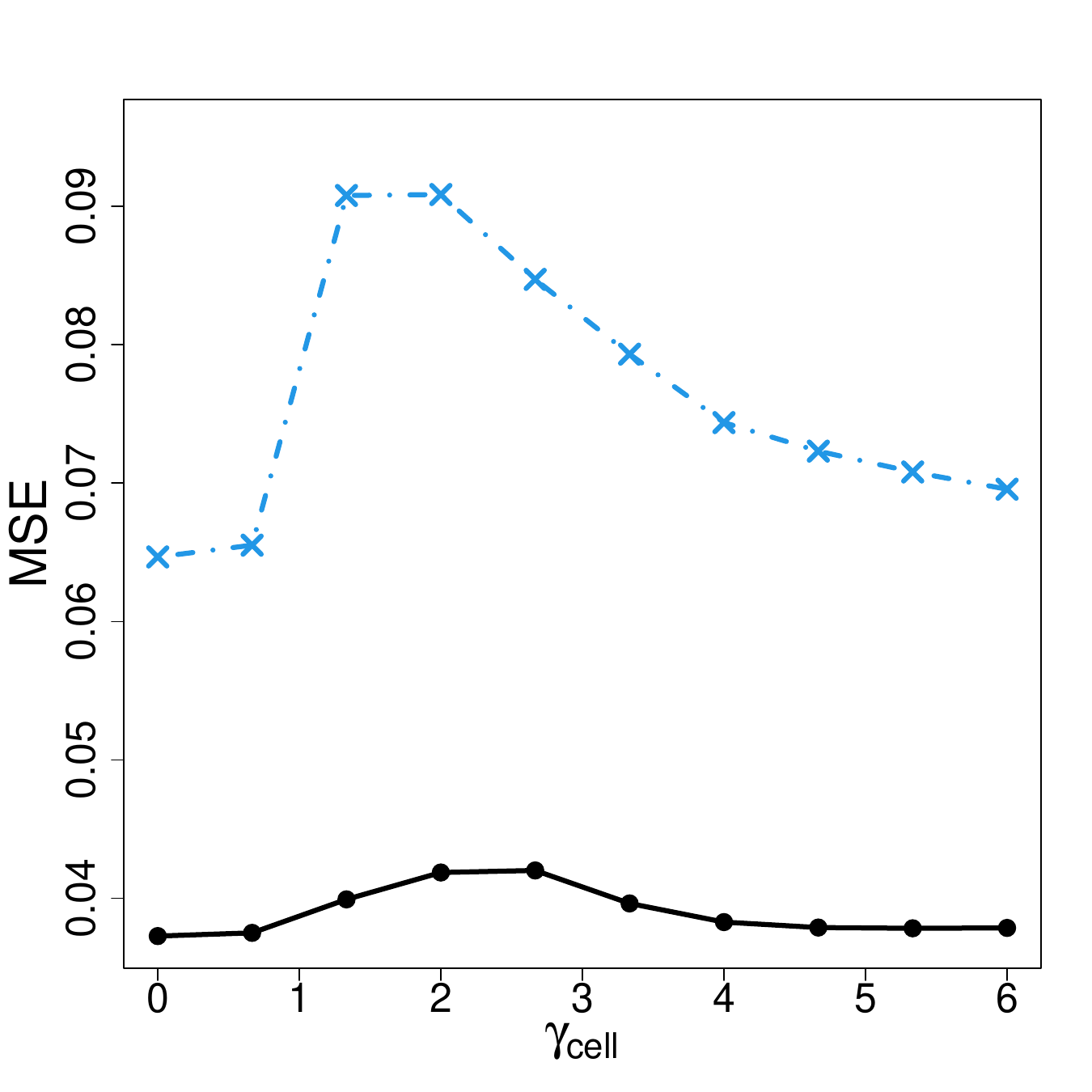}  \\
   [-4mm]
   \rotatebox{90}{\large \textbf{\parbox{5cm}{\centering Cellwise \& NAs}}}&\includegraphics[width=.3\textwidth]
  {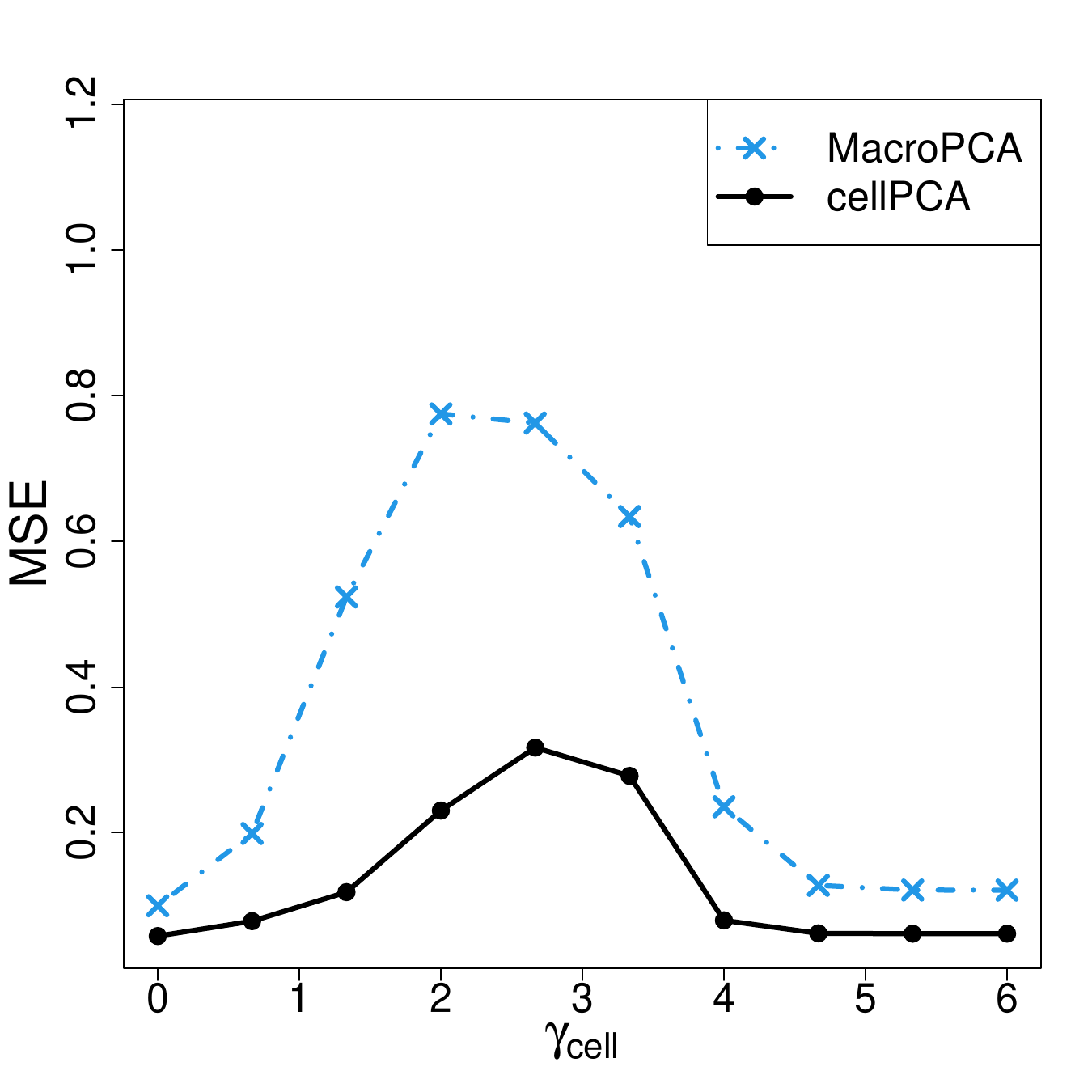} &\includegraphics[width=.3\textwidth]
  {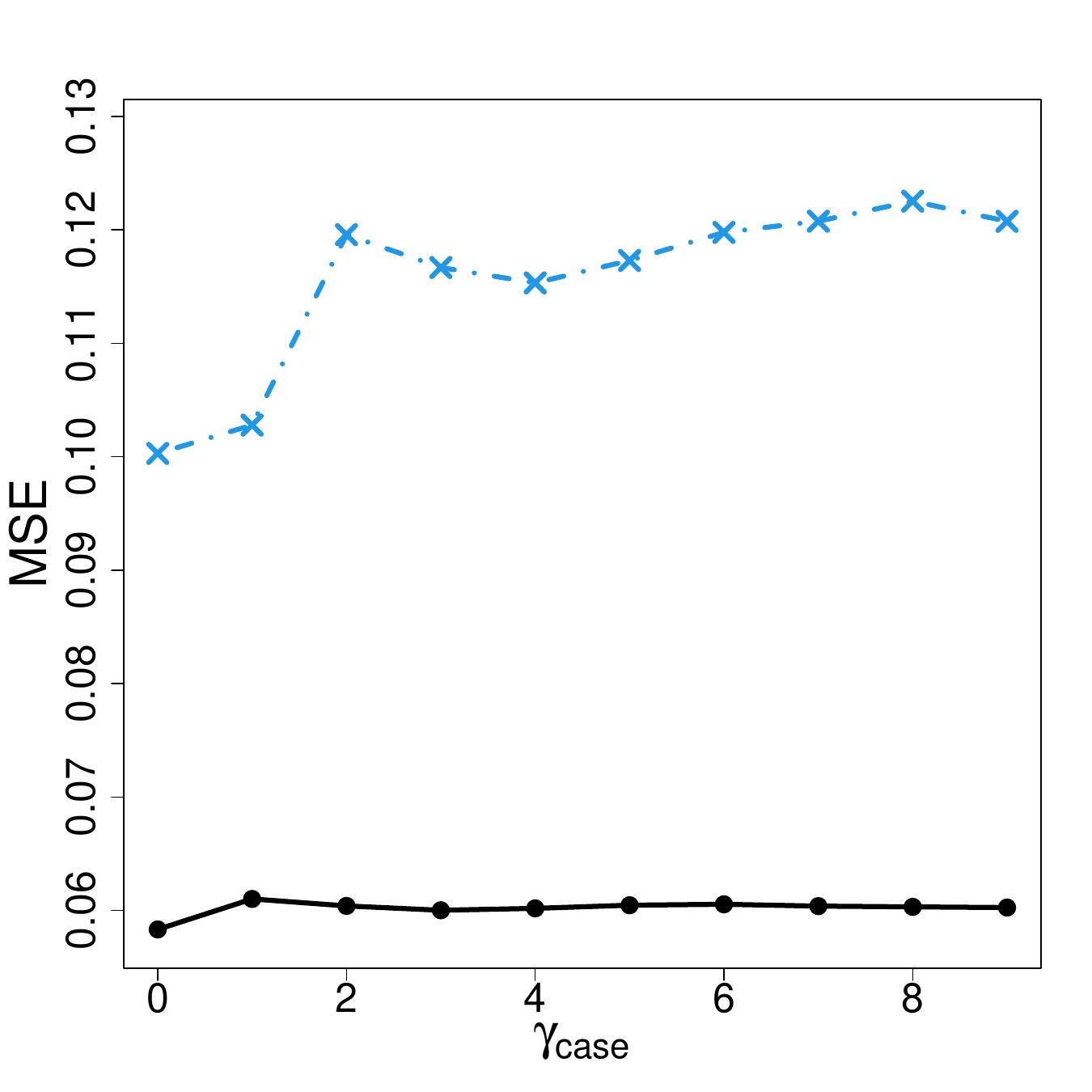} &\includegraphics[width=.3\textwidth]
  {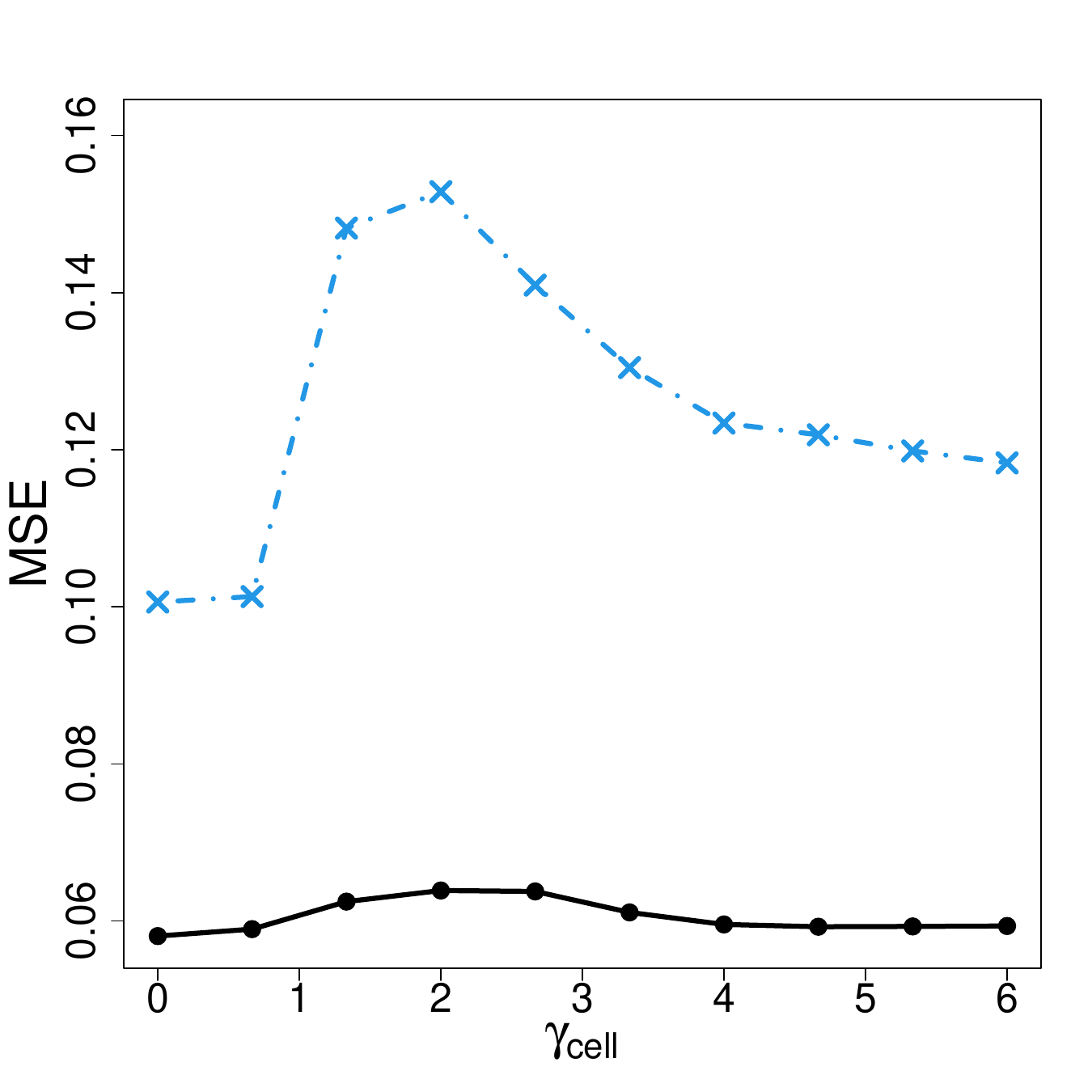}  
\end{tabular}
\caption{Median MSE of the imputation $\bimpxs$ obtained by MacroPCA and cellPCA where the training data are generated by the covariance model A09 with $n=100$ and $p=20$ with cellwise outliers, casewise outliers, or both. The test set is either clean (top row), contaminated with cellwise outliers (middle row), or simultaneously affected by cellwise outliers and missing values (bottom row).}
\label{fig:results_imputation}
\end{figure}

\clearpage
To evaluate the performance of the rank selection method
discussed in Section~\ref{sec:rank}, we run both MacroPCA
and cellPCA and apply the Kneedle algorithm 
\citep{satopaa2011finding} to both.
The data were generated from the covariance model A09 with 
$n=100$ and $p=20$, without NAs, in the presence of 
cellwise outliers, casewise outliers, or both. In each 
setting we ran 100 replications.
From the way the A09 covariance was constructed, it is
clear that the natural number of components is $k=2$.

Figure~\ref{fig:results_rank} displays the average 
of the selected ranks $\widehat{k}$ obtained by MacroPCA 
and cellPCA, over the 100 replications. 
We see that MacroPCA already did quite well, with only
small deviations from 2 when there is contamination.
CellPCA did not even deviate at all.

\begin{figure}[!ht]
\centering 
\vspace{5mm}
\begin{tabular}{ccc}
   \large \textbf{Cellwise}  & \large \textbf{Casewise} &\large{\textbf{Casewise \& Cellwise}} \\
    [-4mm]
    \includegraphics[width=.3\textwidth]
  {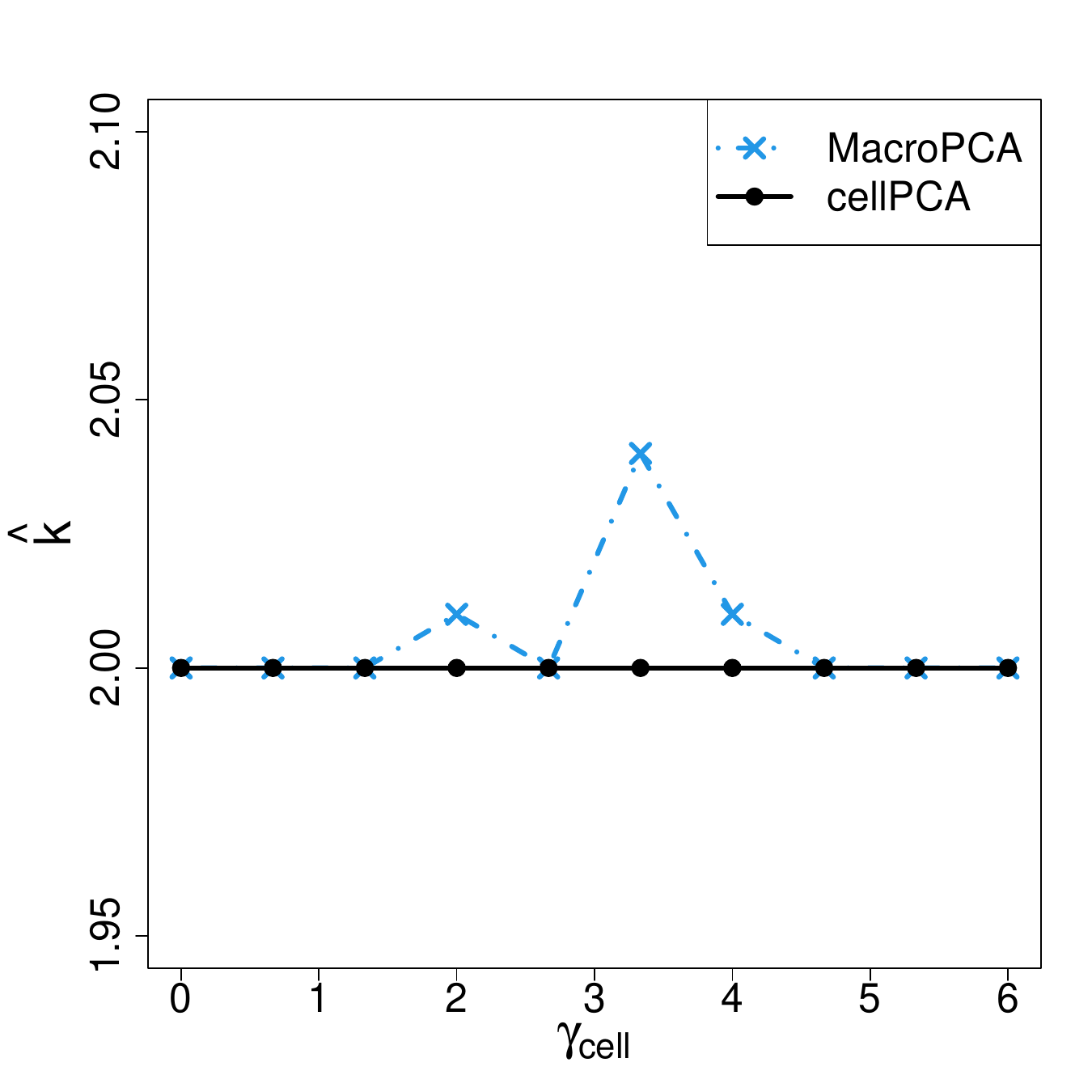} &\includegraphics[width=.3\textwidth]
  {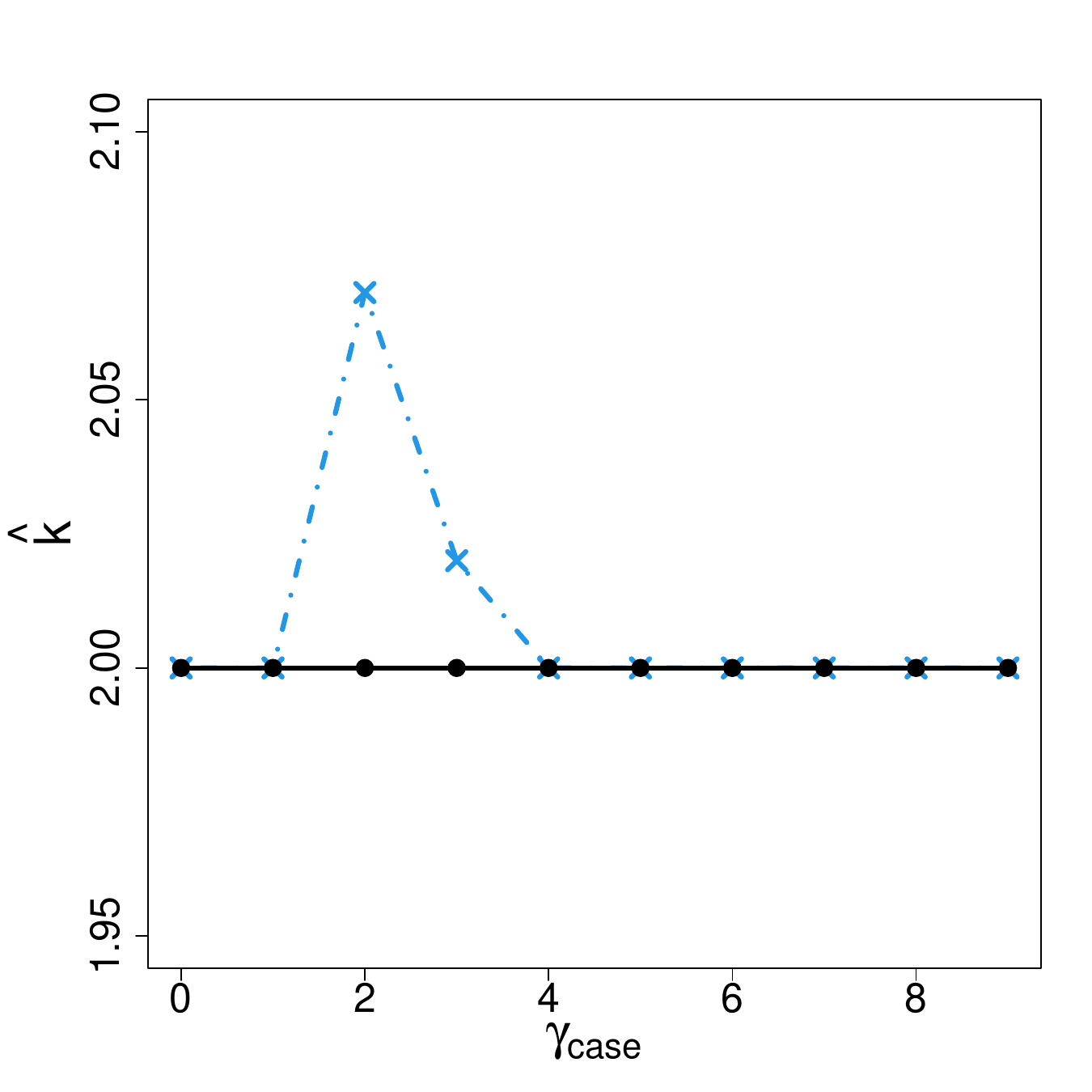} &\includegraphics[width=.3\textwidth]
  {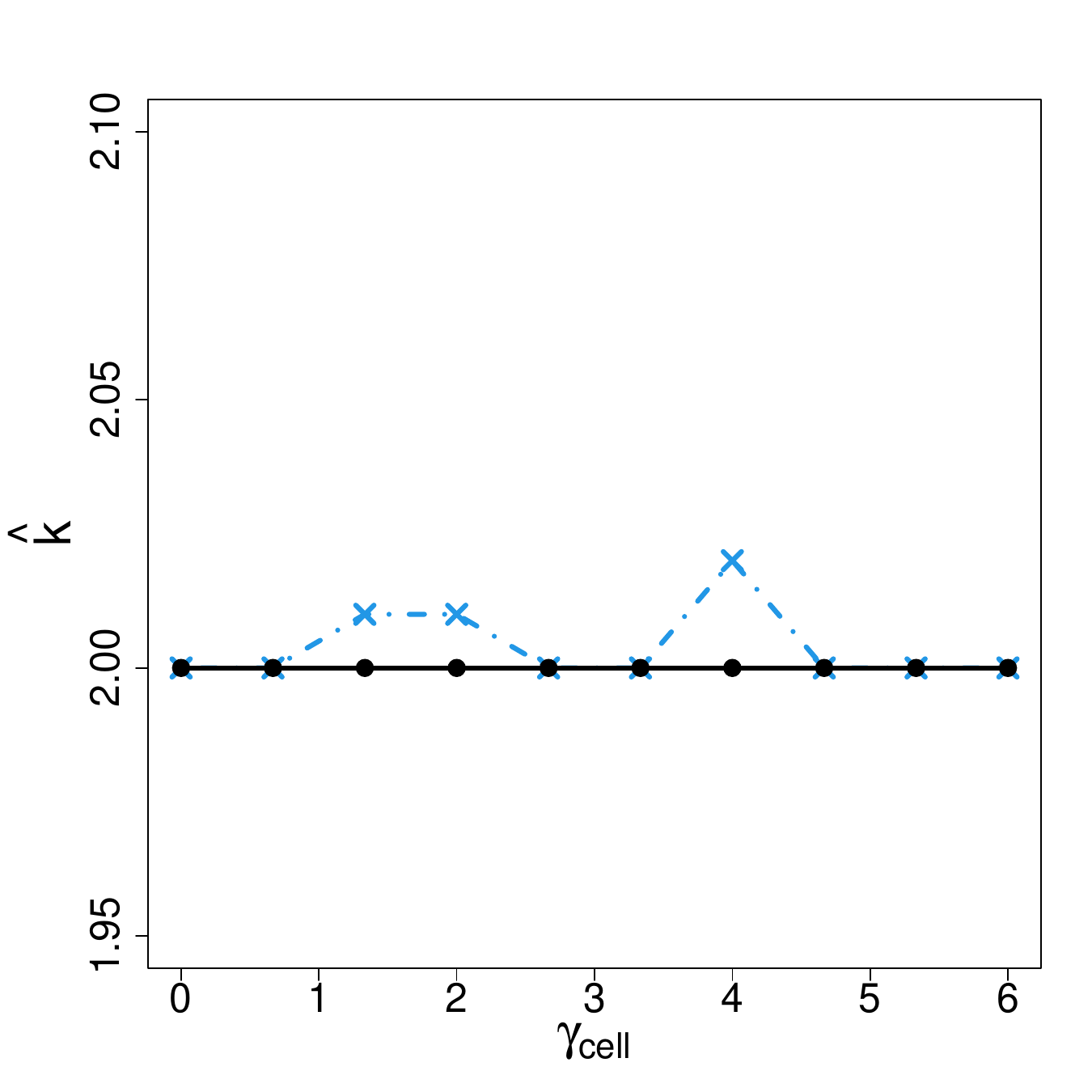}   
\end{tabular}
\caption{Average selected rank $\widehat{k}$ obtained by MacroPCA and cellPCA, over 100 replications. The data were generated by covariance model A09 with $n=100$ and $p=20$, without NAs, in the presence of cellwise outliers, casewise outliers, or both. The ranks were selected by applying the Kneedle algorithm to the scree plots.}
\label{fig:results_rank}
\end{figure}

\clearpage
\section{More on the Solfatara Data}
\label{app:addreal}

Figure~\ref{fig_frame203} shows frame 203 of the Solfatara
data, taken in the Fall. The interpretation is similar to
that in Section 7.2 for frame 14 taken in the Spring, only
the shape of the condensation cloud is a bit different.\\

\begin{figure}[hb]
\centering
\includegraphics[width=0.49\textwidth]{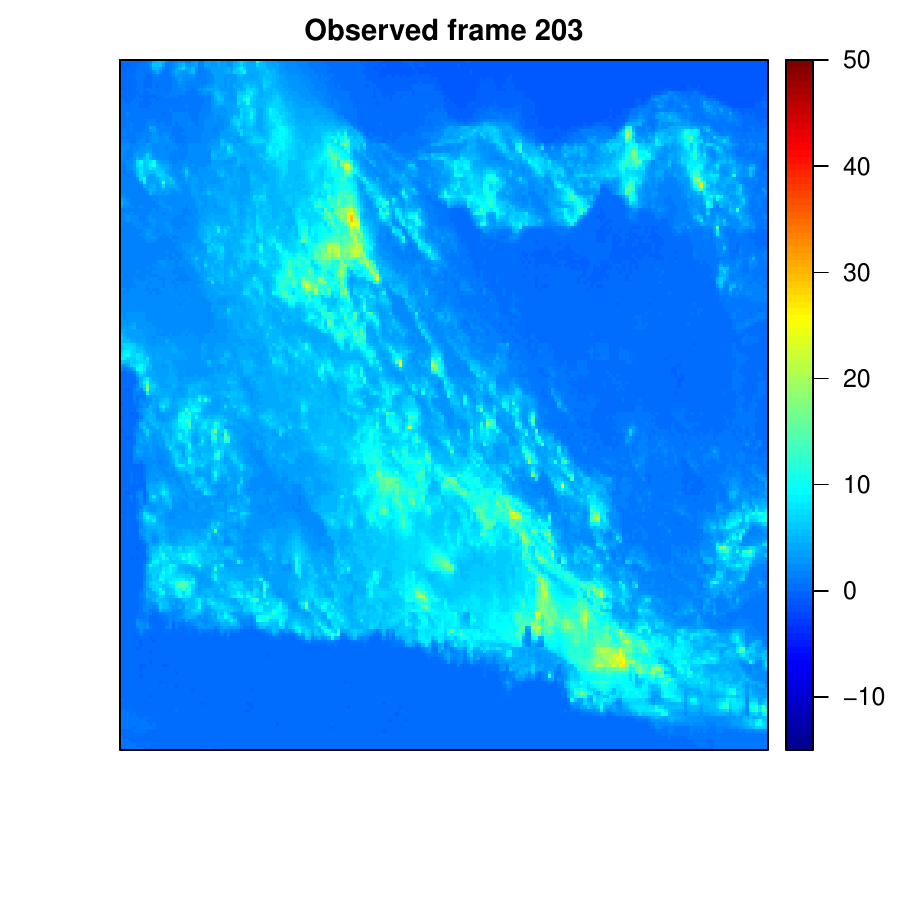}
\includegraphics[width=0.49\textwidth]{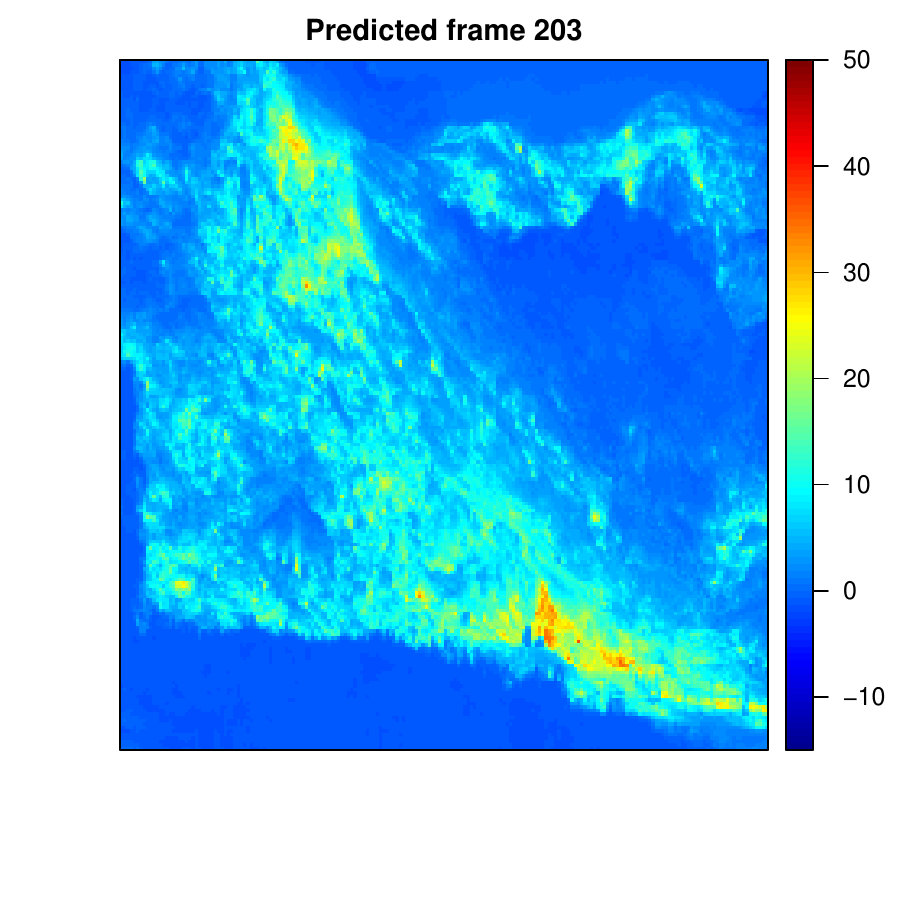}\\

\vspace{-7mm}
\includegraphics[width=0.5\textwidth]
  {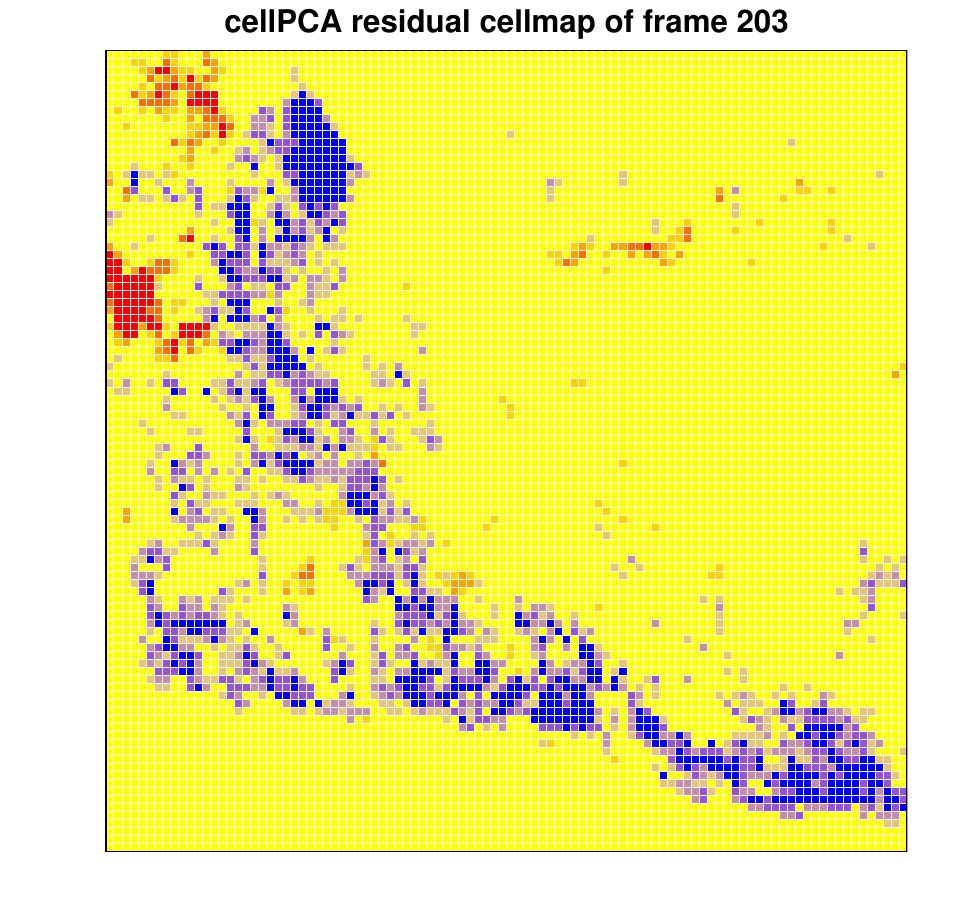}\\

\caption{Solfatara data: observed 
frame 203, its prediction, and
its residual cellmap.}
\label{fig_frame203}
\end{figure}
\clearpage

\renewcommand{\refname}{Additional References}

\end{document}